\documentclass[letterpaper,onecolumn]{IEEEtran}
\usepackage{amsmath,amsfonts,amsthm,amssymb}
\usepackage{array}
\usepackage{textcomp}
\usepackage{stfloats}
\usepackage{url}
\usepackage{verbatim}
\usepackage{cite}
\usepackage{booktabs,multirow, multicol}       % professional-quality tables
\usepackage{amsfonts}       % blackboard math symbols
\usepackage{nicefrac}       % compact symbols for 1/2, etc.
\usepackage[nopatch=footnote]{microtype}     % microtypography
\usepackage{fancyhdr}       % header
\usepackage{graphicx}       % graphics
\usepackage{enumerate,enumitem}
\usepackage{mathrsfs}
\usepackage{bbm}
\usepackage{float}
\usepackage{xspace}
\usepackage{comment}
\usepackage{adjustbox}
\usepackage{diagbox}
\usepackage{boxedminipage}
\usepackage[dvipsnames]{xcolor}
\usepackage{authblk}
\usepackage{wrapfig}
\usepackage{needspace}
\usepackage{hyperref}
\usepackage{algorithm,algpseudocode}

\usepackage{adjustbox}
\usepackage{setspace}
\usepackage{amsmath,amsfonts,bm,mathtools}

\providecommand{\matrixnorm}[1]{\left|\!\left|\!\left|{#1}
  \right|\!\right|\!\right|} % Full-sized matrix norm

\def\mG{{\bm{G}}}

\def\mI{{\bm{I}}}

\DeclareMathAlphabet{\mathsfit}{\encodingdefault}{\sfdefault}{m}{sl}
\SetMathAlphabet{\mathsfit}{bold}{\encodingdefault}{\sfdefault}{bx}{n}

\def\gB{{\mathcal{B}}}
\def\gC{{\mathcal{C}}}

\def\gE{{\mathcal{E}}}
\def\gF{{\mathcal{F}}}

\def\gI{{\mathcal{I}}}

\def\gL{{\mathcal{L}}}

\def\gN{{\mathcal{N}}}

\def\gR{{\mathcal{R}}}

\def\gX{{\mathcal{X}}}

\def\sC{{\mathbb{C}}}

\def\sE{{\mathbb{E}}}

\def\sG{{\mathbb{G}}}

\def\sO{{\mathbb{O}}}
\def\sP{{\mathbb{P}}}

\def\sR{{\mathbb{R}}}
\def\sS{{\mathbb{S}}}

\newcommand{\KL}{D_{\mathrm{KL}}}

\DeclareMathOperator*{\argmax}{arg\,max}
\DeclareMathOperator*{\argmin}{arg\,min}

\DeclareMathOperator*{\argsup}{arg\,sup}

\newtheorem{theorem}{Theorem}%  meant for continuous numbers
\newtheorem{lemma}[theorem]{Lemma}

\newtheorem{corollary}[theorem]{Corollary}
\newtheorem{example}{Example}%
\newtheorem{remark}{Remark}%

\newtheorem{definition}{Definition}
\newtheorem{assumption}{Assumption}

\newcommand{\bea}{\begin{eqnarray*}}
\newcommand{\eea}{\end{eqnarray*}}
\newcommand{\ba}{\begin{eqnarray*}}
\newcommand{\ea}{\end{eqnarray*}}
\newcommand{\be}{\begin{equation}}
\newcommand{\ee}{\end{equation}}
\newcommand{\bi}{\begin{itemize}}
\newcommand{\ei}{\end{itemize}}

\makeatletter
\newcommand*\rel@kern[1]{\kern#1\dimexpr\macc@kerna}
\newcommand*\widebar[1]{%
  \begingroup
  \def\mathaccent##1##2{%
    \rel@kern{0.8}%
    \overline{\rel@kern{-0.8}\macc@nucleus\rel@kern{0.2}}%
    \rel@kern{-0.2}%
  }%
  \macc@depth\@ne
  \let\math@bgroup\@empty \let\math@egroup\macc@set@skewchar
  \mathsurround\z@ \frozen@everymath{\mathgroup\macc@group\relax}%
  \macc@set@skewchar\relax
  \let\mathaccentV\macc@nested@a
  \macc@nested@a\relax111{#1}%
  \endgroup
}
\makeatother

\DeclarePairedDelimiter{\braces}{\lbrace}{\rbrace}

\newcommand{\selected}{\sC}

\newcommand{\dfmr}{\operatorname{DFMR}}

\newcommand{\mr}{\operatorname{MR}}
\newcommand{\coat}{\operatorname{COAT}}

\newcommand{\ours}{{CFMR}}
\begin{document}

\title{Byzantine-tolerant distributed learning of finite mixture models under partial corruptions}

\author{Yimei Zhang$^1$, Jiahua Chen$^2$, Xiaozhou Wang$^{3,4}$, Yan Shuo Tan$^5$, and Qiong Zhang$^{1,4}$\\
\thanks{$^1$ Institute of Statistics and Big Data, Renmin University of China, Beijing, 100872 China. 
$^2$ Department of Statistics, University of British Columbia, Vancouver, BC V6T 1Z4 Canada.
$^3$ School of Statistics, East China Normal University, Shanghai, 200062 China.
$^4$ Key Laboratory of Advanced Theory and Application in Statistics and Data Science-MOE, East China Normal University, Shanghai, 200062 China.
$^5$ Department of Statistics and Data Science, National University of Singapore, Singapore 117546.
}% <-this % stops a space
\thanks{Correspondence to: Qiong Zhang (\href{mailto:qiong.zhang@ruc.edu.cn}{qiong.zhang@ruc.edu.cn}) and Xiaozhou Wang (\href{mailto:xzwang@ecnu.edu.cn}{xzwang@ecnu.edu.cn}).}
}

% The paper headers
% \markboth{Journal of \LaTeX\ Class Files,~Vol.~1, No.~2, December~2026}%
% {Shell \MakeLowercase{\textit{et al.}}: A Sample Article Using IEEEtran.cls for IEEE Journals}
\markboth{}%
{Zhang \MakeLowercase{\textit{et al.}}: Component-wise Byzantine-tolerant distributed learning of finite mixture models}

\IEEEpubid{0000--0000~\copyright~2026 IEEE}
% Remember, if you use this you must call \IEEEpubidadjcol in the second
% column for its text to clear the IEEEpubid mark.

\maketitle

\begin{abstract}
Finite mixture models characterize heterogeneous populations and are increasingly fitted to distributed data using split-and-conquer procedures that aggregate local mixture estimates at a central server. 
The aggregation step can be seriously compromised when transmitted local mixture estimates are partially or completely corrupted.
To guard against Byzantine failures, existing robust aggregation methods have been developed for settings in which a local mixture estimate is either entirely authentic or entirely corrupted. 
Such methods can discard useful information when only some component estimates are corrupted. 
We consider component-wise Byzantine failure, in which some component estimates may be corrupted, but for each mixture component, a majority of the corresponding local estimates remain authentic. 
We propose component-wise filtered mixture reduction (CFMR), which selects a data-driven anchor, aligns transmitted components, filters each aligned cluster by a majority radius, and aggregates the retained estimates through mixture reduction. 
By filtering out only unreliable components, CFMR preserves authentic information from partially corrupted machines without requiring knowledge of the failure rates. 
We establish an adaptive convergence bound for CFMR and show that, under suitable conditions, it attains the oracle rate that would be achieved if the authentic component estimates were known in advance. 
Simulations and a real-data application show that CFMR remains close to the component-level oracle, whereas whole-machine filtering and unprotected aggregation can deteriorate substantially.
\end{abstract}

\begin{IEEEkeywords}
Byzantine failure; clustering; federated learning; label switching; robust aggregation; unsupervised learning.
\end{IEEEkeywords}
%% The following is a directive for TeXShop to indicate the main file
%%!TEX root = ../main.tex

\section{Introduction}
\label{sec:introduction}

Finite mixture models represent a heterogeneous population as a weighted combination of component distributions, each corresponding to a latent subpopulation.
They are widely used for clustering and density estimation, with applications in image analysis, biology, medicine, and other fields~\cite{liesenfeld2001generalized,baldry2004quantifying,sanchez2009retinal,kerbl20233d}.
In modern applications, a large dataset may be partitioned across $m$ machines for computational scalability, whereas decentralized or privacy-sensitive data may already reside on separate machines and cannot be pooled.
Both settings motivate split-and-conquer (SC) learning: each machine fits a local mixture model using its $n$ observations and transmits the estimate to a central server for aggregation.
For ordinary Euclidean parameters, this aggregation can often be performed by simply averaging local estimates or using a robust variant.
For finite mixture models, however, component ordering is not unique because any permutation represents the same mixture.
Independently fitted local mixtures may therefore use different orderings, so the $k$th component on one machine need not correspond to the $k$th component on another.
This mismatch prevents direct component-wise averaging but can be resolved by aligning and aggregating local mixtures through an optimal-transport formulation known as mixture reduction (MR)~\cite{zhang2022distributed}.

Because SC learning relies on local estimates transmitted to a central server, its aggregation step is vulnerable to Byzantine failure, under which some machines may transmit arbitrary estimates because of malfunction, data corruption, or adversarial intervention~\cite{lamport1982byzantine,yin2018byzantine,yin2019defending,tu2021variance}.
Distance-filtered mixture reduction (DFMR)~\cite{zhang2026byzantine} extends MR to this setting by filtering unreliable local mixtures before applying MR to the retained estimates.
DFMR adopts a machine-level failure model: each transmitted local mixture is treated as either entirely authentic or entirely corrupted, and \emph{a strict majority of machines must be fully failure-free}.
Although appropriate when failure affects a complete machine, this treatment can be unnecessarily conservative when a local mixture contains both authentic and corrupted component estimates.
Figure~\ref{fig:failure_models} illustrates this limitation.
In the right panel, every transmitted local mixture contains at least one corrupted component, so no machine is fully failure-free and the machine-level majority condition required by DFMR is violated.
Nevertheless, for each mixture component, fewer than half of its transmitted local estimates are corrupted, leaving most of the component-level information authentic.
Treating each local mixture as a single unit therefore fails to exploit the authentic component estimates carried by these partially corrupted machines.

This limitation motivates a more granular model, which we call \emph{component-wise Byzantine failure}.
Here, corruption acts on an individual estimated mixture component, including its mixing weight and distributional parameter, rather than on the entire local mixture.
For each component, a possibly different subset of machines may transmit arbitrary corrupted estimates; consequently, a machine may contain any combination of authentic and corrupted component estimates.
Such partial corruptions arise naturally in networked information systems: communication errors may affect selected symbols of a transmitted block~\cite{lin2001error}, sensor networks may exhibit channel- or node-specific faults~\cite{nowak2003distributed,gu2008distributed}, and federated or on-device systems may transmit unreliable class-wise summaries for only part of a local model~\cite{jeong2018communication}.
From a defender's perspective, the corruption pattern can generally be neither controlled nor known in advance, so a robust procedure must accommodate such heterogeneous patterns.
\begin{figure}[!ht]
\centering
\includegraphics[width=0.35\textwidth]{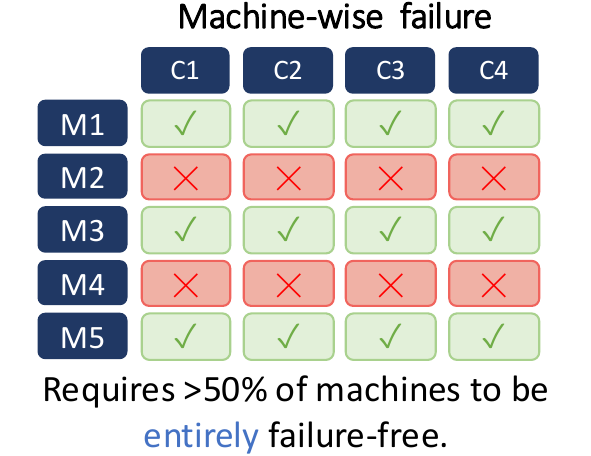}
\includegraphics[width=0.35\textwidth]{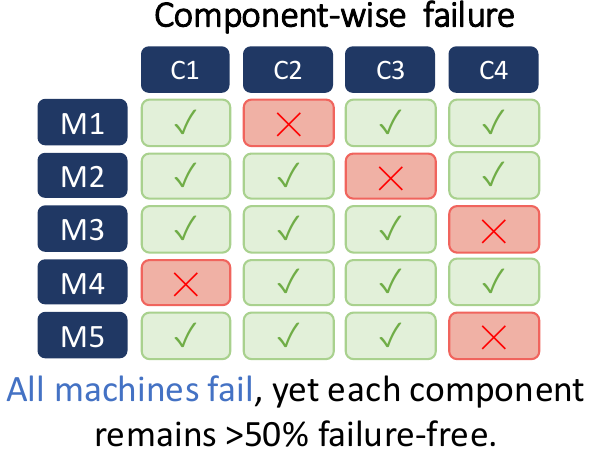}
\caption{\textbf{Machine-level and component-wise Byzantine failure.} Each row represents a machine, and each column represents a mixture component. Under machine-level failure (left), all components transmitted by a failed machine are corrupted. Under component-wise failure (right), corruption may affect different components on different machines.}
\label{fig:failure_models}
\end{figure}
Constructing such a procedure, however, requires resolving two coupled difficulties.
First, before filtering can be performed separately for each component, the central server must determine which transmitted components correspond to the same latent subpopulation; otherwise, even authentic estimates associated with different components may be filtered and aggregated together.
Second, the component-wise failure rates may differ across components and are unknown to the central server.
Because these rates are defined relative to the component correspondence, they cannot even be assessed before the transmitted components have been aligned.
Existing Byzantine-robust aggregation procedures are developed for Euclidean parameter vectors whose coordinates are already matched across machines~\cite{diakonikolas2017being,yin2018byzantine,lugosi2021robust,novikov2023robust}, and some additionally require knowledge of the failure rate to calibrate their filtering or trimming rules.
They therefore cannot be applied directly in the present setting.
Component alignment, robust filtering, and aggregation must therefore be addressed jointly.

To address these coupled difficulties, we propose \emph{component-wise filtered mixture reduction (\ours{})}.
Its construction exploits two forms of authentic information.
First, for each component, authentic local estimates constitute a majority across the $m$ machines, providing the basis for component-wise filtering.
Second, at least one machine transmits a fully failure-free local mixture, although its identity is unknown to the central server.
If this machine were known, its components would provide a common ordering template: the transmitted components could be assigned to their nearest anchor components, and the majority radius of each resulting cluster would provide a data-driven filtering scale.
Because the failure-free machine is unknown, \ours{} treats every transmitted local mixture as a candidate anchor.
For each candidate, it forms the anchor-induced clusters using a cost that combines discrepancies in mixing weights and subpopulation parameters, computes their majority radii, and selects the candidate with the smallest total radius.
Using the selected anchor, \ours{} aligns the transmitted components, filters each cluster with an inflated majority radius, and aggregates the retained estimates through MR.
The filtering thresholds are determined from the data, so the component-wise failure rates need not be known or specified as tuning parameters.
The resulting estimator requires no additional communication beyond the transmitted local mixtures.

Our theoretical analysis shows how the filtering mechanism determines the estimation error of \ours{}.
Under the likelihood, separation, and corruption-capacity conditions stated in Section~\ref{subsec:theoretical_properties}, we establish an adaptive convergence bound that separates the oracle sampling error, the effect of removing authentic tail estimates, and the contribution of corrupted estimates that survive filtering.
The bound therefore depends on the corrupted mass that remains near the selected anchor, rather than merely on the overall component-wise failure rate.
Under the explicit small-ball or separation conditions given in that section, the additional terms are asymptotically negligible; consequently, when $m\le n$ and $N=mn$, \ours{} attains the oracle rate $O_P(N^{-1/2})$, even when the machine-level majority condition required by DFMR is violated.

Complementing the theoretical analysis, our numerical studies evaluate anchor selection and component-wise aggregation under a range of corruption settings.
In simulations with Gaussian mixtures, the data-driven anchor selection succeeds with high probability, and \ours{} closely tracks the component-level oracle across different failure mechanisms, failure rates, sample sizes, numbers of machines, and degrees of component overlap.
The comparisons further demonstrate the loss of accuracy incurred by methods that aggregate or filter at the machine level under component-wise corruption.
An application to the CWRU bearing dataset~\cite{cwru_bearing_data} shows that \ours{} maintains high clustering accuracy as the component-wise failure rate increases.

The remainder of the paper is organized as follows.
Section~\ref{sec:preliminary_sc_gmm} introduces finite mixture models and SC learning, reviews MR, and summarizes DFMR under machine-level Byzantine failure.
Section~\ref{sec:proposed_estimator_comp}, including Subsection~\ref{subsec:theoretical_properties}, formulates component-wise Byzantine failure, develops \ours{}, and presents its theoretical properties.
Sections~\ref{sec:exp} and~\ref{sec:real_data} present the simulation study and a real-data application in industrial engineering, respectively.
The Appendix provides computational details, technical proofs, and additional simulation results.

%% The following is a directive for TeXShop to indicate the main file
%%!TEX root = ../main.tex

\section{A brief review of distributed learning of finite mixture models}
\label{sec:preliminary_sc_gmm}
In this section, we briefly review finite mixture models, the split-and-conquer (SC) framework, and the Distance-Filtered Mixture Reduction (DFMR) method of Zhang et al.~\cite{zhang2026byzantine}.

%%%%%%%%%%%%%%%%%%%%%%%%%%%%%%%%%%%%%%%%%%%%%%%%%%%%%%%%%%%%%%%%%%
\subsection{Finite mixture models}
A finite mixture model represents a heterogeneous population as a weighted combination of finitely many component distributions. 
We recall its definition below.
\begin{definition}[Finite mixture]
Let $\gF=\{f(x;\theta):x\in \gX\subseteq \sR^d,\theta\in\Theta\subseteq \sR^p\}$ be a parametric family of probability density functions, where $d$ and $p$ denote the dimensions of the observation and parameter spaces, respectively.
Throughout the paper, we assume that the parameter space $\Theta$ is compact and convex.
A finite mixture model with $K$ subpopulations has a density \[f_G(x)=\int_{\Theta} f(x;\theta)\, dG(\theta) =\sum_{k=1}^K w_k\, f(x;\theta_k),\]
which is parameterized by the mixing distribution $G=\sum_{k=1}^K w_k\,\delta_{\theta_k}$.
Here, $G$ assigns mass $w_k$ to the subpopulation parameter $\theta_k\in\Theta$ for each $k\in[K]=\{1,\ldots,K\}$. 
\end{definition}
Here, $f(x;\theta_k)$ denotes the $k$th component density, and the mixing weights $w=(w_1,\ldots,w_K)^\top$ lie in the probability simplex $\Delta_{K-1}=\{(w_1,\ldots,w_K): w_k\ge 0,\ \sum_{k=1}^K w_k=1\}$.
We use $\sG_K$ to denote the class of mixing distributions with at most $K$ distinct support points, with $\sG_{0}=\emptyset$.
We also assume that the finite mixture model is identifiable, in the sense that $f_{G_1}(x)=f_{G_2}(x)$ for all $x$ implies $G_1=G_2$.
This condition is standard for many commonly used mixture families, including Gaussian, Poisson, and Gamma mixtures.
For simplicity, we write all component distributions in terms of densities with respect to a common dominating measure, although finite mixture models and the proposed method can be defined more generally.

\begin{remark}[The label switching problem and finite mixture model parameterization]
\label{remark:parameterization}
One may instead use a vector such as $(w_1, w_2, \ldots, w_K, \theta_1, \theta_2, \ldots, \theta_K)^{\top}$ to parameterize the finite mixture.
However, this vector parameterization encounters the well-known label-switching problem: the component labels can be permuted without changing the mixture distribution.
Thus, different vectors may represent the same mixture and differ only in the ordering of their subpopulations.
For example, $(w_K, w_{K-1}, \ldots, w_1, \theta_K, \theta_{K-1}, \ldots, \theta_1)^{\top}$ yields the same mixture distribution, as do the other $K!-1$ permutations.
In contrast, parameterization through the mixing distribution $G$ is invariant to such permutations and avoids this artificial non-identifiability.
\end{remark}

%%%%%%%%%%%%%%%%%%%%%%%%%%%%%%%%%%%%%%%%%%%%%%%%%%%%%%%%%%%%%%%%%%
\subsection{Split-and-conquer (SC) learning of finite mixture models}
Let $\gX = \{x_{ij}: i\in[m], j\in [n]\}$
\footnote{
For simplicity, we present the analysis assuming equal local sample sizes across machines. The algorithm itself accommodates heterogeneous sample sizes through the weights $\lambda_i = n_i / N$, where $n_i$ is the sample size on the $i$th machine. The same results extend to this more general setting if we assume a relatively balanced partition, i.e. that $\min_i n_i / \max_i n_i \geq c$ for some constant $c > 0$.} 
be a collection of $N=nm$ independently and identically distributed (IID) samples from a mixture distribution $f_{G^*}(x)$ for some $G^* \in \sG_{K} \backslash \sG_{K-1}$ with $K$ distinct subpopulations, $K \geq 2$.
Suppose $\gX$ is partitioned completely at random into $m$ subsets $\gX_1, \ldots, \gX_m$, which are stored on $m$ local machines, where $\gX_{i} = \{x_{ij}: j \in [n]\}$ for $i \in [m]$.
A general SC procedure for estimating $G^*$ consists of the following two steps:
\begin{itemize}
\item
\emph{Local inference}: Obtain an ordinary constrained global MLE $\widehat{G}_i = \sum_{k} \widehat{w}_{ik}\delta_{\widehat{\theta}_{ik}}$ by maximizing the local log-likelihood function $\ell_i(G) = \sum_{j=1}^{n} \log f_G(x_{ij})$ over the compact model class, based on the data $\gX_i$ on the $i$th machine.

\item
\emph{Global aggregation}: Send $\{\widehat{G}_i: i \in [m]\}$ to a central server and aggregate them to obtain the final estimator $\widehat G$:
\be
\label{eq:sc_regular_aggregation}
\widehat G^{\mathfrak{A}} = \mathfrak{A} (\{\widehat{G}_i: i \in [m]\} ),
\ee
where $\mathfrak{A}(\cdot)$ is an aggregation operator, such as those studied in Zhang and Chen~\cite{zhang2022distributed}.
\end{itemize}

\begin{remark}[Ordinary and penalized local fits]
\label{rem:ordinary_and_penalized_local_fits}
The theoretical results in Section~\ref{subsec:theoretical_properties} concern the ordinary constrained global MLE. 
For Gaussian-mixture computations, a penalized likelihood and an EM implementation may instead be used to avoid degeneracy and obtain a stable local fit~\cite{chen2009inference}; see Appendix~\ref{app:em_algorithm}. 
Those penalized fits are used in the numerical work, but they are not covered by the local-MLE theorem below. 
Extending that theorem to a particular penalized estimator or an approximate EM solution would require separate conditions on the penalty and optimization error.
\end{remark}

Zhang and Chen~\cite{zhang2022distributed} proposed a mixture reduction (MR) estimator for aggregation. 
MR first pools the $m$ local mixtures into an overspecified \emph{average mixing distribution} with $mK$ components:
$
\widebar{G} = \sum_{i=1}^m \lambda_i \widehat{G}_i,
$
where $\lambda_i = n/N$ is the proportion of observations stored on the $i$th machine.  
Since the local estimators are consistent, each $\widehat{G}_i$ is close to $G^*$, and hence $\widebar{G}$ is also close to $G^*$.
However, $\widebar{G}$ generally has $mK$ support points and therefore lies outside the target parameter space $\sG_K$.
MR then seeks a $K$-component mixing distribution $G \in \sG_K$ that best approximates $\widebar{G}$ by minimizing the composite transportation divergence.
Specifically, let $c: \gF \times \gF \to \sR_{+}$ be a cost function, and write $c(\theta, \theta') = c(f(\cdot; \theta), f(\cdot; \theta'))$ for notational simplicity.
Let $G = \sum_{k=1}^{K} w_k \delta_{\theta_k}$ and  $G' = \sum_{k'=1}^{K'} w'_{k'} \delta_{\theta'_{k'}}$ be two mixing distributions with $K$ and $K'$ support points.
The composite transportation divergence between these mixing distributions is defined as
\begin{equation*}
    T_{c}(G, G') = \min \left \{
    \sum_{k=1}^{K}\sum_{k'=1}^{K'} \pi_{k, k'} c(\theta_k, \theta'_{k'}):~\sum_{k=1}^{K} \pi_{k,k'} = w'_{k'},  \sum_{k'=1}^{K'} \pi_{k,k'}= w_k, \pi_{k,k'} \geq 0
    \right \},
\end{equation*}
where the minimum is taken over all transportation plans $\pi$ satisfying the two marginal constraints.
This divergence represents the minimum cost of transporting the subpopulations of $G$ to those of $G'$~\cite{chen2017optimal,delon2020wasserstein,bing2022estimation,nguyen2013convergence}.
The MR estimator is then defined as
\begin{equation*}
\widehat{G}^{\mr} =\argmin\left\{T_{c}(\widebar{G}, G):~G\in\sG_{K}\right\}.
\end{equation*}
Although $\widehat G^{\mr}$ is defined through an apparently complicated bilevel optimization problem, the program simplifies and admits a clear interpretation, as discussed in Zhang and Chen~\cite{zhang2022distributed}.
More precisely, when minimizing $T_c(\widebar G, G)$ over $G \in \sG_K$, the marginal constraints associated with the unknown weights of $G$ are redundant, and each support point $\widehat{\theta}_{ik}$ in $\widebar G$ sends its full mass to the closest support point of $\widehat{G}^{\mr}$.
The optimal transportation plan can therefore be interpreted as clustering the local component parameters $\{\widehat{\theta}_{ik} : (i,k) \in [m] \times [K]\}$ into $K$ groups, with the support points of $\widehat{G}^{\mr}$ serving as cluster barycenters.
Thus, MR implicitly aligns local components through clustering while simultaneously aggregating them through barycentric averaging.
We give a more detailed description of the numerical algorithm for computing the MR estimator in Appendix~\ref{app:mm_algorithm}.

%%%%%%%%%%%%%%%%%%%%%%%%%%%%%%%%%%%%%%%%%%%%%%%%%%%%%%%%%%%%%%%%%%
\subsection{Robust aggregation under machine-level Byzantine failure}
Distributed learning systems are vulnerable to Byzantine failure, under which some local machines may deviate arbitrarily from the prescribed communication protocol because of hardware or software faults, communication errors, or malicious attacks. 
Consequently, the central server may receive arbitrary local estimates without knowing which transmissions are authentic.

Existing Byzantine-tolerant distributed learning methods commonly formulate failure at the machine level: each machine either transmits its complete authentic local estimate or sends an arbitrary replacement. 
Specifically, let $\gB\subset[m]$ denote the unknown set of Byzantine machines and let $\alpha=|\gB|/m$ be the corresponding failure rate. 
Under the machine-level Byzantine failure model, the central server observes
\[
    \widetilde{G}_i =
    \begin{cases}
        \widehat{G}_i, & i\notin \gB,\\
        \xi_i, & i\in \gB,
    \end{cases}
\]
where $\xi_i$ is arbitrary. 
This formulation treats the complete local mixture as a single transmitted object: once machine $i$ is Byzantine, no distinction is made between its reliable and corrupted component estimates.

The MR estimator is sensitive to such failures because it treats all transmitted local components as legitimate inputs. 
Consequently, corrupted components may be assigned to genuine clusters and pull the corresponding barycenters away from the true mixing distribution. 
Robust aggregation is therefore needed to limit the influence of Byzantine transmissions while preserving the accuracy gained from distributed learning.
Zhang et al.~\cite{zhang2026byzantine} address this problem through Distance-Filtered Mixture Reduction (DFMR), which combines machine-level filtering with MR aggregation.
At a high level, DFMR first identifies a reliable neighborhood among the transmitted local mixtures and then applies MR only to the estimates in this neighborhood.

This filtering strategy relies on a strict majority of machines being failure-free, namely $|\gB|<m/2$.
Because the failure-free local estimators concentrate around $G^*$, they form a tight cluster containing more than half of the transmitted mixtures.
Byzantine transmissions may be arbitrary and even coordinated, but they cannot constitute half of the transmitted estimates on their own.
DFMR exploits this majority structure through a robust machine-level center.
Specifically, DFMR first computes a robust center among the transmitted local estimates, called the Center Of ATtention (COAT).
For a distance $D$ between mixture densities, let $r(G)$ be the smallest radius of a ball centered at $G$ that contains at least half of the transmitted estimates.
The COAT estimator $\widehat{G}^{\coat}$ is chosen from the transmitted estimates as the local estimate that minimizes this radius.
Intuitively, COAT selects the transmitted mixture located near the center of the majority cluster.
DFMR then retains only the estimates lying within an inflated neighborhood of this robust center.
For a tuning parameter $\rho\geq 1$, define
\[
    \selected_{\rho}=\left\{i\in[m]:D(\widetilde{G}_i,\widehat{G}^{\coat})\leq \rho r(\widehat{G}^{\coat})\right\}.
\]
The final DFMR estimator applies MR only to the selected estimates:
\[
    \widehat{G}^{\dfmr}_{\rho}
    =
    \mathfrak{A}\left(\{\widetilde{G}_i:i\in\selected_{\rho}\}\right),
\]
where $\mathfrak{A}(\cdot)$ denotes the MR aggregation operator.
Thus, the distance-filtering step protects MR from arbitrary outlying mixtures, while the subsequent MR step aggregates the retained failure-free information.

DFMR has strong theoretical guarantees under the machine-level Byzantine failure model.  
To state its adaptive result, Zhang et al.~\cite{zhang2026byzantine} define $\widetilde\alpha_m(r) =(r/m)\left|\{i\in\gB:D(\xi_i,G^*)\le r\}\right|$.
Under their regularity, identifiability, and first-order strong-identifiability conditions, DFMR satisfies
\[
\|\widehat G^{\dfmr}_\rho-G^*\|
=O_P\{N^{-1/2}+\widetilde\alpha_m(2\rho n^{-1/2})\}.
\]
The first term is the oracle distributed-learning error, whereas the second adaptively measures the aggregate contribution of Byzantine transmissions lying near $G^*$.  
The analysis permits a nonzero number of authentic local mixtures to be filtered out and bounds their aggregate contribution.  
Under a finite $q$th-moment condition, the inflation parameter may be chosen as $\rho=m^\delta$ for $\delta>1/\{2(q-1)\}$; under sub-Gaussian local errors, a logarithmic choice is sufficient.  
When corrupted local estimates do not concentrate near $G^*$, the adaptive term is asymptotically negligible.
Consequently, DFMR is asymptotically equivalent to the oracle MR estimator that aggregates only the failure-free machines and inherits its first-order limiting behavior.

These guarantees rely on a strict majority of fully failure-free machines.
Under component-wise corruption, this condition may fail even when each true component retains an authentic majority across machines, leaving substantial authentic information within partially corrupted local mixtures.
This gap motivates the component-wise Byzantine failure model and aggregation method developed in the next section.

%% The following is a directive for TeXShop to indicate the main file
%%!TEX root = ../main.tex

\section{Byzantine-tolerant aggregation under component-wise failure}
\label{sec:proposed_estimator_comp}

The DFMR estimator reviewed in the previous section is designed for machine-level Byzantine failure, under which a complete local mixture is treated as either authentic or corrupted.
This treatment can be unnecessarily conservative when a transmitted local mixture contains both authentic and corrupted component estimates.
In this section, we formulate component-wise Byzantine failure and propose \emph{component-wise filtered mixture reduction} (\ours{}), which aligns the transmitted components, filters them within aligned clusters, and aggregates the retained estimates.
We then establish the convergence rate of \ours{} relative to the component-wise oracle estimator that aggregates only authentic component estimates.

\subsection{Problem setting: component-wise Byzantine failure}
Component-wise failure must be defined carefully because the components of a finite mixture are identifiable only up to permutation.
We therefore fix a labeling of the population mixture $G^*=\sum_{k=1}^K w_k^*\delta_{\theta_k^*}$ and interpret $k$ as the index of a population component.
Although each locally computed mixture estimate is unordered, the corruption model must specify which population component is corrupted on each machine.
For theoretical indexing, we therefore relabel the components of each clean local MLE by the permutation minimizing its Euclidean distance from the ordered components of $G^*$ and write $\widehat\psi_{ik}=(\widehat w_{ik},\widehat\theta_{ik}^{\top})^{\top}$ for the resulting estimate of $\psi_k^*=(w_k^*,\theta_k^{*\top})^{\top}$.
If several reordered component vectors attain the minimum, we choose the lexicographically smallest one; any remaining tie yields the same ordered vector.
This oracle relabeling is applied before contamination and serves only to define the Byzantine failure model; it is unavailable to the central server.

\begin{definition}[Component-wise Byzantine failure]
\label{def:component-wise_failure}
For each population component $k\in[K]$, let $\gB_k\subseteq[m]$ be a deterministic set, unknown to the central server, containing the machines on which the oracle-relabelled estimate of component $k$ is replaced.
The transmitted component occurrence with latent population index $k$ is
\[
    \widetilde{\psi}_{ik}
    =
    \begin{cases}
        \widehat{\psi}_{ik}, & i\notin\gB_k,\\
        \xi_{ik}, & i\in\gB_k,
    \end{cases}
\]
where the replacement values may be chosen jointly and may depend on all local data.
The index $k$ identifies the replaced population component and does not constrain the value of $\xi_{ik}$.
Writing $\widetilde\psi_{ik}=(\widetilde w_{ik},\widetilde\theta_{ik}^{\top})^{\top}$, define the transmitted local mixture estimate by $\widetilde G_i = \sum_{k=1}^K\widetilde w_{ik}\delta_{\widetilde\theta_{ik}}$.
Define $\gI_k=[m]\setminus\gB_k$, $\sO_k=\{(i,k):i\in\gI_k\}$.
Define the sets of machines containing at least one corrupted component and of fully authentic machines, respectively, by $\gB=\bigcup_{k=1}^K\gB_k$, $\gB^c=[m]\setminus\gB=\bigcap_{k=1}^K\gI_k$.
\end{definition}

The server observes $\widetilde G_i$ as an unordered multiset of $K$ transmitted component pairs; the latent indices $k$ and sets $\sO_k$ are used only for analysis and are not required by our method.
Each message must be a valid mixing distribution, with $\widetilde\theta_{ik}\in\Theta$, $\widetilde w_{ik}>0$, and $\sum_{k=1}^K\widetilde w_{ik}=1$.
This protocol requirement ensures that the discrepancy and barycentric update below are well defined.
Here corruption concerns the full component parameter $\psi_{ik}=(w_{ik},\theta_{ik}^{\top})^\top$: machine $i$ belongs to $\gB_k$ if either the mixing weight or the subpopulation parameter of component $k$ is replaced. 
Because the mixing weights must sum to one, changing one weight may require changing another; in that case, the machine belongs to the failure set of each affected component. Thus, the sets $\gB_1,\ldots,\gB_K$ may overlap.

\begin{assumption}[Component-wise majority]
\label{assump:component_wise_majority}
For every $k\in[K]$, let $\alpha_k=|\gB_k|/m$ denote the failure rate for component $k$. 
There exists a constant $\kappa\in(0,1/2)$ such that, for all sufficiently large $m$, $\max_{k\in[K]}\alpha_k\leq 1/2-\kappa$.
\end{assumption}
Thus, authentic estimates retain a uniformly positive majority margin separately for every true component. 
This condition places no restriction on $|\gB|/m$.
In particular, every transmitted local mixture may contain a corrupted component even though each true component retains an authentic majority across the $m$ machines.

\begin{remark}[Comparison with machine-level Byzantine failure]
Machine-level Byzantine failure is the special case $\gB_1=\cdots=\gB_K$, so every component on a failed machine is corrupted.
In that setting, methods such as DFMR filter whole local mixtures and therefore require a majority of fully failure-free machines.
By contrast, under component-wise failure, a machine may be unreliable for one component while still carrying useful information for the others.
Thus, Assumption~\ref{assump:component_wise_majority} can hold even when the machine-level majority condition fails, as illustrated in Figure~\ref{fig:failure_models}.
\end{remark}

\begin{remark}[Relation to cellwise contamination]
The proposed model resembles the cellwise or coordinate-wise contamination models studied in robust statistics~\cite{liu2021robust,raymaekers2024challenges}, but label switching creates a fundamental distinction.
Euclidean coordinates have fixed identities across observations, whereas mixture components are identifiable only up to permutation.
Consequently, the component correspondence must be recovered before coordinate-wise robust aggregation can be applied.
\end{remark}

\subsection{The component-wise filtered mixture reduction (\ours{}) estimator}
If the transmitted components were already aligned, Assumption~\ref{assump:component_wise_majority} would allow the central server to filter and aggregate the estimates of each true component separately.
The difficulty is that this majority condition is defined only after oracle alignment, whereas the server observes arbitrarily permuted component labels.
Without a common labeling template, authentic estimates from different true components may be grouped and aggregated together, as illustrated in the left panel of Figure~\ref{fig:alignment_with_without_anchor}.

\begin{figure}[!ht]
\centering
\includegraphics[width=0.35\textwidth]{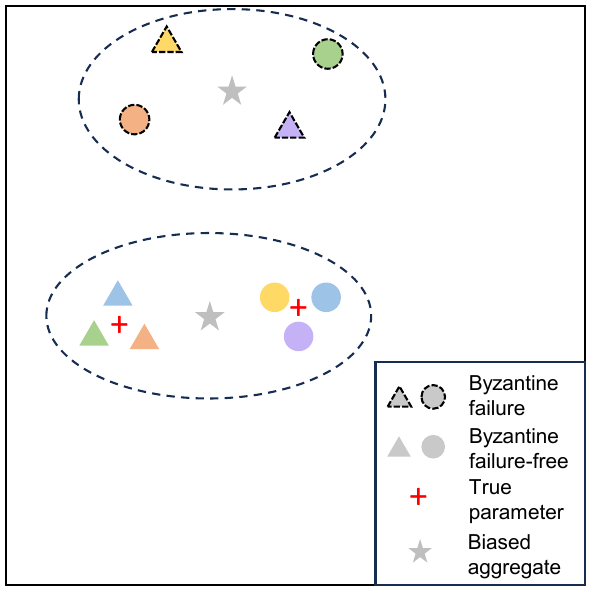}
\includegraphics[width=0.35\textwidth]{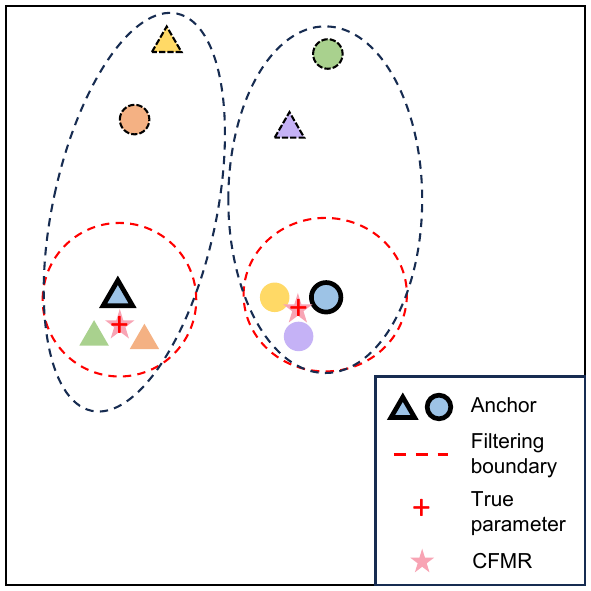}
\caption{\textbf{Alignment with and without an anchor.}
The example contains $m=5$ machines and $K=2$ mixture components.
Colors distinguish machines, while triangles and circles represent the two true components.
Dashed outlines indicate Byzantine-corrupted estimates, bold outlines identify the components transmitted by the unique failure-free anchor machine, and red crosses ($+$) mark the true parameters.
Here, the component-wise failure rate is $\alpha_k=0.4$, whereas the machine-level failure rate is $0.8$.
\textbf{Left:} Without a common label template, distance-based clustering can group failure-free estimates from different true components and form an artificial cluster, producing a biased aggregate (gray star $\star$).
\textbf{Right:} Nearest-anchor assignment correctly aligns the local components; the red dashed circles indicate the subsequent CFMR filtering regions, and the retained estimates yield a robust aggregate (red star $\star$).}
\label{fig:alignment_with_without_anchor}
\end{figure}

Component-wise majority alone does not resolve this ambiguity, because corruptions associated with different true components may be coordinated to create competing component groups as illustrated in the example below.
\begin{example}[Insufficiency of component-wise majority]
\label{ex:component_majority_not_sufficient}
Consider an idealized setting without local estimation error, with $m=K=3$, equal mixing weights, and four distinct subpopulation parameters $a,b,c,x$.
Suppose the central server receives
\[
    \widetilde G_1=\frac{1}{3}(\delta_x+\delta_b+\delta_c),\qquad
    \widetilde G_2=\frac{1}{3}(\delta_a+\delta_x+\delta_c),\qquad
    \widetilde G_3=\frac{1}{3}(\delta_a+\delta_b+\delta_x).
\]
These transmissions are compatible with the true mixing distribution $G_A^*=\frac{1}{3}(\delta_a+\delta_b+\delta_c)$: on machines $1$, $2$, and $3$, respectively, the authentic estimates of $a$, $b$, and $c$ are replaced by $x$.
Each true component then has failure rate $1/3<1/2$, although no machine is fully failure-free.
The same transmissions are also compatible with $G_B^*=\frac{1}{3}(\delta_a+\delta_b+\delta_x)$: on machines $1$ and $2$, respectively, the authentic estimates of $a$ and $b$ are replaced by $c$, while machine $3$ is failure-free.
Again, every component has failure rate below $1/2$.
Hence, under Assumption~\ref{assump:component_wise_majority} alone, the central server cannot distinguish these two true mixing distributions from the transmitted mixtures.
Some additional structural restriction is therefore needed to distinguish authentic component groups from coordinated corrupted alternatives.
\end{example}

To supply the additional structure identified in Example~\ref{ex:component_majority_not_sufficient}, we require at least one transmitted local mixture to provide a coherent labeling template.
A fully failure-free machine is a transparent sufficient condition, although it is not necessary; alternative structural conditions may also permit global alignment.
We therefore impose the following anchoring condition.
\begin{assumption}[Existence of a failure-free anchor machine]
\label{assump:one_failure-free_machine-method}
There exists at least one local machine that transmits all $K$ component estimates without corruption, i.e., $\gB^c\neq\emptyset$.
\end{assumption}
If the identity of a failure-free machine were known, its $K$ authentic components would establish a common component correspondence across the local mixture estimates.
Under the conditions stated later, this correspondence correctly groups authentic estimates associated with the same true component with probability tending to one.
The ambiguity in Example~\ref{ex:component_majority_not_sufficient} is thereby removed, as illustrated in the right panel of Figure~\ref{fig:alignment_with_without_anchor}.
In practice, the identity of the failure-free machine is unknown and need not be supplied to \ours{}.
We first describe how a given anchor is used to construct the estimator and then present a data-driven procedure for selecting the anchor.

\subsubsection{Aggregation given an anchor}
\label{sec:construction_of_CFMR}

Suppose an anchor $\bm{\psi}^{\dagger}=(\psi_1^{\dagger},\ldots,\psi_K^{\dagger})^{\top}$ is available.
When the anchor is supplied by machine $i^*$, we have $\bm{\psi}^{\dagger}=\widetilde{\bm\psi}_{i^*}$.
Given this anchor, \ours{} proceeds in three steps.
It first assigns each transmitted component estimate to its closest anchor component.
Within each aligned cluster, it then retains estimates whose discrepancy from the anchor does not exceed an inflated majority radius.
Finally, it combines the retained estimates to construct the corresponding component of the global mixture estimate.
To define the alignment and filtering criteria, let
$\psi=(w,\theta^\top)^\top$ denote a generic component parameter vector and
consider the discrepancy
\[
c_{\lambda}(\psi',\psi)=D(\theta',\theta)+\lambda(w'-w)^2,
\]
where $\lambda>0$ balances discrepancies in the subpopulation parameter and the mixing weight.
Including both terms is important because Byzantine corruption may affect either quantity or both.
We take $D(\theta',\theta)$ to be the reversed Bregman divergence generated by a strictly convex function\footnote{Appropriate choices of $A(\cdot)$ yield the squared Euclidean distance, the squared Mahalanobis distance, and the Kullback--Leibler divergence for exponential families.} $A(\cdot)$: $D(\theta',\theta) =A(\theta)-A(\theta')-(\theta-\theta')^\top\nabla A(\theta')$.

\begin{wrapfigure}{r}{0.48\textwidth}
\vspace{-0.7\baselineskip}
\centering
\footnotesize
\begin{minipage}{\linewidth}
\hrule
\vspace{0.4ex}
\refstepcounter{algorithm}
\textbf{Algorithm~\thealgorithm: \ours{} aggregation given an anchor.}
\label{alg:submachine_level}
\par\vspace{0.4ex}
\hrule
% \vspace{0.3ex}
\begin{algorithmic}[1]
\State {\bfseries Input:} Anchor machine $i^*$, $\{\widetilde{G}_i:i\in[m]\}$, $c_{\lambda}(\cdot,\cdot)$, and $\rho$.
\For{$k=1,\ldots,K$}
    \State Assign
    \[
        \sC_k\leftarrow
        \{(i,j):\widetilde{\psi}_{ij}
        \text{ is closest to }\widetilde{\psi}_{i^*k}\}.
    \]
    \State Compute the majority radius
    \[
        r_k\leftarrow
        \{c_{\lambda}(\widetilde{\psi}_{ij},\widetilde{\psi}_{i^*k}):
        (i,j)\in\sC_k\}_{\lceil m/2\rceil}.
    \]
    \State Retain
    \[
        \widehat{\sO}_k\leftarrow
        \{(i,j)\in\sC_k:
        c_{\lambda}(\widetilde{\psi}_{ij},\widetilde{\psi}_{i^*k})\leq\rho r_k\}.
    \]
    \State Update $\widehat{\theta}_k$ by the barycenter over $\widehat{\sO}_k$.
    \State Set
    $\widehat{w}'_k\leftarrow|\widehat{\sO}_k|^{-1}
    \sum_{(i,j)\in\widehat{\sO}_k}\widetilde{w}_{ij}$.
\EndFor
\State Normalize $\widehat{w}_k\leftarrow
\widehat{w}'_k/\sum_{\ell=1}^{K}\widehat{w}'_\ell$.
\State {\bfseries Output:}
$\widehat{G}^{\mathrm{CFMR}}\leftarrow
\sum_{k=1}^{K}\widehat{w}_k\delta_{\widehat{\theta}_k}$.
\end{algorithmic}
% \vspace{0.3ex}
\hrule
\end{minipage}
\end{wrapfigure}

The construction begins by assigning each transmitted component estimate to its nearest anchor component.
For each $(i,j)\in[m]\times[K]$, define the anchor-assignment label
\[
\mathfrak a_{ij}^\dagger\in
\operatorname*{arg\,min}_{k\in[K]}
c_{\lambda}(\widetilde\psi_{ij},\psi_k^\dagger),
\]
where, among distinct tied anchor components, we choose the one whose component vector is smallest in a fixed lexicographic order.
This value-based convention is invariant to the order in which the anchor components are presented.
A candidate anchor containing duplicate component vectors is assigned an infinite total radius and excluded from anchor selection.
The anchor $\bm\psi^\dagger$ then induces the transmitted-index clusters
\be
\label{eq:cost_based_alignment}
\sC_k(\bm{\psi}^{\dagger})
    =
    \left\{
        (i,j)\in[m]\times[K]:
        \mathfrak a_{ij}^\dagger=k
    \right\}.
\ee
Thus, for an admissible anchor with distinct component vectors, $\sC_1(\bm\psi^\dagger),\ldots,\sC_K(\bm\psi^\dagger)$ form a partition of $[m]\times[K]$.
Under accurate alignment, authentic components from the same machine enter distinct clusters, whereas multiple corrupted components may still enter one cluster.
When no ambiguity arises, we suppress the dependence on $\bm\psi^\dagger$.

Within each aligned cluster, we next compute a majority radius.
For the $k$th anchor component, define
\begin{equation}
\label{eq:majority_radius}
    r(\psi_k^{\dagger})
    =
    \begin{cases}
    \left\{
        c_{\lambda}(\widetilde{\psi}_{ij},\psi_k^{\dagger}):
        (i,j)\in\sC_k(\bm{\psi}^{\dagger})
    \right\}_{\lceil m/2\rceil},
    & |\sC_k(\bm\psi^\dagger)|\ge \lceil m/2\rceil,\\
    \infty,
    & |\sC_k(\bm\psi^\dagger)|< \lceil m/2\rceil
    \end{cases},
\end{equation}
where $\{\cdot\}_{q}$ denotes the $q$th order statistic of the displayed multiset of real numbers.
When finite, $r(\psi_k^\dagger)$ is the smallest radius centered at $\psi_k^\dagger$ that contains at least $\lceil m/2\rceil$ assigned estimates.
If fewer than $\lceil m/2\rceil$ component estimates are assigned to the $k$th anchor component, the resulting cluster lacks the required majority support; we therefore set $r(\psi_k^\dagger)=\infty$.
Under correct alignment, the authentic estimates of every true component provide the required majority support.
Thus, the infinity convention flags a candidate anchor whenever one of its components fails to attract a sufficiently large cluster, while a finite majority radius provides a data-driven filtering scale without requiring knowledge of $\alpha_k$.

Within the $k$th aligned cluster, \ours{} retains the component estimates lying within an inflated radius of the anchor component:
\begin{equation*}
    \widehat{\sO}_k
    =
    \left\{
        (i,j):(i,j)\in\sC_k,\ 
        c_{\lambda}(\widetilde{\psi}_{ij},\psi_k^{\dagger})
        \leq \rho r(\psi_k^{\dagger})
    \right\},
\end{equation*}
where $\rho\geq1$ controls the tolerance of the filter.
Using the majority radius directly can exclude authentic estimates when the corruption rate is well below one half, so the radius is inflated before filtering.
The retained subpopulation parameters are then aggregated through the barycenter
\[
    \widehat{\theta}_k
    =
    \argmin_{\theta}
    \sum_{(i,j)\in\widehat{\sO}_k}
        \widetilde{w}_{ij}D(\widetilde{\theta}_{ij},\theta),
\]
while the preliminary mixing weight is
\[    \widehat{w}'_k
    =
    \frac{1}{|\widehat{\sO}_k|}
    \sum_{(i,j)\in\widehat{\sO}_k}\widetilde{w}_{ij}.
\]
The preliminary weights are then normalized by $\widehat{w}_k=\widehat{w}'_k/\sum_{\ell=1}^{K}\widehat{w}'_\ell$ so that they sum to one.
The barycentric update is not robust by itself: robustness comes from the component alignment and majority-radius filtering performed before this ordinary aggregation step.
Algorithm~\ref{alg:submachine_level} summarizes the aggregation stage.

\Needspace{0.42\textheight}
\begin{wrapfigure}{r}{0.48\textwidth}
\vspace{-0.7\baselineskip}
\centering
\footnotesize
\begin{minipage}{\linewidth}
\hrule
\vspace{0.4ex}
\refstepcounter{algorithm}
\textbf{Algorithm~\thealgorithm: Anchor machine selection.}
\label{alg:initial_estimate}
\par\vspace{0.4ex}
\hrule
% \vspace{0.3ex}
\begin{algorithmic}[1]
\State {\bfseries Input:} $\{\widetilde{G}_i:i\in[m]\}$ and $c_{\lambda}(\cdot,\cdot)$.
\State Initialize $\gR\leftarrow\infty$ and $i^*\leftarrow1$.
\For{$i'=1,\ldots,m$}
    \If{$\widetilde{\bm\psi}_{i'}$ contains duplicate component vectors}
        \State Set $R_{i'}\leftarrow\infty$ and continue to the next candidate.
    \EndIf
    \State Set $R_{i'}\leftarrow0$.
    \State Assign all transmitted components to the nearest component of $\widetilde{G}_{i'}$ and form $\sC_1(\widetilde{\bm\psi}_{i'}),\ldots,\sC_K(\widetilde{\bm\psi}_{i'})$.
    \For{$k=1,\ldots,K$}
        \If{$|\sC_k(\widetilde{\bm\psi}_{i'})|<\lceil m/2\rceil$}
            \State Set $r_{i'k}\leftarrow\infty$.
        \Else
            \State Compute
            \[
                r_{i'k}\leftarrow
                \{c_{\lambda}(\widetilde{\psi}_{ij},\widetilde{\psi}_{i'k}):
                (i,j)\in\sC_k(\widetilde{\bm\psi}_{i'})\}_{\lceil m/2\rceil}.\]
        \EndIf
        \State Update $R_{i'}\leftarrow R_{i'}+r_{i'k}$.
    \EndFor
    \If{$R_{i'}<\gR$}
        \State Set $\gR\leftarrow R_{i'}$ and $i^*\leftarrow i'$.
    \EndIf
\EndFor
\State {\bfseries Output:} index of the selected anchor machine $i^*$.
\end{algorithmic}
% \vspace{0.3ex}
\hrule
\end{minipage}
% \vspace{-2.5\baselineskip}
\end{wrapfigure}

\subsubsection{Choosing the anchor}
Because the identity of the failure-free machine is unknown, \ours{} searches for an anchor among the $m$ transmitted local mixtures.
Assumption~\ref{assump:one_failure-free_machine-method} guarantees that this candidate set contains at least one valid labeling template.
Therefore, to identify a valid template within this candidate set, \ours{} forms the anchor-induced clusters and majority radii for each candidate $\widetilde G_{i'}$ and summarizes their overall concentration by the total radius score $R(\widetilde{\bm\psi}_{i'}) = \sum_{k=1}^{K}r(\widetilde\psi_{i'k})$.
The total radius score favors candidates for which every anchor component is supported by a compact, majority-sized cluster.
By convention, a candidate containing duplicate component vectors or inducing an undersized cluster receives an infinite score and is excluded whenever a candidate with a finite score exists.
Among the remaining candidates, minimizing the score favors those that induce more concentrated component clusters.
Accordingly, \ours{} selects any minimizer 
\[i^*\in\operatorname*{arg\,min}_{i'\in[m]}R(\widetilde{\bm\psi}_{i'}).\]
Algorithm~\ref{alg:initial_estimate} summarizes this selection step.

Together, Algorithms~\ref{alg:initial_estimate} and~\ref{alg:submachine_level} constitute \ours{}.
The procedure requires neither the oracle ordering nor the component-wise failure rates, uses only the transmitted local mixtures, and introduces no additional communication among machines.
Its computational cost is dominated by anchor selection; once the anchor is selected, the remaining operations have the same basic assignment-and-barycenter structure as MR.

\section{Theoretical properties}
\label{subsec:theoretical_properties}
This section investigates the statistical properties of the proposed \ours{}. 
The analysis proceeds in three steps. 
We first study the ideal case in which a failure-free anchor machine is available and establish the resulting alignment and filtering properties.
We then show that the proposed anchor selection procedure identifies such an anchor with probability tending to one.
Finally, combining these results, we establish the asymptotic properties of the resulting \ours{} estimator.

%%%%%%%%%%%%%%%%%%%%%%%%%%%%%%%%%%%%%%%%%%%%%%%%%%%%%%%%%%%%%%%%%%%%%%%%%%%%%%%%%%%%%%%%%%%
\subsection{Notation and assumptions}
For a positive integer $m$, $[m]$ denotes the set $ \{1, 2, \ldots, m\}$.
For a vector $x \in \sR^d$, let $\|x\|$ denote its Euclidean norm.
Let $d_E(x,x')=\|x-x'\|$ denote the Euclidean distance.
For a square matrix $H$, let $\lambda_{\min}(H)$ and $\lambda_{\max}(H)$ denote its smallest and largest eigenvalues, respectively, and let $\matrixnorm{H} = \sqrt{\lambda_{\max}(H^\top H)}$ denote its operator (spectral) norm.
We write $H \succeq 0$ and $H \succ 0$ for positive semidefinite and positive definite matrices, respectively, with analogous notation for
negative semidefinite and negative definite matrices.
For a parametric density function $f(x;\theta)$, the gradient with respect to the parameter is denoted by $\nabla f(x;\theta')=
\partial f(x;\theta)/\partial \theta|_{\theta=\theta'}.$
For a sequence of random variables $X_n$ and a positive deterministic sequence $a_n$, we write $X_n=O_P(a_n)$ if, for every $\epsilon>0$, there exist constants $C_{\epsilon}>0$ and $n_{\epsilon}$ such that $\sP\{|X_n|\le C_{\epsilon}a_n\}\ge 1-\epsilon$ for all $n\ge n_{\epsilon}$.
Similarly, $X_n=\Omega_P(a_n)$ if, for every $\epsilon>0$, there exist constants $c_{\epsilon}>0$ and $n_{\epsilon}$ such that $\sP\{|X_n|\ge c_{\epsilon}a_n\}\ge 1-\epsilon$ for all $n\ge n_{\epsilon}$.
We write $X_n=\Theta_P(a_n)$ when both $X_n=O_P(a_n)$ and $X_n=\Omega_P(a_n)$ hold.
We write $X_n=o_P(a_n)$ if $|X_n|/a_n\rightarrow0$ in probability, and $X_n=\omega_P(a_n)$ if $|X_n|/a_n\rightarrow\infty$ in probability; equivalently, $X_n=\omega_P(a_n)$ if $\sP\{|X_n|>Ca_n\}\rightarrow 1$ for every fixed constant $C>0$.
More generally, for a nonnegative random sequence $Y_n$, the notation $X_n=O_P(Y_n)$ means that, for every $\epsilon>0$, there exist constants $C_\epsilon>0$ and $n_\epsilon$ such that $\sP\{|X_n|\le C_\epsilon Y_n\}\ge1-\epsilon$ for all $n\ge n_\epsilon$. When $Y_n>0$ almost surely, this is equivalent to $X_n/Y_n=O_P(1)$; in that case, $X_n=o_P(Y_n)$ means $X_n/Y_n=o_P(1)$.
Given a nonnegative discrepancy metric $d(\cdot,\cdot)$, we define a ball at $z$ with radius $r$ by $B_r(z;d)=\{z':d(z',z)\le r\}$.

We next state the assumptions used to establish the statistical properties of \ours{}. They fall into three groups.
The first group concerns the finite mixture likelihood and the behavior of the local estimators. The second describes the local geometry of the component discrepancy. 
The third concerns the amount and concentration of the corrupted component estimates.

%%%%%%%%%%%%%%%%%%%%%%%%%%%%%%%%%%%%%%%%%%%%%%%%%%%%%%%%%%%%%%%%%%%%%%%%%%%%%%%%%%%%%%%%%%%

\vspace{1ex} \noindent
\paragraph{\bf Assumptions on the finite mixture model}
The first group consists of regularity conditions on the finite mixture model and its local likelihood.
Recall that the true mixing distribution $G^*$ has components $\psi_k^* = (w_k^*, \theta_k^{*\top})^\top$ and that a generic $K$-component mixing distribution $G$ has components $\psi_k = (w_k, \theta_k^{\top})^\top$.

\begin{assumption}[Finite mixture model]
\label{assump:compact_parameter_space}
The parameter space $\Theta$ is compact and convex.
The true support points are distinct and satisfy
$\theta_k^*\in\operatorname{int}(\Theta)$ for every $k\in[K]$, while the true weight vector lies in the relative interior of the simplex: $\Delta_{K-1}=\{(w_1,\ldots,w_K)^{\top}, w_k>0,\sum_{k=1}^K w_k=1\}$.
Thus, $G^*=\sum_{k=1}^K w_k^*\delta_{\theta_k^*}\in \mathbb G_K\setminus\mathbb G_{K-1}$.
The ordinary likelihood is maximized over the compact constrained model class induced by $\Theta$ and the closed weight simplex. 
\end{assumption}

\begin{assumption}[Identifiability]
\label{assump:identifiability}
The mixture model on $\gF$ is identifiable: for any $G,G'\in\sG_K$, equality of $f_G$ and $f_{G'}$ almost everywhere implies $G=G'$ as mixing distributions.
\end{assumption}
This is the standard identifiability condition for finite mixtures and is satisfied by many commonly used component families, including Gaussian mixtures~\cite{teicher1961identifiability,yakowitz1968identifiability}.

To state the likelihood conditions, fix an ordering of the components of $G^*$.
Let $S_K$ denote the set of permutations of $[K]$.
The oracle ordering of $G$ is determined by
\begin{equation}
\label{eq:parameter_vector}
\sigma_G\in \operatorname*{arg\,min}_{\sigma\in S_K}
\sum_{k=1}^K
\left\{
|w_{\sigma(k)}-w_k^*|^2
+\|\theta_{\sigma(k)}-\theta_k^*\|^2
\right\}.
\end{equation}
If several permutations attain the minimum, we choose the lexicographically smallest one.
Because the true support points are distinct, the minimizing permutation is unique with probability tending to one for any consistent estimator of $G^*$.
For $G=\widehat G_i$, this ordering gives the oracle relabeling used in Definition~\ref{def:component-wise_failure}.
After oracle ordering, define the free parameter vector
\[
\mG
=\bigl(w_{\sigma_G(1)},\ldots,w_{\sigma_G(K-1)},
\theta_{\sigma_G(1)}^\top,\ldots,\theta_{\sigma_G(K)}^\top\bigr)^\top
\in\sR^{d_G},
\qquad d_G=Kp+K-1,
\]
where $w_{\sigma_G(K)}=1-\sum_{k=1}^{K-1}w_{\sigma_G(k)}$.
In particular, $\mG^*=(w_1^*,\ldots,w_{K-1}^*,\theta_1^{*\top},\ldots,\theta_K^{*\top})^\top$.
All likelihood derivatives and MLE expansions below use the free coordinate $\mG$.

Let $f_{\mG}$ denote the mixture density represented by $\mG$, and define the log-likelihood contribution of one observation and its population counterpart by $\ell(\mG;X)=\log f_{\mG}(X)$ and $L(\mG)=\sE_{G^*}\{\ell(\mG;X)\}$.
The score and Fisher information at $\mG^*$ are $s_{\mG^*}(X)=\left.\nabla_{\mG}\ell(\mG;X)\right|_{\mG=\mG^*}$ and $\mI(\mG^*)=\sE_{G^*}\{s_{\mG^*}(X)s_{\mG^*}(X)^\top\}$.
When no ambiguity arises, the subscript on the expectation is omitted.

\begin{assumption}[Likelihood geometry]
\label{assump:local_strong_concavity}
$L(\mG)$ is twice continuously differentiable in a neighborhood of $\mG^*$, differentiation may be interchanged with expectation at $\mG^*$, and the information identity holds: $\mI(\mG^*)
=-\sE_{G^*}\{\nabla^2_{\mG}\ell(\mG^*;X)\} =-\nabla^2L(\mG^*)\succ0$.
\end{assumption}

\begin{assumption}[Likelihood smoothness, concentration, and moment order]
\label{assump:smoothness}
For a fixed $q\geq8$, the following conditions hold.
\begin{enumerate}[label=(\alph*), leftmargin=*]
\item \textbf{Local $q$th-order smoothness.}
There exist constants $\delta_0,H_q<\infty$ and a nonnegative function $W(x)$ satisfying $\sE\{W(X)^q\}\leq H_q$ s.t., for all $\mG,\mG'$ in a $\delta_0$-neighborhood of $\mG^*$, $\sE\{\|\nabla\ell(\mG;X)\|^q\}\le H_q$, $\sE\{\matrixnorm{\nabla^2\ell(\mG;X)-\nabla^2L(\mG)}^q\}
\le H_q$, and
\[
\matrixnorm{\nabla^2\ell(\mG';x)-\nabla^2\ell(\mG;x)}
\le W(x)\|\mG'-\mG\|.
\]
\item \textbf{Global likelihood concentration.}
The population log-likelihood $L(\mG)=\sE\{\log f_{\mG}(X)\}$ is continuous on the compact constrained model class.  
If $L_n(\mG)=n^{-1}\sum_{j=1}^n\log f_{\mG}(X_j)$ is the empirical log-likelihood based on an IID sample of size $n$, then
\[
\sE\left[
\sup_{\mG\in\sG_K}
\left|
\{L_n(\mG)-L_n(\mG^*)\}-\{L(\mG)-L(\mG^*)\}
\right|^q
\right]
\le H_q n^{-q/2}.
\]
\end{enumerate}
\end{assumption}
Part~(a) is the $q$th-order version of the local smoothness condition in Zhang, Tan and Chen~\cite{zhang2026byzantine}; their stated condition has $q=8$.  
Part~(b) is a quantitative uniform law of large numbers for the ordinary global MLE.  
It is not implied by local smoothness and compactness alone.  
Instead, it replaces the global concavity used to localize a global $M$-estimator in Huang and Huo~\cite{huang2019distributed}, which is generally unavailable for a finite-mixture likelihood.  
Part~(b) supplies polynomial control of the rare event on which the global maximizer leaves the local likelihood chart.  
As proved in Appendix~\ref{app:local_mle}, these likelihood-level conditions give $\sE\|\widehat{\mG}_i-\mG^*\|^s=O(n^{-s/2})$ for every $1\leq s\leq q$.
Thus the moment order used below is inherited from likelihood smoothness, rather than imposed directly on the local estimators.  
Here $q$ is fixed; if $q$ were allowed to grow with $n$, the dependence of $H_q$ and the resulting moment constants on $q$ would also have to be controlled.

\begin{remark}[A sufficient condition for global likelihood concentration]
\label{rem:global_likelihood_concentration}
Part~(b) of Assumption~\ref{assump:smoothness} follows from standard bracketing maximal inequalities for empirical processes~\cite[Chapter~19]{van2000asymptotic} if the centered log-likelihood class $\mathcal L
=\{x\mapsto\ell(\mG;x)-\ell(\mG^*;x):\mG\in\sG_K\}$
has a $q$-integrable envelope and polynomial $L_2(P_{\mG^*})$ bracketing numbers.

For example, consider a fixed-$K$ Gaussian location--covariance mixture with means in a compact set and covariance matrices satisfying $\underline\lambda I\preceq\Sigma\preceq\overline\lambda I$.  
Uniformly over the compact component-parameter space, there are constants $c_1,c_2,c_3>0$ such that
\[
c_1\exp\{-c_2(1+\|x\|^2)\}
\le \inf_{\theta\in\Theta}f(x;\theta)
\le f_{\mG}(x)
\le \sup_{\theta\in\Theta}f(x;\theta)
\le c_3
\]
for every $G\in\sG_K$.  
Consequently, $\sup_{G\in\sG_K}|\ell(G;x)-\ell(G^*;x)|
\le C(1+\|x\|^2)$.
This envelope has moments of every fixed order under the true Gaussian mixture.  
On a bounded observation set the log-density class is a uniformly smooth finite-dimensional parametric class.  
Truncating to balls of radius $O\{\sqrt{\log(1/\varepsilon)}\}$, using the Gaussian tail outside those balls, and covering the compact parameter set inside them gives polynomial
$L_2(P_{G^*})$ bracketing numbers.  
The bracketing maximal inequality therefore yields the $O(n^{-q/2})$ bound in Assumption~\ref{assump:smoothness}(b).  
\end{remark}

\vspace{1ex} \noindent
\paragraph{\bf Assumption on the component discrepancy}
The proposed procedure performs component assignment and filtering using $c_\lambda(\cdot,\cdot)$.
The following assumption requires this discrepancy to induce the same local geometry as the Euclidean distance in a neighborhood of the true component parameters.
\begin{assumption}[Divergence]
\label{assump:reverse_bregman_divergence}
The subpopulation-parameter divergence $D(\theta',\theta)$ is a reversed Bregman divergence induced by a continuously differentiable, strictly convex function $A(\cdot)$ defined on an open set containing $\Theta$:
\[
D(\theta',\theta)=D_A(\theta,\theta')
=A(\theta)-A(\theta')-(\theta-\theta')^\top\nabla A(\theta').
\]
For each $k\in[K]$, the function $A$ is twice continuously differentiable on a neighborhood $B_{\epsilon}(\theta_k^*;\|\cdot\|)$ and, for all $\theta,\theta'\in B_{\epsilon}(\theta_k^*;\|\cdot\|)$,
\[
\eta_-\mI\preceq\nabla^2A(\theta)\preceq\eta_+\mI,
\qquad
\matrixnorm{\nabla^2A(\theta)-\nabla^2A(\theta')}\le \widetilde A\|\theta-\theta'\|,
\]
for constants $\eta_+\ge\eta_->0$ and $\widetilde A>0$.
Since $K$ is fixed, the same constants may be chosen for all components.
\end{assumption}
Together with the quadratic weight term in $c_\lambda$, Assumption~\ref{assump:reverse_bregman_divergence} implies that, in a neighborhood of each $\psi_k^*$, there exist constants $0<c_-\leq c_+<\infty$ such that $c_-\|\psi'-\psi\|^2
\leq c_\lambda(\psi',\psi)
\leq c_+\|\psi'-\psi\|^2$.
Thus, locally, a $c_\lambda$-ball of radius $r$ corresponds to a Euclidean neighborhood of radius proportional to $r^{1/2}$.

\vspace{1ex} \noindent
\paragraph{\bf Assumptions on the corruption mechanism}
%%%%%%%%%%%%%%%%%%%%%%%%%%%%%%%%
The theoretical properties of \ours{} are established under the following assumptions on the corruption mechanism.
In addition to the component-wise majority condition and the existence of a fully authentic local mixture estimate assumed in Assumptions~\ref{assump:component_wise_majority} and~\ref{assump:one_failure-free_machine-method}, respectively, we further require the exclusion of pathological contamination patterns that may mimic authentic component estimates.

\begin{assumption}[Anchor separability]
\label{assump:anchor_separability}
For any fixed constant $C>0$,
    \[
    \sP\left(
    \forall i'\in\gB,\ \exists k\in[K]\ \text{such that}\
    \left|
    \left\{
    (i,j)\in\sC_k(\widetilde{\bm\psi}_{i'}):
    c_{\lambda}(\widetilde{\psi}_{ij},\widetilde{\psi}_{i'k})\le Cn^{-1}
    \right\}
    \right|
    <\left\lceil \frac{m}{2}\right\rceil
    \right)\to1.
    \]
Here $\sC_k(\widetilde{\bm\psi}_{i'})$ is the index set obtained by treating $\widetilde{\bm\psi}_{i'}$ as the candidate anchor and assigning transmitted components to their nearest anchor component.
\end{assumption}

\begin{assumption}[Corruption capacity]
\label{assump:component_capacity}
The following two capacity conditions distinguish fixed truth-centred balls from balls centred at random authentic candidates.
\begin{enumerate}[label=(\alph*), leftmargin=*]
\item \textbf{Truth-centred capacity.}
Given the inflation sequence $\rho$ employed in \ours{}, there exists a constant $\beta\in(0,1/2)$ such that, for any fixed constant $C>0$,
    \[
    \sP\left(
    \max_{k\in[K]}
    \left|
    \left\{(i,\ell):\ \ell\in[K],\ i\in\gB_\ell,\
    c_{\lambda}(\xi_{i\ell},\psi_k^*)\le C\rho n^{-1}
    \right\}
    \right|
    \le \beta m
    \right)\to1.
    \]
\item \textbf{Authentic-anchor capacity.}
There exists a constant $\beta_\dagger\in(0,1/2)$ such that, for every fixed $C>0$,
\[
\sP\left(
\max_{i\in\gB^c}\max_{k\in[K]}
\left|
\left\{(r,\ell):\ \ell\in[K],\ r\in\gB_\ell,\
c_\lambda(\xi_{r\ell},\widehat\psi_{ik})\le Cn^{-1}
\right\}
\right|
\le\beta_\dagger m
\right)\to1.
\]
\end{enumerate}
\end{assumption}
Part~(a) limits the number of corrupted occurrences near each true component at the filtering scale and is required throughout the convergence analysis. 
Part~(b) provides additional uniform control around failure-free candidate anchors. 
It is required only when the analysis permits the filter to remove a vanishing number of authentic tail estimates, as formalized by the tail-trimming alternative in Theorem~\ref{theorem:rate_of_convergence}.
Appendix~\ref{app:assumptions} provides sufficient conditions for Assumptions~\ref{assump:anchor_separability} and~\ref{assump:component_capacity}, together with examples of coordinated contamination patterns under which these assumptions fail.

%%%%%%%%%%%%%%%%%%%%%%%%%%%%%%%%%%%%%%%%%%%%%%%%%%%%%%%%%%%%%%%%%%%%%%%%%%%%%%%%%%%%%%%%%%%
\subsection{Statistical properties}
Recall from Definition~\ref{def:component-wise_failure} that $\gI_k$ is the set of machines carrying an authentic estimate of component $k$ and that $\sO_k=\{(i,k):i\in\gI_k\}$ is the corresponding set of authentic component occurrences.
These latent occurrence labels are used only for analysis and are not available to the central server.

We first consider the idealized case in which the identity of a failure-free anchor machine is known.
To justify aggregation with this anchor, we must show that its components correctly align the authentic local estimates and that the resulting majority radii have the appropriate scale for filtering.
The following lemma establishes these two properties.

\begin{lemma}[Ideal failure-free anchor]
\label{lemma:properties_under_ideal_anchor-method}
Suppose $i^*$ is a failure-free anchor machine.  
Under Assumptions~\ref{assump:component_wise_majority},~\ref{assump:compact_parameter_space}--\ref{assump:smoothness},~\ref{assump:reverse_bregman_divergence}, and~\ref{assump:component_capacity}(a), with independent local data splits, consider the asymptotic regime where $m,n,\rho\to\infty$ and $m=o(n^{q/2})$ and $\rho=o(n)$.
Then
\[
\sP\left\{\bigcap_{k=1}^K
\bigl(\sO_k\subseteq\sC_k(\widetilde{\bm\psi}_{i^*})\bigr)
\right\}\to1,
\qquad
r(\widetilde\psi_{i^*k})=\Theta_P(n^{-1})
\]
for every fixed $k\in[K]$.
\end{lemma}
The probability event in Lemma~\ref{lemma:properties_under_ideal_anchor-method} means that, simultaneously for all $k\in[K]$, every authentic estimate of the $k$th true component is assigned to the $k$th anchor-induced cluster with probability tending to one.
The relation $r(\widetilde\psi_{i^*k})=\Theta_P(n^{-1})$ further shows that the corresponding majority radius remains on the $n^{-1}$ component-cost scale and therefore provides a properly calibrated basis for filtering.
In practice, however, the identity of a failure-free anchor is unknown.
Algorithm~\ref{alg:initial_estimate} addresses this problem by minimizing the total radius score over all transmitted local mixtures.
A failure-free candidate has a total radius of order $n^{-1}$, whereas Assumption~\ref{assump:anchor_separability} ensures that every corrupted candidate has an asymptotically larger radius score.
The following theorem formalizes this separation and establishes that the minimization procedure selects a failure-free anchor.

\begin{theorem}[Selection of a failure-free anchor]
\label{theorem:accurate_selection}
Under Assumptions~\ref{assump:component_wise_majority},~\ref{assump:one_failure-free_machine-method},~\ref{assump:compact_parameter_space}--\ref{assump:smoothness},~\ref{assump:reverse_bregman_divergence}, and~\ref{assump:anchor_separability}, with independent local data splits, consider the asymptotic regime where $m,n\to\infty$ and $m=o(n^{q/2})$.
Then
\[
\sP\left\{
\operatorname*{arg\,min}_{i\in[m]}R(\widetilde{\bm\psi}_i)
\subseteq\gB^c
\right\}\to1.
\]
\end{theorem}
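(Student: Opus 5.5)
The plan is to compare the two groups of candidates directly. I will show that every failure-free candidate has total radius score $O_P(n^{-1})$, and that every candidate in $\gB$ has score $\omega_P(n^{-1})$ uniformly over $\gB$. Once both bounds hold, the minimizer of $R$ cannot lie in $\gB$ with probability tending to one. Assumption~\ref{assump:one_failure-free_machine-method} guarantees that $\gB^c\neq\emptyset$, so I fix one failure-free machine $i_0\in\gB^c$.

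\emph{Step 1: upper bound on the score of $i_0$.}
The argument of Lemma~\ref{lemma:properties_under_ideal_anchor-method} applied to $i_0$ gives $\sO_k\subseteq\sC_k(\widetilde{\bm\psi}_{i_0})$ for all $k$ with probability tending to one, and $r(\widetilde\psi_{i_0k})=O_P(n^{-1})$. That lemma also assumes Assumption~\ref{assump:component_capacity}(a) and $\rho\to\infty$, but neither is needed for the alignment property or for the upper half of the radius bound.

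For alignment I argue directly. The moment bound $\sE\|\widehat{\mG}_i-\mG^*\|^s=O(n^{-s/2})$ with $s=q$, combined with a union bound and Markov's inequality, gives
\[
\max_{i\in[m]}\|\widehat{\mG}_i-\mG^*\|=O_P\bigl(m^{1/q}n^{-1/2}\bigr)=o_P(1),
\]
using $m=o(n^{q/2})$. The true components are distinct, and Assumption~\ref{assump:reverse_bregman_divergence} makes $c_\lambda$ locally equivalent to the squared Euclidean distance. Hence each authentic $\widehat\psi_{ik}$ is closest to $\widehat\psi_{i_0k}$ among the components of $i_0$, so $|\sC_k|\ge|\gI_k|\ge(1/2+\kappa)m\ge\lceil m/2\rceil$ by Assumption~\ref{assump:component_wise_majority}.

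For the radius, local equivalence gives $c_\lambda(\widehat\psi_{ik},\widehat\psi_{i_0k})\le 2c_+\{\|\widehat\psi_{ik}-\psi_k^*\|^2+\|\widehat\psi_{i_0k}-\psi_k^*\|^2\}$. The term for $i_0$ is a single random variable of order $O_P(n^{-1})$. For the others, fix a large $C$. By Markov's inequality each authentic $i$ satisfies $\|\widehat\psi_{ik}-\psi_k^*\|^2>Cn^{-1}$ with probability at most $H/C$. Because the data splits are independent, the fraction of such machines concentrates, and by Chebyshev it is below $\kappa/2$ with high probability once $C$ is large. So at least $\lceil m/2\rceil$ occurrences in $\sC_k$ lie within $O(n^{-1})$ of the anchor. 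This gives $r(\widetilde\psi_{i_0k})\le C'n^{-1}$, and summing over the fixed $K$ components yields $R(\widetilde{\bm\psi}_{i_0})=O_P(n^{-1})$.

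\emph{Step 2: lower bound for corrupted candidates.}
Given $\epsilon>0$, choose $C$ so that $\sP\{R(\widetilde{\bm\psi}_{i_0})\le Cn^{-1}\}\ge1-\epsilon$. Now apply Assumption~\ref{assump:anchor_separability} with the constant $C$. With probability tending to one, every $i'\in\gB$ has some $k$ for which fewer than $\lceil m/2\rceil$ members of $\sC_k(\widetilde{\bm\psi}_{i'})$ lie within $Cn^{-1}$ of $\widetilde\psi_{i'k}$. Two cases follow. If $|\sC_k|<\lceil m/2\rceil$, then $r=\infty$ by convention. Otherwise the $\lceil m/2\rceil$th order statistic exceeds $Cn^{-1}$. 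Candidates with duplicate component vectors get score $\infty$ anyway.

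Either way, $R(\widetilde{\bm\psi}_{i'})\ge r(\widetilde\psi_{i'k})>Cn^{-1}\ge R(\widetilde{\bm\psi}_{i_0})$ for every $i'\in\gB$. This is a strict inequality, so $i'$ cannot be a minimizer. Hence $\limsup\sP\{\operatorname*{arg\,min}_i R(\widetilde{\bm\psi}_i)\not\subseteq\gB^c\}\le\epsilon$, and since $\epsilon$ is arbitrary the probability tends to zero.

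\emph{Main obstacle.}
The delicate step is the uniform-in-$k$ concentration in Step 1. I must show that a majority of authentic occurrences lie within $O(n^{-1})$ of a \emph{random} anchor $\widehat\psi_{i_0k}$. This needs the Markov/Chebyshev counting argument over independent machines, together with the margin $\kappa$ that absorbs the $\kappa/2$ fraction of outlying authentic machines.

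A second point is that the choice of $i_0$ must not depend on the data. Since $\gB^c$ is deterministic, fixing any one element of it suffices. With $i_0$ fixed in advance, Step 1 only requires $O_P$ bounds for one candidate rather than a maximum over all of $\gB^c$.
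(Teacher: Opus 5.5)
Your proof is correct and follows essentially the same route as the paper's proof: fix a deterministic $i^0\in\gB^c$, show $R(\widetilde{\bm\psi}_{i^0})=O_P(n^{-1})$ via uniform alignment plus a majority-count bound, then apply Assumption~\ref{assump:anchor_separability} with the tightness constant from that bound so that every corrupted candidate's score strictly exceeds $R(\widetilde{\bm\psi}_{i^0})$. The differences are minor: you obtain the majority count by a direct Markov/Chebyshev count over machines (using only second moments) rather than the order-statistic and Berry--Esseen comparison of Lemma~\ref{lemma:concentration_of_cost_function}(a), and in your alignment step the cross-component cost gap should be attributed to continuity and off-diagonal positivity of $c_\lambda$ (the paper's $\Delta_0$ argument), since local equivalence only holds inside each component's own neighborhood.
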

The event in Theorem~\ref{theorem:accurate_selection} states that every minimizer of the total radius score belongs to $\gB^c$.
Thus, with probability tending to one, the selected anchor is failure-free regardless of how ties among minimizers are resolved.

Having established that the data-driven procedure recovers a failure-free anchor, we next study the convergence rate of the resulting \ours{} estimator.
Even with a failure-free anchor, some corrupted component occurrences may remain within the filtering neighborhoods, so the convergence rate depends on how many of them are retained.
For any selected minimizer $i^*$, let $\widehat\sO_k$ be the set retained by Algorithm~\ref{alg:submachine_level} for the $k$th anchor-induced cluster, and define the retained-corruption level
\[
\bar\alpha_k
=\frac1m|\widehat\sO_k\setminus\sO_k|,
\qquad
\bar\alpha_{\max}=\max_{k\le K}\bar\alpha_k.
\]
Also set $m_k=|\gI_k|=(1-\alpha_k)m$, $R_{ok}=(m_kn)^{-1/2}+n^{-1}$, and $R_{o,\max}=\max_{k\le K}R_{ok}$.
With this notation, the following theorem quantifies the component-wise estimation error of \ours{} and distinguishes whether all authentic estimates are retained or a tail subset of authentic estimates may be removed by the filter.

\begin{theorem}[Adaptive convergence rate of \ours{}]
\label{theorem:rate_of_convergence}
Under Assumptions~\ref{assump:component_wise_majority},~\ref{assump:one_failure-free_machine-method},~\ref{assump:compact_parameter_space}--\ref{assump:smoothness},~\ref{assump:reverse_bregman_divergence},~\ref{assump:anchor_separability}, and~\ref{assump:component_capacity}(a), with independent local data splits, consider the asymptotic regime
where $m,n,\rho\to\infty$ and $m=o(n^{q/2})$ and $\rho=o(n)$.
Within this asymptotic regime, distinguish the following two cases:
\begin{enumerate}[label=(\alph*), leftmargin=*]
\item \textit{Exact retention.} If $m\rho^{-q/2}\to0$, set $T_{nq}=0$.
\item \textit{Tail trimming.} If Assumption~\ref{assump:component_capacity}(b) also holds, set $T_{nq}=n^{-1/2}\rho^{-(q-1)/2}$.
\end{enumerate}
Then, in the corresponding regime, for every $k\in[K]$,
\[
|\widehat w_k-w_k^*|=O_P\left\{R_{o,\max}
+T_{nq}+\bar\alpha_{\max}\sqrt{\rho/n}\right\},\quad\|\widehat\theta_k-\theta_k^*\|
=O_P\left\{R_{ok}
+T_{nq}+\bar\alpha_k\sqrt{\rho/n}\right\}.
\]
Moreover, in regime~(a), every authentic component estimate is retained with probability tending to one.
\end{theorem}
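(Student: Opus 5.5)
The plan is to work on the event on which the conclusions of Theorem~\ref{theorem:accurate_selection} and Lemma~\ref{lemma:properties_under_ideal_anchor-method} hold together. On that event the selected $i^*$ lies in $\gB^c$, and every authentic occurrence is aligned correctly, so $\sO_k\subseteq\sC_k(\widetilde{\bm\psi}_{i^*})$. The majority radius also satisfies $r(\widetilde\psi_{i^*k})=\Theta_P(n^{-1})$.

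One technical point must be handled first. Lemma~\ref{lemma:properties_under_ideal_anchor-method} is stated for a fixed failure-free machine, while $i^*$ is data-dependent. I would therefore prove the lemma's two conclusions uniformly over $i\in\gB^c$. Alignment uses $\max_{i,k}\|\widehat\psi_{ik}-\psi_k^*\|=o_P(1)$, which follows from the moment bound $\sE\|\widehat{\mG}_i-\mG^*\|^q=O(n^{-q/2})$, a union bound, and $m=o(n^{q/2})$. For the radius, an upper bound on $\max_{i\in\gB^c}r(\widetilde\psi_{ik})$ of order $n^{-1}$ is what the filter needs. It follows from the triangle-type inequality $c_\lambda(\widehat\psi_{jk},\widehat\psi_{ik})\le 2c_+(\|\widehat\psi_{jk}-\psi_k^*\|^2+\|\widehat\psi_{ik}-\psi_k^*\|^2)$ together with a Markov count showing that at least $\lceil m/2\rceil$ authentic occurrences lie within $Cn^{-1}$ of $\psi_k^*$, using the majority margin $\kappa$. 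The lower bound uses Assumption~\ref{assump:component_capacity}(a): corrupted points cannot fill a majority ball of radius $cn^{-1}$ around the truth, and only a small fraction of authentic points lie that close. With this in hand, the filter radius is $\rho r_k\asymp\rho n^{-1}$.

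Next I split the retained set as $\widehat\sO_k=(\sO_k\cap\widehat\sO_k)\cup(\widehat\sO_k\setminus\sO_k)$ and expand the barycenter. For the reversed Bregman divergence the barycenter has the closed form $\nabla A(\widehat\theta_k)=\sum\widetilde w_{ij}\nabla A(\widetilde\theta_{ij})/\sum\widetilde w_{ij}$, and the weight $\widehat w'_k$ is a plain average. I would then linearize $\nabla A$ on the local ball (Assumption~\ref{assump:reverse_bregman_divergence}), so that $\widehat\theta_k-\theta_k^*$ becomes a weighted average of $\widetilde\theta_{ij}-\theta_k^*$ plus a quadratic remainder. This average splits into three terms.

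\textbf{(i) Oracle term.} This is the average over all of $\sO_k$. Using the local MLE expansion from Appendix~\ref{app:local_mle}, it is an average of $m_k$ independent terms of the form $n^{-1/2}\times$score plus $O_P(n^{-1})$ bias and remainder. This gives $R_{ok}$; since the weights couple all components through normalization, $|\widehat w_k-w_k^*|$ picks up $R_{o,\max}$.

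\textbf{(ii) Trimmed authentic occurrences.} These are authentic points with $c_\lambda>\rho r_k\gtrsim\rho n^{-1}$, that is, $\|\widehat\psi_{ik}-\psi_k^*\|\gtrsim\sqrt{\rho/n}$ on the event that the anchor itself is at distance $O(n^{-1/2})$. In regime (a), a union bound with the $q$th moment gives $m\,\sP(\|\widehat\psi_{ik}-\psi_k^*\|>c\sqrt{\rho/n})\le Cm\rho^{-q/2}\to0$, so nothing authentic is removed and $T_{nq}=0$; this also proves the final retention claim. In regime (b), the removed contribution is bounded by $m^{-1}\sum_i\|\widehat\psi_{ik}-\psi_k^*\|\mathbf 1\{\|\widehat\psi_{ik}-\psi_k^*\|>c\sqrt{\rho/n}\}$, whose expectation is at most $\sE\|\cdot\|^q(\rho/n)^{-(q-1)/2}=O(n^{-1/2}\rho^{-(q-1)/2})$.

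\textbf{(iii) Retained corrupted occurrences.} Each such point lies within $c_\lambda$-distance $\rho r_k$ of an anchor that is itself $O_P(n^{-1/2})$-close to $\psi_k^*$. It therefore contributes at most $O(\sqrt{\rho/n})$ in Euclidean norm, weighted by its share $\bar\alpha_k m/|\widehat\sO_k|$.

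The main obstacle is the denominator $|\widehat\sO_k|$. In regime (a), it is at least $m_k\ge(1/2+\kappa)m$ by exact retention. In regime (b), I would use Assumption~\ref{assump:component_capacity}(b) together with the count of trimmed points from (ii) to ensure that $|\widehat\sO_k|\ge cm$. That assumption is also what keeps the anchor-centred majority radius from collapsing to $o(n^{-1})$, which would otherwise let the trimming scale drop below $\sqrt{\rho/n}$. Dividing by $|\widehat\sO_k|\asymp m$ gives term (iii) as $\bar\alpha_k\sqrt{\rho/n}$. The quadratic Bregman remainder is $O_P(\rho/n)\cdot\bar\alpha_k+O_P(n^{-1})$, which is absorbed into the stated terms because $\rho=o(n)$. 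Finally, normalizing the weights is a smooth map near $w^*$, so the weight errors add across $k$ and yield the $\max$ terms.
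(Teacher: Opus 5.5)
Your three-term decomposition (oracle, trimmed authentic, retained corrupted), the truncated $q$th-moment bound for the trimmed authentic mass, the union bound for exact retention, and the reverse-Bregman first-order condition for the barycentre all match the paper's argument. The gap is in how you transfer the ideal-anchor facts to the data-selected $i^*$.

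The uniform bound $\max_{i\in\gB^c}r(\widetilde\psi_{ik})=O_P(n^{-1})$ does not follow from your triangle inequality, and it is false in general. Its second term $\|\widehat\psi_{ik}-\psi_k^*\|^2$, maximized over a growing set of authentic machines, is not $O_P(n^{-1})$. An authentic candidate whose $k$th estimate sits at distance $D\gg n^{-1/2}$ from $\psi_k^*$ has radius of order $D^2$ unless corruptions sit next to it. In regime~(a) a union bound gives only a bound of order $\rho/n$. That would inflate the filter to order $\rho^2/n$ and the corruption term to order $\bar\alpha_k\rho/\sqrt n$ instead of $\bar\alpha_k\sqrt{\rho/n}$. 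What you need instead is minimality of the score: $r(\widetilde\psi_{i^*k})\le R(\widetilde{\bm\psi}_{i^*})\le R(\widetilde{\bm\psi}_{i^0})=O_P(n^{-1})$ for one fixed failure-free machine $i^0$. The lower-bound sketch has a related problem: the radius-defining ball is centred at the anchor, not at $\psi_k^*$. Invoking Assumption~\ref{assump:component_capacity}(a) therefore requires first placing the anchor within $O(\sqrt{\rho/n})$ of the truth, which is why that assumption is stated at scale $C\rho n^{-1}$. Bounding the number of authentic estimates within $an^{-1}$ of an off-centre anchor also needs a shifted-Gaussian (Anderson-type) comparison, not the truth-centred fraction.

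The more serious omission is localization of the selected anchor. In (ii) and (iii) you use $\|\widetilde\psi_{i^*k}-\psi_k^*\|=O_P(n^{-1/2})$. This holds for a fixed authentic machine, but it is not a conclusion of Lemma~\ref{lemma:properties_under_ideal_anchor-method}, it is not true uniformly over $\gB^c$, and you never prove it for the minimizer. It is indispensable. Even granting $|\widehat\sO_k|\asymp m$ and $r(\widetilde\psi_{i^*k})\asymp n^{-1}$, step (ii) needs the anchor within a small multiple of $\sqrt{\rho/n}$ of $\psi_k^*$; otherwise authentic estimates close to the truth can fall outside the filter ball. Step (iii) needs the anchor within $O(\sqrt{\rho/n})$.

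In regime~(a), a union bound with $m\rho^{-q/2}\to0$ gives $\max_{i\in\gB^c,k}\|\widehat\psi_{ik}-\psi_k^*\|^2\le b\rho/n$ for every fixed $b>0$, which suffices. In regime~(b) no such bound is available, and this is exactly where Assumption~\ref{assump:component_capacity}(b) enters:
\begin{enumerate}[label=(\roman*), leftmargin=*]
\item Minimality gives the event $\{r(\widetilde\psi_{i^*k})\le C_\varepsilon/n\}$, on which at most $\beta_\dagger m$ corrupted occurrences lie in the majority ball.
\item Hence at least $\lceil(1/2-\beta_\dagger)m\rceil$ authentic component-$k$ estimates lie within cost $C_\varepsilon/n$ of the anchor.
\item If a candidate is farther than $An^{-1/2}$ from $\psi_k^*$, each other authentic estimate lands in that ball with probability at most $C_q(A-R_\varepsilon)^{-q}$, where $R_\varepsilon=(C_\varepsilon/\eta'_-)^{1/2}$.
\item A conditional Hoeffding bound plus a union bound over the $Km$ authentic candidate pairs then shows the minimizer is within $A_\varepsilon n^{-1/2}$ of the truth.
\end{enumerate}
The roles you assign to Assumption~\ref{assump:component_capacity}(b), namely the denominator and non-collapse of the radius, are not where the difficulty lies. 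Without this localization step, neither the $T_{nq}$ bound nor the $\sqrt{\rho/n}$ control of retained corruptions follows in regime~(b).
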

The bounds in Theorem~\ref{theorem:rate_of_convergence} separate three sources of error. The terms $R_{ok}$ and $R_{o,\max}$ are the convergence rates of the corresponding oracle aggregators based only on authentic component estimates; $T_{nq}$ accounts for authentic tail estimates removed by filtering; and $\bar\alpha_k\sqrt{\rho/n}$, or $\bar\alpha_{\max}\sqrt{\rho/n}$ for the weights, measures the contribution of corrupted occurrences that remain after filtering.
In the exact-retention regime, $T_{nq}=0$ and every authentic component estimate is retained with probability tending to one.
The tail-trimming regime permits some authentic estimates to be removed, but requires Assumption~\ref{assump:component_capacity}(b) to prevent corrupted values from concentrating around an unusually noisy failure-free anchor candidate.

When $m\le n$, component-wise majority implies $R_{ok}=O(N^{-1/2})$ for $N=mn$.
The exact-retention regime therefore achieves the oracle convergence rate whenever $\bar\alpha_{\max}=o_P\{(m\rho)^{-1/2}\}$, and the tail-trimming regime achieves the same rate if, in addition, $\rho^{q-1}/m\to\infty$.
These conditions depend on the corrupted mass that survives the actual filter: $\bar\alpha_k$ may be much smaller than the overall component corruption fraction, or it may be identically zero.
Because the sets $\sO_k$ depend on the latent population correspondence and failure sets, however, $\bar\alpha_k$ is an analytical quantity rather than an observable statistic.
Theorem~\ref{theorem:rate_of_convergence} is therefore an adaptive oracle bound.

To obtain an explicit rate, we next impose a stochastic small-ball condition under which the probability that a corrupted occurrence falls near a true component decreases polynomially with the neighborhood radius.
The following corollary converts this condition into a bound on $\bar\alpha_k$ and then substitutes that bound into Theorem~\ref{theorem:rate_of_convergence}.

\begin{corollary}[Small-ball stochastic contamination]
\label{cor:oracle_rate_small_ball}
Assume the common structural and likelihood conditions of Theorem~\ref{theorem:rate_of_convergence}, except that Assumption~\ref{assump:component_capacity}(a) is to be deduced below. 
Conditional on the clean local estimates, suppose there are constants $\gamma>0$, $C_Q<\infty$, and $r_0>0$ such that, for every $\ell,k\in[K]$, $i\in\gB_\ell$, and $0<r<r_0$, $\sP\{c_\lambda(\xi_{i\ell},\psi_k^*)\le r
\mid\mathcal F_{\mathrm{clean}}\}
\le C_Qr^\gamma$.
Let $m\le n$ and $\rho=m^a$.  
Consider either
\begin{enumerate}[label=(\alph*)]
\item the exact-retention regime $2/q<a<1$, or
\item the tail-trimming regime $1/(q-1)<a<1$, together with
Assumption~\ref{assump:component_capacity}(b).
\end{enumerate}
Then Assumption~\ref{assump:component_capacity}(a) holds, $\bar\alpha_{\max}=O_P\{(\rho/n)^\gamma\}$, and, for every $k\in[K]$,
\[
|\widehat w_k-w_k^*|+\|\widehat\theta_k-\theta_k^*\|
=O_P\left\{N^{-1/2}+(\rho/n)^{\gamma+1/2}\right\}.
\]
Consequently, the oracle convergence rate holds if $m^{1/2}\rho^{\gamma+1/2}=o(n^\gamma)$, or, equivalently for $\rho=m^a$, $m^{1/2+a(\gamma+1/2)}=o(n^\gamma)$.
\end{corollary}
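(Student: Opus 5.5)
The proof reduces the corollary to Theorem~\ref{theorem:rate_of_convergence}. We first verify Assumption~\ref{assump:component_capacity}(a) from the small-ball condition. We then bound $\bar\alpha_{\max}$ by counting corrupted occurrences inside the filtering balls. Finally, we substitute the regime choices $\rho=m^a$ and $m\le n$ into the theorem's bound.

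\emph{Step 1: capacity.} Fix $C>0$ and $k\in[K]$. Let $M_k(r)=|\{(i,\ell):i\in\gB_\ell,\ c_\lambda(\xi_{i\ell},\psi_k^*)\le r\}|$. Since $\rho=o(n)$, the radius $r_n=C\rho/n$ eventually lies below $r_0$. Conditioning on $\mathcal F_{\mathrm{clean}}$, the small-ball bound gives
\[
\sE\{M_k(r_n)\mid \mathcal F_{\mathrm{clean}}\}\le K m C_Q (C\rho/n)^\gamma .
\]
By Markov's inequality,
\[
\sP\{M_k(r_n)>\beta m\}\le K C_Q C^\gamma(\rho/n)^\gamma/\beta\to0
\]
for any fixed $\beta\in(0,1/2)$. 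A union bound over the finitely many $k$ then gives Assumption~\ref{assump:component_capacity}(a). The replacements may be jointly dependent, but this is harmless because only marginal probabilities enter the expectation.

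\emph{Step 2: the retained-corruption level.} Two facts are needed.
\begin{itemize}
\item By Theorem~\ref{theorem:accurate_selection}, the selected anchor is failure-free with probability tending to one.
\item By Lemma~\ref{lemma:properties_under_ideal_anchor-method}, applied to the selected anchor via a union bound over $\gB^c$, the radii satisfy $r(\widetilde\psi_{i^*k})=O_P(n^{-1})$.
\end{itemize}
Hence every retained point satisfies $c_\lambda(\widetilde\psi_{ij},\widetilde\psi_{i^*k})\le C\rho/n$ with high probability. To make the union bound uniform over possibly many failure-free anchors, I would use the maximal moment bound $\sE\|\widehat{\mG}_i-\mG^*\|^q=O(n^{-q/2})$ together with $m=o(n^{q/2})$. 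The same bound shows $\max_i c_\lambda(\widehat\psi_{ik},\psi_k^*)=O_P(n^{-1}m^{2/q})=o_P(\rho/n)$, because $a>2/q$ in regime (a). In regime (b), $a>1/(q-1)$; if this does not suffice, I would bound only the selected anchor, whose radius is controlled by the majority structure.

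By the local quadratic equivalence $c_-\|\cdot\|^2\le c_\lambda\le c_+\|\cdot\|^2$ and the triangle inequality in Euclidean norm, each retained corrupted occurrence then lies in a truth-centred ball $c_\lambda(\xi,\psi_k^*)\le C'\rho/n$. Therefore $m\bar\alpha_k\le M_k(C'\rho/n)$, and the expectation bound from Step~1 yields $\bar\alpha_{\max}=O_P\{(\rho/n)^\gamma\}$. The main obstacle is that $C'$ is random, being an $O_P$ constant. I would handle it by fixing $\epsilon>0$, choosing a deterministic $C_\epsilon'$ so that the inclusion holds with probability at least $1-\epsilon$, and applying Markov's inequality at that deterministic radius.

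\emph{Step 3: substitution.} Since $m\le n$ and $m_k\ge m/2$, we have $R_{ok}\le \sqrt2(mn)^{-1/2}+n^{-1}=O(N^{-1/2})$. In regime (a), $m\rho^{-q/2}=m^{1-aq/2}\to0$, so $T_{nq}=0$. In regime (b),
\[
T_{nq}=n^{-1/2}m^{-a(q-1)/2}=N^{-1/2}m^{\{1-a(q-1)\}/2}=o(N^{-1/2}).
\]
The corruption term is $\bar\alpha\sqrt{\rho/n}=O_P\{(\rho/n)^{\gamma+1/2}\}$, which gives the displayed bound. This term is $o(N^{-1/2})$ exactly when $m^{1/2}\rho^{\gamma+1/2}=o(n^{\gamma})$, and with $\rho=m^a$ this is the stated equivalent condition.
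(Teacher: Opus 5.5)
Your Steps~1 and~3 coincide with the paper's proof. So does fixing $\varepsilon$ and a deterministic radius $C'_\varepsilon\rho/n$ before applying Markov's inequality. The gap is in Step~2, where you control the data-selected anchor.

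\textbf{The radius bound.} Lemma~\ref{lemma:properties_under_ideal_anchor-method} is a fixed-anchor $\Theta_P$ statement, and a union bound over $\gB^c$ cannot transfer it to the selected anchor.
\begin{itemize}
\item The radius of an authentic candidate $i$ is controlled by $T_{j_{\lceil m/2\rceil}k}+\|\widehat\psi_{ik}-\psi_k^*\|^2$.
\item With only $q$th moments, the maximal bound you propose gives $\max_{i\in\gB^c}\|\widehat\psi_{ik}-\psi_k^*\|^2=O_P(m^{2/q}n^{-1})$, not $O_P(n^{-1})$.
\item Carried through, the filtering ball becomes $\rho m^{2/q}/n$. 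You would then only obtain $\bar\alpha_{\max}=O_P\{(m^{2/q}\rho/n)^\gamma\}$, which is weaker than the claimed rate.
\end{itemize}
The missing idea is minimality of the selected score. For one fixed authentic machine $i^0$,
\[
\max_k r(\widetilde\psi_{i^*k})\le R(\widetilde{\bm\psi}_{i^*})\le R(\widetilde{\bm\psi}_{i^0})=O_P(n^{-1}),
\]
and this needs no uniformity over anchors; it is Lemma~\ref{lemma:selected_anchor_event}(b).

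\textbf{Anchor localization in regime~(b).} Your maximal-moment localization gives squared error $O_P(m^{2/q}/n)$. This is not $o_P(\rho/n)$ when $1/(q-1)<a\le 2/q$, a nonempty range for $q\ge8$. Your fallback, that the anchor's radius is ``controlled by the majority structure,'' does not localize it. A small radius only says that $\lceil m/2\rceil$ transmitted values lie near the anchor, and these may be corrupted values clustered around an unusually noisy authentic candidate.

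This is exactly where Assumption~\ref{assump:component_capacity}(b) enters, and your argument never uses it.
\begin{itemize}
\item It caps at $\beta_\dagger m$ the corrupted values within $C_\varepsilon/n$ of any authentic candidate component.
\item Hence at least $\lceil(1/2-\beta_\dagger)m\rceil$ authentic component-$k$ estimates lie in that ball.
\item A conditional Hoeffding bound, with a union over the $Km$ authentic candidate components, then forces $\|\widetilde\psi_{i^*k}-\psi_k^*\|=O_P(n^{-1/2})$. This is Lemma~\ref{lemma:selected_anchor_event}(d).
\end{itemize}

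\textbf{Smaller points.}
\begin{itemize}
\item The inequality $m\bar\alpha_k\le M_k(C'_\varepsilon\rho/n)$ needs the alignment event. Only then does $\widehat\sO_k\setminus\sO_k$ consist solely of corrupted occurrences, as in \eqref{eq:retained_nonoracle_are_corrupted}.
\item The local quadratic equivalence may be used only after retained values are shown to lie in the local chart, which is Lemma~\ref{lemma:filtering_event}(d).
\end{itemize}
The paper avoids re-deriving all of this by citing Lemma~\ref{lemma:filtering_event}(e) directly.
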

The first conclusion of Corollary~\ref{cor:oracle_rate_small_ball} shows that only an $O_P\{(\rho/n)^\gamma\}$ fraction of corrupted occurrences fall sufficiently close to the true components to survive filtering.
Substitution into the adaptive bound gives a contamination contribution of order $O_P\{(\rho/n)^{\gamma+1/2}\}$.
Under the displayed joint growth condition, this contribution is $o_P(N^{-1/2})$, and \ours{} therefore attains the oracle convergence rate.

The choice of $\rho$ balances two requirements: it must grow sufficiently quickly to retain authentic component estimates, but sufficiently slowly to exclude corrupted occurrences and satisfy $\rho=o(n)$.
Under the fixed $q$th-moment condition, a convenient exact-retention choice is $\rho=m^{2/q}L_m$ $m^{2/q}L_m=o(n)$ with $L_m\to\infty$ or, when $m\le n$, $\rho=m^{2/q+\delta}$ with $0<\delta<1-2/q$.
For the tail-trimming regime, the corresponding oracle-rate choice is $\rho=m^{1/(q-1)+\delta}$ with $0<\delta<1-1/(q-1)$.
For example,
\[
\begin{array}{c|c|c|c}
q&\text{exact-retention }\rho
&\text{exact-retention Euclidean scale}&\text{tail-trimming }\rho\\ \hline
8&m^{1/4+\delta}&m^{1/8+\delta/2}n^{-1/2}&m^{1/7+\delta}\\
12&m^{1/6+\delta}&m^{1/12+\delta/2}n^{-1/2}&m^{1/11+\delta}\\
16&m^{1/8+\delta}&m^{1/16+\delta/2}n^{-1/2}&m^{1/15+\delta}.
\end{array}
\]
Higher moments reduce the required inflation, but do not improve the clean $n^{-1}$ cost scale, the oracle error $R_{ok}$, the contamination small-ball exponent, or the structural anchor conditions.

Under the small-ball model with $\rho=m^a$, the exact joint growth condition in Corollary~\ref{cor:oracle_rate_small_ball} is primary.
If only $m\le n$ is imposed, a convenient sufficient condition is $a<(2\gamma-1)/(2\gamma+1)$, provided that the right-hand side is positive.
If only the $p$-dimensional subpopulation parameter has a bounded density, then $\gamma=p/2$ and the conditions become $m^{1+(p+1)a}=o(n^p)$ and $a<(p-1)/(p+1)$.
These upper bounds on $a$ must be combined with the lower requirement $a>2/q$ for exact retention or $a>1/(q-1)$ for tail trimming.
If the corrupted values are separated from every true component by a fixed positive cost, then $\bar\alpha_{\max}=0$ with probability tending to one and no small-ball remainder appears.

\begin{remark}[Sub-Weibull authentic tails]
If, instead of a fixed-moment tail, the authentic local estimators satisfy
\[
\sup_{i,k}\sP\{\sqrt n\|\widehat\psi_{ik}-\psi_k^*\|>t\}
\le C\exp(-ct^\nu),
\]
then the exact-retention uniform-clean condition becomes $m\exp(-c'\rho^{\nu/2})\to0$.
Thus $\rho$ can be chosen on the order of $(\log m)^{2/\nu}$ with a sufficiently large multiplier; for sub-Gaussian tails this is the familiar logarithmic scale.
This is a separate tail-class assumption, not a consequence of possessing all fixed moments.
\end{remark}

\begin{remark}[Comparison with DFMR]
The inflation scales in CFMR and DFMR are directly comparable once their different radius conventions are taken into account.
DFMR inflates a distance radius and, under a finite $q$th-moment condition, permits an inflation factor $m^\delta$ for any $\delta>1/\{2(q-1)\}$~\cite{zhang2026byzantine}.
CFMR instead inflates the locally quadratic component cost $c_\lambda$.
In the tail-trimming regime, its leading inflation threshold is $m^{1/(q-1)}$ on the cost scale and hence $m^{1/\{2(q-1)\}}$ on the induced Euclidean-distance scale.
For example, when $q=8$, the CFMR threshold $m^{1/7}$ on the cost scale corresponds to $m^{1/14}$ on the Euclidean-distance scale, matching the leading DFMR threshold.
Thus, component-wise filtering does not by itself require an additional tail-inflation penalty.

The two methods differ in how distance is related to mixture parameters.
DFMR filters using an $L^2$ distance between mixture densities and invokes a family-wide first-order strong-identifiability condition to convert density distance into parameter distance.
CFMR filters directly using a component discrepancy that is locally equivalent to squared Euclidean parameter distance.
Its analysis therefore relies instead on nonsingular Fisher information at $G^*$ and the local geometry of $c_\lambda$.
\end{remark}

The preceding results describe the error of the estimated weights and subpopulation parameters separately.
The following corollary expresses the same conclusion for the estimated mixing distribution as a whole under the Wasserstein distance.

\begin{corollary}[Mixing-distribution error]
\label{cor:wasserstein_rate}
Equip $\Theta$ with Euclidean distance and let $W_1$ be the corresponding Wasserstein distance between mixing distributions.  
Under Theorem~\ref{theorem:rate_of_convergence},
\[
W_1(\widehat G,G^*)
=O_P\left\{R_{o,\max}
+T_{nq}+\bar\alpha_{\max}\sqrt{\rho/n}\right\},
\]
where $T_{nq}$ is defined by the applicable regime in that theorem.
\end{corollary}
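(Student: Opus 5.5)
The plan is to reduce the Wasserstein bound to the componentwise bounds of Theorem~\ref{theorem:rate_of_convergence} through an explicit coupling between $\widehat G=\sum_{k}\widehat w_k\delta_{\widehat\theta_k}$ and $G^*=\sum_k w_k^*\delta_{\theta_k^*}$. Because $W_1$ is a minimum over couplings, any feasible plan gives an upper bound, so no optimality argument is needed. I would use the identity-labelled plan suggested by the construction: the $k$th output component of \ours{} is indexed by the $k$th component of the selected anchor. By Theorem~\ref{theorem:accurate_selection}, this anchor is failure-free with probability tending to one. By Lemma~\ref{lemma:properties_under_ideal_anchor-method}, its $k$th component is the authentic estimate of population component $k$, again with probability tending to one. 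The theorem's bounds are stated for this labelling.

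Let $\delta_w=\max_k|\widehat w_k-w_k^*|$. Define the transport plan
\[
\pi_{kk}=\min(\widehat w_k,w_k^*),
\]
and route the leftover masses $(\widehat w_k-w_k^*)_+$ and $(w_k^*-\widehat w_k)_+$ arbitrarily among themselves. Both weight vectors sum to one, so the total leftover mass on each side equals $\tfrac12\sum_k|\widehat w_k-w_k^*|\le K\delta_w/2$. This mass moves at cost at most $\operatorname{diam}(\Theta)$, which is finite because $\Theta$ is compact by Assumption~\ref{assump:compact_parameter_space}.

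This gives
\[
W_1(\widehat G,G^*)\le \sum_k \min(\widehat w_k,w_k^*)\|\widehat\theta_k-\theta_k^*\| + \tfrac{K}{2}\operatorname{diam}(\Theta)\,\delta_w
\le \max_k\|\widehat\theta_k-\theta_k^*\| + C\delta_w .
\]
I now substitute the theorem's bounds. The $O_P$ statements hold for each fixed $k$, and $K$ is fixed, so a finite maximum of $O_P$ terms is again $O_P$ of the maximum. Since $R_{ok}\le R_{o,\max}$ and $\bar\alpha_k\le\bar\alpha_{\max}$, both the parameter terms and the weight terms are $O_P\{R_{o,\max}+T_{nq}+\bar\alpha_{\max}\sqrt{\rho/n}\}$. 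The high-probability events used for the labelling, namely correct anchor selection and correct alignment, are absorbed into the $O_P$ statement by the definition of $O_P$ used in the paper.

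The main subtlety is the labelling, since $W_1$ is permutation-invariant while the theorem's bounds refer to a specific correspondence $k\leftrightarrow k$. I expect this to be handled as above, by restricting to the event of probability tending to one on which the selected anchor lies in $\gB^c$ and its components are matched to the $\psi_k^*$ in the oracle ordering. Off this event the distance is bounded by $\operatorname{diam}(\Theta)$, which does not affect an $O_P$ statement.
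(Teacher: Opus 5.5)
Your proposal is correct and matches the paper's proof: you use the same diagonal coupling $\pi_{kk}=\min(\widehat w_k,w_k^*)$, move the residual mass $\tfrac12\sum_k|\widehat w_k-w_k^*|$ at cost at most $\operatorname{diam}(\Theta)$, and then substitute the componentwise rates of Theorem~\ref{theorem:rate_of_convergence} for fixed $K$. Your added remark on the labelling (working on the high-probability event that the selected anchor is failure-free) makes explicit a point the paper leaves implicit in the theorem's indexing.
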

Corollary~\ref{cor:wasserstein_rate} shows that passing from component-wise errors to mixing-distribution error introduces no additional order of approximation.
The result follows by matching the estimated and true components in their canonical order, coupling their common mass directly, and transporting the residual weight over the compact parameter space.

%% The following is a directive for TeXShop to indicate the main file
%%!TEX root = ../main.tex

\section{Simulation study}
\label{sec:exp}
We generate data from finite Gaussian mixtures to assess the performance of \ours{} under component-wise Byzantine failure.
Gaussian mixtures are widely used finite mixture models and retain the main difficulties considered in this paper, including label switching, overlapping components, and corruption in component means, covariances, and weights.
The experiments investigate whether \ours{} can effectively recover reliable component-level information across a range of distributed learning settings.

%%%%%%%%%%%%%%%%%%%%%%%%%%%%%%%%%%%%%%%%%%%%%%%%%%
\subsection{Simulation setting}
Let $\gF = \{\phi(x;\mu,\Sigma): \mu \in \sR^d, \Sigma\in \sS_{+}^{d\times d}\}$, where $\phi(x;\mu,\Sigma)$ is the density of a multivariate Gaussian distribution
with mean $\mu$ and covariance matrix $\Sigma$.
We denote by $\sS_{+}^{d\times d}$ the set of $d\times d$ positive definite matrices.
Finite Gaussian mixtures are finite mixture models whose component distributions belong to $\gF$.

To mitigate potential human bias in the choice of simulation settings, the parameter values of the Gaussian mixtures are generated at random using the \texttt{R} package \texttt{MixSim}~\cite{melnykov2012mixsim}.
An important characteristic of a Gaussian mixture is the degree of overlap between its component distributions. 
Following \cite{melnykov2012mixsim}, we quantify the difficulty of learning a mixture by its maximum pairwise overlap. 
Specifically, define
\[
\texttt{MaxOmega} = \max_{i,j\in[K],\,i \neq j} \{ o_{j|i} + o_{i|j} \},
\]
where $o_{j|i} = \sP\!\left(w_{i} f(X;\theta_i)< w_{j} f(X;\theta_j)\mid X \sim f(\cdot;\theta_i)\right)$.
Larger values of \texttt{MaxOmega} indicate greater overlap among the mixture components and consequently a more challenging learning problem.

We consider a collection of simulation configurations specified by the dimension $d$, mixture order $K$, maximum overlap \texttt{MaxOmega}, local sample size $n$, and number of local machines $m$.
Specifically, we set $d=10$, $K=5$, \texttt{MaxOmega}$\in\{10\%,20\%,30\%\}$, $n\in\{5000,10000,20000\}$, and $m\in\{20,50,100\}$.
The choices of $d$, $K$, $n$, and $m$ are intended to represent distributed learning problems involving high-dimensional mixtures and large sample sizes. 
For each simulation configuration, the complete data generation, local estimation, corruption, 
and aggregation procedure is repeated independently $R=300$ times. 
The resulting Monte Carlo replications are used to estimate the performance measures reported below.

For the $r$th replication under a given configuration, we first generate the parameter values of a Gaussian mixture distribution $G^{(r)}$ using \texttt{MixSim}.
We then draw an IID sample $\gX^{(r)}$ from $G^{(r)}$ and partition it uniformly at random among the $m$ local machines. 
As noted in Remark~\ref{rem:ordinary_and_penalized_local_fits}, on each machine, we estimate the local mixing distribution using the penalized maximum likelihood estimator proposed by Chen and Tan~\cite{chen2009inference}, computed via the EM algorithm with penalty size $n^{-1/2}$ and a convergence tolerance of $10^{-6}$ for the per-observation log-likelihood.

To isolate the performance of the aggregation procedure from that of the local optimization algorithm, the EM algorithm is initialized at the true mixing distribution on every machine. 
The resulting local estimates are then used as the clean baseline estimates before Byzantine corruption is injected into the transmitted component estimates.
This initialization also aligns the component labels across machines by construction, so that the $k$th corrupted component always corresponds to the same true mixture component. 
This alignment is used solely to generate the corrupted transmissions and is not available to any aggregation procedure, which receives only the transmitted local mixture estimates without cross-machine component labels.

Component-wise Byzantine corruption is then generated as follows.
For each mixture component $k$, the failure rate is set to $\alpha_k$.
We vary $\alpha_k$ from $0.1$ to $0.4$ in increments of $0.1$ across different configurations.
Machine $m$ is always kept fully failure-free to satisfy Assumption~\ref{assump:one_failure-free_machine-method}.
To investigate the robustness of \ours{} under different corruption mechanisms, we consider three representative component-wise Byzantine attacks:

\begin{itemize}
    \item \emph{Mean failure.}
    The component mean is the primary location parameter of a Gaussian mixture and is often the most sensitive to adversarial perturbation.
    To generate substantial location errors, each corrupted component mean is replaced by a random vector whose entries are generated independently from $N(0,100^2)$, a contamination mechanism commonly adopted in Byzantine aggregation studies~\cite{tu2021variance}.
    For each component $k$, the corrupted machines are selected uniformly at random from the first $m - 1$ machines.

    \item \emph{Covariance failure.}
    To perturb the dispersion structure of a Gaussian component while maintaining a valid covariance matrix, we add $\sum_{i=1}^{d}\xi_i\xi_i^{\top}$ to each corrupted covariance matrix, where $\xi_1,\ldots,\xi_d$ are IID standard Gaussian random vectors. 
    As with mean failure, the corrupted machines for each component are sampled uniformly at random from the first $m - 1$ machines.

    \item  \emph{Weight failure.}
    Because the mixing weights must sum to one, changing one weight must change at least one other weight on the same machine.
    We therefore corrupt the weights jointly.
    For each component $k$, we select exactly $c_k=\lfloor\alpha_km\rfloor$ corrupted occurrences from the first $m-1$ machines.
    Let $S_i\subseteq[K]$ denote the set of corrupted components on machine $i$, where $|S_i|\geq2$ whenever $S_i$ is nonempty; the weights outside $S_i$ remain unchanged.
    To perturb the weights in $S_i$ without changing their total mass, define $M_i=\sum_{k\in S_i}\widehat{w}_{ik}$, draw $Z_i\sim\operatorname{Dirichlet}(\beta\mathbf{1}_{|S_i|})$ with $\beta=0.5$, and set $\widetilde{\mathbf{w}}_{i,S_i}=M_iZ_i$.
    This construction randomly reallocates $M_i$ among the corrupted components while ensuring that the transmitted weights remain positive and sum to one.
\end{itemize}
Together, these three corruption mechanisms perturb the canonical sets of parameters in a Gaussian mixture model: location, dispersion, and mixing proportions. 
Other forms of corruption, such as parameter values outside the admissible parameter space, are often detectable from prior knowledge of the mixture model and are therefore excluded from the present study.

%%%%%%%%%%%%%%%%%%%%%%%%%%%%%%%%%%%%%%%%%%%%%%%%%%
\subsection{Baselines and performance metrics}
\label{sec:estimators}
To evaluate the performance of \ours{}, we compare it with a collection of existing aggregation procedures and oracle benchmarks.
Since no existing Byzantine-tolerant split-and-conquer method is designed specifically for component-wise Byzantine failure in finite mixture models, the competing methods represent different strategies for handling either standard aggregation or machine-level Byzantine failure.

Whenever a competing procedure requires a tuning parameter related to the contamination level, we set it equal to the true failure rate. 
This oracle choice favors the competing methods and provides a conservative assessment of the proposed procedure.
The competing aggregation procedures are as follows.
Whenever Mixture Reduction (MR) is employed, we use the Kullback--Leibler (KL) divergence between Gaussian components as the discrepancy measure.

\begin{enumerate}
\item \textbf{Vanilla.}
This is the original Mixture Reduction (MR) estimator of~\cite{zhang2022distributed}, which aggregates all transmitted local mixture estimates without robust filtering.

\item \textbf{TRIM.}
The trimmed $k$-barycenter estimator of~\cite{del2019robust}, 
which discards an $\eta$-fraction of component estimates before aggregation.
Since $\eta$ is a user-specified hyperparameter, we set it equal to the true component-wise failure rate $\alpha_k$.

\item \textbf{DFMR.}
This is the Distance-Filtered Mixture Reduction estimator of~\cite{zhang2026byzantine}, designed for machine-level Byzantine failure.
The implementation follows the original paper using the same filtering rule and tuning parameter $\rho$.

\item \textbf{Filtering.}
The filtering procedure of~\cite{diakonikolas2017being}, originally proposed for robust mean estimation in Euclidean space.
Since the method requires aligned observations, we first apply the proposed anchor selection and alignment procedure (Algorithm~\ref{alg:initial_estimate}) and then perform filtering independently within each component group before aggregation.
The procedure is applied separately to the transmitted means, covariance matrices, and mixing weights according to the simulated failure type.
The filtering threshold is set equal to the true component-wise failure rate.
\end{enumerate}

The following two estimators serve as oracle benchmarks.
\begin{enumerate}
\setcounter{enumi}{4}
\item 
\textbf{M-Oracle.}
This is a machine-level oracle that applies MR only to the fully failure-free local mixture estimates indexed by $\gB^c$.

\item \textbf{C-Oracle.}
This is a component-level oracle that, for each component $k$, aggregates only the failure-free component estimates $\{\widetilde{\psi}_{ik}: i\in\gI_k\}$ using the same barycentric update as MR.
This estimator provides the benchmark performance under component-wise Byzantine failure.
\end{enumerate}

The estimated mixing distribution is denoted by $\widehat{G}^{(r)} = \sum_k \widehat{w}_k^{(r)} \delta_{(\widehat{\mu}_k^{(r)}, \widehat{\Sigma}_k^{(r)})}$. We use the following performance metrics(see Appendix~\ref{app:metrics} for their precise ground-cost definitions):
\begin{itemize}
    \item \textbf{Wasserstein distance} ($W_1 = W_1(\widehat{G}^{(r)}, G^{(r)})$): the Wasserstein distance between the estimated and true mixing distributions.
    \item \textbf{Adjusted Rand Index (ARI)}: mixture models are widely used for model-based clustering. We measure ARI between the maximum a posteriori (MAP) clustering assignments induced by $\widehat{G}^{(r)}$ and those induced by the true mixing distribution $G^{(r)}$. Values closer to 1 indicate better performance.
\end{itemize}

%%%%%%%%%%%%%%%%%%%%%%%%%%%%%%%%%%%%%%%%%%%%%%%%%%
\subsection{Sensitivity analysis of tuning parameters}
\subsubsection{Robustness to the inflation factor \texorpdfstring{$\rho$}{rho}}
The inflation factor $\rho$ is a tuning parameter in \ours{} that determines the fraction of component estimates retained in each component cluster (Algorithm~\ref{alg:submachine_level}).
Our theoretical analysis provides a sufficient order-of-magnitude requirement on $\rho$ to establish the asymptotic properties of \ours{}, but this requirement may be conservative in finite samples.
It is therefore of practical interest to investigate the empirical sensitivity of \ours{} to different choices of $\rho$.
The results show that \ours{} performs consistently well over a wide range of $\rho$ values.

Figure~\ref{fig:varying_rho_m_100} examines the effect of $\rho$ under the setting where $m=100$, $n=10^4$, and \texttt{MaxOmega}$=0.3$.
The component-wise failure rate varies over $\{0.0, 0.1, 0.2, 0.3, 0.4\}$, while $\rho$ ranges from $1.0$ to $3.0$ in increments of $0.1$.
Each configuration is replicated $R=100$ times.
For mean failures, \ours{} performs comparably to C-Oracle once $\rho \ge 1.6$, and the estimation error remains nearly unchanged as $\rho$ increases.
For covariance and weight failure, a mild U-shaped pattern is observed, but the estimator remains stable over a broad range of $\rho$ values, particularly when $\alpha_k \le 0.2$.
\begin{figure}[!htbp]
    \centering
    \begin{minipage}[t]{0.32\linewidth}
        \centering
        \includegraphics[width=\linewidth]{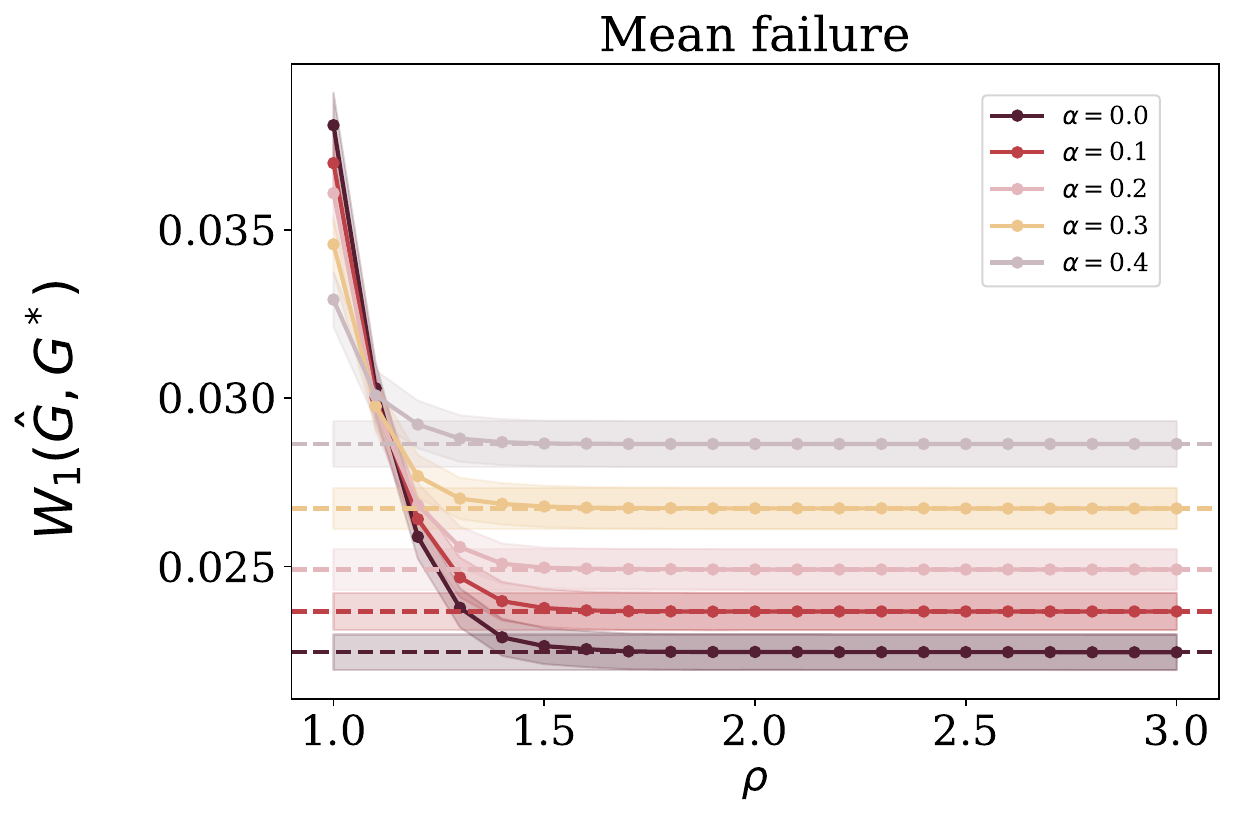}
        \par\vspace{2pt}{\footnotesize (a) Mean failure}
    \end{minipage}
    \hfill
    \begin{minipage}[t]{0.32\linewidth}
        \centering
        \includegraphics[width=\linewidth]{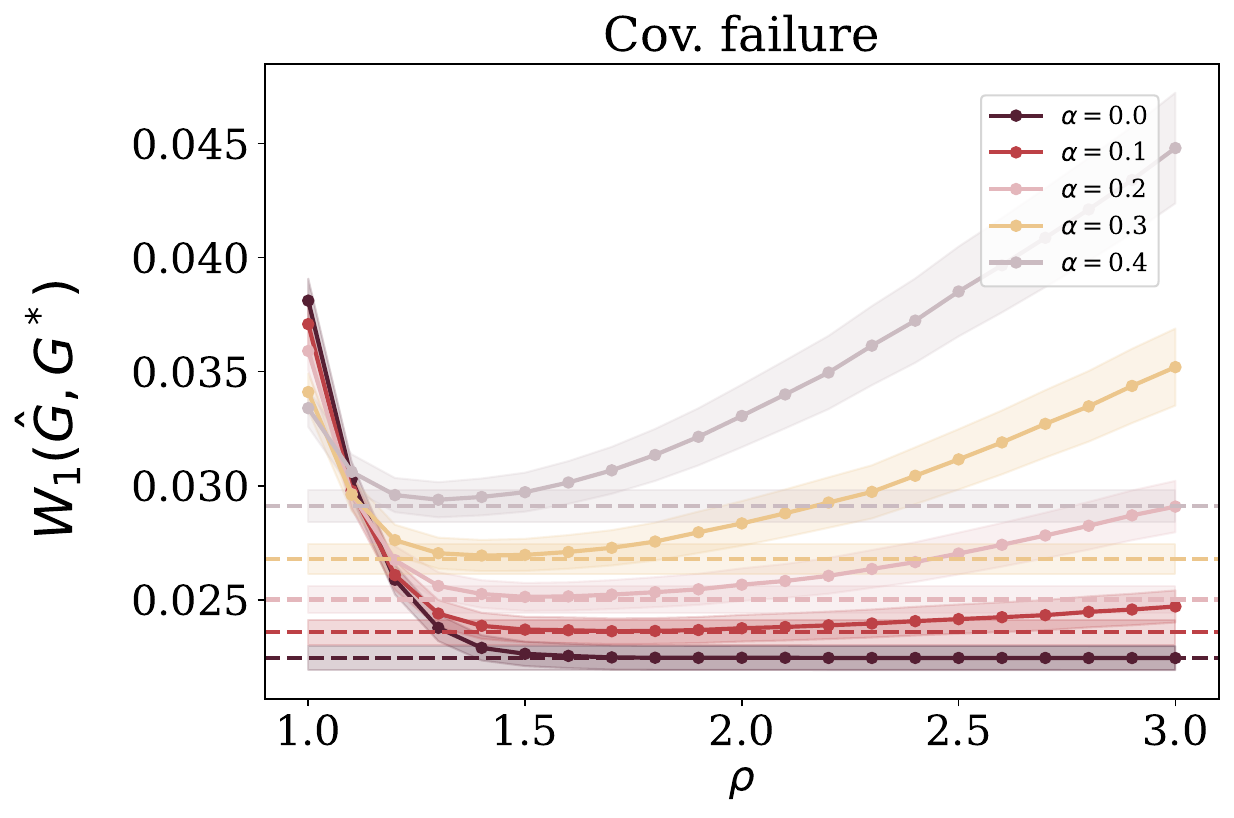}
        \par\vspace{2pt}{\footnotesize (b) Covariance failure}
    \end{minipage}
    \hfill
    \begin{minipage}[t]{0.32\linewidth}
        \centering
        \includegraphics[width=\linewidth]{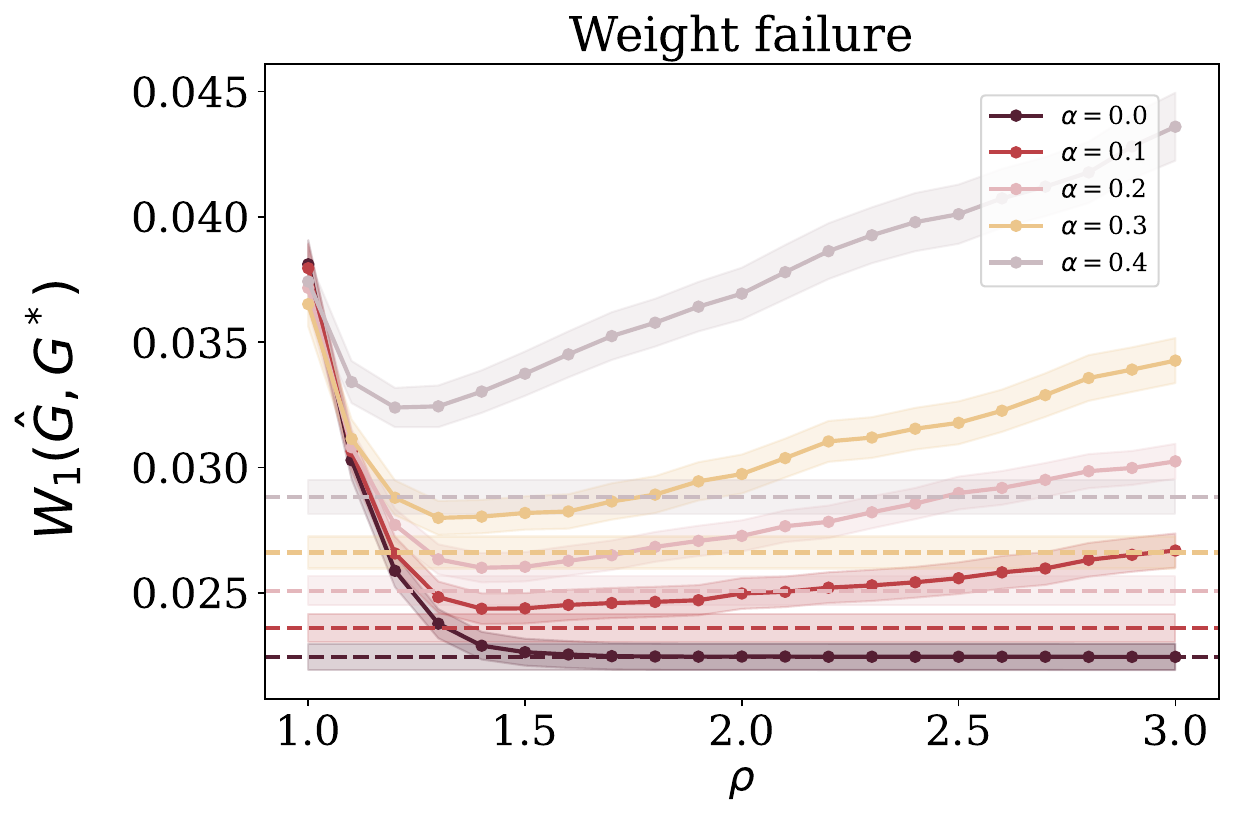}
        \par\vspace{2pt}{\footnotesize (c) Weight failure}
    \end{minipage}
    \caption{$W_1$ of \ours{}$(\rho)$ as $\rho$ varies under each failure type and failure rate, with $m=100$, $n=10{,}000$, and \texttt{MaxOmega}$=0.3$. Dotted lines: \ours{}; dashed lines: C-Oracle.}
    \label{fig:varying_rho_m_100}
\end{figure}

In Figure~\ref{fig:varying_rho_varying_m}, we further examine the effect of $\rho$ as the number of local machines varies.
We fix the total sample size $N=500{,}000$, with \texttt{MaxOmega}$=0.3$, and $\alpha_k=0.2$, and vary $m \in \{20,50,100\}$.
Although our theoretical analysis suggests that the recommended choice of $\rho$ should increase with $m$, the empirical results exhibit a much weaker dependence.
The results suggest that the desirable range of $\rho$ remains similar across different values of $m$, indicating that \ours{} empirically requires little tuning as the system scales.
\begin{figure}[!htbp]
    \centering
    \begin{minipage}[t]{0.32\linewidth}
        \centering
        \includegraphics[width=\linewidth]{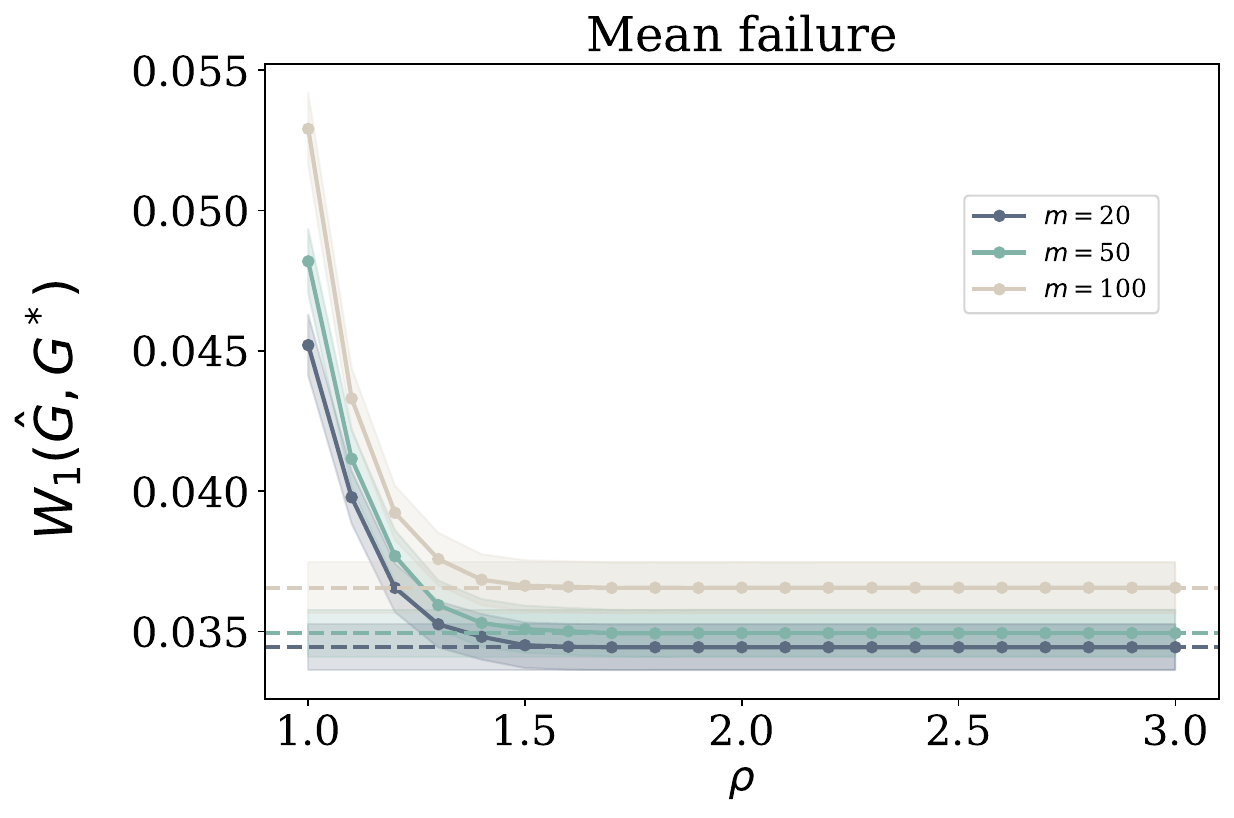}
        \par\vspace{2pt}{\footnotesize (a) Mean failure}
    \end{minipage}
    \hfill
    \begin{minipage}[t]{0.32\linewidth}
        \centering
        \includegraphics[width=\linewidth]{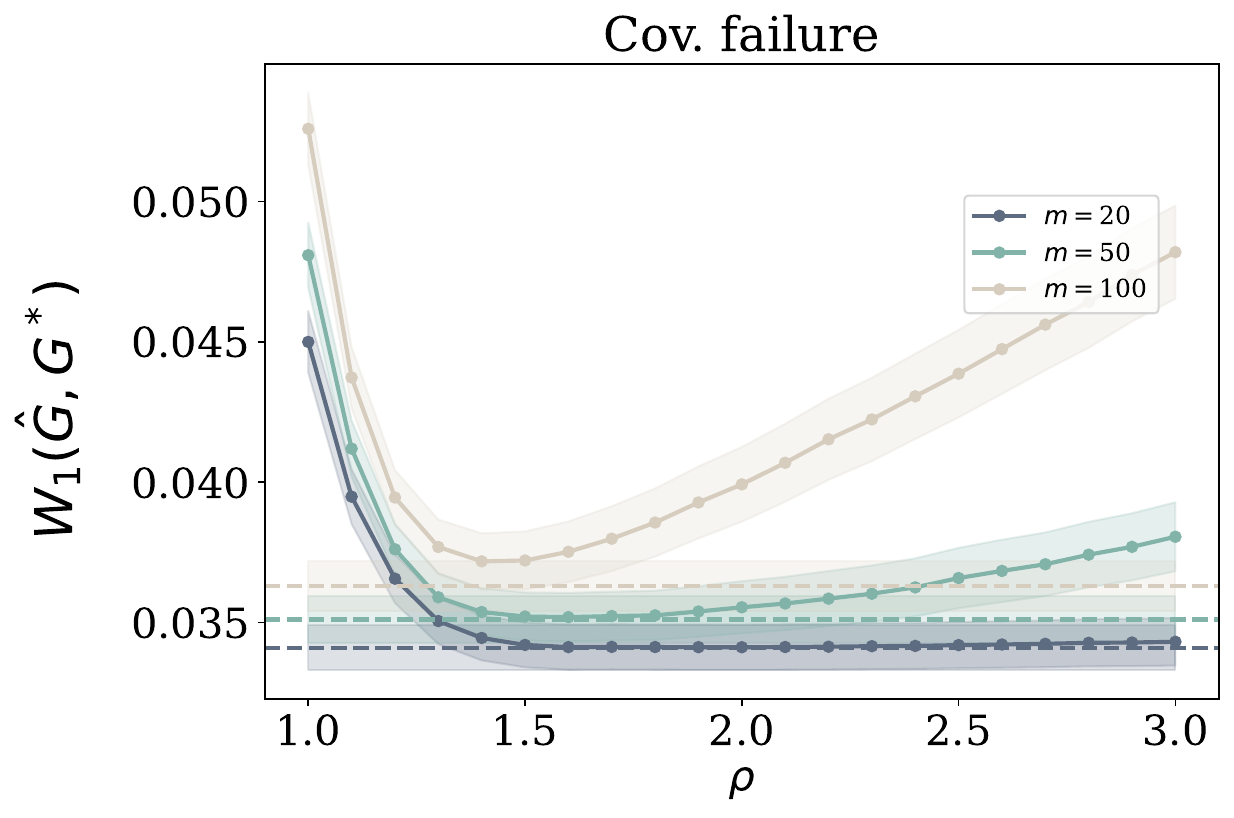}
        \par\vspace{2pt}{\footnotesize (b) Covariance failure}
    \end{minipage}
    \hfill
    \begin{minipage}[t]{0.32\linewidth}
        \centering
        \includegraphics[width=\linewidth]{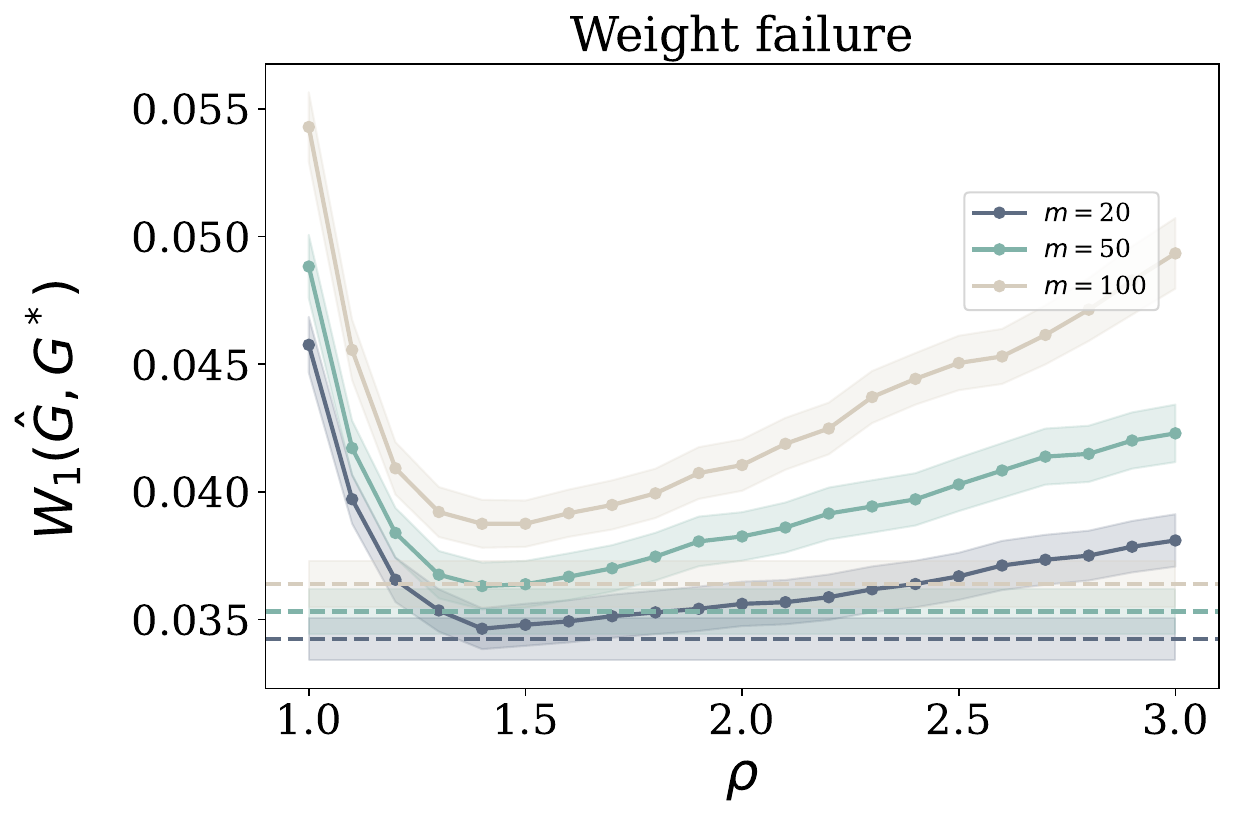}
        \par\vspace{2pt}{\footnotesize (c) Weight failure}
    \end{minipage}
    \caption{$W_1$ of \ours{}$(\rho)$ as $\rho$ varies for $m \in \{20,50,100\}$ under each failure type, with $N=500{,}000$, $\alpha_k=0.2$, and \texttt{MaxOmega}$=0.3$. Dotted lines: \ours{}; dashed lines: C-Oracle.}
    \label{fig:varying_rho_varying_m}
\end{figure}

\subsubsection{Robustness to the trade-off parameter \texorpdfstring{$\lambda$}{lambda}}
Recall from Section~\ref{sec:construction_of_CFMR} that the component discrepancy $c_\lambda(\psi',\psi)$ contains a trade-off parameter $\lambda$, which balances the discrepancy between component parameters and the difference between mixing weights.
Since either the component parameters or the mixing weights may be corrupted, it is important to check that the method is not overly sensitive to this balance.

Figure~\ref{fig:varying_lambda_m_100} examines the effect of $\lambda$ under the setting with $m=100$, $n=10^4$, and \texttt{MaxOmega}$=0.3$.
The component-wise failure rate varies from $0.0$ to $0.4$, while $\lambda$ ranges from $1$ to $100$ in increments of $5$.
Each configuration is replicated $R=100$ times, and the mean Wasserstein distance together with its standard error is recorded.

For mean failure, \ours{} performs comparably to C-Oracle when $\lambda$ lies between $1$ and $40$, with the estimation error remaining nearly constant throughout this range.
As $\lambda$ increases beyond $40$, the estimation error increases slightly, although the deterioration is small relative to the overall error level.
For covariance and weight failures, \ours{} matches the performance of C-Oracle at low failure rates when $10<\lambda<40$.
At higher failure rates, the estimator exhibits a modest degradation relative to C-Oracle but remains stable over a wide range of $\lambda$.
Overall, these results indicate that \ours{} is not sensitive to the choice of the trade-off parameter $\lambda$.
\begin{figure}[!htbp]
    \centering
    \begin{minipage}[t]{0.32\linewidth}
        \centering
        \includegraphics[width=\linewidth]{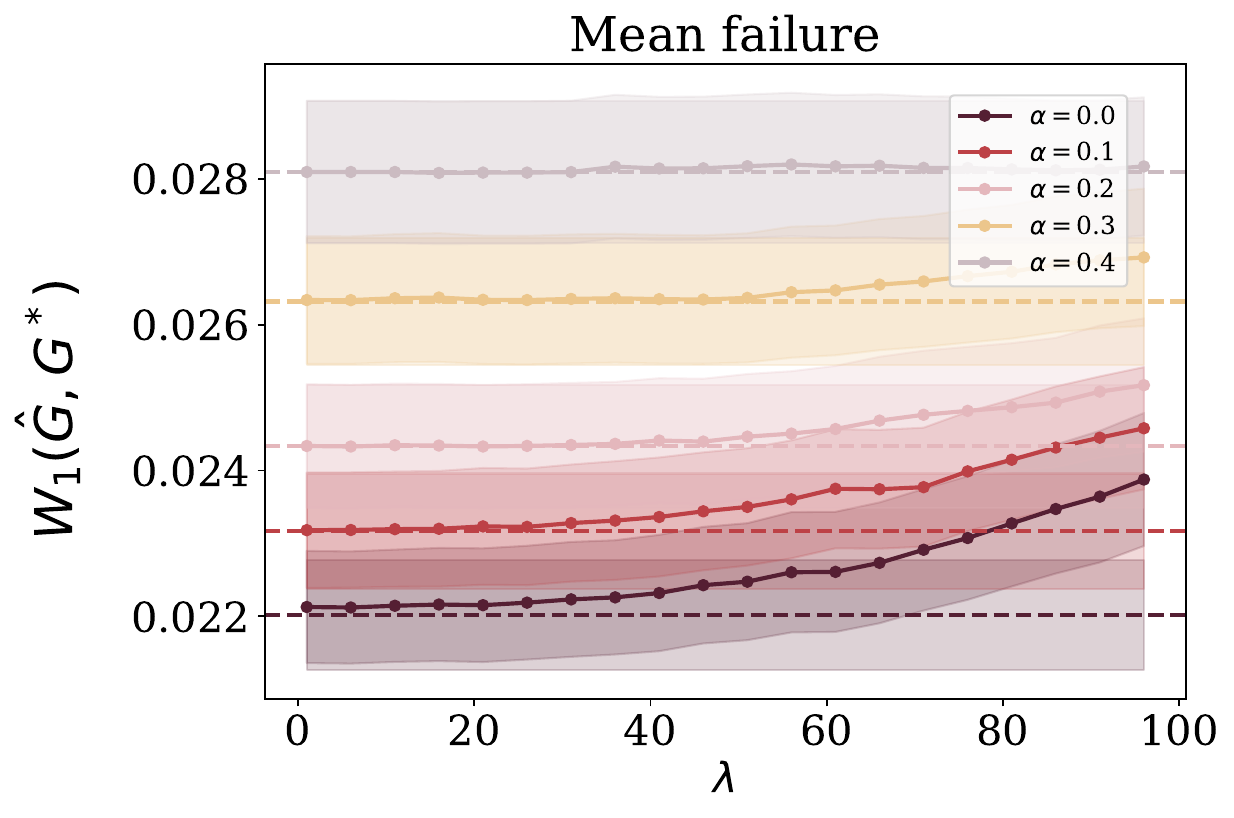}
        \par\vspace{2pt}{\footnotesize (a) Mean failure}
    \end{minipage}
    \hfill
    \begin{minipage}[t]{0.32\linewidth}
        \centering
        \includegraphics[width=\linewidth]{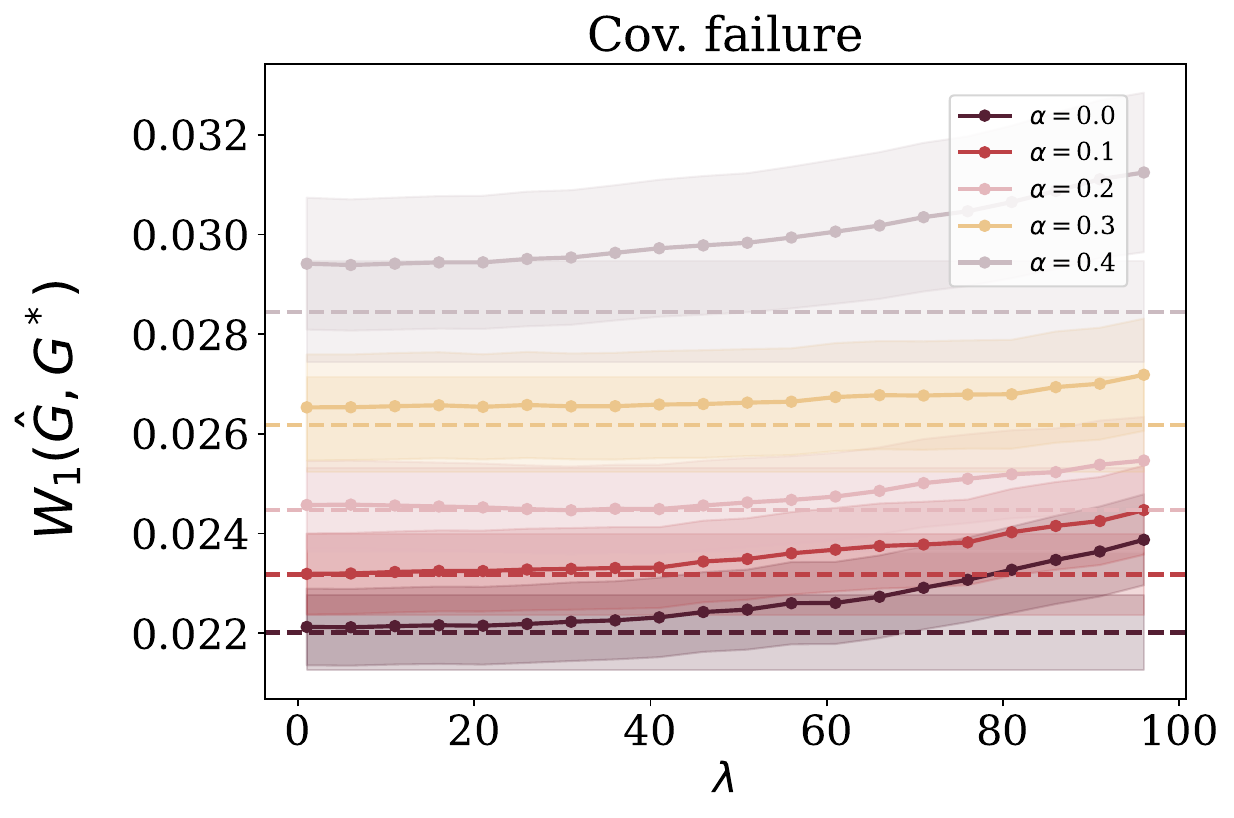}
        \par\vspace{2pt}{\footnotesize (b) Covariance failure}
    \end{minipage}
    \hfill
    \begin{minipage}[t]{0.32\linewidth}
        \centering
        \includegraphics[width=\linewidth]{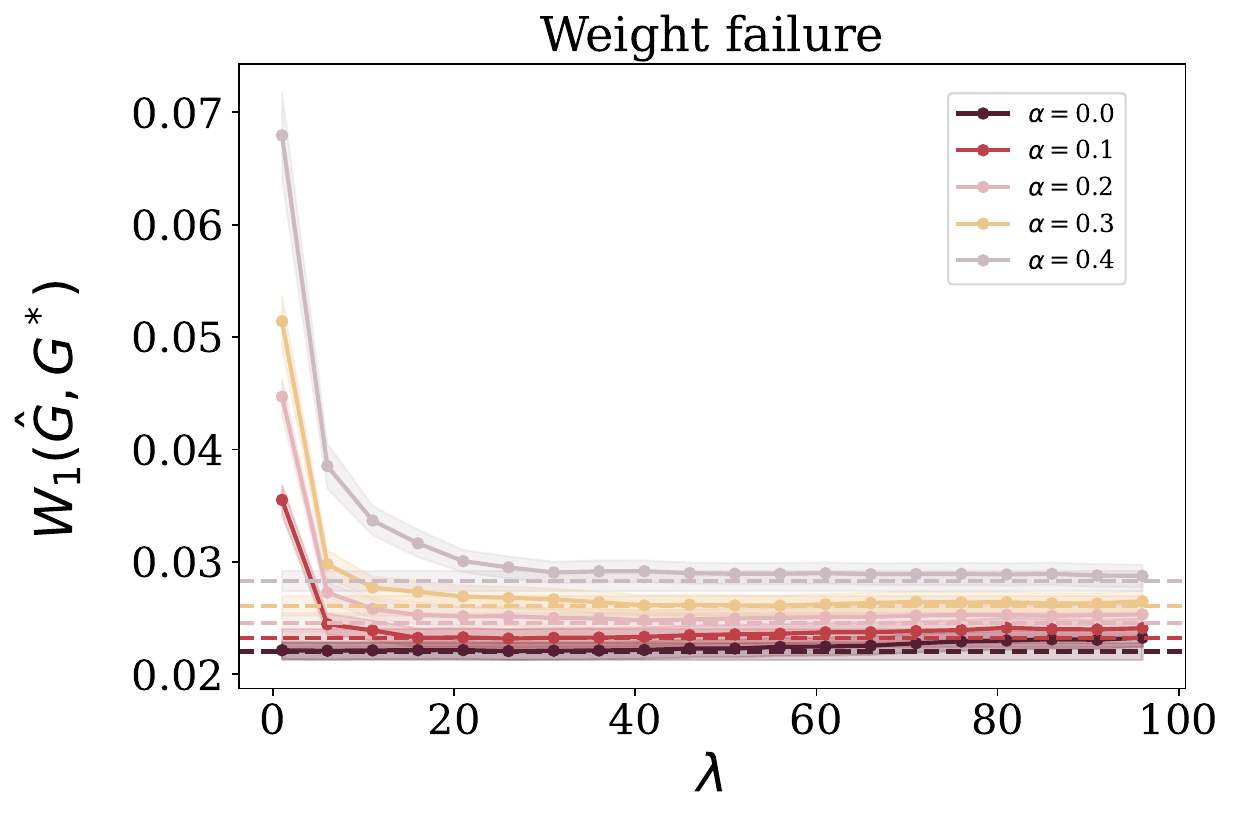}
        \par\vspace{2pt}{\footnotesize (c) Weight failure}
    \end{minipage}
    \caption{$W_1$ of \ours{} as $\lambda$ varies under each failure type and failure rate, with $m=100$, $n=10{,}000$, and \texttt{MaxOmega}$=0.3$. Dotted lines: \ours{}; dashed lines: C-Oracle.}
    \label{fig:varying_lambda_m_100}
\end{figure}
Together, these results demonstrate that \ours{} is stable across the evaluated grids of both $\rho$ and $\lambda$, indicating that the procedure is easy to tune in practice.
Based on the sensitivity analysis, we fix $\rho=1.6$ and $\lambda=10$ in all subsequent experiments.
Note that the fixed choice $\rho=1.6$ serves purely as a practical, finite-sample empirical benchmark, rather than an explicit realization of the asymptotic rate conditions specified in Theorem~\ref{theorem:rate_of_convergence}. 
While the theoretical guarantees require $\rho$ to grow with $m$, the empirical performance exhibits minimal sensitivity to such scaling.
This demonstrates that the theoretical scaling of $\rho$ is inherently conservative in moderate, finite-sample applications.

%%%%%%%%%%%%%%%%%%%%%%%%%%%%%%%%%%%%%%%%%%%%%%%%%%
\subsection{Simulation results}

\subsubsection{Accuracy of data-driven anchor selection}
Before comparing \ours{} with other methods, we first evaluate whether the proposed data-driven anchor selection procedure (Algorithm~\ref{alg:initial_estimate}) successfully identifies a completely failure-free machine.
\begin{wraptable}{r}{0.45\textwidth}
\vspace{-\baselineskip}
\centering
\footnotesize
\begin{minipage}{\linewidth}
\caption{Proportion of the $R=300$ replications in which Algorithm~\ref{alg:initial_estimate} selects a fully failure-free machine, with $m=100$ and $\alpha_k=0.2$ for every component.}
\label{tab:true_selection}
% \vspace{0.5ex}
\centering
\resizebox{0.9\linewidth}{!}{%
\begin{tabular}{lcccc}
\toprule
\multirow{2}{*}{$N$} & \multirow{2}{*}{MaxOmega} & \multicolumn{3}{c}{Failure type} \\
\cmidrule(lr){3-5}
& & Mean & Cov. & Weight \\
\midrule
\multirow{3}{*}{$0.5$M}
& 0.1 & 100.00\% & 100.00\% & 96.67\% \\
& 0.2 & 100.00\% & 100.00\% & 92.67\% \\
& 0.3 & 100.00\% & 97.33\% & 91.33\% \\
\midrule
\multirow{3}{*}{$1$M}
& 0.1 & 100.00\% & 100.00\% & 96.00\% \\
& 0.2 & 100.00\% & 100.00\% & 94.67\% \\
& 0.3 & 100.00\% & 99.67\% & 95.33\% \\
\midrule
\multirow{3}{*}{$2$M}
& 0.1 & 100.00\% & 100.00\% & 98.00\% \\
& 0.2 & 100.00\% & 100.00\% & 97.33\% \\
& 0.3 & 100.00\% & 100.00\% & 97.67\% \\
\bottomrule
\end{tabular}%
}
\end{minipage}
\vspace{-\baselineskip}
\end{wraptable}
Table~\ref{tab:true_selection} summarizes the selection accuracy over $R=300$ Monte Carlo replications with $m=100$ and $\alpha_k=0.2$ for every component.
The proposed procedure selects a fully failure-free machine in every replication under mean failure.
Under covariance and weight failures, the lowest success rates are 97.33\% and 91.33\%, respectively; both occur in the most challenging configuration considered, with the smallest total sample size and the largest component overlap.
Overall, selection accuracy tends to improve as $N$ increases, consistent with the theoretical behavior of the anchor-selection procedure.
At $N=2$M, the success rate exceeds 97\% in every setting.
The remaining errors under covariance and weight failures may reflect the greater difficulty of distinguishing corrupted mixtures: covariance perturbations alter the component geometry, whereas the weight attack can leave the corrupted weights close to their authentic values by preserving their aggregate mass.
Consequently, a partially corrupted mixture may receive a majority-radius score comparable to that of a failure-free machine.
The subsequent experiments suggest that these occasional selection errors have limited influence on the performance of the final estimator in the settings considered.
%%%%%%%%%%%%%%%%%%%%%%%%%%%%%%%%%%%%%%%%%%%%%%%%%%
\subsubsection{Effect of total sample size and Byzantine failure rate}
%%%%%%%%%%%%%%%%%%%%%%%%%%%%%%%%%%%%%%%%%%%%%%%%%%
In all subsequent comparisons, we use the data-selected anchor to align the transmitted component estimates before each aggregation method is applied.
Performance is evaluated by the Wasserstein distance $W_1$ over $R=300$ Monte Carlo replications.
The relative performance of the estimators is consistent across overlap levels, although the advantage of \ours{} over Filtering is most pronounced when \texttt{MaxOmega}$=0.1$.
Accordingly, the main text reports the $W_1$ results for \texttt{MaxOmega}$=0.1$, while the corresponding results for \texttt{MaxOmega}$=0.3$ and all ARI results are provided in Appendices~\ref{app.:maximum_overlap_results} and~\ref{app.:ARI_results}, respectively.

We first examine how the estimators respond to changes in the total sample size $N$ and component-wise failure rate $\alpha_k$.
Figure~\ref{fig:boxplot_m_100_vary_N_vary_alpha} reports their Wasserstein errors with $m=100$ and \texttt{MaxOmega}$=0.1$.
The rows correspond to mean, covariance, and weight failures; the columns correspond to failure rates from $0.0$ to $0.4$; and the colors within each panel distinguish the three total sample sizes.
\begin{figure}[!htbp]
  \centering
  \includegraphics[width=\textwidth]{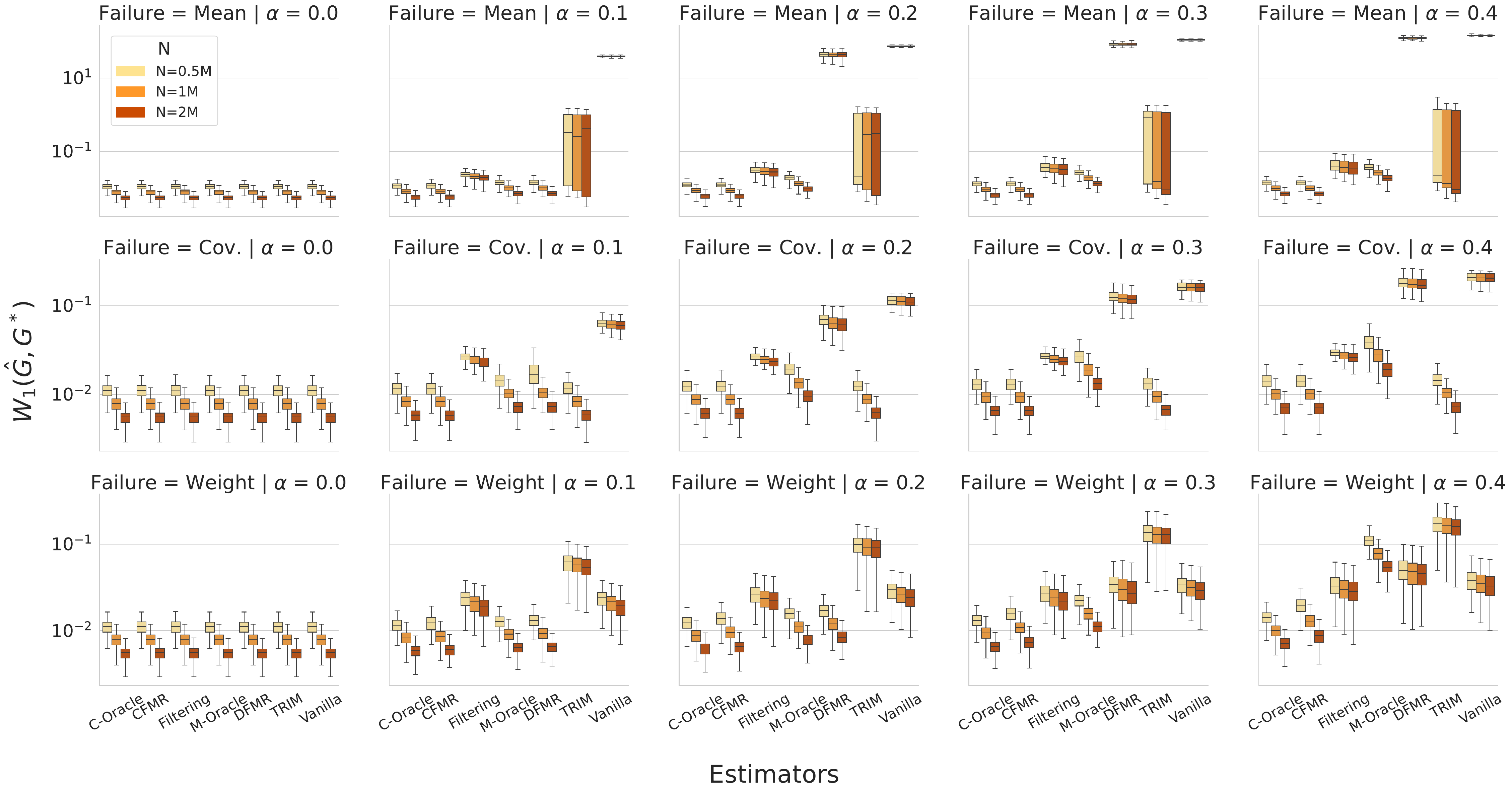}
  \caption{Boxplots of the $W_1$ error for each estimator as the total sample size $N$ and component-wise failure rate $\alpha_k$ vary, with $m=100$ and \texttt{MaxOmega}$=0.1$. Rows correspond to mean, covariance, and weight failures; columns correspond to $\alpha_k\in\{0.0,0.1,0.2,0.3,0.4\}$; and colors distinguish $N\in\{0.5\text{M},1\text{M},2\text{M}\}$.}
  \label{fig:boxplot_m_100_vary_N_vary_alpha}
\end{figure}

Several patterns are common to all three failure types.
First, C-Oracle uniformly outperforms M-Oracle because it retains every failure-free component estimate, whereas M-Oracle discards an entire machine once any of its components is corrupted.
Their performance gap highlights the statistical benefit of preserving component-level information.
Second, \ours{} remains close to C-Oracle across all tested failure rates $\alpha_k\leq0.4$, and its estimation error decreases as $N$ increases, consistent with the convergence behavior established in Theorem~\ref{theorem:rate_of_convergence}.

Among the competing methods, TRIM performs poorly under weight failure because it does not use mixing weights when detecting outlying components.
DFMR deteriorates as $\alpha_k$ increases because it assumes that a majority of the transmitted local mixture estimates remain completely failure-free.
Under component-wise Byzantine failure, however, even a moderate component-wise failure rate may cause a majority of local mixture estimates to contain at least one corrupted component.
For example, when $\alpha_k \ge 0.2$, more than half of the local mixture estimates contain at least one corrupted component, violating the machine-level majority condition required by DFMR and resulting in a substantial loss of accuracy.
Filtering also exhibits a certain degree of robustness, but it relies on the alignment produced by Algorithm~\ref{alg:initial_estimate}. 
Even with this assistance, its overall performance remains inferior to that of \ours{}.

%%%%%%%%%%%%%%%%%%%%%%%%%%%%%%%%%%%%%%%%%%%%%%%%%%
\subsubsection{Performance when the number of local machines scales up}
%%%%%%%%%%%%%%%%%%%%%%%%%%%%%%%%%%%%%%%%%%%%%%%%%%
We next investigate how the estimators scale with the number of local machines $m$ under \texttt{MaxOmega}$=0.1$ and $\alpha_k=0.2$.
To separate the benefit of an increasing total sample size from the effect of partitioning a fixed data set across more local machines, we consider two representative regimes.

First, we fix the local sample size at $n=5000$, so that the total sample size $N=nm$ increases with $m$.
Figure~\ref{fig:boxplot_vary_m_fix_n} shows that the error of \ours{} decreases as more machines are added and remains close to that of C-Oracle throughout.
This indicates that \ours{} effectively utilizes the additional failure-free component estimates contributed by the increasing number of machines.

Second, we fix the total sample size at $N=500{,}000$, so that increasing $m$ reduces the local sample size to $n=N/m$.
As shown in Figure~\ref{fig:boxplot_vary_m_fix_N}, the performance of \ours{} remains stable over $m\in\{20,50,100\}$ and continues to closely track C-Oracle.
Thus, within the range considered, distributing a fixed sample over more machines does not lead to an appreciable loss in accuracy for \ours{}.
By comparison, Filtering becomes less accurate as $m$ increases under mean and covariance failures in this fixed-$N$ regime.

\begin{figure}[!htbp]
  \centering
  \includegraphics[width=0.85\textwidth]{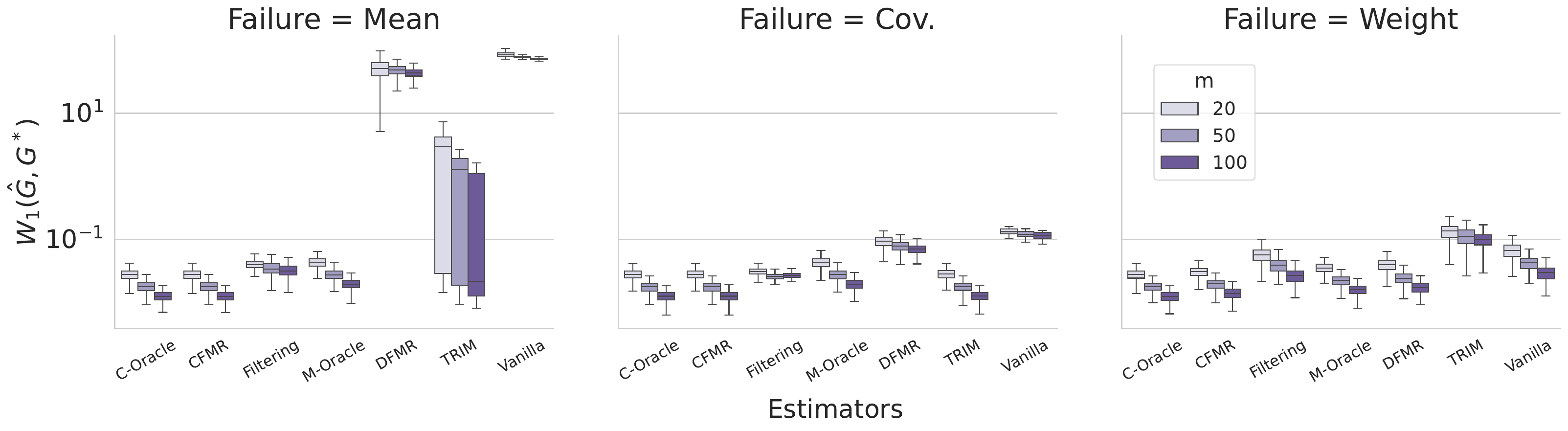}
  \caption{Boxplots of the $W_1$ error for each estimator as the number of machines varies over $m\in\{20,50,100\}$, with the local sample size fixed at $n=5000$, \texttt{MaxOmega}$=0.1$, and $\alpha_k=0.2$. Thus, the total sample size $N=nm$ increases with $m$. Panels correspond to mean, covariance, and weight failures, and colors distinguish the values of $m$.}
  \label{fig:boxplot_vary_m_fix_n}
\end{figure}

\begin{figure}[!htbp]
  \centering
  \includegraphics[width=0.85\textwidth]{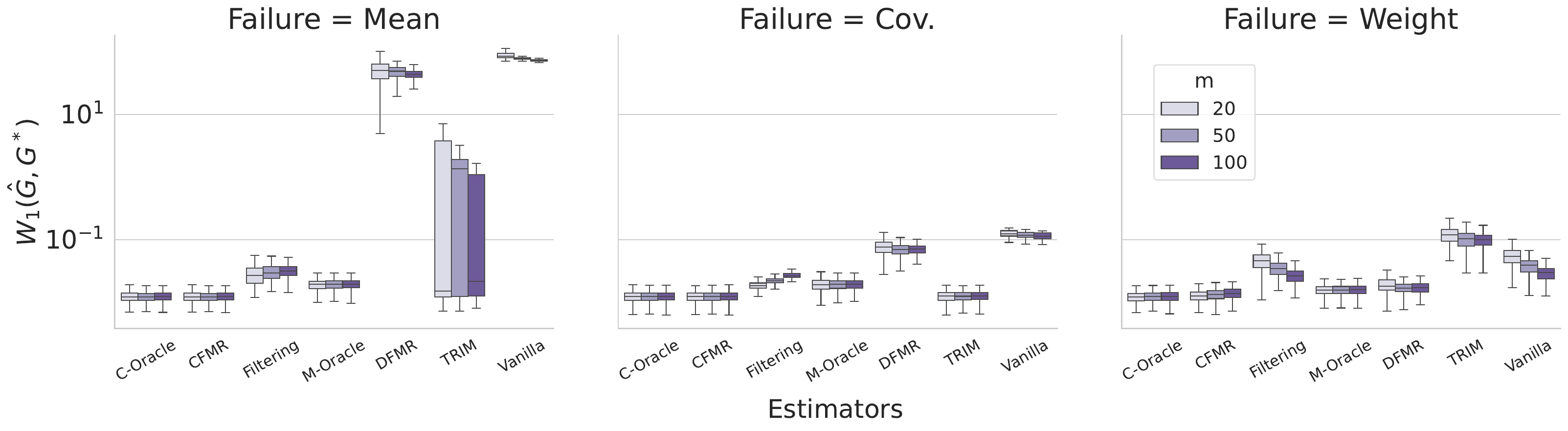}
  \caption{Boxplots of the $W_1$ error for each estimator as the number of machines varies over $m\in\{20,50,100\}$, with the total sample size fixed at $N=500{,}000$, \texttt{MaxOmega}$=0.1$, and $\alpha_k=0.2$. Thus, the local sample size $n=N/m$ decreases with $m$. Panels correspond to mean, covariance, and weight failures, and colors distinguish the values of $m$.}
  \label{fig:boxplot_vary_m_fix_N}
\end{figure}

%%%%%%%%%%%%%%%%%%%%%%%%%%%%%%%%%%%%%%%%%%%%%%%%%%
\subsubsection{Performance with different degrees of overlap}
%%%%%%%%%%%%%%%%%%%%%%%%%%%%%%%%%%%%%%%%%%%%%%%%%%
Finally, we investigate the effect of component overlap on the performance of the competing methods.
The degree of overlap directly affects the difficulty of estimating and distinguishing the mixture components.
Figure~\ref{fig:boxplot_m_100_vary_overlap_fix_alpha} reports the results for \texttt{MaxOmega}$\in\{0.1,0.2,0.3\}$ with $\alpha_k=0.2$, $m=100$, and $n=5000$.
As expected, the estimation errors generally increase with the degree of overlap.
Nevertheless, \ours{} remains close to C-Oracle across all three failure types and overlap levels, indicating that its component-wise alignment and aggregation remain effective even when the mixture components become less distinguishable.
Although the performance gap between \ours{} and Filtering narrows at higher overlap levels, \ours{} remains at least as accurate.
By comparison, DFMR, TRIM, and Vanilla continue to exhibit substantial degradation under one or more failure types.
\begin{figure}[!htbp]
  \centering
  \includegraphics[width=0.85\textwidth]{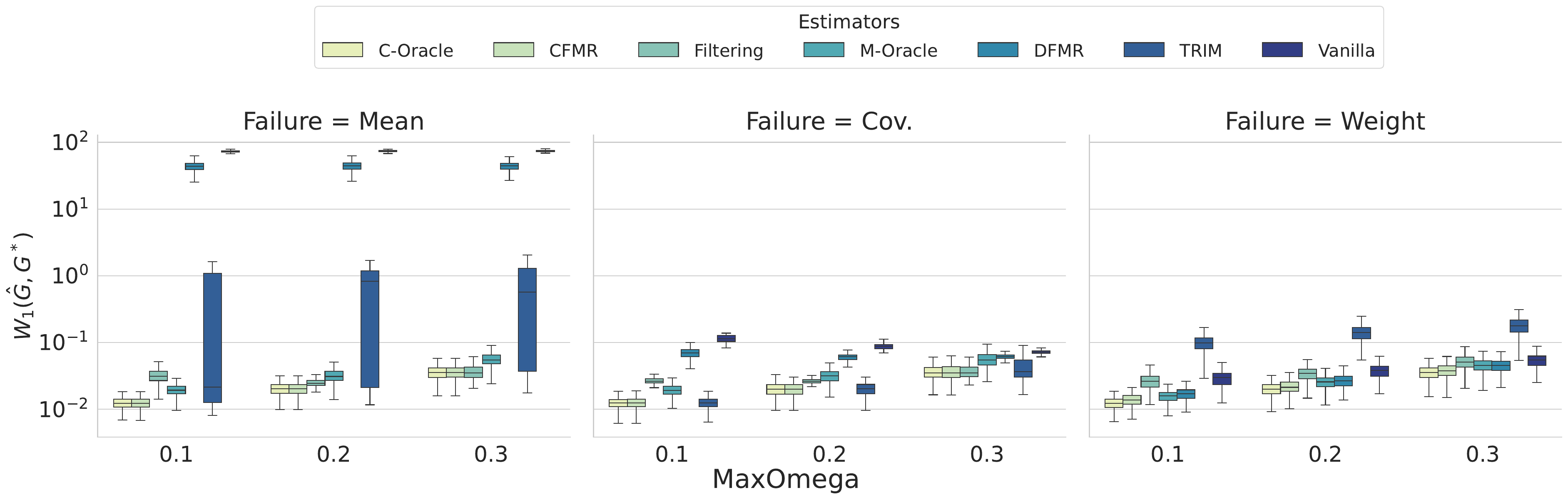}
  \caption{Boxplots of the $W_1$ error for each estimator as the degree of component overlap varies over \texttt{MaxOmega}$\in\{0.1,0.2,0.3\}$, with $m=100$, $n=5000$, and $\alpha_k=0.2$. Panels correspond to mean, covariance, and weight failures.}
  \label{fig:boxplot_m_100_vary_overlap_fix_alpha}
\end{figure}

%% The following is a directive for TeXShop to indicate the main file
%%!TEX root = ../main.tex

\section{Application to Real-World Data}
\label{sec:real_data}
\begin{wrapfigure}{r}{0.48\textwidth}
\centering
% \footnotesize
\includegraphics[width=\linewidth]{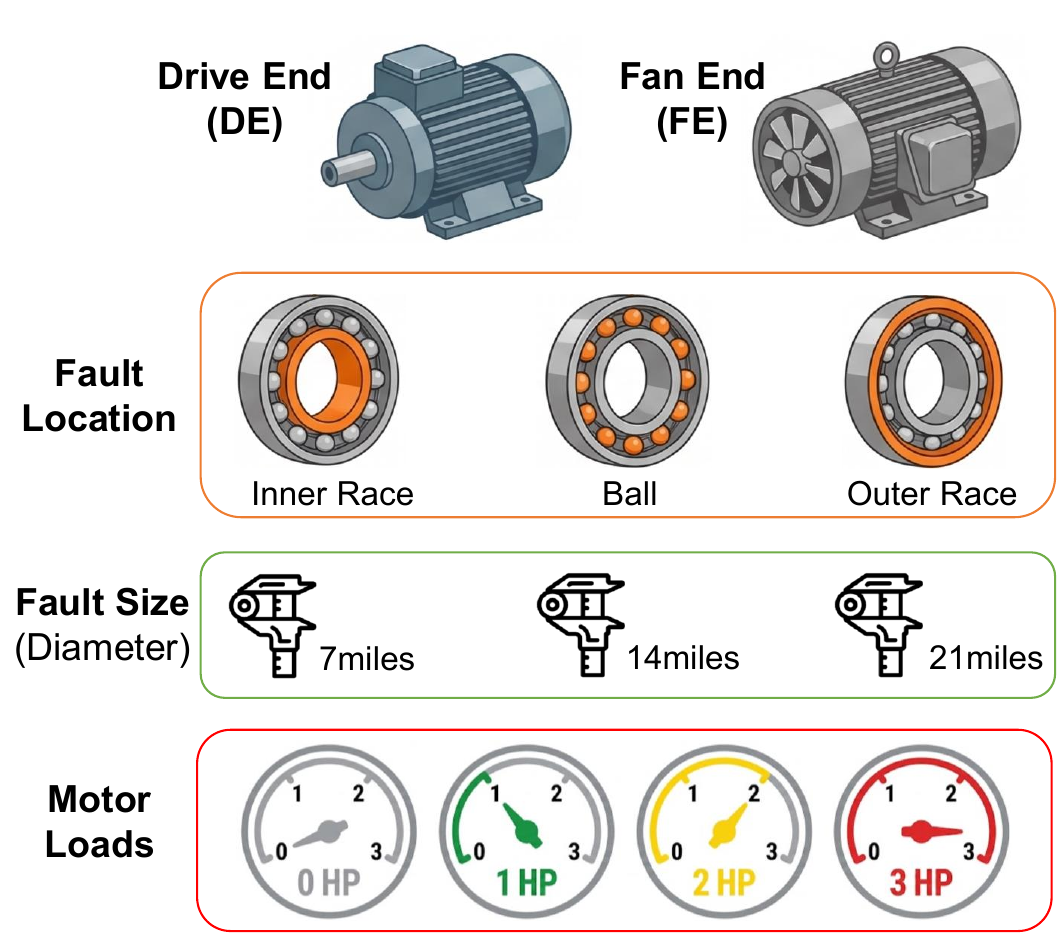}
\caption{Overview of the CWRU dataset experimental configurations.}
\label{fig:CWRU_fault_illustration}
\vspace{-\baselineskip}
\end{wrapfigure}

We evaluate the proposed method on a bearing fault identification task.
Our analysis uses the benchmark Case Western Reserve University (CWRU) bearing dataset~\cite{cwru_bearing_data}, which has been widely adopted for evaluating statistical and machine learning methods in machinery diagnostics~\cite{smith2015rolling}.

The CWRU dataset contains vibration signals collected from bearings with faults introduced by electro-discharge machining (EDM).
The experimental conditions vary along four main dimensions:
\begin{itemize}
    \item \textbf{Fault location:} Defects are introduced on the inner race, outer race, or rolling element (ball) of the bearing.
    \item \textbf{Fault severity:} The severity of each fault is characterized by a defect diameter of 7, 14, or 21 mils.
    \item \textbf{Spatial orientation:} Outer-race faults are placed at three positions relative to the load zone to capture spatial heterogeneity.
    \item \textbf{Measurement setting:} Faulty bearings are installed at either the Drive End (DE) or Fan End (FE), and vibration signals are recorded at 12 or 48 kHz.
\end{itemize}
For each fault configuration, the system is operated under four motor loads, namely 0, 1, 2, and 3 HP, and each raw signal spans approximately 10 seconds.

We formulate a nine-class clustering problem by treating each combination of fault location (inner race, ball, or outer race) and fault severity (7, 14, or 21 mils) as a distinct fault class.
We use the DE vibration signals recorded at 48 kHz under a motor load of 2 HP, for which each target class contains between 485,643 and 491,446 raw observations.
Representative time-domain signals are shown in Figure~\ref{fig:CWRU_data_processing}(a).
Their fluctuation patterns differ visibly across fault configurations, reflecting variation in the underlying physical mechanisms.

We next transform the raw signals into a feature representation suitable for Gaussian mixture modeling.
To prevent overlapping windows from creating leakage between training and testing data, the continuous signal from each fault class is first divided into a training segment ($70\%$) and a testing segment ($30\%$).
A sliding window of length $L=512$ and overlap ratio $75\%$ is then applied separately to the two segments.
For each window, we compute the Fast Fourier Transform (FFT), retain the frequency-domain magnitudes, and standardize the resulting features.
A one-dimensional convolutional neural network (1D-CNN) subsequently maps the normalized FFT features into a compact $10$-dimensional embedding space.
Further details of the architecture of the neural network are provided in Appendix~\ref{app:real_data}.
Figure~\ref{fig:CWRU_data_processing}(b) presents a t-SNE visualization of a random subset of the learned embeddings.
The embeddings exhibit a clear cluster structure across the nine fault classes, supporting the use of a Gaussian mixture model with $K=9$ components.

For each replication, we randomly sample $N=20{,}000$ training embeddings from the DE training pool and distribute them evenly across $m=20$ machines, where each machine fits a local Gaussian mixture model with $K = 9$ components. Prior to local model fitting, we obtain a baseline mixture estimate on the failure-free machine $m$ and use it as a shared initialization for the local EM algorithm on all local machines. This initialization aligns the component labels across machines by construction, ensuring that the $k$th corrupted component consistently corresponds to the same true mixture component; this alignment is used solely to generate the controlled component-wise corruptions and is withheld from all aggregation methods. 
We then introduce controlled component-wise Byzantine failures.
For each component $k\in[K]$, we randomly select $\alpha m$ corrupted machines from the first $m-1$ machines, leaving machine $m$ fully failure-free.
To construct the corrupted component estimates, 
we process the FE signals through the same feature pipeline and extract consecutive, non-overlapping blocks of 30 observations sequentially for each corrupted component occurrence; the authentic local mean and covariance matrix are then replaced by the empirical mean and covariance computed from the corresponding block.
Because the replacement components arise from the same bearing system under a different sensing condition, this design produces plausible but systematically mismatched local estimates rather than easily detectable artificial outliers.
\begin{figure}[htbp]
    \centering
    \begin{minipage}[t]{0.49\textwidth}
        \centering
        \includegraphics[width=0.9\linewidth]{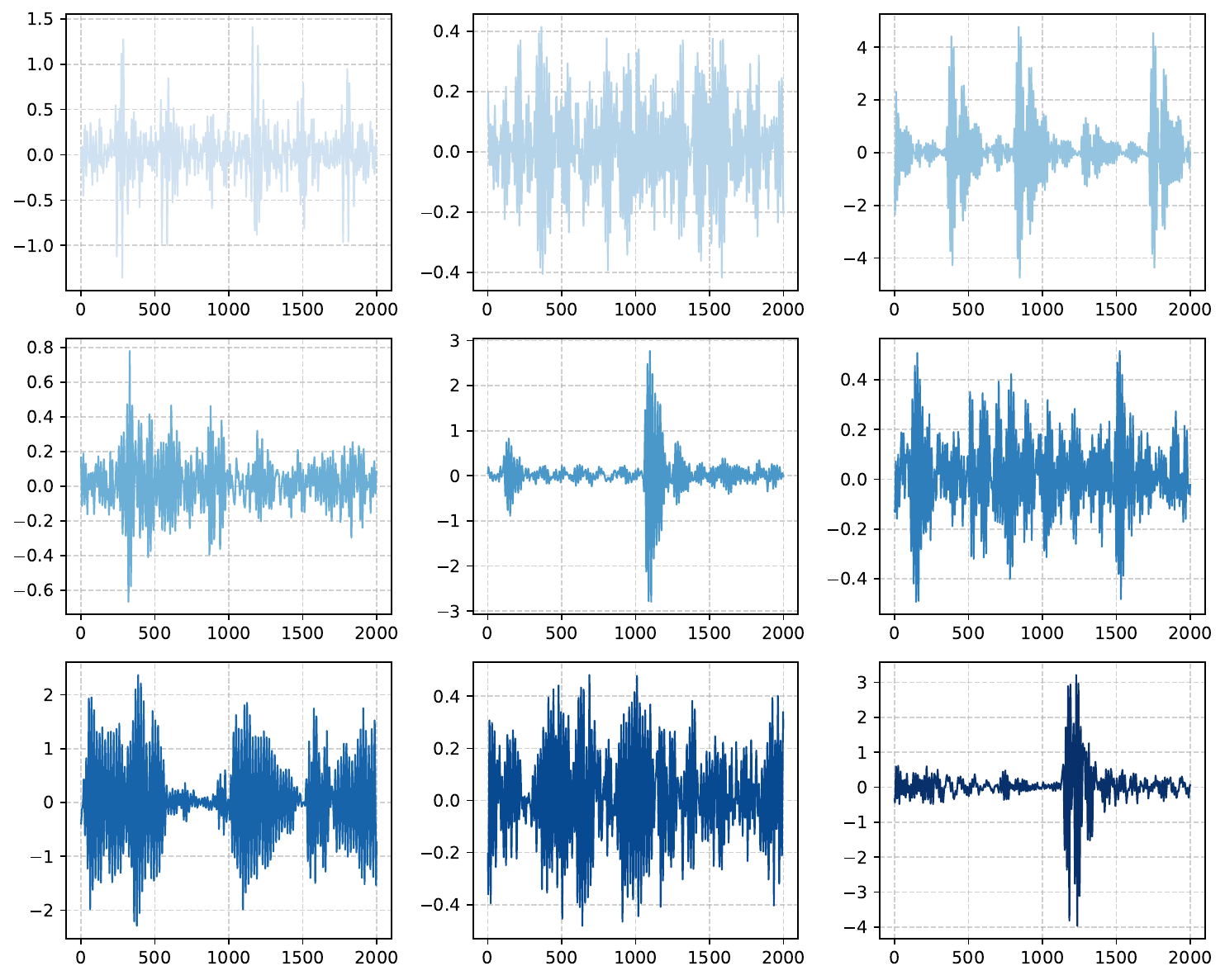}
        \par\smallskip\textbf{(a)} Representative raw time-domain vibration signals.
    \end{minipage}
    \begin{minipage}[t]{0.49\textwidth}
        \centering
        \includegraphics[width=0.9\linewidth]{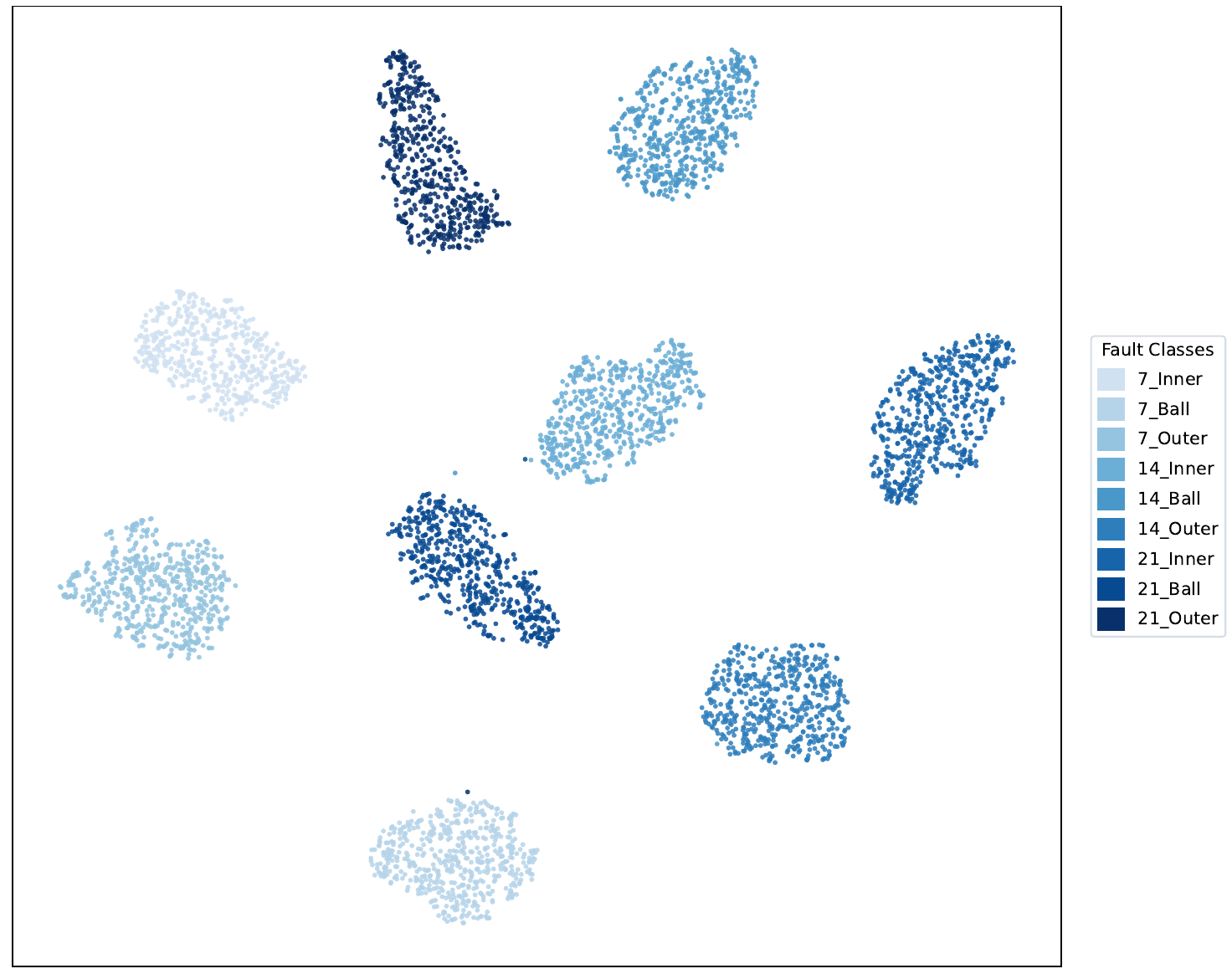}
        \par\smallskip\textbf{(b)} t-SNE visualization of the learned $10$-dimensional embeddings.
    \end{minipage}
    \caption{CWRU bearing data and their learned feature representation for the nine fault classes defined by fault location and severity: \textbf{(a)} raw signals and \textbf{(b)} learned embeddings.}
    \label{fig:CWRU_data_processing}
\end{figure}

We vary the component-wise failure rate $\alpha$ from $0$ to $0.4$ in increments of $0.1$.
Each aggregation method combines the transmitted local mixtures, and the resulting mixture is used to assign the held-out DE test embeddings to fault clusters via maximum a posteriori probability.
We repeat the complete procedure over $R=300$ independent replications and evaluate clustering accuracy using the Adjusted Rand Index (ARI) between the true fault classes and the estimated cluster assignments.
Table~\ref{tab:CWRU_results} reports the resulting median and interquartile range (IQR). Here, the reported IQRs specifically quantify the variability induced by the random training partitions and realized componnet-wise corruption patterns, conditional on the fixed feature representation and evaluation set.

\begin{table}[htbp]
\centering
\caption{Median (IQR) of the ARI on the CWRU test set over $R=300$ replications at each component-wise failure rate $\alpha$.}
\begin{tabular}{ccccccc}
\toprule
$\alpha$ &Vanilla & C-Oracle & \ours{} & Filtering & TRIM & DFMR \\
\midrule
0.0 & 0.9841(0.0020)& 0.9841 (0.0020) & 0.9839 (0.0022) & 0.9830 (0.0035) & 0.9832 (0.0059) & 0.9839 (0.0020) \\
0.1 & 0.9701(0.1034)& 0.9840 (0.0022) & 0.9839 (0.0020) & 0.9828 (0.0033) & 0.9821 (0.0082) & 0.9723 (0.1045) \\
0.2 &0.8713(0.1018)& 0.9841 (0.0022) & 0.9841 (0.0020) & 0.9828 (0.0035) & 0.9825 (0.0129) & 0.8723 (0.1024)  \\
0.3 & 0.8684(0.1051) & 0.9841 (0.0026) & 0.9842 (0.0026) & 0.9829 (0.0036) & 0.9835 (0.0083) & 0.8690 (0.1050) \\
0.4 & 0.8629(0.1964) & 0.9841 (0.0026) & 0.9843 (0.0026) & 0.9832 (0.0037) & 0.9843 (0.0045) & 0.8620 (0.1955)\\
\bottomrule
\end{tabular}
\label{tab:CWRU_results}
\end{table}

At $\alpha=0$, all methods attain high ARI values.
Once component-wise failures are introduced, however, the performance of Vanilla and DFMR deteriorates sharply, whereas \ours{}, C-Oracle, Filtering, and TRIM retain high median ARI values.
The poor performance of DFMR is consistent with its machine-level construction: discarding or retaining an entire local mixture is too coarse when corruption affects only selected components.

Across the nonzero failure rates, \ours{} remains highly competitive with the strongest baselines and achieves the highest  or tied-highest median ARI across all aggregation methods when $\alpha \ge 0.2$.
Notably, \ours{} slightly outperforms even C-Oracle under severe contamination regimes ($\alpha \ge 0.3$).
A possible explanation is that the component-wise filtering step removes not only corrupted component estimates but also unusually noisy failure-free estimates that lie far from the dominant component cluster.
Thus, beyond providing Byzantine robustness, the proposed filtering mechanism may also improve finite-sample stability by excluding poorly estimated local components.

%% The following is a directive for TeXShop to indicate the main file
%%!TEX root = ../main.tex

\section{Conclusion and Discussion}

This paper develops a Byzantine-tolerant aggregation procedure for split-and-conquer learning of finite mixture models under component-wise failures.
The proposed Component-wise Filtered Mixture Reduction (\ours{}) is 
designed for settings in which a local mixture estimate is only partially corrupted.
The method is computationally simple and communication-efficient: it uses the transmitted local mixtures only once, selects an anchor through a majority-radius criterion, aligns all transmitted components to this anchor, filters within each aligned component cluster, and then applies the MR barycentric update.
Our theoretical results show that, under component-wise authentic majorities,
at least one fully authentic local mixture, and suitable regularity,
separation, and capacity conditions, \ours{} matches the convergence rate of
component-level oracle aggregation while allowing the proportion of partially
corrupted machines to be large.  The adaptive remainder depends on the
realized corrupted mass surviving the filter.  The main theorem separates an
exact-retention regime, which keeps all authentic components asymptotically,
from a tail-trimming regime that carries the contribution of authentic tail
losses explicitly while permitting a narrower filter.

It is useful to place this contribution in the broader context of robust distributed mixture estimation.
For ordinary MR without Byzantine failure, the main role of the aggregation step is to resolve label switching and recover centralized first-order behavior.
For machine-level Byzantine failure, DFMR achieves robustness by filtering whole local mixtures, which is effective when a strict majority of machines are entirely failure-free.
The setting considered here is fundamentally different.
When corruption occurs component by component, filtering whole machines is too coarse: it may either retain corrupted components or discard authentic ones that could otherwise improve accuracy.
Thus, the main challenge is to combine label alignment and robust filtering at the component level, even though the server does not know the oracle ordering of the local components.
The anchor-based construction of \ours{} provides one way to overcome this difficulty.
Technically, this requires controlling several linked events: the selected anchor must be failure-free, failure-free components must be assigned to their correct anchor-induced clusters, the empirical majority radii must have the clean $n^{-1}$ cost scale, and the filter must retain almost all authentic component estimates while localizing any corrupted estimates that remain.
These ingredients allow the final convergence rate to separate the exact
oracle error $(m_kn)^{-1/2}+n^{-1}$ from the residual effect of corrupted
component estimates near the selected reference mixture.  The moment order in
the likelihood smoothness condition, and hence of the derived local-estimator
bound, controls how rapidly the filtering inflation must grow,
but does not alter the clean statistical scale or the structural anchoring
requirements.

The numerical experiments support the theoretical findings.
In Gaussian mixture simulations, the anchor-selection step succeeds with high probability across the configurations considered, and the resulting estimator remains close to the component-level oracle across different failure types, failure rates, sample sizes, numbers of machines, and overlap levels.
The real-data analysis on bearing fault diagnosis shows the same qualitative behavior: \ours{} remains stable as the component-wise failure rate increases, whereas methods based on whole-machine filtering or unprotected aggregation can deteriorate substantially.
These results demonstrate the practical advantage of exploiting partially reliable local mixture estimates rather than treating each machine as entirely reliable or entirely corrupted.

Several limitations suggest directions for future research.
First, the current theory assumes the existence of at least one fully failure-free local mixture estimate.
This assumption provides a transparent mechanism for resolving label switching, but it is stronger than the component-wise majority condition alone.
It would be interesting to develop anchor-free methods under alternative separation or clustering conditions that allow the correct component groups to be recovered directly from all transmitted components.
Second, the number of mixture components is treated as known and the true mixing distribution is fixed with separated components.
Allowing the number of components to be unknown, overfitted, or selected from data would require additional theory, and regimes in which components become less separated as the sample size grows may require techniques beyond the local normal approximation and cost-radius concentration arguments used here.
Third, the paper considers only a single-round communication protocol.
The aggregated estimator could naturally be used as an initialization for further distributed EM iterations, but understanding the robustness, communication cost, and statistical gain of such multi-round refinements remains open.
Fourth, the present adversarial theory treats the component-specific failure sets as fixed in advance.
The corrupted replacement values may otherwise be coordinated and adaptive,
subject to the stated capacity conditions.  The small-ball corollary is a
separate stochastic-contamination result, while a genuinely data-dependent
choice of which latent component occurrences to corrupt would require new
conditional arguments and is left for future work.
Finally, the tuning parameters in the component cost and filtering radius are guided by theory but chosen empirically in finite samples.
Developing adaptive tuning rules with theoretical guarantees would further improve the practical applicability of component-wise Byzantine-tolerant aggregation.

\section*{Acknowledgments}
Yimei Zhang and Qiong Zhang are supported by the National Natural Science Foundation of China under Grant 12301391 and by the Open Research Fund of the Key Laboratory of Advanced Theory and Application in Statistics and Data Science (East China Normal University), Ministry of Education.
Yan Shuo Tan is supported by NUS Start-up Grant A-8000448-00-00 and MOE AcRF Tier 1 Grants A-8002498-00-00 and A-8004458-00-00.
This research was enabled in part by support provided by Fir (fir.alliancecan.ca) and the Digital Research Alliance of Canada (alliancecan.ca).

AI Disclosure: Illustrative icons in Figure~\ref{fig:CWRU_fault_illustration} were generated using Google Gemini 3.5 Flash. 
The AI-generated icons were used solely for visualization purposes and do not affect the scientific content, experimental results, or conclusions of this work.
OpenAI's ChatGPT 5.6 was used to assist with editing and improving the clarity of the text and proofs.

%Bibliography
\bibliographystyle{IEEEtran}  
\bibliography{references}  

% \clearpage
% \newpage
\appendices
%% The following is a directive for TeXShop to indicate the main file
%%!TEX root = ../main.tex

\section{Numerical algorithms}
\subsection{EM algorithm}
\label{app:em_algorithm}
The Maximum Likelihood Estimate (MLE) is widely used for local inference under finite mixture models. 
The preferred numerical method for computing the MLE is the Expectation-Maximization (EM) algorithm, which is most conveniently described using the latent variable interpretation of the mixture model.

For the $j$th unit with an observed value $x_{ij}$ from the finite mixture $f_{G}(x)$ on the $i$th machine, there exists a latent variable $z_{ij}$ associated with the observed value. 
When the latent variables $\{z_{ij}, j \in [n]\}$ are known, the complete dataset $\{(z_{ij}, x_{ij})\}$ is obtained. 
This leads to the complete data log-likelihood function:
\[
\ell_{n}^c(G) 
 = \sum_{j=1}^{n} 
 \sum_{k=1}^K \mathbbm{1}(z_{ij}=k)\log\{w_kf(x_{ij}; \theta_k)\}.
 \]
However, the complete data log-likelihood cannot be directly used to estimate $G$ since the latent variables are unknown. 
In the EM algorithm, one replaces $z_{ij}$ in the complete data log-likelihood by its conditional expectation in each iteration.
Let $G^{(t)}$ denote the value of the mixing distribution after the $t$th iteration. 
The conditional probability of $z_{ij}$ given the observed data is:
\begin{equation}
\label{eq:conditional_expectation}
w_{ijk}^{(t)} 
= 
\sP(z_{ij}=k| G^{(t)}, \gX)
= 
\dfrac{w_k^{(t)} f(x_{ij};\theta_k^{(t)})}
{\sum_{k'=1}^{K} w_{k'}^{(t)} f(x_{ij}; \theta_{k'}^{(t)})}.
\end{equation}
The conditional expectation of $\ell_{n}^c(G) $ is then given by:
\begin{equation*}
Q(G|G^{(t)}) 
= 
\sum_{j=1}^{n} \sum_{k=1}^K w_{ijk}^{(t)}\log \{w_kf(x_{ij}; \theta_k)\}.
\end{equation*}
This constitutes the E-step of the algorithm.

In the M-step, instead of maximizing the complete data log-likelihood $\ell_{n}^c(G)$, the algorithm seeks the maximizer of $Q(G|G^{(t)})$ for $G \in \sG_K$. 
This task is simpler because $Q(G |G^{(t)})$ is additive in subpopulation parameters which allows for easier numerical computation.
As a result, the updated mixing distribution $G^{(t+1)}$ is composed of mixing weights:
\begin{equation*}
w_{k}^{(t+1)} =   n^{-1} \sum_{j=1}^{n} w_{ijk}^{(t)}
\end{equation*}
and subpopulation parameters: 
\begin{equation*}
\theta_{k}^{(t+1)} = \argmax_{\theta} \left\{\sum_{j=1}^{n} w_{ijk}^{(t)} \log f(x_{ij};\theta)\right\}.
\end{equation*}
For many parametric subpopulation models $\gF$, there exists an analytical solution for $\theta_{k}^{(t+1)}$, making the EM iteration straightforward to execute. 
Repeating the iteration generates a sequence of mixing distributions. 
Under certain conditions, Wu~\cite{wu1983convergence} show that $\ell_{n}(G^{(t)}) = \sum_{j=1}^{n} \log f_{G^{(t)}}(x_{ij})$ is an increasing sequence, and the sequence $\{G^{(t)}, t=1,2,\ldots\}$ converges to a local maximum of the log-likelihood function.

Despite its widespread use, the EM algorithm suffers from a slow convergence rate and can easily become trapped in local maxima. 
Various approaches have been proposed to accelerate convergence, see~\cite{meilijson1989fast, meng1993maximum, liu1994ecme, balakrishnan2017statistical}.

\paragraph{Issues with the MLE under finite Gaussian mixtures and finite location-scale mixtures}
The MLE is not well-defined for widely used finite Gaussian and location-scale mixtures in general. 
To address this issue, several approaches have been proposed. 
For example, Hathaway~\cite{hathaway1985constrained} introduces a constrained maximum likelihood formulation to avoid the unboundedness of the likelihood function, resulting in an estimator with desirable consistency properties. 
However, this approach alters the parameter space, which may be undesirable. 
Alternatively, Ridolfi and Idier~\cite{ridolfi2001penalized} propose a Bayesian approach by imposing an inverse gamma prior on the scale parameter in Gaussian mixture model. 
The posterior mode, known as the maximum a posterior (MAP) estimator, is later shown to be consistent by Chen et al.~\cite{chen2008inference}.

In Chen et al.~\cite{chen2008inference}, a penalized log-likelihood function is defined as:
\begin{equation*}
p\ell_n(G) = \ell_n(G) - a_n\sum_{k=1}^K p(\theta_k)
\end{equation*}
where $p(\cdot)$ is a penalty function and $a_n > 0$ is the penalty strength. 
The penalized MLE (pMLE) is then:
\begin{equation*}
\widehat{G}^{\text{pMLE}}= \argsup_{G\in\sG_{K}} p\ell_n(G)
\end{equation*}
The penalty function is applied to individual subpopulation parameters, ensuring ease of implementation in the modified EM algorithm.
Three penalty functions are recommended for different mixtures of $\gF$:
\begin{itemize}
\item 
Under \emph{finite Gaussian mixtures}, let $S_{x}$ be the sample covariance matrix of $\gX_{i}$.
Chen et al.~\cite{chen2009inference} recommends the penalty function:
\[p(\theta) = \text{tr}(\Sigma^{-1}S_{x}) + \log\text{det}(\Sigma)\]
where $\text{tr}(\cdot)$ is the trace of a square matrix.
The penalty function reduces to $p(\theta) = s_x^2/\sigma^2 + \log \sigma^2$ under univariate Gaussian mixtures~\cite{chen2008inference}. 

\item 
Under \emph{finite location-scale mixtures} with location $\mu$ and scale $\sigma$, the sample variance $s_x^2$ in the previous penalty function can be replaced by a scale-invariant statistic, such as the squared sample interquartile range.
This is particularly useful when the variance of $f_0(\cdot)$ is not finite.

\item 
Under finite mixture of \emph{two-parameter Gamma distributions}, the likelihood is also unbounded. 
Chen~\cite{chen2016consistency} recommends the penalty function to be $p(\theta)= r - \log r$, where $r$ is the shape parameter in the Gamma distribution.
\end{itemize}

\paragraph{The EM algorithm for penalized MLE}
The EM algorithm can be easily adapted to compute the pMLE with the recommended penalty functions~\cite{chen2009inference}. 
Using the latent variable interpretation, the penalized complete data log-likelihood is:
\[
p\ell_{n}^c(G) = \sum_{j=1}^{n} \sum_{k=1}^K \mathbbm{1}(z_{ij}=k)\log \{ w_k f(x_{ij};\theta_k)\} - a_n \sum_{k=1}^{K} p(\theta_k) 
\]
The only random quantity in $p\ell_{n}^c(G)$ is $\{z_{ijk}, j \in [n], k\in[K]\}$ when conditioning on $\gX_i$.
The E-step involves computing the conditional expectation of $p\ell_{n}^c(G)$, which remains valid as in~\eqref{eq:conditional_expectation}. 
The M-step maximizes $Q(G|G^{(t)})$, which now includes a penalty term:
\begin{equation*}
Q(G|G^{(t)}) = \sum_{j=1}^{n} \sum_{k=1}^K w_{ijk}^{(t)} \log \{w_k f(x_{ij};\theta_k)\} - a_n \sum_{k=1}^{K} p(\theta_k).
\end{equation*}

With the recommended penalty function, the updated mixing weights and subpopulation parameters are
\[
w_{k}^{(t+1)} = n^{-1} \sum_{j=1}^{n} w_{ijk}^{(t)}
\]
and
\begin{equation}
\label{eq:m_step_subpop}
\theta_{k}^{(t+1)} 
= \argmax_{\theta} \left\{\sum_{j=1}^{n} w_{ijk}^{(t)} \log f(x_{ij};\theta) - a_n p(\theta)\right\}.
\end{equation}

For Gaussian mixtures~\cite{chen2009inference}, the solution to~\eqref{eq:m_step_subpop} has the closed-form solution:
\[
\begin{split}
  \mu_{k}^{(t+1)}  
      &= \big \{ n w_k^{(t+1)} \big \}^{-1}  \sum_{j=1}^{n} w_{ijk}^{(t)} x_{ij}, \\
  \Sigma_{k}^{(t+1)}  
      &= \big \{ 2a_{n} + n w_{k}^{(t+1)} \big \}^{-1} \big \{ 2a_{n} S_{x} + S_k^{(t+1)} \big \},
\end{split}
\]
where 
\[
S_k^{(t+1)} = \sum_{j=1}^{n} w_{ijk}^{(t)} (x_{ij} - \mu_k^{(t+1)}) (x_{ij}-\mu_k^{(t+1)} )^{\top}.
\] 
For general location-scale mixtures, the M-step may not have a closed-form solution but only requires solving a two-variable optimization problem, which is usually simple.

The EM algorithm for pMLE, like its MLE counterpart, increases the value of the penalized likelihood after each iteration. 
For all $t$, $\Sigma_{k}^{(t)} \geq { 2a_{n}/ (n+2a_{n})} S_{x} > 0$, ensuring the covariance matrices in $G^{(t)}$ have a lower bound that does not depend on the parameter values. 
This property guarantees that the log-likelihood under Gaussian mixture at $G^{(t)}$ has a finite upper bound. 
The above iterative procedure is guaranteed to have $p\ell_{n}^c(G^{(t)})$ converge to at least a non-degenerate local maximum.

\subsection{Reduction estimator and the MM algorithm}
\label{app:mm_algorithm}
Zhang and Chen~\cite{zhang2022distributed} proposed a novel approach to mixture reduction (MR) using a composite transportation divergence.
Specifically, let $c: \gF \times \gF \to \sR_{+}$ be a cost function, where we denote $c(\theta, \theta') = c(f(\cdot; \theta), f(\cdot; \theta') )$ for simplicity.
Let $G = \sum_{k=1}^{K} w_k \delta_{\theta_k}$ and  $G' = \sum_{k'=1}^{K'} w'_{k'} \delta_{\theta'_{k'}}$ be two mixing distributions with $K$ and $K'$ support points.
The composite transportation divergence between these mixing distributions is defined to be
\begin{equation}
\label{eq:CTD_def}
    T_{c}(G, G') = \min \left \{
    \sum_{k=1}
    ^{K}\sum_{k'=1}^{K'} \pi_{k, k'} c(\theta_k, \theta'_{k'}):~\sum_{k=1}^{K} \pi_{k,k'} = w'_{k'},  \sum_{k'=1}^{K'} \pi_{k,k'}= w_k, \pi_{k,k'} \geq 0
    \right \}.
\end{equation}
where the minimum is taking over all transportation plans $\pi$ that satisfies two sets of marginal constraints.
This divergence is the lowest cost of transporting subpopulations
of $G$ to those of $G'$~\cite{chen2017optimal,delon2020wasserstein,bing2022estimation,nguyen2013convergence}.
The MR estimator is then defined to be
\begin{equation*}
\widehat{G}^{\mr} =\argmin\left\{T_{c}(\widebar{G}, G):~G\in\sG_{K}\right\}.
\end{equation*}
Note that while $\widehat G^{\mr}$ is ostensibly the solution to a complicated bilevel optimization program, the program actually simplifies and has a clear interpretation, as discussed in Zhang and Chen~\cite{zhang2022distributed}.
More precisely, in \eqref{eq:CTD_def}, the column-wise marginal constraints on the transportation plan $\pi$ are redundant and the full mass of each support point $\theta_{ik}$ in $\widebar G$ is transported to its ``closest'' support point in $\widehat{G}^{\mr}$.
The optimal transportation plan can thus be interpreted as an optimal clustering of the subpopulation parameters $\braces*{\theta_{ik} \colon (i,k) \in [m] \times [K]}$ into $K$ clusters $\{\gC_{j}\}_{j=1}^{K}$, with the support points of $\widehat{G}^{\mr}$ forming the cluster ``barycenters''.
Mathematically, the simplified optimization program can be written as:
\begin{equation*}
    \min \braces*{\sum_{j=1}^K \sum_{(i,k) \in \gC_j} w_{ik}c(\hat\theta_{ik},\theta_j) \colon \theta_1,\ldots,\theta_K \in \Theta, [m] \times [K] = \gC_1\sqcup\cdots\sqcup\gC_K }
\end{equation*}
where $\sqcup$ denotes disjoint union.
This formulation shows that MR implicitly aligns local components through clustering while simultaneously aggregating them via barycentric averaging.
The numerical algorithm for computing the estimator is given in Appendix~\ref{app:mm_algorithm}.
The clustering interpretation of MR naturally leads to an efficient majorization–minimization (MM) algorithm for computation.
Each iteration alternates between assigning local subpopulations to the nearest tentative cluster centers and updating these centers by barycentric averaging.
Formally, let $G^\dagger = \sum_{j=1}^K w^\dagger_j \delta_{\theta_j^{\dagger}}$ denote an initial mixing distribution.
The MM algorithm then proceeds by repeating the following two steps until convergence:
\begin{itemize}
  \item
        \underline{\emph{Majorization step.}} 
        Partition the support points of $\widebar{G}$,
        $\{\widehat \theta_{i k}: i \in [m], k\in [K]\}$, into $K$ groups based on their proximity to subpopulation parameter $\{\theta_j^{\dagger}, j=1,2,\ldots,K\}$, of the current iterate.
        In other words, we assign the $k$th subpopulation on the $i$th machine to the $j$th cluster $\gC_j$ if $c(\widehat\theta_{ik},\theta_j^{\dagger})\leq c(\widehat\theta_{ik},\theta_{j'}^{\dagger})$ for any $j'$.
        We denote this as $(i,k)\in \gC_j$.
  \item
        \underline{\emph{Minimization step.}} Update $\theta_j^{\dagger}$ by computing the barycenter of the subpopulation parameters in its cluster.
        Specifically, set $\theta_j^{\dagger} = \argmin_{\theta\in\Theta} \sum_{\{(i,k) \in \gC_j\}} c(\widehat\theta_{ik}, \theta)$.
        We also update the mixing weight via $w^\dagger_j=\sum_{\{(i,k) \in \gC_j\}} \lambda_i \widehat w_{ik}$.
\end{itemize}

\begin{algorithm}
\begin{algorithmic}
\State {\bfseries Initialization:} $f(\cdot;\theta^{\dagger}_{k})$, $k\in [K]$
\Repeat
\For {$k\in[K]$}
\State {\textbf{Majorization:} For $\gamma\in[mK]$, let
\begin{equation*}
  \pi_{\gamma k}=
  \begin{cases}
  (n_{\gamma}/N)\widehat w_{\gamma} & \text{if}~k=\argmin_{k'} c(f(\cdot;\widehat\theta_{\gamma}),f(\cdot;\theta^{\dagger}_{k'}))\\
  0& \text{otherwise}
  \end{cases}
\end{equation*}}
% \Comment{Majorization}
\State{\textbf{Minimization:} Let 
\begin{equation}
\label{eq:barycentre}
\theta^{\dagger}_{k}=\argmin_{\theta} \sum_{\gamma} \pi_{\gamma k} c(f(\cdot;\widehat\theta_{\gamma}), f(\cdot;\theta))
\end{equation}}
% \Comment{}
\EndFor
\Until 
the change in $\sum_{\gamma,k}\pi_{\gamma k}c(f(\cdot;\widehat\theta_{\gamma}), f(\cdot;\theta^{\dagger}_k))$ is below some threshold
\State{
Let $w^{\dagger}_{k} = \sum_{\gamma} \pi_{\gamma k}$}
\State {\bfseries Output:} $\sum_{k}w_k^{\dagger}f(\cdot;\theta^{\dagger}_{k})$
\end{algorithmic}
\caption{MM algorithm for GMR estimator under a general cost function $c(\cdot, \cdot)$.}
\label{alg:mm_reduction}
\end{algorithm}

The MM algorithm is summarized in Algorithm~\ref{alg:mm_reduction} for a general cost function $c(\cdot,\cdot)$ defined on $\gF$.
For simplicity, we order $\{\widehat\theta_{ik}\}$ arbitrarily and use a single index $\gamma$ in the algorithm.
When $\gF = \{\phi(x;\mu,\Sigma)\}$ is the space of Gaussian distributions and the cost function is the KL divergence between two Gaussian distributions:
\begin{align*}
&c(\phi(x;\widehat\mu_{\gamma},\widehat\Sigma_{\gamma}),\phi(x;\mu^{\dagger}_k,\Sigma^{\dagger}_k))\\
=&~\KL(\phi(x;\widehat\mu_{\gamma},\widehat\Sigma_{\gamma})\|\phi(x;\mu^{\dagger}_k,\Sigma^{\dagger}_k))\\
=&~-\log \phi(\widehat\mu_{\gamma};\mu^{\dagger}_k,\Sigma^{\dagger}_k) 
- \frac{1}{2}\Big \{\log\text{det}(2\pi\widehat\Sigma_{\gamma})-\mbox{tr}(\{\Sigma^{\dagger}_k\}^{-1}\widehat\Sigma_{\gamma}) + d
\Big\},
\end{align*}
the minimization step~\eqref{eq:barycentre} has an analytical solution:
\begin{equation*}
\begin{split}
&\mu^{\dagger}_{k} =\Big\{\sum_{\gamma}\pi_{\gamma k}\Big\}^{-1}\sum_{\gamma}\pi_{\gamma k}\widehat\mu_{\gamma},\\
&\Sigma^{\dagger}_{k}= \Big\{\sum_{\gamma}\pi_{\gamma k}\Big\}^{-1}\sum_{\gamma}\pi_{\gamma k}\{\widehat\Sigma_{\gamma} + (\widehat\mu_{\gamma}-\mu^{\dagger}_{k})(\widehat\mu_{\gamma}-\mu^{\dagger}_{k})^{\top}
\}.  
\end{split}
\end{equation*}

\subsection{Numerical algorithm for TRIM}
\label{app:trim_algorithm}
In this section, we summarize the iterative algorithm for the TRIM estimator proposed in Del Barrio et al.~\cite{del2019robust} for reference in Algorithm~\ref{alg:trimmed_barycentre}.
\begin{algorithm}[!htbp]
\begin{algorithmic}
\State {\bfseries Initialization:} $\{f(\cdot;\theta^{\dagger}_{k})\}_{k=1}^{K}$, threshold $\eta \in (0,1)$
\Repeat
\For {$\gamma\in[mK]$}
\State{$c_{\gamma} = \argmin_{j} c(f(\cdot;\tilde\theta_{\gamma}), f(\cdot;\theta^{\dagger}_{j})), l_{\gamma} = c(f(\cdot;\tilde\theta_{\gamma}), f(\cdot;\theta^{\dagger}_{c_{\gamma}}))$}
\Comment{Find the cluster assignment and compute its distance to the cluster center}
\EndFor
\State Find the permutation $\{(1),\ldots, (mK)\}$ such that $l_{(1)}\leq l_{(2)}\leq \cdots\leq l_{(mK)}$
\State{Set $\tau = \inf\{\eta \in [mK]: \sum_{\gamma=1}^{\eta}\widetilde{w}_{(\gamma)}\leq 1-\eta\}$, let
\begin{equation*}
\kappa_{\gamma} = 
\begin{cases}
\widetilde{w}_{(\gamma)} & \gamma <\tau\\
1-\eta - \sum_{\gamma < \tau} \widetilde{w}_{(\gamma)} & \gamma = \tau\\
0 & \text{otherwise}
\end{cases}
\end{equation*}}
\Comment{Trimming}
\For {$k\in[K]$}
\State{$\theta^{\dagger}_{k} = \argmin_{\theta} \sum_{\{\gamma: c_{\gamma} =k \}} \kappa_{\gamma} c(f(\cdot;\tilde\theta_{\gamma}), f(\cdot;\theta))$}
\Comment{Update cluster centers}
\State{$w^{\dagger}_{k} = (1-\eta)^{-1}\sum_{\{\gamma: c_{\gamma} =k \}} \kappa_{\gamma} \widetilde{w}_{\gamma}$}
\Comment{Update weights}
\EndFor
\Until 
there is no change in $c_{\gamma}$ for all $\gamma\in[mK]$
\State {\bfseries Output:} $\sum_{k}w_k^{\ddagger}f(\cdot;\theta^{\dagger}_{k})$
\end{algorithmic}
\caption{Iterative algorithm to compute the TRIM estimator in Del Barrio et al.~\cite{del2019robust}.}
\label{alg:trimmed_barycentre}
\end{algorithm}

%% The following is a directive for TeXShop to indicate the main file
%%!TEX root = ../main.tex

\section{Theoretical results}
\label{app:proof}
This appendix establishes the theoretical guarantees for \ours{} in three steps.
We first collect the stochastic properties of the local MLEs and translate them into component-level cost bounds.
We then analyze the ideal case in which the anchor machine is failure-free, since this is the event on which the component-wise alignment and filtering steps are well behaved.
Finally, we prove that the data-driven anchor selection recovers such a failure-free machine with high probability and use this event to derive the convergence rate of the proposed estimator.
Throughout the appendix, $K$ is fixed and all asymptotic statements are along
sequences for which $m,n\to\infty$, together with the additional relative
growth conditions stated in the corresponding result.

%%%%%%%%%%%%%%%%%%%%%%%%%%%%%%%%%%%%%%%%%%%%%%%%%%%%%%%%%%%%%%%%%%%%%%%%%%%%%%%%%%%%%%%%%%%
\subsection{Discussion of the adversarial assumptions}
\label{app:assumptions}
The assumptions used in the main theoretical results are stated in Section~\ref{subsec:theoretical_properties}.
This appendix does not restate them; instead, it records sufficient conditions
and counterexamples for the three corruption restrictions: anchor separability
and the two clauses of corruption capacity.
Throughout this section, $d_\psi=p+1$ is the dimension of a component parameter $\psi=(w,\theta^\top)^\top$, and $C$ denotes a generic positive constant whose value may change from line to line.
The examples below illustrate when the conditions hold and what types of
coordinated attacks they exclude.

\begin{example}[Independent component contamination]
Fix the component-specific failure sets $\gB_1,\ldots,\gB_K$.
Suppose the corrupted subpopulation parameters
$\{\xi_{i\ell,\theta}:\ell\in[K],\ i\in\gB_\ell\}$ have densities bounded
by $M_Q<\infty$, are independent of the authentic local estimators, and are
independent across machines.  Dependence among the at most $K$ corrupted
coordinates on one machine is allowed.  The corrupted weights may be chosen
arbitrarily and dependently, subject to the validity of each transmitted
mixture. Assume additionally that $A$ is uniformly strongly convex on
$\Theta$, so that
\[
D(\theta',\theta)\ge (\underline\eta/2)\|\theta'-\theta\|^2,
\qquad \theta,\theta'\in\Theta,
\]
for some $\underline\eta>0$. If $m\to\infty$, $\rho/n\to0$, and
$m=o(n^{p/2})$, then Assumption~\ref{assump:anchor_separability} and
Assumption~\ref{assump:component_capacity}(a) hold.  The same model also
satisfies Assumption~\ref{assump:component_capacity}(b).

We first verify truth-centred component capacity. For a fixed truth $\psi_k^*$ and radius $r=C\rho/n$, membership in the cost ball implies
\[
\|\xi_{i\ell,\theta}-\theta_k^*\|\le C_0r^{1/2}.
\]
The bounded density therefore gives
\[
\sP\{c_\lambda(\xi_{i\ell},\psi_k^*)\le r\}
\le C_1(\rho/n)^{p/2}
\]
for every $\ell\in[K]$ and $i\in\gB_\ell$. Define the machine-level count
\[
Y_{i,k}(r)=
\sum_{\ell:i\in\gB_\ell}
\mathbbm{1}\{\|\xi_{i\ell,\theta}-\theta_k^*\|\le C_0r^{1/2}\}.
\]
The variables $\{Y_{i,k}(r):i\in[m]\}$ are independent across machines, satisfy
$0\le Y_{i,k}(r)\le K$, and have total expectation $o(m)$. For any fixed
$\beta\in(0,1/2)$, Bernstein's inequality and a union bound over the fixed
number $K$ of truths give
\[
\sP\left\{
\max_{k\in[K]}
\left|\left\{(i,\ell):\ \ell\in[K],\ i\in\gB_\ell,\
c_\lambda(\xi_{i\ell},\psi_k^*)\le C\rho n^{-1}\right\}\right|
>\beta m
\right\}
\le C_2e^{-c_2m}
\]
for all sufficiently large $n$. This proves
Assumption~\ref{assump:component_capacity}(a).

We next verify anchor separability, keeping the order of quantifiers explicit.
Fix the radius multiplier $C>0$ and choose a constant $A=A(C)>0$ below.
For every $i'\in\gB$, choose one genuinely corrupted component occurrence with
latent origin $\ell(i')\in[K]$, so that $i'\in\gB_{\ell(i')}$. Bounded density gives
\[
\sP\left\{
\min_{s\in[K]}
\|\xi_{i'\ell(i'),\theta}-\theta_s^*\|
\le A n^{-1/2}
\right\}
\le C_3A^pn^{-p/2}.
\]
Because $m=o(n^{p/2})$, a union bound shows that, with probability tending to one, every selected corrupted anchor component is farther than $An^{-1/2}$ from every true subpopulation parameter.

Fix one such separated anchor component. Global strong convexity implies that any transmitted component within cost $Cn^{-1}$ of it has its subpopulation parameter within $C_4n^{-1/2}$ of the anchor. If that transmitted component is authentic with latent origin $\ell$, then
\[
\sqrt n\|\widehat\theta_{i\ell}-\theta_\ell^*\|\ge A-C_4.
\]
The $q$th-moment bound in
Corollary~\ref{cor:Berry-Esseen_for_component_estimates}(c) allows $A=A(C)$
to be chosen so that the expected number of such authentic components on
machines other than $i'$ is at most $m/16$. Conditional on the anchor,
bounded density makes the expected number of nearby corrupted components on
the other machines $O(mn^{-p/2})=o(m)$, hence at most $m/16$ for all
sufficiently large $n$.

Group both kinds of indicators by machine. Conditional on the anchor, these machine-level counts are independent across machines and bounded by $K$. The candidate machine itself contributes at most $K$ component occurrences. Bernstein's inequality therefore yields
\[
\sP\{\text{at least }m/2\text{ transmitted components lie within cost }Cn^{-1}
\mid\text{the anchor position is separated}\}
\le C_5e^{-c_5m}.
\]
The assigned cluster is a subset of all transmitted components in this ball.
A final union bound over at most $m$ Byzantine candidate machines gives
$mC_5e^{-c_5m}=o(1)$ and proves
Assumption~\ref{assump:anchor_separability}.

Finally, fix a fully authentic candidate component $\widehat\psi_{ik}$ and
condition on all authentic local estimates.  Each corrupted value falls in a
cost ball of radius $C/n$ around this center with probability $O(n^{-p/2})$.
The same machine-level Bernstein argument bounds the probability that more
than $\beta_\dagger m$ such values occur by $C\exp(-cm)$.  A union bound over
the at most $Km$ fully authentic candidate components proves
Assumption~\ref{assump:component_capacity}(b).  This last argument uses
independence between the stochastic contamination and the authentic local
estimates in the model stated above.  More generally, the same verification
of part~(b) remains valid when, conditional on all authentic estimates, the
corrupted subpopulation parameters have uniformly bounded conditional
densities and are conditionally independent across machines.  Thus the
condition permits diffuse data-dependent contamination, but not unrestricted
adaptive point-mass placement around a selected authentic candidate.
\end{example}

\begin{example}[Attacks excluded by the capacity assumptions]
The conditions can be violated when attacks are coordinated and concentrated.
If the adversary sets more than $\beta m$ corrupted component estimates equal to 
$\psi_k^*$ for some $k$, then the corresponding component cluster is dominated
by corrupted estimates, violating Assumption~\ref{assump:component_capacity}(a).
Similarly, if every Byzantine machine transmits component estimates that mimic
clean-scale local estimators,
for example \[\widetilde{\psi}_{i'k}=\psi_k^*+O(n^{-1/2}),\qquad k\in[K],
\]
then each anchor component can collect a majority-sized cluster of 
failure-free estimates within cost radius $O(n^{-1})$.
In this case the radius score of a Byzantine-failure candidate 
can be of the same order as that of a failure-free candidate, 
and Assumption~\ref{assump:anchor_separability} fails.
Even if truth-centred capacity holds, adaptive corrupted values can instead be
placed around an unusually noisy fully authentic candidate.  Such an attack
can violate Assumption~\ref{assump:component_capacity}(b), which explains
why that extra condition appears only in the tail-trimming regime.
\end{example}

%%%%%%%%%%%%%%%%%%%%%%%%%%%%%%%%%%%%%%%%%%%%%%%%%%%%%%%%%%%%%%%%%%%%%%%%%%%%%%%%%%%%%%%%%%%
\subsection{Properties of the local estimate}
\label{app:local_mle}
This subsection records the probabilistic behavior of a single failure-free ordinary MLE needed below.
The local score expansion and Berry--Esseen argument follow the same local
chart analysis as Lemma~3.1 and Appendix~B.1 of Zhang, Tan and
Chen~\cite{zhang2026byzantine}.  Their compactness, identifiability,
negative-definite population-Hessian, and local eighth-moment conditions
correspond, respectively, to Assumptions~\ref{assump:compact_parameter_space},
\ref{assump:identifiability}, \ref{assump:local_strong_concavity}, and
\ref{assump:smoothness}(a) here when $q=8$.  Huang and
Huo~\cite{huang2019distributed} obtain polynomial
moment bounds for a global $M$-estimator using global concavity of the sample
criterion.  That condition is generally unavailable for an ordinary
finite-mixture likelihood, and qualitative consistency alone does not control
the complement of the local chart at the polynomial rate needed for moment
bounds.  Assumption~\ref{assump:smoothness}(b) supplies that quantitative
localization directly.  The proof below therefore establishes the moment and
bias bounds self-containedly, for every fixed exponent $s\le q$, rather than
importing them from the globally concave setting.

The next lemma collects the resulting local-MLE properties in the form needed
for the subsequent aggregation analysis.

\begin{lemma}[Properties of local MLE]
\label{lemma:local_MLE_properties}
Let $\{X_1, \ldots, X_n\}$ be $n $ IID observations 
from the mixture density
$f_{G^*}$ corresponding to the finite mixing distribution $G^*$.
Under Assumptions~\ref{assump:compact_parameter_space}--\ref{assump:smoothness}, 
let
$\widehat{G} = \sum_{k=1}^K \widehat{w}_k \delta_{\widehat{\theta}_k}$ 
be an ordinary global maximizer of the likelihood over the compact constrained model class, and let $\widehat{\mG}$ be its oracle-aligned free-coordinate vector. Then, for all sufficiently large $n$, the following bounds hold with finite constants that may differ from line to line and may depend on the fixed model and its regularity constants, but not on $n$ or the machine index:
\begin{enumerate}[label=(\alph*), leftmargin=*]
\item (Berry-Esseen central limit theorem (CLT))
\[
\sup_{A \in \mathscr{A}_{d_G}}
\left| 
\sP \left\{ \sqrt{n} \mI(\mG^*)^{1/2} (\widehat{\mG} - \mG^*) \in A \right\} - \sP(Z_{d_G} \in A) \right| \le Cn^{-1/2},
\]
where $Z_{d_G}\sim N(0,\mI_{d_G})$ and $\mathscr{A}_{d_G}$ is the collection of convex Borel subsets of $\sR^{d_G}$;
\item (Bias bound)
$\| \mathbb{E} \{ \widehat{\mG} - \mG^* \} \| \le Cn^{-1}$;
\item (Moment bounds)
$\mathbb{E} \{\| \widehat{\mG} - \mG^* \|^s \} \le C_sn^{-s/2}$ for
$1 \leq s \leq q$, where $q\geq8$ is the fixed order in
Assumption~\ref{assump:smoothness}.
\end{enumerate}
\end{lemma}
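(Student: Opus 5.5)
The proof has two parts. First we localize the global MLE inside the local likelihood chart with polynomially small failure probability. Then we run the usual score expansion inside the chart.

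\emph{Step 1 (localization).} Let $\widehat{\mG}$ be the oracle-aligned global maximizer. By Assumptions~\ref{assump:compact_parameter_space}--\ref{assump:identifiability} and continuity of $L$ on the compact class, for every $\delta\in(0,\delta_0)$ there is a gap $\epsilon_\delta>0$ with $L(\mG^*)-L(\mG)\ge\epsilon_\delta$ whenever $\|\mG-\mG^*\|\ge\delta$. Here distance is measured after oracle alignment, which is legitimate because $G^*$ has distinct support points and interior weights. Since $\widehat{\mG}$ maximizes $L_n$, we have $L_n(\widehat{\mG})-L_n(\mG^*)\ge0$. Leaving the $\delta$-ball therefore forces the centered supremum in Assumption~\ref{assump:smoothness}(b) to exceed $\epsilon_\delta$. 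Markov's inequality with the $q$th moment gives
\[
\sP(\mathcal E_\delta^c)\le H_q\epsilon_\delta^{-q}n^{-q/2},
\qquad \mathcal E_\delta=\{\|\widehat{\mG}-\mG^*\|<\delta\}.
\]
Because the parameter space is bounded, the contribution of $\mathcal E_\delta^c$ to $\sE\|\widehat{\mG}-\mG^*\|^s$ is $O(n^{-q/2})\le O(n^{-s/2})$ for every $s\le q$. The same holds for the bias and for the Berry--Esseen error, since $q\ge8$ makes $n^{-q/2}=o(n^{-1})$.

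\emph{Step 2 (expansion on $\mathcal E_\delta$).} On $\mathcal E_\delta$ the estimator is interior, so $\nabla L_n(\widehat{\mG})=0$. A Taylor expansion gives
\[
\widehat{\mG}-\mG^*=\mI(\mG^*)^{-1}\nabla L_n(\mG^*)+\mathcal R_n,
\]
with $\|\mathcal R_n\|\lesssim\bigl(\matrixnorm{\nabla^2L_n(\mG^*)-\nabla^2L(\mG^*)}+\bar W_n\|\widehat{\mG}-\mG^*\|\bigr)\|\widehat{\mG}-\mG^*\|$, where $\bar W_n=n^{-1}\sum_j W(X_j)$. The $q$th-moment bounds in Assumption~\ref{assump:smoothness}(a), combined with Rosenthal's inequality, give $\sE\|\nabla L_n(\mG^*)\|^s=O(n^{-s/2})$ and $\sE\matrixnorm{\nabla^2L_n-\nabla^2L}^s=O(n^{-s/2})$. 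Take $\delta$ small enough that $\nabla^2L_n$ stays uniformly negative definite on the ball, outside an event that again has polynomially small probability. Then the deterministic inequality $\|\widehat{\mG}-\mG^*\|\le C\|\nabla L_n(\mG^*)\|$ holds there, which together with Step 1 yields~(c). Squaring it bounds $\sE\|\mathcal R_n\|^{s}$ by $O(n^{-s})$ via H\"older, which is where moments up to order $q$ are spent.

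\emph{Step 3 (bias and Berry--Esseen).} For~(b), note that $\sE\nabla L_n(\mG^*)=0$ by Assumption~\ref{assump:local_strong_concavity}. Hence $\|\sE(\widehat{\mG}-\mG^*)\|\le\sE\|\mathcal R_n\|+O(n^{-q/2})=O(n^{-1})$. For~(a), apply the multivariate Berry--Esseen theorem over convex sets to $\sqrt n\,\mI(\mG^*)^{-1/2}\nabla L_n(\mG^*)$; it needs the third moment of the score, which is available since $q\ge8$. The remainder is absorbed by a perturbation argument. Using $\sP(\sqrt n\|\mathcal R_n\|>n^{-1/2}\log n)\le Cn^{-1/2}$, from Markov with the $O(n^{-s})$ moments, and the Gaussian-measure bound $\sP(Z\in A^{\varepsilon}\setminus A^{-\varepsilon})\le C\varepsilon$ for convex $A$ with $\varepsilon=n^{-1/2}\log n$, the total error is $Cn^{-1/2}\log n$. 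Removing the logarithm requires a sharper tail such as $\sP(\sqrt n\|\mathcal R_n\|>C n^{-1/2})$ small enough, obtained by choosing $s$ large in Markov.

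\emph{Main obstacle.} The main obstacle is this last smoothing step for~(a). There I would first try the decomposition of Zhang, Tan and Chen (Appendix~B.1), whose argument carries over unchanged once Step 1 reduces everything to the chart. The genuinely new ingredient relative to that paper is Step 1, the quantitative localization replacing global concavity.
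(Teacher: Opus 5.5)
\textbf{Overall assessment.} Your plan matches the paper's proof for the localization and for parts (b) and (c). The paper also localizes through Assumption~\ref{assump:smoothness}(b) and Markov's inequality. It intersects that event with Hessian and $\bar W_n$ events of probability $1-O(n^{-q/2})$ to get uniform strong concavity on the chart. It then uses $\|\widehat{\mG}-\mG^*\|\le C\|\nabla L_n(\mG^*)\|$ plus finite diameter for (c), and a globally defined remainder with a mean-zero score for (b). For (a), the paper likewise only invokes the Zhang--Tan--Chen convex-set Berry--Esseen argument for the local root, then pays $\sP(\gE_n^c)=O(n^{-q/2})$ to pass to the global maximizer. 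So your fallback is exactly the paper's route.

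\textbf{The smoothing argument for (a) fails.} The self-contained argument you sketch for (a) should be dropped rather than repaired.
\begin{itemize}
\item From $\sE\|\mathcal R_n\|^s=O(n^{-s})$, Markov gives $\sP(\|\mathcal R_n\|>n^{-1}\log n)\le C_s(\log n)^{-s}$, not $Cn^{-1/2}$. So even the $n^{-1/2}\log n$ claim does not follow.
\item Taking $s$ larger does not help. At the threshold $Cn^{-1}$, Markov only gives the constant $C_sC^{-s}$, which does not decay in $n$.
\item Because $n\|\mathcal R_n\|$ is a nondegenerate $O_P(1)$ variable, any argument of the form ``Berry--Esseen for the linear term plus an $\varepsilon$-enlargement of $A$'' loses a growing factor. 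With moments alone the loss is a power $n^{1/(2s)}$. It is a logarithm if you first prove Gaussian-type tails for $\sqrt n\|\nabla L_n(\mG^*)\|$ and the Hessian deviation via their own Berry--Esseen bounds.
\end{itemize}
The sharp $n^{-1/2}$ rate needs a Berry--Esseen bound for approximately linear statistics that exploits the explicit quadratic structure of $\mathcal R_n$, for example through a leave-one-out or concentration-inequality argument. That is what the cited argument has to supply.

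\textbf{Minor point.} The bound $\sE\|\mathcal R_n\|^s=O(n^{-s})$ holds only for $s$ up to about $q/2$, not for all $s\le q$. The product terms need $2s$ moments, and the bad event contributes only $O(n^{-q/2})$. This is harmless, since (b) uses only $s=1$.
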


\begin{proof}
We first control the event on which the ordinary global maximizer lies outside
the local likelihood chart.  Let
\[
d_{\mathrm{match}}(G,G^*)
=\left[
\min_{\sigma\in S_K}
\sum_{k=1}^K
\left\{
|w_{\sigma(k)}-w_k^*|^2
+\|\theta_{\sigma(k)}-\theta_k^*\|^2
\right\}
\right]^{1/2}.
\]
Choose a fixed $\delta_1>0$ small enough that
$d_{\mathrm{match}}(G,G^*)<\delta_1$, followed by oracle ordering, places the free vector
$\mG$ in the interior of the $\delta_0$-neighborhood from
Assumption~\ref{assump:smoothness}.  Continuity of $L$, compactness, and
identifiability give the strictly positive population gap
\[
\Delta_1
=L(G^*)-
\sup_{G:\,d_{\mathrm{match}}(G,G^*)\ge\delta_1}L(G)>0.
\]
If a global maximizer $\widehat G$ satisfies
$d_{\mathrm{match}}(\widehat G,G^*)\ge\delta_1$, its empirical likelihood is no smaller than
that of $G^*$.  Hence
\[
\sup_{G\in\sG_K}
\left|\{L_n(G)-L_n(G^*)\}-\{L(G)-L(G^*)\}\right|
\ge\Delta_1.
\]
Markov's inequality and Assumption~\ref{assump:smoothness}(b) therefore give
\begin{equation}
\label{eq:global_mle_polynomial_localization}
\sP\{d_{\mathrm{match}}(\widehat G,G^*)\ge\delta_1\}=O(n^{-q/2}).
\end{equation}

We next work in the oracle-ordered local chart.  Write
$\lambda_0=\lambda_{\min}\{\mI(\mG^*)\}>0$.  The $q$th-moment bounds on the
Hessian and on $W$ in Assumption~\ref{assump:smoothness}(a), together with the
fixed dimension and the standard moment
inequality for sums of independent centered vectors, imply that the event
\[
\left\|
\frac1n\sum_{j=1}^n
\{\nabla^2\ell(\mG^*;X_j)-\nabla^2L(\mG^*)\}
\right\|\le\lambda_0/4,
\qquad
\frac1n\sum_{j=1}^nW(X_j)\le 1+2\sE W(X)
\]
has complement of probability $O(n^{-q/2})$.  Reducing $\delta_1$ if
necessary, the Hessian-Lipschitz condition then makes the empirical
log-likelihood uniformly strongly concave throughout this local chart.
Intersect this event with the localization event in
\eqref{eq:global_mle_polynomial_localization} and call the result $\gE_n$.
Thus $\sP(\gE_n^c)=O(n^{-q/2})$.

On $\gE_n$, the oracle-aligned global maximizer is the unique interior local
maximizer.  Its score equation and the mean-value expansion around $\mG^*$
give
\[
\|\widehat{\mG}-\mG^*\|
\le C\left\|\frac1n\sum_{j=1}^n s_{\mG^*}(X_j)\right\|.
\]
For every fixed $1\le s\le q$, the same moment inequality for independent
centered scores gives
\[
\sE\left\|
\frac1n\sum_{j=1}^n s_{\mG^*}(X_j)
\right\|^s=O(n^{-s/2}).
\]
The constrained parameter class has finite diameter.  Consequently,
\[
\begin{aligned}
\sE\|\widehat{\mG}-\mG^*\|^s
&\le
\sE\{\|\widehat{\mG}-\mG^*\|^s\mathbbm1(\gE_n)\}
+C\sP(\gE_n^c)\\
&=O(n^{-s/2})+O(n^{-q/2})
=O(n^{-s/2}),
\end{aligned}
\]
which proves part~(c), including the claimed extension beyond order eight.

For completeness, define the remainder globally by
\[
R_n=\widehat{\mG}-\mG^*
-\mI(\mG^*)^{-1}\frac1n\sum_{j=1}^n s_{\mG^*}(X_j).
\]
On $\gE_n$, the usual second-order score expansion gives
\[
\|R_n\|
\le C\left\|
\frac1n\sum_{j=1}^n
\{\nabla^2\ell(\mG^*;X_j)-\nabla^2L(\mG^*)\}
\right\|
\left\|\frac1n\sum_{j=1}^n s_{\mG^*}(X_j)\right\|
+C\|\widehat{\mG}-\mG^*\|^2.
\]
The moment bounds above and
Cauchy--Schwarz therefore give
$\sE\{\|R_n\|\mathbbm1(\gE_n)\}=O(n^{-1})$.
On $\gE_n^c$, finite diameter of the parameter class, H\"older's inequality,
$\sP(\gE_n^c)=O(n^{-q/2})$, and the $q$th-moment bound for the score average
give
\[
\sE\{\|R_n\|\mathbbm1(\gE_n^c)\}
\le C\sP(\gE_n^c)
+C\left(\sE\left\|\frac1n\sum_{j=1}^n s_{\mG^*}(X_j)\right\|^q\right)^{1/q}
\sP(\gE_n^c)^{1-1/q}
=O(n^{-q/2}).
\]
Thus $\sE\|R_n\|=O(n^{-1})$.  Since the score has mean zero, this proves the
bias bound in part~(b).  Finally, the convex-set Berry--Esseen theorem for the estimator in
the local likelihood chart, applied as in Zhang, Tan and
Chen~\cite{zhang2026byzantine}, gives part~(a).  Replacing that local root by
the oracle-aligned global
maximizer changes its probability bound by at most
$\sP(\gE_n^c)=O(n^{-q/2})=O(n^{-1/2})$.  This completes the proof.
\end{proof}

\noindent 
In Lemma~\ref{lemma:local_MLE_properties}, the components of $\widehat G$ are ordered by \eqref{eq:parameter_vector} before $\widehat{\mG}$ is formed.
This oracle alignment is used only in the theoretical analysis, 
so that the local estimate of component $k$ 
can be compared with the true component $\psi_k^*$.
The lemma neither covers a penalized MLE nor an arbitrary stationary point returned by EM; those estimators require separate penalty and optimization conditions.

The following corollary is an immediate component-wise consequence of
Lemma~\ref{lemma:local_MLE_properties}. It will be used later to derive
concentration bounds for the component discrepancy
$c_{\lambda}(\widehat{\psi}_k,\psi_k^*)$.

\begin{corollary}[Properties of local component estimates]
\label{cor:Berry-Esseen_for_component_estimates}
Let $\widehat{\psi}_k=(\widehat{w}_k,\widehat{\theta}_k^{\top})^{\top}$ 
denote the oracle-aligned local estimate of the $k$th component.
Let $\mathscr A'$ denote the collection of all convex Borel subsets of
$\mathbb R^{p+1}$.
Let $h_k(\mG)=\psi_k$ be the affine map from the free mixture coordinates to the $k$th component parameter, and define
\[
P_k=\frac{\partial h_k(\mG)}{\partial\mG^\top},
\qquad
\Sigma_k=P_k\mI(\mG^*)^{-1}P_k^\top,
\qquad
\mI(\psi_k^*)=\Sigma_k^{-1}.
\]
For $k<K$, the weight row of $P_k$ selects $w_k$; for $k=K$, it is $(-1,\ldots,-1,0,\ldots,0)$ because $w_K=1-\sum_{\ell=1}^{K-1}w_\ell$.
In either case $P_k$ has full row rank. Since $\mI(\mG^*)$ is positive definite by Assumption~\ref{assump:local_strong_concavity}, $\Sigma_k$ is positive definite.
Then, under the assumptions of Lemma~\ref{lemma:local_MLE_properties},
\begin{enumerate}[label=(\alph*), leftmargin=*]
\item
\[
\sup_{A \in \mathscr{A}'} \left| \mathbb{P}\left( \sqrt{n} \, \mI(\psi_k^*)^{1/2} (\widehat{\psi}_k - \psi_k^*) \in A \right)
  - \mathbb{P}(Z \in A) \right| = O(n^{-1/2}),
\]
where $Z$ is a standard Gaussian random vector having the same dimension as
$\psi_k^*$;
\item $\|\sE(\widehat\psi_k-\psi_k^*)\|=O(n^{-1})$;
\item $\sE\|\widehat\psi_k-\psi_k^*\|^s=O(n^{-s/2})$ for every
$1\le s\le q$.
\end{enumerate}
\end{corollary}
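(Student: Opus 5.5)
This corollary follows from Lemma~\ref{lemma:local_MLE_properties} because $\widehat\psi_k=h_k(\widehat{\mG})$ and $h_k$ is affine. Write $\psi_k=h_k(\mG)=P_k\mG+b_k$ for a constant vector $b_k$; this gives $\widehat\psi_k-\psi_k^*=P_k(\widehat{\mG}-\mG^*)$. Parts (b) and (c) are immediate. Linearity of expectation gives $\|\sE(\widehat\psi_k-\psi_k^*)\|\le\matrixnorm{P_k}\,\|\sE(\widehat{\mG}-\mG^*)\|=O(n^{-1})$. The inequality $\|\widehat\psi_k-\psi_k^*\|^s\le\matrixnorm{P_k}^s\|\widehat{\mG}-\mG^*\|^s$, together with the moment bound of the lemma, gives part~(c) for every $1\le s\le q$. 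Since $P_k$ is a fixed matrix with entries in $\{0,\pm1\}$, the constants do not depend on $n$.

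For part~(a), first check that $\Sigma_k$ is positive definite. $P_k$ has full row rank, because its $\theta_k$ rows select distinct free coordinates and its weight row is nonzero on the weight coordinates, which are disjoint from those. Since $\mI(\mG^*)^{-1}\succ0$, this gives $\Sigma_k\succ0$, so $\mI(\psi_k^*)^{1/2}=\Sigma_k^{-1/2}$ is well defined.

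Next set $U=\sqrt n\,\mI(\mG^*)^{1/2}(\widehat{\mG}-\mG^*)$ and $M_k=\Sigma_k^{-1/2}P_k\mI(\mG^*)^{-1/2}$, a linear map from $\sR^{d_G}$ to $\sR^{p+1}$. Then
\[
\sqrt n\,\mI(\psi_k^*)^{1/2}(\widehat\psi_k-\psi_k^*)=M_kU .
\]
For a convex Borel set $A\subseteq\sR^{p+1}$, the preimage $M_k^{-1}(A)=\{u:M_ku\in A\}$ is convex and Borel, so it lies in $\mathscr A_{d_G}$. Lemma~\ref{lemma:local_MLE_properties}(a) therefore gives
\[
\bigl|\sP(M_kU\in A)-\sP(M_kZ_{d_G}\in A)\bigr|\le Cn^{-1/2}
\]
uniformly over $A$.

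Finally, identify the law of $M_kZ_{d_G}$. It is Gaussian with covariance
\[
M_kM_k^\top=\Sigma_k^{-1/2}P_k\mI(\mG^*)^{-1}P_k^\top\Sigma_k^{-1/2}=\Sigma_k^{-1/2}\Sigma_k\Sigma_k^{-1/2}=I_{p+1},
\]
so $M_kZ_{d_G}$ has the law of a standard Gaussian $Z$ in $\sR^{p+1}$. Taking the supremum over $A\in\mathscr A'$ yields part~(a).

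The only step needing care is this transfer of the Berry--Esseen bound through a non-square linear map. It relies on convexity being preserved under linear preimages, which is exactly why the lemma is stated over convex sets. The full-row-rank check is needed so that the standardization by $\Sigma_k^{-1/2}$ is legitimate.
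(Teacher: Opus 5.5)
Your proposal is correct and follows the paper's proof essentially line by line. Parts (b)--(c) come from the affine identity $\widehat\psi_k-\psi_k^*=P_k(\widehat{\mG}-\mG^*)$ together with the bounded norm of $P_k$. Part (a) transfers the convex-set Berry--Esseen bound of Lemma~\ref{lemma:local_MLE_properties}(a) through the linear map $\Sigma_k^{-1/2}P_k\mI(\mG^*)^{-1/2}$, whose product with its own transpose is the identity, using the fact that linear preimages of convex Borel sets are convex Borel sets.
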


\begin{proof}
Set
\[
V_n=\sqrt n\,\mI(\mG^*)^{1/2}(\widehat{\mG}-\mG^*),
\qquad
L_k=\Sigma_k^{-1/2}P_k\mI(\mG^*)^{-1/2}.
\]
Because $\Sigma_k=P_k\mI(\mG^*)^{-1}P_k^\top$, we have $L_kL_k^\top=\mI_{p+1}$, and the affine identity $\widehat\psi_k-\psi_k^*=P_k(\widehat{\mG}-\mG^*)$ gives
\[
\sqrt n\,\Sigma_k^{-1/2}(\widehat\psi_k-\psi_k^*)=L_kV_n.
\]
For every convex $A\subseteq\sR^{p+1}$, the inverse image $\{u:L_ku\in A\}$ is convex in $\sR^{d_G}$. Applying Lemma~\ref{lemma:local_MLE_properties}(a) to this inverse image and observing that $L_kZ_{d_G}\sim N(0,\mI_{p+1})$ proves part (a).
Parts (b) and (c) follow from the same affine identity,
Lemma~\ref{lemma:local_MLE_properties}(b)--(c), and the fact that the fixed
matrix $P_k$ has finite operator norm.
\end{proof}

%%%%%%%%%%%%%%%%%%%%%%%%%%%%%%%%%%%%%%%%%%%%%%%%%%%%%%%%%%%%%%%%%%%%%%%%%%%%%%%%%%%%%%%%%%%
\subsection{Theoretical properties of \ours{}}
We now study the theoretical properties of the proposed \ours{} estimator.
The analysis proceeds in three parts.
First, we study the ideal case in which the anchor machine is failure-free and show that 
the anchor induces correct component-wise alignment and 
a half-sample radius of the right statistical order.
Second, we show that Algorithm~\ref{alg:initial_estimate} selects 
such a failure-free anchor with probability tending to one.
Third, we combine the ideal-anchor analysis and the selection guarantee 
to establish the convergence rate of the final \ours{} estimator.

\subsubsection{Ideal failure-free anchor: alignment and radius properties}
\label{app:ideal_anchor_properties}
This part analyzes what happens once the initial anchor machine is failure-free.
The main results in this section are Lemma~\ref{lemma:accurate_alignment}, 
which shows that a fixed failure-free anchor resolves label switching, 
Corollary~\ref{cor:uniform_accurate_alignment}, 
which extends this alignment guarantee uniformly over all failure-free candidate anchors, 
and Lemma~\ref{lemma:radius_concentration}, 
which shows that the radius used by the filter is of order $n^{-1}$.
The remaining results in this section are preparatory.
Specifically, Lemma~\ref{lemma:reverse_Bregman_divergence} and Corollary~\ref{cor:cost_function} 
translate the component cost into Euclidean distance, while Lemma~\ref{lemma:concentration_of_cost_function} 
characterizes the stochastic scale of failure-free component costs.
These facts are later transferred to the data-driven anchor selected by Algorithm~\ref{alg:initial_estimate}.

We first relate the reverse Bregman divergence to Euclidean distance in a local neighborhood.
This conversion is needed because the filtering rule is stated in terms of the cost $c_{\lambda}$, 
whereas the stochastic properties of local MLEs are naturally stated in Euclidean norm.
\begin{lemma}
\label{lemma:reverse_Bregman_divergence}
Let $\mathcal U\subseteq\sR^p$ be open and let $A:\mathcal U\to\sR$ be twice continuously differentiable and strictly convex. Consider the reverse Bregman divergence
\[
D(\theta', \theta) = A(\theta) - A(\theta') - (\theta - \theta')^\top \nabla A(\theta')
\]
for $\theta,\theta'\in\mathcal U$.
Assume that, for some $\theta^*\in\mathcal U$ and $\epsilon>0$, the ball $B_\epsilon(\theta^*;d_E)$ is contained in $\mathcal U$ and, for constants $\eta_+ \geq \eta_- > 0$, the Hessian satisfies
\[
\eta_- \mI \preceq \nabla^2 A(\theta) \preceq \eta_+ \mI,
\qquad \theta\in B_\epsilon(\theta^*;d_E).
\]
Then, for all $\theta,\theta'\in B_\epsilon(\theta^*;d_E)$,
\begin{equation}
\label{eq:bound_of_Bregman}
     \frac{\eta_-}{2} \|\theta' - \theta\|^2\leq D(\theta', \theta) 
     \leq \frac{\eta_+}{2} \|\theta' - \theta\|^2.
\end{equation}
\end{lemma}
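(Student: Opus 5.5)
The plan is a second-order Taylor expansion of $A$ about $\theta'$ with integral remainder, followed by the Hessian sandwich.

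First, the ball $B_\epsilon(\theta^*;d_E)$ is convex. So for $\theta,\theta'$ in it, the segment $\theta_t=\theta'+t(\theta-\theta')$, $t\in[0,1]$, stays inside the ball, and hence inside $\mathcal U$. Because $A$ is twice continuously differentiable on $\mathcal U$, the function $g(t)=A(\theta_t)$ is $C^2$ on $[0,1]$, with
\[
g'(t)=(\theta-\theta')^\top\nabla A(\theta_t),\qquad g''(t)=(\theta-\theta')^\top\nabla^2A(\theta_t)(\theta-\theta').
\]
Taylor's formula with integral remainder gives $g(1)=g(0)+g'(0)+\int_0^1(1-t)g''(t)\,dt$. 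Rewriting this in terms of $A$,
\[
D(\theta',\theta)=A(\theta)-A(\theta')-(\theta-\theta')^\top\nabla A(\theta')
=\int_0^1(1-t)\,(\theta-\theta')^\top\nabla^2A(\theta_t)(\theta-\theta')\,dt .
\]

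Second, the Hessian bounds $\eta_-\mI\preceq\nabla^2A(\theta_t)\preceq\eta_+\mI$ hold at every point of the segment. The integrand therefore lies between $(1-t)\eta_-\|\theta-\theta'\|^2$ and $(1-t)\eta_+\|\theta-\theta'\|^2$. Since $\int_0^1(1-t)\,dt=1/2$, integrating yields exactly \eqref{eq:bound_of_Bregman}.

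There is no real obstacle here. The only point needing care is that the segment between $\theta$ and $\theta'$ stays in the region where the Hessian bounds are assumed. Convexity of the Euclidean ball handles this, and it is why the lemma is stated for both points inside $B_\epsilon(\theta^*;d_E)$ rather than for arbitrary points of $\mathcal U$.
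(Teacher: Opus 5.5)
Your proof is correct and follows the same route as the paper's. Both expand $A$ about $\theta'$ with the integral remainder, use convexity of the ball to keep the segment where the Hessian bounds hold, and integrate the weight $(1-t)$ to obtain the factor $1/2$ in both bounds.
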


\begin{proof}
Consider the Taylor expansion of $A(\theta)$ around $\theta'$:
\begin{equation}
\label{eq:taylor_expansion_of_Bregman_divergence}
    A(\theta) = A(\theta') + \langle \nabla A(\theta'), \theta - \theta' \rangle + \int_{0}^{1} (1-t)(\theta - \theta')^{\top} \nabla^2 A(t \theta + (1 - t)\theta') (\theta - \theta')  dt.
\end{equation}
Plugging this into the definition of $D(\theta', \theta)$, we may write
\begin{equation}
\label{eq:simplified_Bregman}
D(\theta', \theta) = (\theta - \theta')^{\top} M(\theta,\theta')(\theta - \theta'),
\end{equation}
where
\[
M(\theta,\theta') = \int_{0}^{1} (1-t)\nabla^2 A(t \theta + (1 - t)\theta')  dt.
\]

When both $\theta$ and $\theta'$ satisfy $\|\theta - \theta^*\|, \|\theta' - \theta^*\| \leq \epsilon$, the convexity of the ball $B_{\epsilon}(\theta^*;d_E)$ implies $t \theta + (1 - t)\theta' \in B_{\epsilon}(\theta^*;d_E)$ for any $t \in [0, 1]$.
Therefore, the strong convexity of $A(\cdot)$ on $B_{\epsilon}(\theta^*;d_E)$ ensures that $(\eta_-/2)\mI\preceq M(\theta,\theta')\preceq(\eta_+/2)\mI$.
Plugging this into \eqref{eq:simplified_Bregman} gives \eqref{eq:bound_of_Bregman}.
\end{proof}

The next corollary extends the preceding local equivalence from the 
subpopulation parameter $\theta$ 
to the full component parameter $\psi=(w,\theta^\top)^\top$.
This is the main tool for comparing cost balls with Euclidean neighborhoods in the alignment and filtering arguments.

\begin{corollary}
\label{cor:cost_function}
Consider the component-level cost function 
$c_{\lambda}(\psi',\psi)=D(\theta',\theta)+\lambda (w'-w)^2$, where $\lambda>0$ and $D$ satisfies the conditions of Lemma~\ref{lemma:reverse_Bregman_divergence}. For all $w,w'\in\sR$ and all $\theta,\theta'\in B_\epsilon(\theta^*;d_E)$,
\[
\min(\eta_-/2,\lambda) \|\psi' - \psi\|^2\triangleq \eta_-'\|\psi' - \psi\|^2\leq c_{\lambda}(\psi', \psi) \leq \max(\eta_+/2,\lambda) \|\psi' - \psi\|^2 \triangleq \eta'_+\|\psi' - \psi\|^2.
\]
\end{corollary}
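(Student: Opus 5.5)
The statement is Corollary~\ref{cor:cost_function}, and it follows by adding the two pieces of $c_\lambda$ and bounding each separately.

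\emph{The $\theta$-part.} Fix $\theta,\theta'\in B_\epsilon(\theta^*;d_E)$. Lemma~\ref{lemma:reverse_Bregman_divergence} gives
\[
\frac{\eta_-}{2}\|\theta'-\theta\|^2\le D(\theta',\theta)\le\frac{\eta_+}{2}\|\theta'-\theta\|^2 .
\]

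\emph{The weight part.} The term $\lambda(w'-w)^2$ equals $\lambda|w'-w|^2$ exactly. It needs no localization, since the weight coordinate is unrestricted in the statement.

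\emph{Combining.} Write $\psi=(w,\theta^\top)^\top$. Then
\[
\|\psi'-\psi\|^2=|w'-w|^2+\|\theta'-\theta\|^2 .
\]
For the lower bound, replace each coefficient by $\eta_-'=\min(\eta_-/2,\lambda)$:
\[
c_\lambda(\psi',\psi)\ \ge\ \frac{\eta_-}{2}\|\theta'-\theta\|^2+\lambda|w'-w|^2\ \ge\ \eta_-'\bigl(\|\theta'-\theta\|^2+|w'-w|^2\bigr)=\eta_-'\|\psi'-\psi\|^2 .
\]
For the upper bound, replace each coefficient by $\eta_+'=\max(\eta_+/2,\lambda)$:
\[
c_\lambda(\psi',\psi)\ \le\ \frac{\eta_+}{2}\|\theta'-\theta\|^2+\lambda|w'-w|^2\ \le\ \eta_+'\|\psi'-\psi\|^2 .
\]

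There is no real obstacle. The only point to check is that the hypotheses of Lemma~\ref{lemma:reverse_Bregman_divergence} are inherited verbatim: $A$ is $C^2$ and strictly convex on an open set containing the ball, and its Hessian is sandwiched between $\eta_-\mI$ and $\eta_+\mI$ there. Once that holds, the constants $\eta_\pm'$ are independent of $w,w'$, so the inequality holds for all real weights, as stated.
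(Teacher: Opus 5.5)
Your proposal is correct and matches the paper's proof: apply Lemma~\ref{lemma:reverse_Bregman_divergence} to the $\theta$-part, add the exact weight term $\lambda(w'-w)^2$, and replace each coefficient by $\min(\eta_-/2,\lambda)$ or $\max(\eta_+/2,\lambda)$. Your remark that the constants do not depend on $w,w'$, so no localization in the weight coordinate is needed, is also consistent with how the paper states the result.
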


\begin{proof}
It follows by applying Lemma~\ref{lemma:reverse_Bregman_divergence} 
to the subpopulation parameters 
and then adding the nonnegative weight term.
Specifically, when both $\theta$ and $\theta'$ lie in $B_{\epsilon}(\theta^*;d_E)$,
\[
c_{\lambda}(\psi',\psi)
\leq \eta_+/2\|\theta'-\theta\|^2+\lambda(w'-w)^2
\leq \max(\eta_+/2,\lambda)\|\psi'-\psi\|^2,
\]
and similarly
\[
c_{\lambda}(\psi',\psi)
\geq \eta_-/2\|\theta'-\theta\|^2+\lambda(w'-w)^2
\geq \min(\eta_-/2,\lambda)\|\psi'-\psi\|^2.
\]
\end{proof}

The following lemma formalizes the role of a failure-free anchor in resolving label switching.
It shows that, once one clean machine is available, 
nearest-anchor assignment based on the cost $c_{\lambda}$ 
correctly aligns every failure-free component estimate with high probability.

\begin{lemma}[Accurate alignment]
\label{lemma:accurate_alignment} 
Suppose $i^*$ is a failure-free anchor machine and
$mn^{-q/2}\to0$.
Here, $\widetilde{\bm\psi}_{i^*}$ denotes the transmitted local estimate on the anchor machine, 
and $\widetilde{\psi}_{ij}$ denotes the transmitted occurrence whose latent
population origin is $j$.  The server observes these occurrences only up to
permutation.  The set $\sO_k=\{(i,k):i\in\gI_k\}$ in the conclusion contains
the authentic occurrences corresponding to the $k$th true component.
Let $\sC_k(\widetilde{\bm\psi}_{i^*})$ be the cluster produced by the
cost-based assignment rule in~\eqref{eq:cost_based_alignment}, with the
permutation-invariant tie convention stated there.
Then, under Assumptions~\ref{assump:compact_parameter_space}--\ref{assump:smoothness} and~\ref{assump:reverse_bregman_divergence},
\begin{equation}
\label{eq:accurate_alignment_event}
\sP\left(
    \bigcap_{k=1}^{K}
    \left\{\sO_k \subseteq \sC_k(\widetilde{\bm\psi}_{i^*})\right\}
\right)\rightarrow 1.
\end{equation}
\end{lemma}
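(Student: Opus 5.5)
The plan is to reduce the alignment event to a uniform closeness event for the authentic component estimates and then use separation of the true components. Let $\Delta=\min_{k\neq\ell}\|\psi_k^*-\psi_\ell^*\|>0$, which is positive because the $\theta_k^*$ are distinct (Assumption~\ref{assump:compact_parameter_space}). Fix a small constant $\delta\in(0,\epsilon)$, with $\epsilon$ from Assumption~\ref{assump:reverse_bregman_divergence}, and define
\[
\gE_\delta=\bigcap_{k=1}^K\bigcap_{i\in\gI_k}\left\{\|\widehat\psi_{ik}-\psi_k^*\|\le\delta\right\}.
\]
Because $i^*\in\gB^c$, its transmitted components satisfy $\widetilde\psi_{i^*k}=\widehat\psi_{i^*k}$ for every $k$, so on $\gE_\delta$ the anchor components also lie within $\delta$ of the truths. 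In particular, the anchor has distinct component vectors, and the tie convention in~\eqref{eq:cost_based_alignment} never binds for authentic occurrences.

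The first step is to show $\sP(\gE_\delta)\to1$. By Corollary~\ref{cor:Berry-Esseen_for_component_estimates}(c) with $s=q$ and Markov's inequality, for each $(i,k)$,
\[
\sP\{\|\widehat\psi_{ik}-\psi_k^*\|>\delta\}\le C\delta^{-q}n^{-q/2}.
\]
A union bound over the at most $Km$ authentic occurrences gives $\sP(\gE_\delta^c)\le CK\delta^{-q}mn^{-q/2}\to0$, using the hypothesis $mn^{-q/2}\to0$. This step does not require independence across machines, only the identical marginal bounds. One point to note is that the relabeling of each clean local MLE by $\sigma_G$ in~\eqref{eq:parameter_vector} is exactly the oracle ordering for which the corollary applies.

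The second step is deterministic on $\gE_\delta$. Take an authentic occurrence $(i,k)\in\sO_k$. Since $\widehat\theta_{ik}$ and $\widehat\theta_{i^*k}$ both lie in $B_\epsilon(\theta_k^*;d_E)$, Corollary~\ref{cor:cost_function} gives
\[
c_\lambda(\widetilde\psi_{ik},\widetilde\psi_{i^*k})\le\eta'_+(2\delta)^2.
\]
For $\ell\neq k$, I lower-bound $c_\lambda(\widetilde\psi_{ik},\widetilde\psi_{i^*\ell})$, which is the main obstacle because the two arguments lie near different truths and the local equivalence of Corollary~\ref{cor:cost_function} does not directly apply. I would first try the following route. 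By strict convexity and continuity of $A$ on a neighborhood of the compact set, the function $(\theta',\theta)\mapsto D(\theta',\theta)+\lambda(w'-w)^2$ is continuous and vanishes only on the diagonal. Its minimum over the compact set $\{(\psi',\psi):\|\psi'-\psi_k^*\|\le\delta,\ \|\psi-\psi_\ell^*\|\le\delta\}$ is therefore a positive constant $c_{k\ell}(\delta)$, provided $\delta<\Delta/2$ so that the two balls are disjoint. If this route fails, the fallback is that the weight term alone gives $\lambda(\Delta_w-2\delta)^2$ when the weights are separated, and strict convexity of $A$ handles the $\theta$ part otherwise. Set $c_0=\min_{k\neq\ell}c_{k\ell}(\delta/2)$ and choose $\delta$ small enough that $4\eta'_+\delta^2<c_0$. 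The monotonicity in $\delta$ is handled by fixing the outer radius first and then shrinking the closeness radius.

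With these choices, on $\gE_\delta$ every authentic $\widetilde\psi_{ik}$ is strictly closer to $\widetilde\psi_{i^*k}$ than to any other anchor component. The argmin defining $\mathfrak a_{ik}$ is then the singleton $\{k\}$, so $(i,k)\in\sC_k(\widetilde{\bm\psi}_{i^*})$ for all $k$ and $i\in\gI_k$. Hence the event in~\eqref{eq:accurate_alignment_event} contains $\gE_\delta$ and has probability tending to one.
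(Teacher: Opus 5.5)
Your proposal is correct and is essentially the paper's proof. Both establish uniform localization through Corollary~\ref{cor:Berry-Esseen_for_component_estimates}(c), Markov's inequality and a union bound under $mn^{-q/2}\to0$, and then deterministically compare the within-cluster bound $4\eta'_+\delta^2$ against a fixed positive separation between neighborhoods of distinct truths. The paper obtains that separation as $\Delta_0/2$ from continuity of $c_\lambda$ at the pairs $(\psi_k^*,\psi_\ell^*)$, whereas you obtain it by minimizing over compact disjoint balls. One indexing slip: the separation constant must be computed over balls at least as large as the localization radius, i.e.\ $c_{k\ell}(\delta)$ or a fixed outer radius $\delta_1\ge\delta$, not $c_{k\ell}(\delta/2)$; your ``fix the outer radius first'' remark already indicates this fix.
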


\begin{remark}
Lemma~\ref{lemma:accurate_alignment} shows that, as long as the anchor machine is failure-free, the cost-based assignment rule correctly assigns every failure-free component estimate to the corresponding cluster.
Consequently, after assignment, $\sO_k\subseteq \sC_k$, and each cluster $\sC_k$ contains all $(1-\alpha_k)m$ failure-free estimates of $\psi^*_k$ for $k\in[K]$.
This is precisely where the distinction between observed and latent labels
matters: the server applies the rule to unordered transmitted occurrences, but
the event $\sO_k\subseteq\sC_k$ states that all authentic occurrences with
latent population identity $k$ are recovered into the $k$th anchor-induced
cluster.
The growth condition $mn^{-q/2}\to0$ is used only to obtain uniform
localization over all local machines at a fixed separation scale.
It is implied by the stronger conditions imposed later for the final convergence-rate result.
\end{remark}

\begin{proof}
We first establish a high-probability event on which all failure-free local
component estimates are uniformly close to their corresponding true
components. On this event, the remaining argument reduces to a deterministic
comparison of the assignment costs.

\noindent\textit{\underline{Step 1: Uniform localization of failure-free local estimates.}}
Assumption~\ref{assump:smoothness}, Corollary~\ref{cor:Berry-Esseen_for_component_estimates}(c), and Markov's inequality give, for
every fixed $\varepsilon_0>0$,
\[
\sup_{i\le m}\sP\{\|\widehat{\mG}_i-\mG^*\|>\varepsilon_0\}
\le C_q\varepsilon_0^{-q}n^{-q/2}.
\]
Hence, by a union bound,
\begin{equation}
\label{eq:concentration_of_max_distance}
\sP\left(\max_{1\le i\le m}
\|\widehat{\mG}_i-\mG^*\|>\varepsilon_0\right)
\le C_q\varepsilon_0^{-q}mn^{-q/2}=o(1).
\end{equation}
Since the same statement holds for every fixed $\varepsilon_0>0$, we conclude
that
\begin{equation}
\label{eq:order_of_maxinum_distance}
\max_{1\leq i\leq m}\|\widehat{\mG}_i- \mG^*\|=o_P(1).
\end{equation}

\noindent
\textit{\underline{Step 2: Choose separated component neighborhoods.}}
By Assumption~\ref{assump:compact_parameter_space}, the true component parameters $\{\psi_k^*\}_{k=1}^K$ are distinct. Because the Bregman divergence $c_{\lambda}(\cdot, \cdot)$ is strictly positive for distinct arguments, the minimal wrong-component alignment cost at the truth is bounded away from zero:
\[
\Delta_0 \coloneqq \min_{k \neq l} c_{\lambda}(\psi_k^*, \psi_l^*) > 0.
\]
We next enlarge each true component $\psi_k^*$ to a small neighborhood 
and require that this positive gap is preserved inside these neighborhoods.
From the continuity of $c_{\lambda}$ at $(\psi_k^*,\psi_l^*)$ and the finiteness of $K$, there exists a sufficiently small, fixed radius $\epsilon'>0$, satisfying $\epsilon' < \epsilon/2$ and $4\eta'_+ (\epsilon')^2 < \Delta_0/2$ such that, for the local neighborhoods
\[
\gN_k(\epsilon') \coloneqq \left\{\psi = (w, \theta^\top)^\top : |w - w_k^*| \leq \epsilon', \, \|\theta - \theta_k^*\| \leq \epsilon'\right\}, \qquad k \in [K],
\]
the following uniform lower bound holds across distinct clusters:
\begin{equation}
\label{eq:separation_bound}
\min_{k \neq l} \inf_{\psi \in \gN_k(\epsilon'), \, \psi' \in \gN_l(\epsilon')} c_{\lambda}(\psi, \psi') \geq \frac{\Delta_0}{2}.
\end{equation}
This separation is the quantity needed for alignment: 
any other anchor component belongs to a different component neighborhood, 
so its cost to a failure-free $k$th component estimate is uniformly bounded below by $\Delta_0/2$.
By \eqref{eq:order_of_maxinum_distance}, the event
\begin{equation}
\label{eq:universal_bound}
\gE_{\epsilon'}=
\left\{\|\widehat{\mG}_i-\mG^*\|\leq \epsilon', \forall~i=1,2,\cdots,m\right\}
\end{equation}
has probability tending to one.

\noindent\textit{\underline{Step 3: 
Compare the cost to the correct anchor component with the costs to the other anchor components.}}
Work on the event $\gE_{\epsilon'}$.
Then $\|\widehat{\psi}_{ik}-\psi^*_{k}\|\leq\epsilon'$ for all $i\in[m]$ 
and $k\in[K]$.
Because machine $i^*$ is failure-free, 
$\widetilde{\psi}_{i^*k}=\widehat{\psi}_{i^*k}$ and 
hence $\widetilde{\psi}_{i^*k}\in\gN_k(\epsilon')$ for every $k\in[K]$.
For every $(i,k)\in\sO_k$, the component is authentic, so
$\widetilde{\psi}_{ik}=\widehat{\psi}_{ik}\in\gN_k(\epsilon')$.
For such a failure-free estimate of the $k$th component, the correct anchor component is $\widetilde{\psi}_{i^*k}$.
The triangle inequality gives
\begin{equation}
\label{eq:true_alignment}
    \|\widetilde{\psi}_{ik}-\widetilde{\psi}_{i^*k}\|\leq\|\widetilde{\psi}_{ik}-\psi^*_k\|+\|\psi^*_k-\widetilde{\psi}_{i^*k}\|\leq2\epsilon'.
\end{equation}
Combining \eqref{eq:true_alignment} with Corollary~\ref{cor:cost_function}, we get
\[
c_{\lambda}(\widetilde{\psi}_{ik},\widetilde{\psi}_{i^*k})\leq 4\eta'_+{\epsilon'}^2.
\]
For any other anchor component $\widetilde{\psi}_{i^*l}$ with $l\neq k$, the component itself is still failure-free because the anchor machine is failure-free, but it corresponds to the $l$th true component rather than the $k$th true component.
Thus, $\widetilde{\psi}_{ik}\in\gN_k(\epsilon')$ and $\widetilde{\psi}_{i^*l}\in\gN_l(\epsilon')$.
Therefore, by the separation chosen in Step 2,
\[
\min_{l\neq k}c_{\lambda}(\widetilde{\psi}_{ik},\widetilde{\psi}_{i^*l})\geq \Delta_0/2.
\]
Since $4\eta'_+{\epsilon'}^2<\Delta_0/2$, it follows that, for every $(i,k)\in\sO_k$,
\[
c_{\lambda}(\widetilde{\psi}_{ik},\widetilde{\psi}_{i^*k})< \min_{l\neq k}c_{\lambda}(\widetilde{\psi}_{ik},\widetilde{\psi}_{i^*l}).
\]
Thus, on $\gE_{\epsilon'}$, each failure-free $k$th component estimate is assigned to $\sC_k(\widetilde{\bm\psi}_{i^*})$.
Since $\sP(\gE_{\epsilon'})\rightarrow1$, \eqref{eq:accurate_alignment_event} follows.
\end{proof}

\begin{corollary}[Uniform accurate alignment over failure-free anchors]
\label{cor:uniform_accurate_alignment}
Suppose $mn^{-q/2}\to0$.
Under Assumptions~\ref{assump:compact_parameter_space}--\ref{assump:smoothness} and~\ref{assump:reverse_bregman_divergence},
\[
\sP\left(
    \bigcap_{i\in\gB^c}
    \bigcap_{k=1}^{K}
    \left\{\sO_k\subseteq\sC_k(\widetilde{\bm\psi}_i)\right\}
\right)\to1.
\]
\end{corollary}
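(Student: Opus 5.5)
The plan is to reuse the proof of Lemma~\ref{lemma:accurate_alignment}, taking care to define the good event \emph{once}, independently of which failure-free machine serves as anchor. No union bound over anchors is then needed.

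The key observation is this. In Steps~2--3 of that proof, all the probabilistic content sits in the event $\gE_{\epsilon'}$ of \eqref{eq:universal_bound}. That event requires $\|\widehat{\mG}_i-\mG^*\|\le\epsilon'$ simultaneously for all $i\in[m]$. It concerns the clean local MLEs only, and it does not refer to the anchor index $i^*$. Moreover, the set $\gB^c$ is deterministic, since the failure sets $\gB_k$ are fixed in Definition~\ref{def:component-wise_failure}. So the event whose probability we need is well defined, and it suffices to show that it contains $\gE_{\epsilon'}$.

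The argument proceeds in three steps.

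\textbf{Step 1: choose the constants.} Fix $\epsilon'$ exactly as in Step~2 of the proof of Lemma~\ref{lemma:accurate_alignment}. This gives $\epsilon'<\epsilon/2$, $4\eta'_+(\epsilon')^2<\Delta_0/2$, and the separation bound \eqref{eq:separation_bound}. By \eqref{eq:concentration_of_max_distance} and the hypothesis $mn^{-q/2}\to0$, we have $\sP(\gE_{\epsilon'})\to1$.

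\textbf{Step 2: deterministic argument on $\gE_{\epsilon'}$.} Work on $\gE_{\epsilon'}$ and take an arbitrary $i'\in\gB^c$.
\begin{itemize}
\item Because machine $i'$ is failure-free, each transmitted anchor component satisfies $\widetilde\psi_{i'k}=\widehat\psi_{i'k}\in\gN_k(\epsilon')$ for every $k$.
\item Each authentic occurrence $(i,k)\in\sO_k$ satisfies $\widetilde\psi_{ik}\in\gN_k(\epsilon')$.
\item Step~3 of the earlier proof then applies verbatim. Corollary~\ref{cor:cost_function} gives $c_\lambda(\widetilde\psi_{ik},\widetilde\psi_{i'k})\le4\eta'_+(\epsilon')^2<\Delta_0/2$, while $c_\lambda(\widetilde\psi_{ik},\widetilde\psi_{i'l})\ge\Delta_0/2$ for every $l\ne k$.
\item The inequality is strict, so the tie-breaking convention never matters, and $(i,k)\in\sC_k(\widetilde{\bm\psi}_{i'})$.
\item The anchor components lie in pairwise disjoint neighborhoods, so they are distinct and the candidate is admissible. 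Hence the clusters are well defined.
\end{itemize}

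\textbf{Step 3: conclude.} Since $i'\in\gB^c$ and $k\in[K]$ were arbitrary, the intersection event contains $\gE_{\epsilon'}$. Its probability therefore tends to one.

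The only point needing care, and the place where one might otherwise expect an obstacle, is confirming that no anchor-specific randomness enters. This holds because the cost bounds of Corollary~\ref{cor:cost_function} require only $\theta,\theta'\in B_\epsilon(\theta_k^*)$, which the inequality $2\epsilon'<\epsilon$ guarantees. Consequently, the constants are uniform over anchors and the result is essentially immediate.
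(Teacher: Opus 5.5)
Your proposal is correct and matches the paper's proof: the paper also notes that the localization event $\gE_{\epsilon'}$ from Step~1 of Lemma~\ref{lemma:accurate_alignment} does not depend on the anchor, reruns the deterministic cost comparison of Step~3 for an arbitrary $i\in\gB^c$, and concludes that $\gE_{\epsilon'}$ is contained in the intersection event. Your extra remarks are valid refinements that the paper leaves implicit: $\gB^c$ is deterministic, the strict inequality makes tie-breaking irrelevant, and the disjoint neighborhoods make the candidate admissible.
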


\begin{remark}
Corollary~\ref{cor:uniform_accurate_alignment} is the uniform version of Lemma~\ref{lemma:accurate_alignment}.
The lemma is stated for a fixed failure-free anchor, whereas the corollary applies simultaneously to all failure-free candidate anchors.
This uniform form is used later when the anchor is selected from the transmitted local estimates.
\end{remark}
\begin{proof}
Recall the uniform localization event $
\gE_{\epsilon'}
=
\left\{
\|\widehat{\mG}_i-\mG^*\|
\le
\epsilon',
\forall i=1,\ldots,m
\right\}$. Step~1 in the proof of Lemma~\ref{lemma:accurate_alignment} establishes that $
\sP(\gE_{\epsilon'})\to1$.
Conditional on $\gE_{\epsilon'}$, every failure-free local component
estimate lies in its corresponding neighborhood
$\gN_k(\epsilon')$.
The separation argument in Step~2 is deterministic and independent of the
choice of the anchor machine.
Now let $i\in\gB^c$ be arbitrary.
Since the anchor machine is failure-free,
$\widetilde\psi_{ik}\in\gN_k(\epsilon')$
for every $k\in[K]$.
Repeating the deterministic cost comparison in Step~3 yields
$\sO_k
\subseteq
\sC_k(\widetilde{\bm\psi}_i),
 \forall k\in[K]$.

Because the above argument holds for every failure-free anchor
simultaneously on the same event $\gE_{\epsilon'}$, we obtain
\[
\gE_{\epsilon'}
\subseteq
\bigcap_{i\in\gB^c}
\bigcap_{k=1}^K
\{
\sO_k
\subseteq
\sC_k(\widetilde{\bm\psi}_i)
\}.
\]
Hence
\[
\sP\left(
\bigcap_{i\in\gB^c}
\bigcap_{k=1}^K
\{
\sO_k
\subseteq
\sC_k(\widetilde{\bm\psi}_i)
\}
\right)
\ge
\sP(\gE_{\epsilon'})
\rightarrow1,
\]
which completes the proof.
\end{proof}

The next lemma quantifies the scale of the cost among failure-free component estimates.
Although Byzantine costs are arbitrary, 
the failure-free costs concentrate at order $n^{-1}$ and have polynomial tail control.
This result provides the main concentration bound used to 
show that the filtering radius is neither too small nor too large.

\begin{lemma}[Concentration of cost function]
\label{lemma:concentration_of_cost_function}
Under Assumption~\ref{assump:component_wise_majority},
Assumptions~\ref{assump:compact_parameter_space}--\ref{assump:smoothness},
Assumption~\ref{assump:reverse_bregman_divergence}, and independent local data
splits, the following concentration results hold:
\begin{enumerate}[label=(\alph*)]
\item 
Here $\widehat{\psi}_{ik}$ denotes the oracle-aligned local estimate of the component on machine $i$ corresponding to $\psi_k^*$.
Let $j_1,\cdots,j_{(1-\alpha_k)m}$ be a permutation of $\gI_k$, i.e.
the index set of machines whose oracle-aligned $k$th component estimates are failure-free, such that $c_{\lambda}(\widehat{\psi}_{j_1k},\psi^*_k)\leq \cdots \leq c_{\lambda}(\widehat{\psi}_{j_{(1-\alpha_k)m}k},\psi^*_k)$.
For any $0 <\varepsilon<1/2$, as $n,m \rightarrow \infty$, we have
\begin{equation}
\label{eq:Bulk}
c_{\lambda}(\widehat{\psi}_{j_{\lfloor\varepsilon m \rfloor} k},\psi^*_k)=\Omega_P(n^{-1}), \quad c_{\lambda}(\widehat{\psi}_{j_{\lceil m/2\rceil} k},\psi^*_k)=O_P(n^{-1}).
\end{equation}

\item 
For any machine $i\in[m]$, as $n,\rho_n\to\infty$ with $\rho_n=O(n)$,
\begin{equation}
\label{eq:large_deviation}
\sP\left(c_{\lambda}(\widehat{\psi}_{ik},\psi^*_k)\geq \rho_n n^{-1}\right)=O(\rho_n^{-q/2}).
\end{equation}
\end{enumerate} 
\end{lemma}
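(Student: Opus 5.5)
The plan is to reduce both parts to the normal approximation and moment bounds of Corollary~\ref{cor:Berry-Esseen_for_component_estimates}, combined with the local quadratic equivalence of Corollary~\ref{cor:cost_function}. Independence of the local splits then turns statements about order statistics into binomial counting arguments.

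\emph{Part (b).} On the event $\{\|\widehat\psi_{ik}-\psi_k^*\|\le\epsilon'\}$, Corollary~\ref{cor:cost_function} gives $c_\lambda(\widehat\psi_{ik},\psi_k^*)\le\eta'_+\|\widehat\psi_{ik}-\psi_k^*\|^2$. Hence
\[
\sP\{c_\lambda\ge\rho_nn^{-1}\}\le\sP\{\|\widehat\psi_{ik}-\psi_k^*\|^2\ge\rho_n/(\eta'_+n)\}+\sP\{\|\widehat\psi_{ik}-\psi_k^*\|>\epsilon'\}.
\]
Markov's inequality with the $q$th moment from Corollary~\ref{cor:Berry-Esseen_for_component_estimates}(c) bounds the first term by $C(\rho_n/n)^{-q/2}n^{-q/2}=C\rho_n^{-q/2}$. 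It bounds the second term by $C\epsilon'^{-q}n^{-q/2}$, which is $O(\rho_n^{-q/2})$ because $\rho_n=O(n)$.

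\emph{Part (a), upper bound.} Let $Y_i=\mathbbm 1\{c_\lambda(\widehat\psi_{ik},\psi_k^*)\le Mn^{-1}\}$ for $i\in\gI_k$. These indicators are independent across machines, since the local data splits are independent and the authentic estimates do not depend on the corruption. By part~(b) with $\rho_n=M$ fixed (the argument is unchanged for fixed $M$), or by Chebyshev applied to the second moment, $\sP(Y_i=0)\le CM^{-q/2}+o(1)$. Assumption~\ref{assump:component_wise_majority} gives $|\gI_k|\ge(1/2+\kappa)m$. The number of successes therefore has mean at least $(1/2+\kappa)(1-CM^{-q/2}-o(1))m$, which exceeds $(1/2+\kappa/2)m$ once $M$ is large. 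Hoeffding's inequality then shows that at least $\lceil m/2\rceil$ authentic costs are at most $Mn^{-1}$ with probability $1-e^{-cm}$. Consequently the $\lceil m/2\rceil$th order statistic is at most $Mn^{-1}$, which is the $O_P(n^{-1})$ claim.

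\emph{Part (a), lower bound.} This is the step I expect to be the main obstacle, because it needs anti-concentration rather than tails, and anti-concentration must come from the Berry--Esseen bound of Corollary~\ref{cor:Berry-Esseen_for_component_estimates}(a). Using the lower bound in Corollary~\ref{cor:cost_function}, the event $c_\lambda\le\delta n^{-1}$ is contained in $\{\sqrt n\|\widehat\psi_{ik}-\psi_k^*\|\le(\delta/\eta'_-)^{1/2}\}$, up to the localization failure, which has probability $O(n^{-q/2})$. After the linear map $\mI(\psi_k^*)^{1/2}$ this becomes a convex set, namely an ellipsoid of radius $O(\delta^{1/2})$. Its Gaussian probability is at most $C\delta^{(p+1)/2}$, so
\[
\sP\{c_\lambda(\widehat\psi_{ik},\psi_k^*)\le\delta n^{-1}\}\le C\delta^{(p+1)/2}+Cn^{-1/2}.
\]
Choosing $\delta$ small makes this at most $\varepsilon/4$ for large $n$. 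Hoeffding's inequality applied to the independent indicators over $i\in\gI_k$ then shows that fewer than $\lfloor\varepsilon m\rfloor$ authentic costs fall below $\delta n^{-1}$ with probability $1-e^{-cm}$. On that event the $\lfloor\varepsilon m\rfloor$th order statistic is at least $\delta n^{-1}$, which gives the $\Omega_P(n^{-1})$ claim.

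Care is needed in one place: the constants from Corollary~\ref{cor:Berry-Esseen_for_component_estimates} must be uniform in $i$, which holds because the local estimators are identically distributed.
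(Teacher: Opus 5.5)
Your proof is correct. Part (b) coincides with the paper's Markov-plus-localization argument, but for part (a) you take a more elementary route.

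The paper works through quantile functions. It applies the convex-set Berry--Esseen bound on ellipsoids in both directions to sandwich the distribution function of $T_{ik}=c_\lambda(\widehat\psi_{ik},\psi_k^*)$ between generalized chi-square distribution functions. It inverts this into the population bound $\frac{\eta'_-}{n}q^*(a-Cn^{-1/2})\le q_{n,k}(a)\le\frac{\eta'_+}{n}q^*(a+Cn^{-1/2})$ at fixed levels $a$. It then transfers this to the order statistics with the DKW-type Lemma~\ref{lemma:concentration_of_order_statistics}, at levels $\varepsilon/(1-\alpha_k)$ and $1/\{2(1-\alpha_k)\}$ shifted by a slack chosen from the majority margin $\kappa$.

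You instead fix the thresholds $Mn^{-1}$ and $\delta n^{-1}$ and apply Hoeffding directly to the binomial counts of authentic costs below them. This buys some economy:
\begin{enumerate}
\item your upper bound needs only the $q$th-moment tail and no normal approximation;
\item your lower bound uses Berry--Esseen only through the single small-ball estimate $C\delta^{(p+1)/2}+Cn^{-1/2}$;
\item there is no quantile inversion with its $\pm Cn^{-1/2}$ level shifts;
\item it is transparent that the majority margin matters only for the $\lceil m/2\rceil$ bound.
\end{enumerate}
Your density-times-volume small-ball bound is also automatically uniform over translations of the ellipsoid. It would therefore handle the off-centre costs needed later in Lemma~\ref{lemma:radius_concentration} without Lemma~\ref{lemma:maximum_probability_of_convex_sublevel_sets}.

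What the paper's quantile framing buys is an explicit identification of the $n^{-1}$ scale with quantiles of the generalized chi-square variable $Z^\top\mI(\psi_k^*)^{-1}Z$. It also gives a single template, conditional quantile plus order-statistic concentration, that the paper reuses for the anchor-centred radius bounds in Lemmas~\ref{lemma:radius_concentration} and~\ref{lemma:selected_anchor_event}.
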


\begin{remark}
Lemma~\ref{lemma:concentration_of_cost_function} has three consequences.
First, the first relation in \eqref{eq:Bulk} implies that the dispersion of 
the failure-free local estimates is estimable from the data.  This first
consequence is motivational and is not invoked in the subsequent proofs.
Second, together with the component-wise majority rule, the second relation in \eqref{eq:Bulk} 
implies that a majority of the local estimates lie within $O(n^{-1})$ cost of
the true parameters.  It therefore gives every failure-free candidate anchor
an $O_P(n^{-1})$ radius score, the clean-side benchmark in
Theorem~\ref{theorem:accurate_selection}.
Third, \eqref{eq:large_deviation} implies that a cost ball with radius $O(\rho_n n^{-1})$ 
around a good initial estimate contains almost all failure-free local estimates 
for a moderately growing inflation factor $\rho_n$; this is the tail input to
Lemma~\ref{lemma:filtering_event}(b)--(c).
For part (a), no relative growth rate between $m$ and $n$ is required: $n\to\infty$ determines the local statistical scale $n^{-1}$, while $m\to\infty$ is used only to compare the empirical order statistics, equivalently empirical quantiles, with their population quantiles.
\end{remark}

\begin{proof}
Let $T_{ik}=c_{\lambda}(\widehat{\psi}_{ik},\psi_k^*)$ be the cost of the $k$th 
component estimate on machine $i$ to the true $k$th component.
Since $\widehat{\psi}_{ik}$ is oracle-aligned, 
the scalar cost $T_{ik}$ is well-defined without ambiguity for every $i\in\gI_k$.
Then, the variables $\{T_{ik}: i\in\gI_k\}$ are IID
because they are computed from independent local data splits.
Let $H_{n,k}(t)=\sP(T_{ik}\leq t)$ denote the common distribution function 
of the failure-free component-$k$ costs, and let
\[
    q_{n,k}(a)=\inf\{t:H_{n,k}(t)\geq a\}
\]
denote the population quantile function associated with $H_{n,k}$.
We use the subindices $n$ and $k$ to emphasize the dependence of these quantities 
on the local sample size and the component index.

The empirical quantile function based on the $M_k=|\gI_k|$ authentic costs is denoted by $\widehat q_{M_k,k}$; equivalently, after ordering the costs, $\widehat q_{M_k,k}(a)=T_{j_{\lceil aM_k\rceil}k}$ for $a\in(0,1]$.
For example, $T_{j_{\lfloor\varepsilon m\rfloor}k}$ and 
$T_{j_{\lceil m/2\rceil}k}$ are the empirical quantiles used in part (a).
Write $\mI(\psi_k^*)$ as $\mI_k$ for simplicity.
Let $Y=Z^{\top}\mI_k^{-1}Z$, where $Z$ is a standard Gaussian vector in $\sR^{p+1}$, and let
\[
q^*(a)=\inf\{x:\sP(Y\leq x)\geq a\}
\]
be the quantile function of $Y$.

\noindent\underline{\textit{Step 1: Comparing cost quantiles with generalized chi-square quantiles.}}
Let $U_{ik}=\sqrt{n}\,\mI_k^{1/2}(\widehat{\psi}_{ik}-\psi_k^*)$.
Then
\[
    \left\{n\|\widehat{\psi}_{ik}-\psi_k^*\|^2\leq t\right\}
    =
    \left\{U_{ik}^{\top}\mI_k^{-1}U_{ik}\leq t\right\}.
\]
The set $\{u:u^{\top}\mI_k^{-1}u\leq t\}$ is an ellipsoid and hence is convex.
Therefore, this set is admissible in the convex-set Berry--Esseen bound of Corollary~\ref{cor:Berry-Esseen_for_component_estimates}, which gives, for every $t\geq0$,
\[
    \left|
    \sP\left(n\|\widehat{\psi}_{ik}-\psi_k^*\|^2\leq t\right)
    -
    \sP\left(Y\leq t\right)
    \right|
    \leq Cn^{-1/2},
\]
		Let $\gL_{ik,n}=\{\|\widehat{\psi}_{ik}-\psi_k^*\|\leq\epsilon\}$, where $\epsilon$ is the local-neighborhood radius in Assumption~\ref{assump:reverse_bregman_divergence}.
			By Markov's inequality and
        Corollary~\ref{cor:Berry-Esseen_for_component_estimates}(c),
		\[
		    \sP(\gL_{ik,n}^c)
		    \leq
		    \epsilon^{-q}\sE\|\widehat{\psi}_{ik}-\psi_k^*\|^q
    =
		    O(n^{-q/2}).
	\]
		On $\gL_{ik,n}$, Corollary~\ref{cor:cost_function} gives
	\[
	    \eta'_- \|\widehat{\psi}_{ik}-\psi_k^*\|^2
	    \leq
	    T_{ik}
	    \leq
	    \eta'_+ \|\widehat{\psi}_{ik}-\psi_k^*\|^2.
	\]
		We now decompose the probability according to the partition
        $\gL_{ik,n}\cup\gL_{ik,n}^c$.  On $\gL_{ik,n}$, the preceding
        cost--Euclidean comparison applies, while the contribution from
        $\gL_{ik,n}^c$ is bounded by
        $\sP(\gL_{ik,n}^c)=O(n^{-q/2})$.
Therefore, for any $x\geq0$,
\[
    \sP(T_{ik}\leq x)
    \leq
    \sP\left(\eta'_-\|\widehat{\psi}_{ik}-\psi_k^*\|^2\leq x\right)+O(n^{-q/2})
    \leq
    \sP\left(Y\leq \frac{nx}{\eta'_-}\right)+Cn^{-1/2},
\]
and similarly
\[
    \sP(T_{ik}\leq x)
    \geq
    \sP\left(\eta'_+\|\widehat{\psi}_{ik}-\psi_k^*\|^2\leq x\right)-O(n^{-q/2})
    \geq
    \sP\left(Y\leq \frac{nx}{\eta'_+}\right)-Cn^{-1/2}.
\]
Recall that $q_{n,k}(a)$ is the population quantile function of $T_{ik}$, and $q^*(a)$ is the population quantile function of $Y$.
These CDF bounds imply the quantile bounds by inverting the two CDF inequalities, we have that for each fixed quantile level $a$ bounded away from $0$ and $1$,
\begin{equation}
\label{eq:quantile_bound}
\frac{\eta'_-}{n}q^*(a-Cn^{-1/2})\leq q_{n,k}(a) \leq \frac{\eta'_+}{n}q^*(a+Cn^{-1/2}).
\end{equation}
In particular, at any fixed level $a\in(0,1)$, $q_{n,k}(a)$ is of order $n^{-1}$ because $Y$ is a nondegenerate generalized chi-square random variable.

\noindent \textit{\underline{Step 2: Transferring population quantile bounds to empirical order statistics.}}
Fix $0<\varepsilon<1/2$ and set $M_k=|\gI_k|=(1-\alpha_k)m$.
The order statistics in \eqref{eq:Bulk} are taken over the $M_k$ failure-free component-$k$ costs.
Thus, the index $\lfloor\varepsilon m\rfloor$ corresponds, among the failure-free costs, to the quantile level approximately $a_{1,k}=\varepsilon/(1-\alpha_k)$, whereas the index $\lceil m/2\rceil$ corresponds to the quantile level approximately $a_{2,k}=1/\{2(1-\alpha_k)\}$.

By Assumption~\ref{assump:component_wise_majority}, $1-\alpha_k\ge 1/2+\kappa$ for all sufficiently large $m$. Hence, uniformly over $k\in[K]$,
\[
\varepsilon\le a_{1,k}\le \frac{\varepsilon}{1/2+\kappa}<1,
\qquad
\frac12\le a_{2,k}\le \frac{1}{1+2\kappa}<1.
\]
Choose a fixed slack $\tau_\varepsilon>0$, independent of $k$, such that
\[
    \tau_\varepsilon<\frac12
    \min\left\{
        \varepsilon,
        1-\frac{\varepsilon}{1/2+\kappa},
        1-\frac{1}{1+2\kappa}
    \right\}.
\]

All three quantities in the minimum are strictly positive, so this choice is possible.
Applying Lemma~\ref{lemma:concentration_of_order_statistics} to the independent variables $\{T_{j_1k},\ldots,T_{j_{M_k}k}\}$, there is an event with probability at least $1-2\exp(-2M_k\tau_\varepsilon^2)$ on which every empirical quantile at a level bounded away from $0$ and $1$ is trapped between the corresponding population quantiles with levels shifted by $\tau_\varepsilon$:
\[
    \widehat q_{M_k,k}(a)=T_{j_{\lceil aM_k\rceil}k},
\]

and $\widehat q_{M_k,k}(a)\in[q_{n,k}(a-\tau_\varepsilon),q_{n,k}(a+\tau_\varepsilon)]$. The exact levels of the two displayed order statistics differ from $a_{1,k}$ and $a_{2,k}$ by at most $M_k^{-1}$ because of the floor and ceiling operations; since $M_k\to\infty$, this discrepancy is absorbed by the fixed slack $\tau_\varepsilon$.
Applying this with $a=a_{1,k}$ and $a=a_{2,k}$ gives
\[
    T_{j_{\lfloor\varepsilon m \rfloor}k}
    \geq
    q_{n,k}\left(\frac{\varepsilon}{1-\alpha_k}-\tau_\varepsilon\right)
\quad
\text{and}
\quad
    T_{j_{\lceil m/2\rceil}k}
    \leq
    q_{n,k}\left(\frac{1}{2(1-\alpha_k)}+\tau_\varepsilon\right),
\]

where the index $\lceil m/2\rceil$ is valid for all sufficiently large $m$ because $M_k\ge(1/2+\kappa)m$.

The shifted levels are bounded away from $0$ and $1$ uniformly over $k$ by the choice of $\tau_\varepsilon$ and the majority margin $\kappa$.
Therefore, Step 1 and \eqref{eq:quantile_bound} imply that the lower population quantile on the right of the first display is bounded below by a positive constant times $n^{-1}$, while the upper population quantile on the right of the second display is bounded above by a finite constant times $n^{-1}$.
Consequently,
\[
    T_{j_{\lfloor\varepsilon m \rfloor}k}=\Omega_P(n^{-1}),
    \qquad
    T_{j_{\lceil m/2\rceil}k}=O_P(n^{-1}),
\]
which is exactly \eqref{eq:Bulk}.

The only role of $m$ in the order-statistic argument is through the probability $1-2\exp(-2M_k\tau_\varepsilon^2)$, which tends to one because $M_k\ge(1/2+\kappa)m\to\infty$.
Thus part (a) does not impose any relative rate condition between $m$ and $n$.

\noindent \textit{\underline{Step 3: Proving the final bound in~\eqref{eq:large_deviation}.}}
As in Step 1, decompose the sample space into the localization event
$\gL_{ik,n}=\{\|\widehat{\psi}_{ik}-\psi_k^*\|\leq\epsilon\}$ and its
complement.  On $\gL_{ik,n}$, the upper cost--Euclidean comparison in
Corollary~\ref{cor:cost_function} gives
\[
    T_{ik}
    \leq
    \eta'_+\|\widehat{\psi}_{ik}-\psi_k^*\|^2.
\]
Therefore,
\[
\sP(T_{ik}\geq \rho_n n^{-1})
\leq
\sP(\gL_{ik,n}^c)
+
\sP\left(\{T_{ik}\geq \rho_n n^{-1}\}\cap\gL_{ik,n}\right)
\leq
	O(n^{-q/2})
+
\sP\left(\eta'_+\|\widehat{\psi}_{ik}-\psi_k^*\|^2\geq \rho_n n^{-1}\right).
\]
	By Corollary~\ref{cor:Berry-Esseen_for_component_estimates}(c) and
    Markov's inequality,
\[
\sP\left(\eta'_+\|\widehat{\psi}_{ik}-\psi_k^*\|^2\geq \rho_n n^{-1}\right)
\leq
	\frac{\sE\|\widehat{\psi}_{ik}-\psi_k^*\|^q}
    {(\rho_n n^{-1}/\eta'_+)^{q/2}}
	=O(\rho_n^{-q/2}).
\]

Since $\rho_n=O(n)$, we have
$n^{-q/2}=O(\rho_n^{-q/2})$. The two terms in the preceding probability
bound are therefore both $O(\rho_n^{-q/2})$, which proves
\eqref{eq:large_deviation}. This completes the proof of
Lemma~\ref{lemma:concentration_of_cost_function}.
\end{proof}

Lemma~\ref{lemma:concentration_of_cost_function} gives the stochastic scale 
of the failure-free component costs when the center is the true component $\psi_k^*$.
The radius used by the algorithm is slightly different: it is centered at the anchor component $\widetilde{\psi}_{i^*k}$, and it is computed 
after the local components have been assigned to anchor-induced clusters.
Thus, before this radius can be used in the filtering argument, 
we need to verify that replacing the unknown truth by a failure-free anchor does not change its order.
The next lemma proves this transfer.
When the anchor machine is failure-free, Lemma~\ref{lemma:accurate_alignment} ensures that the failure-free estimates of component $k$ are assigned to the $k$th anchor cluster, and the local consistency of the anchor keeps $\widetilde{\psi}_{i^*k}$ within the same $n^{-1/2}$ neighborhood of $\psi_k^*$.
Consequently, the empirical half-sample radius centered at $\widetilde{\psi}_{i^*k}$ has the same $n^{-1}$ order as the truth-centered clean costs.
This order is essential for the filter: a radius that is too small would discard many failure-free estimates, whereas a radius that is too large would allow too many corrupted estimates to remain.

\begin{lemma}[Radius concentration]
\label{lemma:radius_concentration}
Suppose $i^*$ is a failure-free anchor machine.  Under Assumptions
\ref{assump:component_wise_majority},
\ref{assump:compact_parameter_space}--\ref{assump:smoothness},
\ref{assump:reverse_bregman_divergence}, and part~(a) of
Assumption~\ref{assump:component_capacity}, assume independent local data splits,
$mn^{-q/2}\to0$, $\rho\to\infty$, and $\rho/n\to0$.  Using
$\widetilde{\bm\psi}_{i^*}$ as the anchor and applying the cost-based
assignment~\eqref{eq:cost_based_alignment}, we have, for $k\in[K]$,
\[
r(\widetilde{\psi}_{i^*k})
=
\left\{c_{\lambda}(\widetilde{\psi}_{ij},\widetilde{\psi}_{i^*k}):(i,j)\in \sC_k(\widetilde{\bm\psi}_{i^*})\right\}_{\lceil m/2\rceil}
=\Theta_P(n^{-1}).
\]
\end{lemma}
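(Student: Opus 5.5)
We prove the upper bound $r(\widetilde\psi_{i^*k})=O_P(n^{-1})$ and the lower bound $r(\widetilde\psi_{i^*k})=\Omega_P(n^{-1})$ separately, always working on the event of Lemma~\ref{lemma:accurate_alignment}, whose probability tends to one. On that event $\sO_k\subseteq\sC_k(\widetilde{\bm\psi}_{i^*})$, so $|\sC_k|\ge(1/2+\kappa)m\ge\lceil m/2\rceil$ and the radius is finite. We also use the anchor's own accuracy: since $i^*$ is failure-free, $\widetilde\psi_{i^*k}=\widehat\psi_{i^*k}$, and Corollary~\ref{cor:Berry-Esseen_for_component_estimates}(c) gives $\|\widehat\psi_{i^*k}-\psi_k^*\|=O_P(n^{-1/2})$. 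Throughout, the localization event $\gE_{\epsilon'}$ from Lemma~\ref{lemma:accurate_alignment} keeps all authentic estimates and the anchor inside $B_\epsilon(\theta_k^*)$. This lets Corollary~\ref{cor:cost_function} convert costs to squared Euclidean distances, and the triangle inequality then gives, for authentic $(i,k)$,
\[
c_\lambda(\widehat\psi_{ik},\widetilde\psi_{i^*k})\le 2\eta'_+\bigl\{\|\widehat\psi_{ik}-\psi_k^*\|^2+\|\widehat\psi_{i^*k}-\psi_k^*\|^2\bigr\}.
\]

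\textbf{Upper bound.} The radius is at most the $\lceil m/2\rceil$th smallest cost among the authentic members $\sO_k$ alone, since adding elements can only decrease order statistics. By the display, that order statistic is bounded by $2\eta'_+/\eta'_-$ times the $\lceil m/2\rceil$th truth-centred authentic cost $T_{j_{\lceil m/2\rceil}k}$, plus $2\eta'_+\|\widehat\psi_{i^*k}-\psi_k^*\|^2$. Lemma~\ref{lemma:concentration_of_cost_function}(a) makes the first term $O_P(n^{-1})$, and the anchor accuracy makes the second term $O_P(n^{-1})$.

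\textbf{Lower bound.} Fix a small constant $c>0$ and show that, with probability close to one, fewer than $\lceil m/2\rceil$ members of $\sC_k$ lie within cost $c/n$ of the anchor. The members of $\sC_k$ split into authentic occurrences and corrupted occurrences; by alignment, authentic occurrences of other components are excluded, since they sit at cost at least $\Delta_0/2$.

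For the corrupted part, note that $c_\lambda(\xi,\widetilde\psi_{i^*k})\le c/n$ together with $\|\widetilde\psi_{i^*k}-\psi_k^*\|\le Mn^{-1/2}$ (which holds with probability at least $1-\epsilon$ for large $M$) implies, via the local quadratic equivalence, $c_\lambda(\xi,\psi_k^*)\le C'n^{-1}\le C'\rho n^{-1}$. Assumption~\ref{assump:component_capacity}(a) then bounds the number of such corrupted occurrences by $\beta m$ with $\beta<1/2$. One subtlety is that a corrupted $\xi$ might lie outside the local neighbourhood; this is handled because cost $c/n$ to an anchor inside $B_{\epsilon'}$ forces $\xi$ to be nearby, by strict convexity of $A$ and the continuity argument of Step~2.

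For the authentic part, the anchor $\widehat\psi_{i^*k}$ is data-dependent, but it is independent of the other machines' data. Condition on machine $i^*$ and on $\|\widehat\psi_{i^*k}-\psi_k^*\|\le Mn^{-1/2}$. For $i\ne i^*$, the Berry--Esseen bound of Corollary~\ref{cor:Berry-Esseen_for_component_estimates}(a) applied to a Euclidean ball of radius $\sqrt{c/\eta'_-}$ around $\sqrt n\,\mI_k^{1/2}(\widehat\psi_{i^*k}-\psi_k^*)$ (a convex set) gives
\[
\sP\{c_\lambda(\widehat\psi_{ik},\widehat\psi_{i^*k})\le c/n\}\le \sup_v\sP(\|Z-v\|\le C\sqrt c)+Cn^{-1/2}=:\pi_c,
\]
and $\pi_c\to0$ as $c\to0$ because the Gaussian density is bounded. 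These events are independent across $i$, so Hoeffding's inequality shows that at most $(\pi_c+\tau)m+1$ authentic occurrences fall within cost $c/n$, with probability tending to one.

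Choose $c$ and $\tau$ so that $\beta+\pi_c+\tau<1/2$. Then the total count within cost $c/n$ is below $\lceil m/2\rceil$, hence $r\ge c/n$, which gives $\Omega_P(n^{-1})$.

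\textbf{Main obstacle.} The hardest step is this lower bound. It requires the conditioning on the anchor to preserve the i.i.d.\ Berry--Esseen control of the other machines, and it requires transferring the truth-centred capacity in Assumption~\ref{assump:component_capacity}(a) to an anchor-centred ball at the $n^{-1}$ scale. The conditions $\rho\to\infty$ and $\rho/n\to0$ enter only there, to make $C'n^{-1}\le C'\rho n^{-1}$ fall within the range the assumption covers.
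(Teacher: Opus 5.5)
Your proposal is correct and follows the paper's proof. The upper bound is identical: the authentic majority in $\sC_k$, the truth-centred order statistic from Lemma~\ref{lemma:concentration_of_cost_function}(a), and the $O_P(n^{-1/2})$ anchor error. The lower bound has the same structure: move corrupted points from the anchor-centred ball to a truth-centred ball, cap them at $\beta m$ via Assumption~\ref{assump:component_capacity}(a), then count authentic estimates conditionally on the independent anchor.

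The differences are only in the technical tools. You work directly at a fixed radius $c/n$ and control the shifted small-ball probability with the bounded Gaussian density plus Hoeffding. The paper first localizes via $r\le L/n$, then bounds the off-centre quantile by the centred generalized chi-square quantile using Anderson's inequality (Lemma~\ref{lemma:maximum_probability_of_convex_sublevel_sets}) and the order-statistic lemma. Both routes give the same uniform-in-shift control, and yours is slightly more elementary.
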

\begin{proof}
Recall that $B_r(z;d)=\{z':d(z',z)\le r\}$ denotes the ball centered at $z$ with radius $r$ under the discrepancy $d$.
By definition, the half-sample radius can be written as
\begin{equation}
\label{eq:inf_definition_of_radius}
r(\widetilde{\psi}_{i^*k})=\inf\left\{r:\left|\{(i,j)\in\sC_k(\widetilde{\bm\psi}_{i^*}):\widetilde{\psi}_{ij}\in B_r(\widetilde{\psi}_{i^*k};c_{\lambda})\}\right|\geq \left\lceil\frac{m}{2}\right\rceil \right\}.
\end{equation}

\noindent\underline{\textit{Step 1: Upper bound of the radius.}}
By Lemma~\ref{lemma:accurate_alignment}, all $(1-\alpha_k)m$ failure-free $k$th component estimates are assigned to $\sC_k(\widetilde{\bm\psi}_{i^*})$.

Since $1-\alpha_k\ge1/2+\kappa$, the collection of failure-free $k$th component estimates contains a majority of machines.
Thus $r(\widetilde{\psi}_{i^*k})$ is upper bounded by the $\lceil m/2\rceil$th order statistic of the costs from the failure-free $k$th component estimates to the anchor.

Recall that $j_1,\cdots,j_{m_k}$ is a permutation of $\gI_k$ (the fixed set
of machines with authentic $k$th component estimates), where
$m_k=(1-\alpha_k)m$, such that
$c_{\lambda}(\widehat{\psi}_{j_1k},\psi^*_k)\leq \cdots \leq
c_{\lambda}(\widehat{\psi}_{j_{m_k}k},\psi^*_k)$.
By \eqref{eq:order_of_maxinum_distance}, on an event with probability tending to one, $\|\widehat{\psi}_{ik}-\psi^*_k\|\leq \epsilon'$ for all $(i,k)\in[m]\times[K]$ and some fixed $0<\epsilon'<\epsilon/2$, where $\epsilon$ is the neighborhood radius in Assumption~\ref{assump:reverse_bregman_divergence}.

Since $\{(j_1,k),\cdots,(j_{(1-\alpha_k)m},k)\}\subseteq \sC_k(\widetilde{\bm\psi}_{i^*})$, \eqref{eq:inf_definition_of_radius} implies that, for sufficiently large $n$,
\[r(\widetilde{\psi}_{i^*k})\leq \max_{1\leq i\leq \lceil m/2\rceil }c_{\lambda}(\widehat{\psi}_{j_ik},\widetilde{\psi}_{i^*k})\leq \eta'_+ \max_{1\leq i\leq \lceil m/2\rceil}\|\widehat{\psi}_{j_ik}-\widetilde{\psi}_{i^*k}\|^2,\]
where the second inequality comes from Corollary~\ref{cor:cost_function}.
We can further get
\[
r(\widetilde{\psi}_{i^*k}) \leq
2\eta'_+ \left(
\max_{1\leq i\leq \lceil m/2\rceil}
\|\widehat{\psi}_{j_ik}-\psi^*_k\|^2
+
\|\widetilde{\psi}_{i^*k}-\psi^*_k\|^2
\right)
\leq
\frac{2\eta'_+}{\eta'_-}
c_{\lambda}(\widehat{\psi}_{j_{\lceil m/2\rceil}k},\psi^*_k)
+
2\eta'_+
\|\widetilde{\psi}_{i^*k}-\psi^*_k\|^2.
\]
The first term on the right-hand side is bounded by Lemma~\ref{lemma:concentration_of_cost_function}(a), which is of order $O_P(n^{-1})$.
For the second term, since machine $i^*$ is failure-free, Lemma~\ref{lemma:local_MLE_properties}(c) gives
\[
\sE\{\|\widetilde{\psi}_{i^*k}-\psi^*_k\|^2\}=O(n^{-1}).
\]
Markov's inequality then yields
\begin{equation}
\label{eq:order_of_fixed_failure-free_estimate}
\|\widetilde{\psi}_{i^*k}-
\psi^*_k\|=O_P(n^{-1/2}).
\end{equation}
Combining these bounds yields
\[
r(\widetilde{\psi}_{i^*k})=O_P(n^{-1}).
\]

\noindent\underline{\textit{Step 2: Lower bound via quantile functions for off-center cost.}}
For the lower bound, we use the quantile of the cost distribution centered at the anchor component.
Conditional on $\widetilde{\psi}_{i^*k}$, let $H_{n,k}^{i^*}(t) = \sP(c_{\lambda}(\widehat{\psi}_{ik},\widetilde{\psi}_{i^*k})\leq t\mid \widetilde{\psi}_{i^*k})$,  $i\in\gI_k, i\neq i^*$, denote the anchor-centered population distribution function, where $\widehat{\psi}_{ik}$ is the oracle-aligned local MLE of the $k$th component on machine $i$.
The right-hand side does not depend on the particular choice of $i\in\gI_k\backslash\{i^*\}$ because the authentic local samples are IID and independent across machines.
Define its conditional population quantile function by
\[
    q_{n,k}^{i^*}(a)=\inf\{t:H_{n,k}^{i^*}(t)\geq a\}.
\]
This is the anchor-centered counterpart of the truth-centered distribution and quantile functions $H_{n,k}$ and $q_{n,k}$ used in Lemma~\ref{lemma:concentration_of_cost_function}.
Let
\[
\gE_{\mathrm{align}}
=
\bigcap_{\ell=1}^{K}
\left\{\sO_\ell\subseteq \sC_\ell(\widetilde{\bm\psi}_{i^*})\right\}.
\]
By Lemma~\ref{lemma:accurate_alignment},
$\sP(\gE_{\mathrm{align}})\to 1$ under $mn^{-q/2}\to0$.
On $\gE_{\mathrm{align}}$, the anchor-induced cluster $\sC_k(\widetilde{\bm\psi}_{i^*})$ contains all failure-free estimates in $\sO_k$; since the cost-based assignment forms disjoint clusters, any failure-free estimate assigned to $\sC_k(\widetilde{\bm\psi}_{i^*})$ has latent origin $k$.
The upper-bound conclusion in Step 1, $r(\widetilde{\psi}_{i^*k})=O_P(n^{-1})$, implies that, for any sufficiently large constant $L>0$, the event $\{r(\widetilde{\psi}_{i^*k})\le L/n\}$ has probability arbitrarily close to one.
Also, by \eqref{eq:order_of_fixed_failure-free_estimate}, for any $\delta>0$ there exists a fixed constant $A_\delta>0$ such that
\[
    \sP\left(
    \|\widetilde{\psi}_{i^*k}-\psi_k^*\|\le A_\delta n^{-1/2}
    \right)\ge 1-\delta
\]
for all sufficiently large $n$.
On this anchor-localization event, $\widetilde{\psi}_{i^*k}$ lies in the local neighborhood of $\psi_k^*$ based on Lemma~\ref{lemma:local_MLE_properties}.
By compactness of the parameter space and strict positivity of the Bregman divergence away from its diagonal, there exists a constant $\gamma_k>0$ such that, whenever $\widetilde{\psi}_{i^*k}$ lies sufficiently close to $\psi_k^*$, any component parameter whose subpopulation parameter is outside the local neighborhood of $\theta_k^*$ has cost at least $\gamma_k$ from $\widetilde{\psi}_{i^*k}$.
Since $L/n<\gamma_k$ for all sufficiently large $n$, every corrupted component estimate in
$B_{r(\widetilde{\psi}_{i^*k})}(\widetilde{\psi}_{i^*k};c_{\lambda})$ must have its subpopulation parameter inside this local neighborhood.
Therefore, on the same event, Corollary~\ref{cor:cost_function} gives
\[
    \|\xi_{i\ell}-\widetilde{\psi}_{i^*k}\|^2
    \le \eta_-^{\prime -1}L n^{-1}
\]
for every corrupted occurrence with latent origin $\ell$ on machine $i$ whose
transmitted value $\xi_{i\ell}$ lies inside the radius-defining ball.
Consequently,
\[
    \|\xi_{i\ell}-\psi_k^*\|^2
    \le
    2\eta_-^{\prime -1}L n^{-1}
    +2A_\delta^2 n^{-1}.
\]
Applying Corollary~\ref{cor:cost_function} once more, there exists a fixed constant $L_\delta'=\eta_+'\left(2\eta_-^{\prime -1}L+2A_\delta^2\right)$ such that each corrupted component in the anchor-centered radius-defining ball also belongs to the true-centered ball $B_{L_\delta'/n}(\psi_k^*;c_{\lambda})$.
For all sufficiently large $n$, this ball is contained in
$B_{L_\delta'\rho/n}(\psi_k^*;c_\lambda)$.  Hence
Assumption~\ref{assump:component_capacity}(a), applied with the fixed constant
$L_\delta'$, implies that it contains at most $\beta m$ corrupted component
estimates with probability tending to one.
Since $\delta>0$ is arbitrary, the anchor-centered radius-defining ball contains at most $\beta m$ corrupted component estimates with probability tending to one.
Since $B_{r(\widetilde{\psi}_{i^*k})}(\widetilde{\psi}_{i^*k};c_{\lambda})$ contains at least $\lceil m/2\rceil$ assigned estimates by definition of $r(\widetilde{\psi}_{i^*k})$, the same cost ball must contain at least $s_m:=\left\lceil (1/2-\beta)m\right\rceil$ failure-free estimates of the $k$th component.

The anchor estimate itself may be one of these failure-free estimates, so we remove it from the count.
Conditional on the anchor component $\widetilde{\psi}_{i^*k}$, the remaining
authentic estimates $\{\widehat{\psi}_{lk}:l\in\gI_k,\ l\neq i^*\}$ are IID.
Let $l_1,\cdots,l_{m_k-1}$ be a permutation of
$\gI_k\backslash\{i^*\}$ such that
\[
    c_{\lambda}(\widehat{\psi}_{l_1k},\widetilde{\psi}_{i^*k})
    \leq
    \cdots
    \leq
    c_{\lambda}(\widehat{\psi}_{l_{m_k-1}k},\widetilde{\psi}_{i^*k}).
\]
Then
\[
    r(\widetilde{\psi}_{i^*k})
    \ge
    c_{\lambda}(\widehat{\psi}_{l_{s_m-1}k},\widetilde{\psi}_{i^*k})
\]
whenever $s_m\ge 2$.

Set $a_0=(1/2-\beta)/2>0$. Since $1-\alpha_k\le1$, uniformly over $k\in[K]$,
\[
\frac{s_m-1}{(1-\alpha_k)m-1}
= \frac{1/2-\beta}{1-\alpha_k}+O(m^{-1})
\ge \frac34(1/2-\beta)>a_0
\]
for all sufficiently large $m$. Thus $a_0$ is a deterministic quantile level lying strictly below the relevant order-statistic fraction, even when $\alpha_k$ varies with $m$.
Lemma~\ref{lemma:concentration_of_order_statistics}, applied conditionally on $\widetilde{\psi}_{i^*k}$, gives with probability tending to one that
\[
    c_{\lambda}(\widehat{\psi}_{l_{s_m-1}k},\widetilde{\psi}_{i^*k})
    \ge q_{n,k}^{i^*}(a_0).
\]

Step 2 reduces the lower bound for the empirical radius to a lower bound for the conditional population quantile $q_{n,k}^{i^*}(a_0)$.
The only remaining issue is that this quantile is centered at the random anchor component $\widetilde{\psi}_{i^*k}$ rather than at the true component $\psi_k^*$.
The next step proves a deterministic-in-the-center quantile bound for any center $\psi$ in a small neighborhood of $\psi_k^*$; Step 4 then applies it to $\psi=\widetilde{\psi}_{i^*k}$ using the consistency of the failure-free anchor.

\noindent\underline{\textit{Step 3: Lower bound for quantile functions for off-center cost.}}
Consider a fixed parameter $\psi$ satisfying $\|\psi-\psi^*_k\|<\epsilon$, where $\epsilon$ is the neighborhood radius in Assumption~\ref{assump:reverse_bregman_divergence}.
We lower bound the population quantile of $c_{\lambda}(\widehat{\psi}_{ik},\psi)$ for any fixed $i\in\gI_k$ and any fixed $a\in(0,1)$.
Write $\mI(\psi_k^*)$ as $\mI_k$ and let $Y=Z^{\top}\mI_k^{-1}Z$, where $Z$ is a standard Gaussian vector in $\sR^{p+1}$.
As in Lemma~\ref{lemma:concentration_of_cost_function}, let
\[
    q^*(a)=\inf\{x:\sP(Y\leq x)\geq a\}
\]
denote the quantile function of this generalized chi-square random variable.
Let $t\geq 0$ be any threshold.
Following the argument in Lemma~\ref{lemma:concentration_of_cost_function}, set
$U_{ik}=\sqrt{n}\mI_k^{1/2}(\widehat{\psi}_{ik}-\psi_k^*)$.
For any $x>0$, the Euclidean ball $\{u:\|u\|\le \sqrt{p+1}+x\}$ is convex, so Corollary~\ref{cor:Berry-Esseen_for_component_estimates} can be applied to this ball and then subtracted from one.
Together with the Gaussian concentration bound
\[
    \sP\{\|Z\|\ge \sqrt{p+1}+x\}\le \exp(-x^2/2),
\]
this gives
\[
    \sP\{\|U_{ik}\|\ge \sqrt{p+1}+x\}
    \le
    \exp(-x^2/2)+Cn^{-1/2}.
\]
Taking $x=\sqrt{\log n}$ and using
\[
    \sqrt{n}\|\widehat{\psi}_{ik}-\psi_k^*\|
    \le \lambda_{\min}(\mI_k)^{-1/2}\|U_{ik}\|,
\]
we obtain
\[
\sP\left(
\sqrt{n}\|\widehat{\psi}_{ik}-\psi^*_k\|
\geq
\lambda_{\min}(\mI_k)^{-1/2}\{\sqrt{p+1}+\sqrt{\log n}\}
\right)
\leq Cn^{-1/2}.
\]
For sufficiently large $n$, $\epsilon\sqrt n\ge \lambda_{\min}(\mI_k)^{-1/2}\{\sqrt{p+1}+\sqrt{\log n}\}$, and hence $\|\widehat{\psi}_{ik}-\psi^*_k\|<\epsilon$ on an event with probability at least $1-Cn^{-1/2}$.
On the event $\{\|\widehat{\psi}_{ik}-\psi^*_k\|<\epsilon\}$,
\[\eta'_-\|\widehat{\psi}_{ik}-\psi\|^2\leq c_{\lambda}(\widehat{\psi}_{ik},\psi) \leq \eta'_+\|\widehat{\psi}_{ik}-\psi\|^2.\]
So  
\[
\sP(nc_{\lambda}(\widehat{\psi}_{ik},\psi)\leq t) \leq \sP(n\eta'_-\|\widehat{\psi}_{ik}-\psi\|^2\leq t)+\frac{C}{\sqrt{n}} \leq\sP\left((Z+v)^{\top}\mI_k^{-1}(Z+v)\leq \frac{t}{\eta'_-}\right)+\frac{C}{\sqrt{n}},
\]
where $v=n^{1/2}\mI_k^{1/2}(\psi^*_k-\psi)$.
Applying Lemma~\ref{lemma:maximum_probability_of_convex_sublevel_sets}, the right-hand side is bounded above by
\[\sP\left(Z^{\top}\mI_k^{-1}Z\leq \frac{t}{\eta'_-}\right)+\frac{C}{\sqrt{n}},\]
and therefore
	the $a$th population quantile of $c_{\lambda}(\widehat{\psi}_{ik},\psi)$ is at least
\[
\frac{\eta'_-}{n}q^*(a-Cn^{-1/2}),
\]
for any fixed $a\in(0,1)$ with $a-Cn^{-1/2}>0$.

\noindent\underline{\textit{Step 4: Putting everything together.}}
From \eqref{eq:universal_bound}, the failure-free anchor satisfies $\|\widetilde{\psi}_{i^*k}-\psi_k^*\|\leq \epsilon'$ for some $0<\epsilon'<\epsilon$ with probability tending to one.
Thus Step 3 applies to $\psi=\widetilde{\psi}_{i^*k}$ and gives
\[
q_{n,k}^{i^*}(a_0)\geq \frac{\eta'_-}{n}q^*(a_0-Cn^{-1/2}).
\]

Since $a_0>0$ is fixed, the right-hand side is $\Omega(n^{-1})$ for all sufficiently large $n$.
Combining the quantile lower bound with the order-statistic comparison in Step 2 gives $r(\widetilde{\psi}_{i^*k})=\Omega_P(n^{-1})$.
Combining the lower bound with the upper bound from Step 1 proves $r(\widetilde{\psi}_{i^*k})=\Theta_P(n^{-1})$.
\end{proof}

%%%%%%%%%%%%%%%%%%%%%%%%%%%%%%%%%%%%%%%%%%%%%%%%%%%%%%
\subsubsection{Data-driven selection of a failure-free anchor}
\label{app:data_driven_anchor_selection}
The preceding part shows what would happen if a failure-free anchor were already known.
The purpose of this part is to justify the data-driven anchor selection step in Algorithm~\ref{alg:initial_estimate}.
For each candidate machine $i\in[m]$, write
$\widetilde{\bm\psi}_i=(\widetilde{\psi}_{i1},\ldots,\widetilde{\psi}_{iK})$ for the transmitted local mixture used as a candidate anchor.
After assigning all transmitted component estimates to their nearest components of this candidate anchor, write $\sC_k(\widetilde{\bm\psi}_i)$ for the corresponding index set.
If $|\sC_k(\widetilde{\bm\psi}_i)|\ge \lceil m/2\rceil$, let $r(\widetilde{\psi}_{ik})$ be the $\lceil m/2\rceil$th order statistic of $\{c_{\lambda}(\widetilde{\psi}_{rj},\widetilde{\psi}_{ik}):(r,j)\in\sC_k(\widetilde{\bm\psi}_i)\}$; otherwise set $r(\widetilde{\psi}_{ik})=\infty$.
The total radius score of candidate machine $i$ is
\[
    R(\widetilde{\bm\psi}_i)=\sum_{k=1}^{K}r(\widetilde{\psi}_{ik}).
\]
The proof compares the total radius score of failure-free and Byzantine-failure candidate anchors.
By Lemma~\ref{lemma:accurate_alignment} and the upper-bound argument in Lemma~\ref{lemma:radius_concentration}, a failure-free anchor yields correctly aligned clusters and a total radius score of order $n^{-1}$.
Together with Assumption~\ref{assump:one_failure-free_machine-method}, the
component-wise majority condition, and
Assumption~\ref{assump:anchor_separability}, this comparison proves
Theorem~\ref{theorem:accurate_selection}: Byzantine-failure anchors have
larger radius scores with high probability, so minimizing
$R(\widetilde{\bm\psi}_i)$ selects a completely failure-free machine.
The condition $mn^{-q/2}\to0$ enters through uniform localization at the
fixed component-separation scale in Lemma~\ref{lemma:accurate_alignment}.
No stronger growth restriction on $m$ is needed for selecting a failure-free anchor.
The stronger conditions such as $m=O(n)$ or $m\le n$ are used later in the final convergence-rate analysis, in particular for the oracle aggregation rate and the chosen inflation factor $\rho$.

\begin{proof}[Proof of Theorem~\ref{theorem:accurate_selection}]
We compare the best objective value among failure-free machines with the best objective value among Byzantine-failure machines.
By Assumption~\ref{assump:one_failure-free_machine-method}, there exists at least one failure-free machine $i^0$.
If $\gB=\emptyset$, then every candidate machine is failure-free and the theorem is immediate.
Thus, in the remainder of the proof, assume $\gB\neq\emptyset$.
We will show that
\[
    \min_{i'\in\gB}R(\widetilde{\bm\psi}_{i'})=\omega_P(n^{-1}).
\]
Thus every minimizer of $R$ is failure-free with probability tending to one.

\noindent\underline{\textit{Step 1: Upper bound of the radius summation for the failure-free machine.}}
On the uniform alignment event, all $m_k\ge(1/2+\kappa)m$ authentic
component-$k$ estimates belong to
$\sC_k(\widetilde{\bm\psi}_{i^0})$.  The majority radius is therefore bounded
by the $\lceil m/2\rceil$th authentic truth-centred cost, plus a constant
multiple of $\|\widehat\psi_{i^0k}-\psi_k^*\|^2$.  The former is
$O_P(n^{-1})$ by Lemma~\ref{lemma:concentration_of_cost_function}(a), and the
latter is $O_P(n^{-1})$ by
Corollary~\ref{cor:Berry-Esseen_for_component_estimates}(c).  Hence
$r(\widetilde{\psi}_{i^0k})=O_P(n^{-1})$ for every $k\in[K]$.
Since $K$ is fixed, we therefore have
\[
R(\widetilde{\bm\psi}_{i^0})=\sum_{k=1}^Kr(\widetilde{\psi}_{i^0k})=O_P(n^{-1}).
\] 

\noindent\underline{\textit{Step 2: Lower bound of the radius summation for the Byzantine-failure machine.}}
Fix an arbitrary constant $C>0$.
By Assumption~\ref{assump:anchor_separability}, the following simultaneous event has probability tending to one:
\[
\gE_C
=
\left\{
\forall i'\in\gB,\ \exists k=k(i')\in[K]\ \text{such that}\
\left|
\left\{
(i,j)\in\sC_k(\widetilde{\bm\psi}_{i'}):
c_{\lambda}(\widetilde{\psi}_{ij},\widetilde{\psi}_{i'k})\le Cn^{-1}
\right\}
\right|
<\left\lceil \frac{m}{2}\right\rceil
\right\}.
\]
The quantifier ``$\forall i'\in\gB$'' is important here: the event is uniform over all Byzantine-failure machines, so the argument below remains valid even when $|\gB|$ diverges with $m$.
On $\gE_C$, for every Byzantine-failure machine $i'\in\gB$ there exists a component $k=k(i')$ such that
\[
\left|
\left\{
(i,j)\in\sC_k(\widetilde{\bm\psi}_{i'}):
c_{\lambda}(\widetilde{\psi}_{ij},\widetilde{\psi}_{i'k})\le Cn^{-1}
\right\}
\right|
<\left\lceil \frac{m}{2}\right\rceil .
\]
Because fewer than $\lceil m/2\rceil$ assigned component estimates lie within cost radius $Cn^{-1}$ of $\widetilde{\psi}_{i'k}$, the definition of the half-sample radius gives
\[
    r(\widetilde{\psi}_{i'k})>Cn^{-1}.
\]
Since the radius score of candidate machine $i'$ is the sum of its component radii,
\[
    R(\widetilde{\bm\psi}_{i'})
    =
    \sum_{\ell=1}^{K}r(\widetilde{\psi}_{i'\ell}),
\]
and all component radii are nonnegative, the single lower bound
$r(\widetilde{\psi}_{i'k})>Cn^{-1}$ implies
$R(\widetilde{\bm\psi}_{i'})>Cn^{-1}$.
The preceding argument holds for every $i'\in\gB$ on the same event. Therefore,
\[
    \min_{i'\in\gB}R(\widetilde{\bm\psi}_{i'})
    > Cn^{-1}.
\]
	Because the event above has probability tending to one, we have shown that, for this fixed $C>0$,
	\begin{equation}
	\label{eq:byzantine_radius_score_fixed_C}
	    \sP\left\{\min_{i'\in\gB}R(\widetilde{\bm\psi}_{i'})>Cn^{-1}\right\}\to 1.
	\end{equation}
Since the same argument holds for every fixed $C>0$, the definition of $\omega_P$ gives
\[
    \min_{i'\in\gB}R(\widetilde{\bm\psi}_{i'})=\omega_P(n^{-1}).
\]
	Since the failure-free machine achieves $R(\widetilde{\bm\psi}_{i^0}) = O_P(n^{-1})$, for any $\varepsilon>0$ we can choose a fixed constant $C_{\varepsilon}>0$ such that
	\begin{equation}
	\label{eq:failure_free_radius_score_upper}
	    \sP\{R(\widetilde{\bm\psi}_{i^0})\le C_{\varepsilon}n^{-1}\}\ge 1-\varepsilon
	\end{equation}
	for all sufficiently large $n$.
	Applying \eqref{eq:byzantine_radius_score_fixed_C} with $C=C_{\varepsilon}$ and combining it with \eqref{eq:failure_free_radius_score_upper} shows that, with probability at least $1-\varepsilon-o(1)$, the failure-free candidate $i^0$ has a strictly smaller radius score than every Byzantine-failure candidate.
Because $\varepsilon$ is arbitrary, every minimizer of $R$ is completely failure-free with probability tending to one as $n,m \to \infty$.
\end{proof}

%%%%%%%%%%%%%%%%%%%%%%%%%%%%%%%%%%%%%%%%%%%%%%%%%%%%%%
\subsubsection{Final convergence rate of \ours{}}
\label{app:final_convergence_rate}
This subsection proves Theorem~\ref{theorem:rate_of_convergence}, together
with its corollaries.
The proof first establishes the two properties needed for the final decomposition: the selected anchor has the same alignment, radius, and localization properties as a failure-free anchor, and the subsequent filter keeps almost all failure-free component estimates while localizing any corrupted estimates that remain.

We begin with the consequences of the data-driven anchor selection step.
The goal is to show that, although $i^*$ is chosen from the transmitted estimates, it can be treated on a high-probability event as a failure-free anchor: its induced clusters contain the correctly aligned failure-free component estimates, its majority radii are of order $n^{-1}$, and its component estimates are close enough to the truth for the local cost comparisons used in the filtering argument.

The constants in the next two lemmas are fixed in the following order.  The
majority and capacity margins $\kappa,\beta,\beta_\dagger$ and the local
geometry constants are fixed by the assumptions.  In the tail-trimming
localization argument, a probability tolerance $\varepsilon$ first determines
$C_\varepsilon$, then $R_\varepsilon$, and finally
$A_\varepsilon$.  For the lower-radius argument, $\tau_0$ is fixed from
$\beta$, the off-centre quantile bound determines $a_k$ and
$a=\min_k a_k$, and only then are the true-centred radius constants $C_0,C_1$
formed.  The lower-radius proof uses its own localization multiplier
$b_{\rm rad}=1$.  After $a$ has been obtained, the filtering proof chooses
separate smaller constants $b_{\rm filt},c_{\rm near}$.
Assumption~\ref{assump:component_capacity} holds for
every fixed radius multiplier $C$, so invoking it after these choices creates
no circular dependence.
\begin{lemma}[Selected-anchor event]
\label{lemma:selected_anchor_event}
		Recall from Section~\ref{sec:construction_of_CFMR} that $\sC_k(\widetilde{\bm\psi}_{i^*})$ denotes the cluster induced by the cost-based assignment rule in \eqref{eq:cost_based_alignment}, $r(\widetilde{\psi}_{i^*k})$ denotes the corresponding majority radius in \eqref{eq:majority_radius}, and $\sO_k=\{(i,k):i\in\gI_k\}$ is the latent analytical set of authentic occurrences corresponding to $\psi_k^*$.
		Under Assumptions~\ref{assump:component_wise_majority},
        \ref{assump:one_failure-free_machine-method},
        \ref{assump:compact_parameter_space}--\ref{assump:smoothness},
        \ref{assump:reverse_bregman_divergence},
        \ref{assump:anchor_separability}, and part~(a) of
        Assumption~\ref{assump:component_capacity}, assume independent local data splits
        and $mn^{-q/2}\to0$.
		Let $\rho$ be deterministic with $\rho\to\infty$ and $\rho/n\to0$.
		Let $i^*$ be any minimizer of the radius score $R(\widetilde{\bm\psi}_i)$ over $i\in[m]$.
		Then the following conclusions hold:
	\begin{enumerate}[label=(\alph*)]
	\item \textit{Correct assignment of failure-free components}: $\sP\left(
	\bigcap_{k=1}^K
	\{\sO_k\subseteq\sC_k(\widetilde{\bm\psi}_{i^*})\}
	\right)\to1$.
		\item \textit{Upper control of selected-anchor radii}:
        $\max_{k\in[K]}r(\widetilde{\psi}_{i^*k})=O_P(n^{-1})$.

		    \item \textit{Exact-retention localization}: if
        $m\rho^{-q/2}\to0$, then for every fixed $b>0$,
        $\sP\left(
			\max_{k\in[K]}\|\widetilde{\psi}_{i^*k}-\psi_k^*\|^2
			\le b\rho n^{-1}
			\right)\to1$.
		    \item \textit{Tail-trimming localization}: if, instead,
        Assumption~\ref{assump:component_capacity}(b) holds, then
        \[
        \max_{k\in[K]}\|\widetilde\psi_{i^*k}-\psi_k^*\|
        =O_P(n^{-1/2}).
        \]
			\item \textit{Lower control of selected-anchor radii}: under either
        $m\rho^{-q/2}\to0$ or
        Assumption~\ref{assump:component_capacity}(b), there exists a
        constant $a>0$ such that $\sP\left(
			\min_{k\in[K]}r(\widetilde{\psi}_{i^*k})\ge a n^{-1}
			\right)\to1$.
	\end{enumerate}
	\end{lemma}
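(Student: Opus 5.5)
The plan is to work throughout on the intersection of three high-probability events and to prove each part on it. The first is the event of Theorem~\ref{theorem:accurate_selection} that every minimizer of $R$ lies in $\gB^c$. The second is the uniform alignment event of Corollary~\ref{cor:uniform_accurate_alignment}. The third is the uniform localization event $\gE_{\epsilon'}$ from Step~1 of Lemma~\ref{lemma:accurate_alignment}.

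On this intersection, $i^*\in\gB^c$, so part~(a) follows immediately from the uniform alignment corollary. The uniformity over failure-free candidates is what lets us ignore that $i^*$ is data-dependent.

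For part~(b) I would use the minimizing property rather than analyze $i^*$ directly. Fix any $i^0\in\gB^c$. Then $\max_k r(\widetilde\psi_{i^*k})\le R(\widetilde{\bm\psi}_{i^*})\le R(\widetilde{\bm\psi}_{i^0})=O_P(n^{-1})$ by Step~1 of the proof of Theorem~\ref{theorem:accurate_selection}.

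For part~(c), I would bound $\max_{i\in\gB^c}\max_k\|\widehat\psi_{ik}-\psi_k^*\|^2$ uniformly. By Markov's inequality and Corollary~\ref{cor:Berry-Esseen_for_component_estimates}(c), with a union bound over at most $Km$ pairs,
\[
\sP\Bigl(\max_{i,k}\|\widehat\psi_{ik}-\psi_k^*\|^2>b\rho n^{-1}\Bigr)\le Km\,C_q(b\rho)^{-q/2}\to0 .
\]
This uses $m\rho^{-q/2}\to0$. Since $\widetilde\psi_{i^*k}=\widehat\psi_{i^*k}$ on the selection event, part~(c) follows.

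Part~(d) is the main obstacle, because without the uniform bound an unusually noisy clean machine might be selected. The plan is to show that a clean candidate $i$ with some $\|\widehat\psi_{ik}-\psi_k^*\|>A n^{-1/2}$ has radius $r(\widehat\psi_{ik})>C n^{-1}$, for $A$ large relative to $C$.

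The argument has three steps.
\begin{enumerate}
\item The ball of cost radius $Cn^{-1}$ around $\widehat\psi_{ik}$ contains at least $\lceil m/2\rceil$ assigned points if $r(\widehat\psi_{ik})\le Cn^{-1}$. By Assumption~\ref{assump:component_capacity}(b), at most $\beta_\dagger m$ of these are corrupted, uniformly over clean candidates.
\item Hence at least $(1/2-\beta_\dagger)m$ authentic component-$k$ estimates lie within Euclidean distance $O(n^{-1/2})$ of $\widehat\psi_{ik}$, and therefore farther than $(A-O(1))n^{-1/2}$ from $\psi_k^*$. Correct alignment rules out authentic estimates of other components in this cluster.
\item By the moment bound, each authentic estimate falls that far from $\psi_k^*$ with probability at most $C_q A^{-q}$. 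Independence across machines and Hoeffding's inequality (Lemma~\ref{lemma:concentration_of_order_statistics}-type concentration) make a fraction $(1/2-\beta_\dagger)$ exponentially unlikely once $A$ is large.
\end{enumerate}
No union bound over candidates is then needed, because the count of far authentic estimates is a single global statistic. So with high probability every noisy clean candidate has $R>Cn^{-1}$. Choosing $C=C_\varepsilon$ from part~(b), the minimizer must satisfy $\max_k\|\widetilde\psi_{i^*k}-\psi_k^*\|\le A_\varepsilon n^{-1/2}$. The constants are fixed in the order $\varepsilon\to C_\varepsilon\to A_\varepsilon$.

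For part~(e), I would repeat Steps~2--4 of Lemma~\ref{lemma:radius_concentration} with centre $\widetilde\psi_{i^*k}$, which is localized by (c) or (d). Corrupted points in the radius-defining ball lie in a truth-centred ball of cost $O(\rho/n)$, or $O(1/n)$ under (d). Assumption~\ref{assump:component_capacity}(a) then caps them at $\beta m$, so at least $(1/2-\beta)m$ authentic points lie in the ball.

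The anchor $i^*$ is data-dependent, so conditioning on it is illegitimate. Instead I would use a union-free argument. By Lemma~\ref{lemma:maximum_probability_of_convex_sublevel_sets}, any fixed-centre cost ball of radius $a n^{-1}$ captures at most probability $a_0/2$ of a clean estimate when $a$ is small. I would then cover the $O(\sqrt{\rho/n})$-localization region by finitely many (polynomially many in $\rho$) centres on a grid of mesh $\sqrt{a/n}$. Hoeffding's inequality, with a union bound over the grid, then shows that no ball of cost radius $an^{-1}$ in that region contains $(1/2-\beta)m-1$ authentic points. Since polynomially many centres times $e^{-cm}$ tends to zero, this gives $\min_k r(\widetilde\psi_{i^*k})\ge a n^{-1}$.
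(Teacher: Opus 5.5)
\emph{Parts (a)--(d).} Parts (a)--(c) match the paper's proof. Your part (d) is a valid and slightly cleaner alternative to the paper's route. The paper conditions on each fully authentic candidate component and union-bounds over the $Km$ candidate--component pairs. You instead control, for each $k$, the single count $\#\{\ell\in\gI_k:\sqrt n\|\widehat\psi_{\ell k}-\psi_k^*\|\ge A-R\}$. Any noisy candidate with a small radius would force this count to be at least $(1/2-\beta_\dagger)m$. The candidate's own component is also far from $\psi_k^*$, so nothing has to be removed, and no union over candidates is needed.

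\emph{Gap in part (e).} The gap is in part (e) under the exact-retention hypothesis $m\rho^{-q/2}\to0$. There your grid covers a ball of radius $\sqrt{b\rho/n}$ with mesh of order $\sqrt{a/n}$, so it has of order $\rho^{(p+1)/2}$ points. The union bound then gives $C\rho^{(p+1)/2}e^{-cm}$, and nothing in the lemma forces this to vanish. The hypotheses $\rho\to\infty$, $\rho/n\to0$, $mn^{-q/2}\to0$, and $m\rho^{-q/2}\to0$ bound $\rho$ from below in terms of $m$, never from above. For example, $m=\lceil\log\log n\rceil$ and $\rho=n^{1/2}$ satisfy all of them, yet $\rho^{(p+1)/2}e^{-cm}\to\infty$. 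Under Assumption~\ref{assump:component_capacity}(b) your localization region has radius $O(n^{-1/2})$, so the grid has $O(1)$ points and that case is fine.

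\emph{The missing idea.} On the selection event, the centre $\widetilde\psi_{i^*k}=\widehat\psi_{i^*k}$ is one of the at most $m$ authentic candidate components $\widehat\psi_{ik}$, $i\in\gB^c$. For each fixed $i$, $\widehat\psi_{ik}$ depends only on machine $i$'s data. Conditioning on it is therefore legitimate, and the remaining authentic estimates stay IID. Taking the supremum over centres in the local neighbourhood in the Hoeffding/order-statistic bound gives total failure probability at most $2Km\,e^{-cm}\to0$, irrespective of $\rho$. This is how the paper's Step~5 avoids conditioning on the data-dependent $i^*$.

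\emph{Alternative repair.} You could instead keep your grid but localize more sharply. Markov's inequality with the $q$th moment and a union bound give $\max_{i\in\gB^c,k}\|\widehat\psi_{ik}-\psi_k^*\|=O_P(m^{1/q}n^{-1/2})$. Covering this smaller region shrinks the grid to $O(m^{(p+1)/q})$ points, which $e^{-cm}$ then kills.
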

\begin{remark}
Lemma~\ref{lemma:selected_anchor_event} collects the properties of the data-selected anchor that are needed in the final rate proof.
Part (a) says that the data-selected anchor inherits the uniform alignment guarantee for failure-free anchors.
Part (b) gives the selected-radius upper bound.  Part (c) is the exact-retention
union-bound localization, while part (d) uses authentic-anchor capacity and an
off-centre count to obtain the tail-trimming root-$n$ localization.  Either route is
sufficient for the lower-radius argument in part (e).  The condition
$\rho/n\to0$ keeps the filtering neighborhood inside the local regularity
region.  The eighth-moment choice is recovered by setting $q=8$; no exponent
$1/4$ is intrinsic to the algorithm.
Part~(d) also shows that radius-score minimization regularizes the choice among
authentic candidates: an unusually noisy authentic component cannot produce
a majority-sized ball with an authentic-scale radius unless the
authentic-anchor capacity condition is violated.
Together, these statements allow the subsequent filtering analysis to condition on a single high-probability event without repeatedly separating anchor selection, alignment, radius control, and localization.
\end{remark}

\begin{proof}[Proof of Lemma~\ref{lemma:selected_anchor_event}]
\noindent\underline{\textit{Step 1: Accurate alignment.}}
The condition $mn^{-q/2}\to0$ is exactly the condition under which
Corollary~\ref{cor:uniform_accurate_alignment} gives the uniform alignment
event over all failure-free candidate anchors:
\[
\sP\left(
\bigcap_{i\in\gB^c}\bigcap_{k=1}^K
\{\sO_k\subseteq\sC_k(\widetilde{\bm\psi}_i)\}
\right)\to1.
\]
Intersecting this event with $\{i^*\in\gB^c\}$ in Theorem~\ref{theorem:accurate_selection} gives
\[
\sP\left(
\bigcap_{k=1}^K
\{\sO_k\subseteq\sC_k(\widetilde{\bm\psi}_{i^*})\}
\right)\to1,
\]
which proves part (a).

\noindent\underline{\textit{Step 2: Upper radius control for the selected anchor.}}
By Assumption~\ref{assump:one_failure-free_machine-method}, there exists a completely failure-free machine $i^0$.
The upper-bound argument in Lemma~\ref{lemma:radius_concentration}, applied to this fixed failure-free machine, gives
\[
R(\widetilde{\bm\psi}_{i^0})=O_P(n^{-1}).
\]
Since $i^*$ minimizes the radius score,
\[
R(\widetilde{\bm\psi}_{i^*})\le R(\widetilde{\bm\psi}_{i^0})=O_P(n^{-1}).
\]
All component radii are nonnegative, hence
\[
\max_{k\in[K]}r(\widetilde\psi_{i^*k}) \le R(\widetilde{\bm\psi}_{i^*}) =O_P(n^{-1}).
\]
This is exactly part (b).

\noindent\underline{\textit{Step 3: Exact-retention localization of the selected anchor.}}
For a failure-free machine $i\in\gB^c$, $\widetilde{\psi}_{ik}=\widehat{\psi}_{ik}$.
Corollary~\ref{cor:Berry-Esseen_for_component_estimates}(c) and Markov's inequality give, for every fixed $b>0$,
\[
\sP\left\{
\|\widehat\psi_{ik}-\psi_k^*\|^2>b\rho n^{-1}
\right\}
\le
\frac{\sE\|\widehat\psi_{ik}-\psi_k^*\|^q}
{b^{q/2}\rho^{q/2}n^{-q/2}}
=O(\rho^{-q/2}).
\]
A union bound over at most $Km$ failure-free components yields
\[
\sP\left\{
\max_{i\in\gB^c}\max_{k\in[K]}
\|\widehat\psi_{ik}-\psi_k^*\|^2>b\rho n^{-1}
\right\}
=O(m\rho^{-q/2})=o(1).
\]
Intersecting this event with the anchor-selection event $\{i^*\in\gB^c\}$ proves part (c).

\noindent\underline{\textit{Step 4: Root-$n$ localization for tail trimming.}}
Assume now that Assumption~\ref{assump:component_capacity}(b) holds; the
condition $m\rho^{-q/2}\to0$ is not used in this step.  Fix $\varepsilon>0$.
By Step 2 and the definition of stochastic boundedness, there is a fixed
$C_\varepsilon<\infty$ such that
\[
\sP\{R(\widetilde{\bm\psi}_{i^*})\le C_\varepsilon n^{-1}\}
\ge1-\varepsilon
\]
for all sufficiently large $n,m$.  On this event, every selected component
radius is at most $C_\varepsilon/n$.

Let $\varepsilon_0>0$ be a fixed neighborhood radius small enough for the
local cost comparison.  The uniform coarse-localization event from the proof
of Lemma~\ref{lemma:accurate_alignment} gives
\[
\max_{i\in\gB^c}\max_{k\le K}
\|\widehat\psi_{ik}-\psi_k^*\|\le\varepsilon_0
\]
with probability tending to one.  Work on this event, the uniform alignment
event, and the authentic-anchor capacity event with $C=C_\varepsilon$.

Fix a fully authentic candidate $i\in\gB^c$ and component $k$.  Suppose
$r(\widetilde\psi_{ik})\le C_\varepsilon/n$ and
\[
\|\widehat\psi_{ik}-\psi_k^*\|>A n^{-1/2}
\]
for a constant $A$ to be chosen.  The radius-defining ball contains at least
$\lceil m/2\rceil$ assigned estimates.  By
Assumption~\ref{assump:component_capacity}(b), at most
$\beta_\dagger m$ of all corrupted estimates lie in this ball.  Uniform
alignment implies that every authentic estimate assigned to cluster $k$ has
population label $k$.  Hence at least
\[
s_m^\dagger=\left\lceil(1/2-\beta_\dagger)m\right\rceil
\]
authentic component-$k$ estimates lie in the ball.

Set $R_\varepsilon=(C_\varepsilon/\eta_-')^{1/2}$.  For another authentic
machine $\ell\in\gI_k\setminus\{i\}$, local cost equivalence and the reverse
triangle inequality show that
\[
c_\lambda(\widehat\psi_{\ell k},\widehat\psi_{ik})
\le C_\varepsilon n^{-1}
\quad\Longrightarrow\quad
\sqrt n\|\widehat\psi_{\ell k}-\psi_k^*\|\ge A-R_\varepsilon.
\]
Here the local comparison is legitimate uniformly: on the coarse-localization
event the candidate center lies in a fixed smaller neighborhood of
$\psi_k^*$, and compactness plus positivity of the divergence away from the
diagonal force every point in its $C_\varepsilon/n$ cost ball into the same
local chart for all sufficiently large $n$.
Conditional on $\widehat\psi_{ik}$, the remaining authentic local estimates
are independent.  The $q$th-moment bound therefore gives, uniformly over the
conditioning value,
\[
p_A:=\sup_{\ell\in\gI_k\setminus\{i\}}
\sP\{c_\lambda(\widehat\psi_{\ell k},\widehat\psi_{ik})
\le C_\varepsilon n^{-1}\mid\widehat\psi_{ik}\}
\le C_q(A-R_\varepsilon)^{-q}.
\]
Choose $A=A_\varepsilon>R_\varepsilon$ so large that
$p_A<(1/2-\beta_\dagger)/4$.  Hoeffding's inequality, conditional on the
candidate anchor, then gives constants $c_\varepsilon,C<\infty$ for which
\[
\sP\left\{
\sum_{\ell\in\gI_k\setminus\{i\}}
\mathbbm1\{c_\lambda(\widehat\psi_{\ell k},\widehat\psi_{ik})
\le C_\varepsilon n^{-1}\}
\ge s_m^\dagger-1
\right\}
\le C\exp(-c_\varepsilon m).
\]
A union bound over at most $Km$ fully authentic candidate-component pairs
shows that, with probability tending to one, every fully authentic candidate
whose component radii are all at most $C_\varepsilon/n$ satisfies
\[
\max_{k\le K}\|\widehat\psi_{ik}-\psi_k^*\|
\le A_\varepsilon n^{-1/2}.
\]
The selected anchor is fully authentic with probability tending to one and
has this radius bound with probability at least $1-\varepsilon-o(1)$.  Since
$\varepsilon$ is arbitrary, part (d) follows.

\noindent\underline{\textit{Step 5: Lower radius control for the selected anchor.}}
Let
\[
s_m=\left\lceil(1/2-\beta)m\right\rceil,
\qquad
\tau_0=(1/2-\beta)/2\in(0,1).
\]
As in the preceding radius argument, $\tau_0$ lies strictly below
$(s_m-1)/\{(1-\alpha_k)m-1\}$ for every $k$ and all sufficiently large $m$.
The deterministic-in-the-center quantile bound from Step~3 of
Lemma~\ref{lemma:radius_concentration}, applied at this fixed level, gives
$a_k>0$ such that, uniformly over centers in the local neighborhood of
$\psi_k^*$, the $\tau_0$-quantile of
$c_\lambda(\widehat\psi_{\ell k},\psi)$ is at least $a_kn^{-1}$ for all
sufficiently large $n$.  Set
\[
a=\min_{k\in[K]}a_k>0
\]
and fix $b_{\rm rad}=1$ in the localization event below.  Thus $a$ and
$b_{\rm rad}$ are fixed before the capacity radius constants $C_0,C_1$ are
formed.

We first collect the events needed in the lower-bound argument:
\[
\gE_{\mathrm{sel}}=\{i^*\in\gB^c\},
\qquad \gE_{\mathrm{align}}^{\mathrm{unif}} = \bigcap_{i\in\gB^c}\bigcap_{k=1}^K \{\sO_k\subseteq\sC_k(\widetilde{\bm\psi}_i)\},
\quad
\text{and}
\quad
\gE_{\mathrm{loc}}^{\rm rad}
=\left\{\max_{k\in[K]}\|\widetilde{\psi}_{i^*k}-\psi_k^*\|^2
\le b_{\rm rad}\rho n^{-1}\right\}.
\]
Theorem~\ref{theorem:accurate_selection} and
Corollary~\ref{cor:uniform_accurate_alignment} give probability tending to one
for the first two events.  Under the exact-retention regime, part (c) gives
$\sP(\gE_{\mathrm{loc}}^{\rm rad})\to1$ directly.  Under the tail-trimming regime,
part (d), together with $\rho\to\infty$, gives
\[
\max_k\|\widetilde\psi_{i^*k}-\psi_k^*\|^2/(\rho/n)=o_P(1),
\]
and hence the same conclusion for the fixed $b_{\rm rad}>0$.
On $\gE_{\mathrm{sel}}\cap\gE_{\mathrm{align}}^{\mathrm{unif}}$, the selected anchor $i^*$ is one of these failure-free anchors, and hence it satisfies $\bigcap_{k=1}^K\{\sO_k\subseteq\sC_k(\widetilde{\bm\psi}_{i^*})\}$.
Hence, the intersection
$\gE_0^{\rm rad}=\gE_{\mathrm{sel}}\cap
\gE_{\mathrm{align}}^{\mathrm{unif}}\cap\gE_{\mathrm{loc}}^{\rm rad}$ has
probability tending to one.  In the rest of the proof, we work on
$\gE_0^{\rm rad}$.

Fix $k\in[K]$.
If $r(\widetilde{\psi}_{i^*k})<a n^{-1}$, then the anchor-centered ball $B_{a n^{-1}}(\widetilde{\psi}_{i^*k};c_\lambda)$ contains at least $\lceil m/2\rceil$ assigned transmitted estimates.
We first show that any corrupted component in this anchor-centered ball must also lie in a slightly larger true-centered ball.
On the event $\gE_{\mathrm{loc}}^{\rm rad}$ defined above,
\[
\|\widetilde{\psi}_{i^*k}-\psi_k^*\|^2
\le b_{\rm rad}\rho n^{-1}.
\]
Because $\rho n^{-1}\to0$, the selected anchor component lies in the local neighborhood of $\psi_k^*$ for all sufficiently large $n$ on this event.
If a corrupted component $\xi_{i\ell}$ belongs to $B_{a n^{-1}}(\widetilde{\psi}_{i^*k};c_\lambda)$, then for sufficiently large $n$ both $\xi_{i\ell}$ and $\widetilde{\psi}_{i^*k}$ lie in the local neighborhood of $\psi_k^*$; otherwise compactness and positivity of the Bregman divergence away from the diagonal would give a fixed positive lower bound on $c_\lambda(\xi_{i\ell},\widetilde{\psi}_{i^*k})$, contradicting $a n^{-1}\to0$.
Thus Corollary~\ref{cor:cost_function} gives
\[
\|\xi_{i\ell}-\widetilde{\psi}_{i^*k}\|^2
\le \eta_-^{\prime -1}a n^{-1}.
\]
By the triangle inequality,
\[
\|\xi_{i\ell}-\psi_k^*\|^2
\le
2\eta_-^{\prime -1}a n^{-1}
{}+
2b_{\rm rad}\rho n^{-1}
\le C_0\rho n^{-1}
\]
for a fixed constant $C_0>0$, since $\rho\to\infty$.
Applying the local upper bound in Corollary~\ref{cor:cost_function} once more,
every such corrupted component is contained in a true-centered ball
$B_{C_1\rho n^{-1}}(\psi_k^*;c_\lambda)$ for a fixed constant $C_1>0$
depending only on the local equivalence constants, $a$, and $b_{\rm rad}$.
Define
\[
\gE_{\mathrm{anti},k}
=
\left\{
\left|
\left\{(i,\ell):\ \ell\in[K],\ i\in\gB_\ell,\
c_{\lambda}(\xi_{i\ell},\psi_k^*)\le C_1\rho n^{-1}
\right\}
\right|
\le \beta m
\right\}.
\]
Assumption~\ref{assump:component_capacity}(a) gives
$\sP(\gE_{\mathrm{anti},k})\to1$.
Hence, on $\gE_{\mathrm{anti},k}$, the anchor-centered ball must contain at least $s_m$ failure-free estimates of the $k$th component.

It remains to show that this many failure-free estimates cannot lie in the
ball with the already fixed radius $an^{-1}$.
Fix a failure-free candidate anchor $i\in\gB^c$.
Conditional on $\widetilde{\psi}_{ik}$, the remaining authentic estimates
$\{\widehat{\psi}_{\ell k}:\ell\in\gI_k,\ \ell\neq i\}$ are IID.

Lemma~\ref{lemma:concentration_of_order_statistics}, applied conditionally on $\widetilde{\psi}_{ik}$ to the IID costs $\{c_{\lambda}(\widehat{\psi}_{\ell k},\widetilde{\psi}_{ik}):\ell\in\gI_k,\ \ell\neq i\}$, then implies that, for some constant $c_k>0$,
\[
\sup_{\psi:\|\psi-\psi_k^*\|<\epsilon}
\sP\left(
    \sum_{\ell\in\gI_k,\ \ell\neq i}
    \mathbbm{1}\{c_{\lambda}(\widehat{\psi}_{\ell k},\psi)\le a_k n^{-1}\}
    \ge s_m-1
\right)
\le 2\exp(-c_k m).
\]
Set $c_0=\min_{k\in[K]}c_k>0$.
Since $a\le a_k$ for all $k\in[K]$, a union bound gives
\[
\sP\left(\gE_{\mathrm{ord}}^c\right)
:=
\sP\left(
    \bigcup_{k=1}^K
    \bigcup_{i\in\gB^c}
    \left\{
    \|\widetilde{\psi}_{ik}-\psi_k^*\|<\epsilon,\ 
    \sum_{\ell\in\gI_k,\ \ell\neq i}
    \mathbbm{1}\{c_{\lambda}(\widehat{\psi}_{\ell k},\widetilde{\psi}_{ik})\le a n^{-1}\}
    \ge s_m-1
    \right\}
\right)
\le 2Km\exp(-c_0m)=o(1).
\]
Thus $\sP(\gE_{\mathrm{ord}})\to1$.
We now connect this event to the desired radius lower bound.
On $\gE_0^{\rm rad}\cap\gE_{\mathrm{anti},k}$, the selected anchor is failure-free and satisfies
\[
\|\widetilde{\psi}_{i^*k}-\psi_k^*\|^2
\le b_{\rm rad}\rho n^{-1}=o(1),
\]
so $\|\widetilde{\psi}_{i^*k}-\psi_k^*\|<\epsilon$ for all sufficiently large $n$.
If $r(\widetilde{\psi}_{i^*k})<a n^{-1}$, then $B_{a n^{-1}}(\widetilde{\psi}_{i^*k};c_\lambda)$ contains at least $\lceil m/2\rceil$ assigned transmitted estimates.
By the preceding true-centered-ball argument and the definition of $\gE_{\mathrm{anti},k}$, at most $\beta m$ of these estimates can be corrupted.
Therefore, the same anchor-centered ball must contain at least $s_m$ failure-free estimates of the $k$th component.
At most one of these failure-free estimates is the anchor component itself, so the remaining failure-free estimates satisfy
\[
\sum_{\ell\in\gI_k,\ \ell\neq i^*}
\mathbbm{1}\{c_{\lambda}(\widehat{\psi}_{\ell k},\widetilde{\psi}_{i^*k})\le a n^{-1}\}
\ge s_m-1.
\]
Thus the pair $(k,i^*)$ belongs to the union event excluded by $\gE_{\mathrm{ord}}$, which contradicts $\gE_{\mathrm{ord}}$.
Since $\sP(\gE_0^{\rm rad}\cap\gE_{\mathrm{anti},k}\cap
\gE_{\mathrm{ord}})\to1$, this contradiction shows that the event
$\{r(\widetilde{\psi}_{i^*k})<a n^{-1}\}$ has probability tending to zero.
Hence, for this fixed $k$,
\[
\sP\{r(\widetilde{\psi}_{i^*k})<a n^{-1}\}\to0.
\]
Taking a union bound over $k\in[K]$ proves part (e).
\end{proof}

We next describe the behavior of the filtering step after the selected anchor has been fixed.
\begin{lemma}[Filtering event]
\label{lemma:filtering_event}
Under the common assumptions of Lemma~\ref{lemma:selected_anchor_event}, assume
either the exact-retention localization condition $m\rho^{-q/2}\to0$ or the
tail-trimming condition in Assumption~\ref{assump:component_capacity}(b), and consider the
selected anchor $i^*$.
For each $k\in[K]$, abbreviate the anchor-induced cluster and its majority radius by $\sC_k=\sC_k(\widetilde{\bm\psi}_{i^*})$ and $r_k=r(\widetilde{\psi}_{i^*k})$.
The filter retains the indices $\widehat{\sO}_k
=
\{(i,j)\in\sC_k:
c_{\lambda}(\widetilde{\psi}_{ij},\widetilde{\psi}_{i^*k})\leq \rho r_k\}$.
Moreover, with probability tending to one, simultaneously for every
$k\in[K]$,
\begin{equation}
\label{eq:retained_nonoracle_are_corrupted}
\widehat\sO_k\setminus\sO_k
=\left\{(i,\ell)\in\widehat\sO_k:i\in\gB_\ell\right\}.
\end{equation}
For any constant $c>0$, define the ideal true-centered near-clean set
\[
\sO_k^{\mathrm{near}}(c,\rho)
=
\left\{
(i,k)\in\sO_k:
c_{\lambda}(\widetilde{\psi}_{ik},\psi_k^*)\le c\rho n^{-1}
\right\}.
\]
Then the following conclusions hold:
\begin{enumerate}[label=(\alph*)]
\item \textit{Retention of near-clean estimates}:
there exists a constant $c>0$ such that the ideal true-centered near-clean set is contained in the data-selected filter:
\[
\sP\left(
\bigcap_{k=1}^K
\left\{
\sO_k^{\mathrm{near}}(c,\rho)\subseteq\widehat{\sO}_k
\right\}
\right)\to1.
\]
\item \textit{Authentic tail losses}: if
$L_k=|\sO_k\setminus\widehat{\sO}_k|$, then
\[
L_k=O_P(m\rho^{-q/2})=o_P(m)
\]
and, uniformly over fixed $k$,
\[
\frac1m\sum_{(i,k)\in\sO_k\setminus\widehat{\sO}_k}
\|\widetilde\psi_{ik}-\psi_k^*\|
=O_P\{n^{-1/2}\rho^{-(q-1)/2}\}.
\]
Consequently, $|\widehat{\sO}_k|=\Theta_P(m)$.
\item \textit{Exact authentic retention in regime~(a)}: if
$m\rho^{-q/2}\to0$, then
\[
\sP\left\{\bigcap_{k=1}^K(\sO_k\subseteq\widehat{\sO}_k)\right\}\to1.
\]
\item \textit{Localization of all retained estimates}:
let $\epsilon$ denote the radius of the local neighborhood $B_{\epsilon}(\theta_k^*;\|\cdot\|)$ in Assumption~\ref{assump:reverse_bregman_divergence}. The retained estimates remain in the local region where the cost--Euclidean comparisons are valid:
\[
\sP\left(
\bigcap_{k=1}^K
\left\{
\forall (i,j)\in\widehat{\sO}_k,\ 
\|\widetilde{\theta}_{ij}-\theta_k^*\|\le \epsilon
\right\}
\right)\to1.
\]
\item \textit{Filtering-scale localization of retained corrupted estimates}:
\[
\max_{(i,j)\in\widehat{\sO}_k\setminus\sO_k}
\|\widetilde{\psi}_{ij}-\psi_k^*\|
=O_P\{\sqrt{\rho/n}\},
\]
uniformly over fixed $k$.
Here and below, the maximum over an empty set is taken to be zero.
\end{enumerate}
\end{lemma}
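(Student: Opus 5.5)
The plan is to work throughout on one high-probability event $\gE_0$, formed by intersecting the following events: the selection event $\{i^*\in\gB^c\}$ from Theorem~\ref{theorem:accurate_selection}; the uniform alignment event of Corollary~\ref{cor:uniform_accurate_alignment}; the coarse localization event $\gE_{\epsilon'}$ from Lemma~\ref{lemma:accurate_alignment}; the radius events of Lemma~\ref{lemma:selected_anchor_event}(b),(e), namely $an^{-1}\le r_k\le C_\varepsilon n^{-1}$; and the anchor localization from part (c) or (d) of that lemma, whichever regime applies. All claims then reduce to deterministic comparisons on $\gE_0$, plus moment and counting bounds for the authentic costs.

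Identity \eqref{eq:retained_nonoracle_are_corrupted} is a relabeling fact and comes first. On the alignment event, every authentic occurrence $(i,\ell)$ with $\ell\ne k$ lies in $\sC_\ell$, not in $\sC_k$, because the clusters are disjoint. So any element of $\widehat\sO_k\setminus\sO_k$ must be an occurrence $(i,\ell)$ with $i\in\gB_\ell$.

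For part (a), take $(i,k)\in\sO_k$ with $c_\lambda(\widehat\psi_{ik},\psi_k^*)\le c\rho n^{-1}$. Using Corollary~\ref{cor:cost_function} and the triangle inequality,
\[
c_\lambda(\widehat\psi_{ik},\widetilde\psi_{i^*k})\le 2\eta'_+\bigl\{\eta_-'^{-1}c\rho n^{-1}+\|\widetilde\psi_{i^*k}-\psi_k^*\|^2\bigr\}.
\]
In either regime the anchor localization gives $\|\widetilde\psi_{i^*k}-\psi_k^*\|^2\le b\rho n^{-1}$ for any fixed small $b$. Choosing $c$ and $b$ small relative to $a$, the right-hand side is at most $a\rho n^{-1}\le\rho r_k$. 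Alignment places $(i,k)$ in $\sC_k$, so it is retained.

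Parts (b) and (c) follow from (a). An authentic occurrence can only be lost if its truth-centred cost exceeds $c\rho n^{-1}$. By Lemma~\ref{lemma:concentration_of_cost_function}(b), each such event has probability $O(\rho^{-q/2})$.
\begin{itemize}
\item Markov's inequality gives $L_k=O_P(m\rho^{-q/2})$.
\item In regime (a), a union bound gives $\sP(L_k>0)\le Cm\rho^{-q/2}\to0$, which is part (c).
\item For the first-moment tail sum, apply H\"older with Corollary~\ref{cor:Berry-Esseen_for_component_estimates}(c):
\[
\sE\bigl[\|\widehat\psi_{ik}-\psi_k^*\|\mathbbm1\{\|\widehat\psi_{ik}-\psi_k^*\|^2>c'\rho/n\}\bigr]\le \sE\|\cdot\|^q(c'\rho/n)^{-(q-1)/2}=O(n^{-1/2}\rho^{-(q-1)/2}).
\]
Summing over $m$ terms, dividing by $m$, and applying Markov yields the stated bound.
\item Finally, $|\widehat\sO_k|\ge m_k-L_k\ge(1/2+\kappa)m-o_P(m)$, and $|\widehat\sO_k|\le Km$, so $|\widehat\sO_k|=\Theta_P(m)$.
\end{itemize}

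For parts (d) and (e), every retained index satisfies $c_\lambda(\widetilde\psi_{ij},\widetilde\psi_{i^*k})\le\rho r_k\le C_\varepsilon\rho n^{-1}\to0$. The selected anchor lies in a small neighbourhood of $\psi_k^*$. By compactness and strict positivity of the divergence off the diagonal, no parameter outside $B_\epsilon(\theta_k^*)$ can have vanishing cost to it, so every retained $\widetilde\theta_{ij}$ falls in the local chart; this is (d). Corollary~\ref{cor:cost_function} then converts the cost bound to $\|\widetilde\psi_{ij}-\widetilde\psi_{i^*k}\|^2\le\eta_-'^{-1}C_\varepsilon\rho/n$. Adding the anchor localization $O_P(\sqrt{\rho/n})$ gives (e), since $\varepsilon$ is arbitrary.

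The main obstacle is the order of constants in part (a). The lower radius constant $a$ from Lemma~\ref{lemma:selected_anchor_event}(e) is fixed first. Only afterwards are $b_{\rm filt}$ and $c_{\rm near}$ chosen small enough, and the anchor localization is invoked with that $b_{\rm filt}$, which is permitted because parts (c) and (d) hold for every fixed $b$. The weighted tail-sum bound in the tail-trimming regime also needs care: the deletion criterion must be dominated by a truth-centred threshold, which again relies on the root-$n$ anchor localization of part (d).
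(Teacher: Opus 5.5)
Your proposal is correct and follows essentially the same route as the paper: the relabeling identity from disjointness of the aligned clusters, retention of the near-clean set after fixing the lower-radius constant $a$ first and then choosing $b_{\rm filt},c_{\rm near}$ small, Markov and truncated-$q$th-moment bounds for the authentic tail count and tail sum, and compactness plus cost--Euclidean equivalence for parts (d)--(e). The only looseness is that you place the $O_P$ upper radius bound $r_k\le C_\varepsilon n^{-1}$ in your high-probability event, although it holds only with probability at least $1-\varepsilon$; as you note at the end, letting $\varepsilon$ be arbitrary, or simply using $\rho\max_k r_k=O_P(\rho/n)=o_P(1)$, repairs this.
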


\begin{proof}[Proof of Lemma~\ref{lemma:filtering_event}]
We prove the five assertions on a common sequence of high-probability events.

Lemma~\ref{lemma:selected_anchor_event}(a) gives the simultaneous correct
alignment event.  On that event, every authentic occurrence belongs to
exactly one of the disjoint sets $\sO_1,\ldots,\sO_K$, and
$\sO_\ell\subseteq\sC_\ell$ for every $\ell$.  Since the anchor-induced
clusters are disjoint, an authentic occurrence in $\sC_k$ must therefore belong
to $\sO_k$.  This proves
\eqref{eq:retained_nonoracle_are_corrupted}.

\noindent\underline{\textit{Step 1: Retention of near-clean estimates.}}
Let $a>0$ be the lower-radius constant from
Lemma~\ref{lemma:selected_anchor_event}(e).  Under either localization regime,
\[
\max_{k\le K}\|\widetilde\psi_{i^*k}-\psi_k^*\|^2
=o_P(\rho/n).
\]
Choose fixed $b_{\rm filt},c_{\rm near}>0$ so small that
$2\eta'_+b_{\rm filt}\le a/2$ and
$2\eta'_+c_{\rm near}/\eta'_-\le a/2$.  Define
\[
\gE_{\mathrm{master}}^{\rm filt}
=\{i^*\in\gB^c\}
\cap\bigcap_{k=1}^K\{\sO_k\subseteq\sC_k\}
\cap\{\min_k r_k\ge an^{-1}\}
\cap\left\{\max_k
\|\widetilde\psi_{i^*k}-\psi_k^*\|^2\le b_{\rm filt}\rho/n\right\}.
\]
Lemma~\ref{lemma:selected_anchor_event} gives
$\sP(\gE_{\mathrm{master}}^{\rm filt})\to1$.

Fix $(i,k)\in\sO_k^{\mathrm{near}}(c_{\rm near},\rho)$.  On the master event it belongs
to $\sC_k$.  Its true-centred cost is at most
$c_{\rm near}\rho/n=o(1)$, so compactness
and strict positivity of the divergence away from the diagonal put its
subpopulation parameter in the local neighborhood of $\theta_k^*$.  The
selected anchor is in the same neighborhood.  Corollary~\ref{cor:cost_function}
therefore gives
\begin{equation}
\label{eq:near_clean_anchor_cost_bound}
\begin{aligned}
c_\lambda(\widetilde\psi_{ik},\widetilde\psi_{i^*k})
&\le2\eta'_+\left(
\|\widetilde\psi_{ik}-\psi_k^*\|^2
+\|\widetilde\psi_{i^*k}-\psi_k^*\|^2\right)\\
&\le\left(2\eta'_+c_{\rm near}/\eta'_-+
2\eta'_+b_{\rm filt}\right)\rho/n
\le a\rho/n
\le\rho r_k.
\end{aligned}
\end{equation}
Thus $(i,k)\in\widehat\sO_k$, proving part (a).

\noindent\underline{\textit{Step 2: Count and size of authentic tail losses.}}
On $\gE_{\mathrm{master}}^{\rm filt}$, part (a) implies
\[
\sO_k\setminus\widehat\sO_k
\subseteq
\left\{(i,k):i\in\gI_k,
c_\lambda(\widehat\psi_{ik},\psi_k^*)>c_{\rm near}\rho/n\right\}.
\]
Let $N_k^{\mathrm{tail}}$ be the cardinality of the set on the right.  By
Lemma~\ref{lemma:concentration_of_cost_function}(b),
\[
\sE N_k^{\mathrm{tail}}\le Cm\rho^{-q/2},
\qquad
N_k^{\mathrm{tail}}=O_P(m\rho^{-q/2}).
\]

Set $e_{ik}=\widehat\psi_{ik}-\psi_k^*$.  The local cost comparison and the
$q$th-moment bound imply the truncated first-moment estimate
\begin{equation}
\label{eq:truncated_authentic_tail}
\sE\left[
\|e_{ik}\|\mathbbm1\{c_\lambda(\widehat\psi_{ik},\psi_k^*)>
c_{\rm near}\rho/n\}
\right]
\le Cn^{-1/2}\rho^{-(q-1)/2}.
\end{equation}
Indeed, on the local neighborhood the tail event forces
$\|e_{ik}\|\ge C\sqrt{\rho/n}$, and
$\sE\{X\mathbbm1(X>t)\}\le\sE(X^q)t^{-(q-1)}$; the complement of the local
neighborhood is controlled by the same inequality at a fixed threshold and is
smaller because $\rho\le n$ eventually.  Summing
\eqref{eq:truncated_authentic_tail} over $i\in\gI_k$ and applying Markov's
inequality proves the weighted tail bound in part (b).

Because $\rho\to\infty$, $N_k^{\mathrm{tail}}=o_P(m)$.  The majority margin
gives $|\sO_k|\ge(1/2+\kappa)m$, hence
\[
|\widehat\sO_k|\ge|\sO_k|-N_k^{\mathrm{tail}}=\Theta_P(m).
\]
If $m\rho^{-q/2}\to0$, then
$\sE N_k^{\mathrm{tail}}=o(1)$; since the count is integer-valued, it is zero
with probability tending to one.  A union bound over fixed $K$ proves part (c).

\noindent\underline{\textit{Step 3: Localize every retained estimate.}}
Lemma~\ref{lemma:selected_anchor_event}(b) gives
$\max_k r_k=O_P(n^{-1})$.  Consequently,
\[
\max_{k\le K}\max_{(i,j)\in\widehat\sO_k}
c_\lambda(\widetilde\psi_{ij},\widetilde\psi_{i^*k})
\le\rho\max_k r_k=O_P(\rho/n)=o_P(1).
\]
The selected anchor is $o_P(1)$ from the corresponding truth.  If a retained
subpopulation parameter stayed outside the fixed local neighborhood, compactness
and positivity of the Bregman divergence on the resulting two disjoint compact
sets would give a fixed positive lower bound on its anchor-centred cost,
contradicting the preceding display.  This proves part (d).

\noindent\underline{\textit{Step 4: Localize retained corrupted estimates.}}
On the local event from part (d), Corollary~\ref{cor:cost_function} gives
\[
\max_{(i,j)\in\widehat\sO_k\setminus\sO_k}
\|\widetilde\psi_{ij}-\widetilde\psi_{i^*k}\|^2
\le(\eta_-')^{-1}\rho r_k=O_P(\rho/n).
\]
Under either localization regime, the selected-anchor error squared is
$o_P(\rho/n)$.  The triangle inequality proves part (e), completing the
proof.
\end{proof}

\begin{lemma}[Oracle component aggregation]
\label{lemma:oracle_component_aggregation}
For $k\in[K]$, let
$m_k=|\gI_k|=|\sO_k|=(1-\alpha_k)m$. Define
\[
\widehat w_k^o=\frac1{m_k}\sum_{i\in\gI_k}\widehat w_{ik},
\qquad
\widehat\theta_k^o
=\operatorname*{arg\,min}_{\theta\in\Theta}
\sum_{i\in\gI_k}\widehat w_{ik}D(\widehat\theta_{ik},\theta).
\]
Under Assumptions~\ref{assump:compact_parameter_space}--\ref{assump:reverse_bregman_divergence} and independent local data splits,
\[
|\widehat w_k^o-w_k^*|+\|\widehat\theta_k^o-\theta_k^*\|
=O_P\{(m_kn)^{-1/2}+n^{-1}\}.
\]
Under Assumption~\ref{assump:component_wise_majority} and $m\le n$, this is $O_P(N^{-1/2})$, uniformly over the fixed number $K$ of components.
\end{lemma}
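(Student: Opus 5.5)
The weight part is immediate, so I treat it first. The authentic weights $\{\widehat w_{ik}:i\in\gI_k\}$ are IID across machines because the local data splits are independent. Then
\[
\widehat w_k^o-w_k^*=\frac1{m_k}\sum_{i\in\gI_k}\{\widehat w_{ik}-\sE\widehat w_{ik}\}+\{\sE\widehat w_{1k}-w_k^*\}.
\]
By Corollary~\ref{cor:Berry-Esseen_for_component_estimates}(c) with $s=2$, the centred average has variance $O\{(m_kn)^{-1}\}$, so Chebyshev bounds it by $O_P\{(m_kn)^{-1/2}\}$. By part (b) of the same corollary, the bias term is $O(n^{-1})$.

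For $\widehat\theta_k^o$ I use the Bregman structure. For a reversed Bregman divergence, $\theta\mapsto\sum_i\widehat w_{ik}D(\widehat\theta_{ik},\theta)$ equals $\sum_i\widehat w_{ik}\{A(\theta)-\nabla A(\widehat\theta_{ik})^\top\theta\}$ plus terms free of $\theta$. Its first-order condition is
\[
\nabla A(\widehat\theta_k^o)=\bar\eta_k:=\frac{\sum_i\widehat w_{ik}\nabla A(\widehat\theta_{ik})}{\sum_i\widehat w_{ik}},
\]
a weighted mean in the dual coordinates. I work on the uniform localization event of Lemma~\ref{lemma:accurate_alignment}, which has probability tending to one when $mn^{-q/2}\to0$. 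If only that rate is unavailable, I instead use the fact that $\bar\eta_k\to\nabla A(\theta_k^*)$ in probability, together with strict convexity and interiority of $\theta_k^*$. On that event every $\widehat\theta_{ik}$ lies in $B_\epsilon(\theta_k^*)$, where Assumption~\ref{assump:reverse_bregman_divergence} makes $\nabla A$ a $C^1$ diffeomorphism with $\eta_-\mI\preceq\nabla^2A\preceq\eta_+\mI$. So the minimizer is interior and unique, and
\[
\|\widehat\theta_k^o-\theta_k^*\|\le\eta_-^{-1}\|\bar\eta_k-\nabla A(\theta_k^*)\|.
\]

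Next I expand each summand of $\bar\eta_k$ around the truth:
\[
\widehat w_{ik}\nabla A(\widehat\theta_{ik})=w_k^*\nabla A(\theta_k^*)+J_k(\widehat\psi_{ik}-\psi_k^*)+\mathcal R_{ik}.
\]
Here $J_k$ is a fixed matrix, and by the Lipschitz Hessian bound $\|\mathcal R_{ik}\|\le C\|\widehat\psi_{ik}-\psi_k^*\|^2$ on the local event. Off the local event the remainder is bounded by a constant, because $\Theta$ is compact and $A$ is $C^1$ on a neighbourhood. The linear term averages to $O_P\{(m_kn)^{-1/2}+n^{-1}\}$ by the same centring, variance and bias argument as for the weights. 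The quadratic remainders average to $O_P(n^{-1})$ by Markov, since $\sE\|\widehat\psi_{ik}-\psi_k^*\|^2=O(n^{-1})$. The denominator $\sum_i\widehat w_{ik}/m_k$ converges to $w_k^*>0$ at the same rate, so it too contributes $O_P\{(m_kn)^{-1/2}+n^{-1}\}$. Combining these through the ratio gives the claimed bound on $\|\widehat\theta_k^o-\theta_k^*\|$.

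The main obstacle is controlling the event on which some $\widehat\theta_{ik}$ leaves the local chart, so that neither the Taylor expansion nor the uniqueness of the barycenter applies. I handle it as in the proof of Lemma~\ref{lemma:local_MLE_properties}. The uniform localization event has probability $1-O(mn^{-q/2})$. Alternatively, the remainder can be bounded globally by $C\min(1,\|\widehat\psi_{ik}-\psi_k^*\|^2)$, which keeps its mean at $O(n^{-1})$ without any union bound. Finally, for the last sentence of the lemma: Assumption~\ref{assump:component_wise_majority} gives $m_k\ge m/2$, and $m\le n$ gives $n^{-1}\le(mn)^{-1/2}=N^{-1/2}$. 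Taking the maximum over the fixed number $K$ of components then yields the uniform $O_P(N^{-1/2})$.
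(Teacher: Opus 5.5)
Your proposal is correct and follows essentially the same route as the paper: a bias--variance decomposition for the weights, then the first-order condition $\nabla A(\widehat\theta_k^o)=\overline q_k/\widehat w_k^o$ in dual coordinates, a second-order Taylor expansion of $w\nabla A(\theta)$ whose off-neighbourhood contribution is controlled by the $q$th-moment bound and compactness, and finally the locally Lipschitz inverse of $\nabla A$. One adjustment: the lemma does not assume $mn^{-q/2}\to0$, so your per-machine remainder bound $C\min(1,\|\widehat\psi_{ik}-\psi_k^*\|^2)$, combined with consistency of $\bar\eta_k$ for the inversion step, should be the primary argument rather than a fallback to the uniform localization event; this is exactly how the paper avoids that rate condition.
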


\begin{proof}
The authentic local component estimators are independent across $i\in\gI_k$. Corollary~\ref{cor:Berry-Esseen_for_component_estimates}(b)--(c) gives
\[
\left|\sE(\widehat w_{ik}-w_k^*)\right|=O(n^{-1}),
\qquad
\operatorname{Var}(\widehat w_{ik})=O(n^{-1}).
\]
Consequently, the bias--variance decomposition
\[
\widehat w_k^o-w_k^*
=\frac1{m_k}\sum_{i\in\gI_k}
\{\widehat w_{ik}-\sE\widehat w_{ik}\}
+\sE(\widehat w_{ik}-w_k^*)
\]
implies
\[
|\widehat w_k^o-w_k^*|
=O_P\{(m_kn)^{-1/2}+n^{-1}\}.
\]

For the subpopulation parameter, set $q_k(w,\theta)=w\nabla A(\theta)$ and $q_k^*=w_k^*\nabla A(\theta_k^*)$. A second-order Taylor expansion in the local regularity neighborhood, together with Corollary~\ref{cor:Berry-Esseen_for_component_estimates}(b)--(c), gives
\[
\left\|\sE\{q_k(\widehat\psi_{ik})-q_k^*\}\right\|=O(n^{-1}),
\qquad
\sE\|q_k(\widehat\psi_{ik})-q_k^*\|^2=O(n^{-1}).
\]
The contribution from outside that neighborhood is of smaller order by the
$q$th-moment bound and compactness of $\Theta$. Hence, for
\[
\overline q_k=\frac1{m_k}\sum_{i\in\gI_k}
\widehat w_{ik}\nabla A(\widehat\theta_{ik}),
\]
the same bias--variance argument yields
\[
\|\overline q_k-q_k^*\|
=O_P\{(m_kn)^{-1/2}+n^{-1}\}.
\]
Since $w_k^*>0$, the weight bound implies $\widehat w_k^o$ is bounded away from zero with probability tending to one. The oracle barycenter first-order condition is therefore
\[
\nabla A(\widehat\theta_k^o)=\frac{\overline q_k}{\widehat w_k^o}.
\]
As in the interior-minimizer argument used below for the selected estimator, strict convexity and the inverse function theorem make this condition valid and identify the unique minimizer in a neighborhood of $\theta_k^*$ with probability tending to one. Because $\nabla^2A(\theta_k^*)$ is nonsingular, the inverse of $\nabla A$ is locally Lipschitz, and thus
\[
\|\widehat\theta_k^o-\theta_k^*\|
=O_P\{(m_kn)^{-1/2}+n^{-1}\}.
\]
Finally, Assumption~\ref{assump:component_wise_majority} gives $m_k\ge(1/2+\kappa)m$, while $m\le n$ gives $n^{-1}\le(mn)^{-1/2}$. The stated $O_P(N^{-1/2})$ consequence follows.
\end{proof}

\noindent We now prove both regimes of the convergence theorem.  Their only
difference is the treatment of authentic estimates outside the near-clean
set.

\begin{proof}[Proof of Theorem~\ref{theorem:rate_of_convergence}]
Write
\[
R_{ok}=(m_kn)^{-1/2}+n^{-1},
\qquad
R_{o\max}=\max_{\ell\le K}R_{o\ell},
\qquad
b_{nq}=n^{-1/2}\rho^{-(q-1)/2}.
\]
Recall that $T_{nq}=0$ in regime~(a) and $T_{nq}=b_{nq}$ in
regime~(b).
Let
\[
\bar\alpha_k
=\frac1m|\widehat\sO_k\setminus\sO_k|,
\qquad
\bar\alpha_{\max}=\max_{k\le K}\bar\alpha_k.
\]
Lemma~\ref{lemma:filtering_event} gives
$|\widehat\sO_k|=\Theta_P(m)$ and
\[
\max_{(i,j)\in\widehat\sO_k\setminus\sO_k}
\|\widetilde\psi_{ij}-\psi_k^*\|
=O_P\{\sqrt{\rho/n}\}.
\]

\noindent\underline{\textit{Step 1: Preliminary and normalized weights.}}
Centering every summand at $w_k^*$ gives the exact decomposition
\begin{align*}
\widehat w_k'-w_k^*
=\frac1{|\widehat\sO_k|}\bigg[&
\sum_{(i,k)\in\sO_k}(\widetilde w_{ik}-w_k^*)
-\sum_{(i,k)\in\sO_k\setminus\widehat\sO_k}
(\widetilde w_{ik}-w_k^*)\\
&+\sum_{(i,j)\in\widehat\sO_k\setminus\sO_k}
(\widetilde w_{ij}-w_k^*)\bigg].
\end{align*}
The oracle aggregation lemma and $m_k/|\widehat\sO_k|=O_P(1)$ bound the first
term by $O_P(R_{ok})$.  Lemma~\ref{lemma:filtering_event}(b) bounds the second
term by $O_P(b_{nq})$.  In regime~(a) it is exactly zero with probability
tending to one by part (c), so in either regime it is $O_P(T_{nq})$.  The retained-corruption count and
localization bound the third term by
$O_P\{\bar\alpha_k\sqrt{\rho/n}\}$.  Hence
\begin{equation}
\label{eq:preliminary_weight_component_rate}
\widehat w_k'-w_k^*
=O_P\left\{R_{ok}+T_{nq}
+\bar\alpha_k\sqrt{\rho/n}\right\}.
\end{equation}

Because $K$ is fixed,
\[
\sum_{\ell=1}^K(\widehat w_\ell'-w_\ell^*)
=O_P\left\{R_{o\max}+T_{nq}
+\bar\alpha_{\max}\sqrt{\rho/n}\right\}.
\]
The random rate on the right is $o_P(1)$: the oracle and authentic-tail terms
vanish, $0\le\bar\alpha_{\max}\le K$, and $\rho/n\to0$.  Thus
$\sum_\ell\widehat w_\ell'$ is bounded away from zero with probability tending
to one.  Expanding its reciprocal in
$\widehat w_k=\widehat w_k'/\sum_\ell\widehat w_\ell'$ yields
\begin{equation}
\label{eq:CFMR_weight_rate}
|\widehat w_k-w_k^*|
=O_P\left\{R_{o\max}+T_{nq}
+\bar\alpha_{\max}\sqrt{\rho/n}\right\}.
\end{equation}

\noindent\underline{\textit{Step 2: Reverse-Bregman barycentres.}}
Set
\[
h_k(\theta)=\sum_{(i,j)\in\widehat\sO_k}
\widetilde w_{ij}D(\widetilde\theta_{ij},\theta),
\quad
W_k=\sum_{(i,j)\in\widehat\sO_k}\widetilde w_{ij},
\quad
\overline g_k=W_k^{-1}\sum_{(i,j)\in\widehat\sO_k}
\widetilde w_{ij}\nabla A(\widetilde\theta_{ij}).
\]
Every retained weight is $o_P(1)$ from $w_k^*$ uniformly: this follows
directly from the filter cost, the selected-anchor localization, and the
weight term in $c_\lambda$.  Together with
$|\widehat\sO_k|=\Theta_P(m)$ and $w_k^*>0$, this gives
$W_k=\Theta_P(m)$.

Part (d) of Lemma~\ref{lemma:filtering_event} puts all retained subpopulation
parameters in the local regularity neighborhood.  The same decomposition as
in Step 1, now applied to
\[
q_k(\psi)=w\{\nabla A(\theta)-\nabla A(\theta_k^*)\},
\]
obeys the global bound
\begin{equation}
\label{eq:global_qk_linear_bound}
\|q_k(\psi)\|\le C\|\psi-\psi_k^*\|
\end{equation}
on the compact admissible component-parameter set.  Indeed, in the local
neighborhood this follows from the Hessian bound for $A$ and $0<w\le1$.
Outside that neighborhood, $\nabla A$ is bounded by continuity and
compactness, while $\|\theta-\theta_k^*\|$ is bounded below by a positive
constant.  Thus boundedness away from the truth converts to the same linear
bound globally.
The three terms in the decomposition have the following orders after division
by $W_k$:
\begin{enumerate}[label=(\roman*)]
\item the oracle sum is $O_P(R_{ok})$ by
Lemma~\ref{lemma:oracle_component_aggregation};
\item the excluded-authentic sum is $O_P(T_{nq})$ by
Lemma~\ref{lemma:filtering_event}(b) and
\eqref{eq:global_qk_linear_bound}, with exact retention making this term zero
in regime~(a); and
\item the retained-corruption sum is
$O_P\{\bar\alpha_k\sqrt{\rho/n}\}$ by part (e) of that lemma.
\end{enumerate}
Consequently,
\[
S_k:=\overline g_k-\nabla A(\theta_k^*)
=O_P\left\{R_{ok}+T_{nq}
+\bar\alpha_k\sqrt{\rho/n}\right\}=o_P(1).
\]

Because $\nabla^2A(\theta_k^*)$ is nonsingular, the inverse function theorem
provides a unique interior point satisfying
\[
\nabla A(\widehat\theta_k)=\overline g_k.
\]
Strict convexity identifies this point with the barycentre minimizing $h_k$.
The local inverse is Lipschitz, and therefore
\begin{equation}
\label{eq:CFMR_theta_rate}
\|\widehat\theta_k-\theta_k^*\|
=O_P\left\{R_{ok}+T_{nq}
+\bar\alpha_k\sqrt{\rho/n}\right\}.
\end{equation}
Equations \eqref{eq:CFMR_weight_rate} and~\eqref{eq:CFMR_theta_rate} prove both parts of
the theorem.
\end{proof}

\begin{proof}[Proof of Corollary~\ref{cor:oracle_rate_small_ball}]
Fix $C_0>0$ and define the component-occurrence count
\[
M_{n,k}(C_0)=
\left|
\left\{(i,\ell):\ \ell\in[K],\ i\in\gB_\ell,\
c_\lambda(\xi_{i\ell},\psi_k^*)\le C_0\rho/n
\right\}
\right|.
\]
Conditional linearity of expectation and the small-ball condition give
\[
\max_{k\le K}
\sE\{M_{n,k}(C_0)\mid\mathcal F_{\mathrm{clean}}\}
\le Cm(\rho/n)^\gamma.
\]
Because $m\le n$, $\rho=m^a$, and $a<1$, we have $\rho/n\to0$.
Conditional Markov's inequality followed by a union bound over fixed $K$
shows that $\max_kM_{n,k}(C_0)/m=o_P(1)$.  Hence
Assumption~\ref{assump:component_capacity}(a) holds for any fixed
$\beta\in(0,1/2)$.

In the exact-retention regime, $a>2/q$ gives
$m\rho^{-q/2}=m^{1-aq/2}\to0$.  In the tail-trimming regime, the assumed
authentic-anchor capacity supplies the alternative localization argument and
\[
\frac{n^{-1/2}\rho^{-(q-1)/2}}{N^{-1/2}}
=m^{1/2}\rho^{-(q-1)/2}
=m^{\{1-a(q-1)\}/2}\to0.
\]
Thus the corresponding regime of Theorem~\ref{theorem:rate_of_convergence}
applies in either case.

It remains to control the realized retained-corruption fraction.  Fix
$\varepsilon>0$.  Lemmas~\ref{lemma:selected_anchor_event} and
\ref{lemma:filtering_event}, together with the local cost--Euclidean
comparison, give a fixed $C_\varepsilon<\infty$ and an event with probability
at least $1-\varepsilon-o(1)$ on which every retained corrupted component
assigned to cluster $k$ satisfies
\[
c_\lambda(\xi_{i\ell},\psi_k^*)
\le C_\varepsilon\rho/n.
\]
By \eqref{eq:retained_nonoracle_are_corrupted}, every index counted by
$m\bar\alpha_k$ is one of these corrupted component occurrences.
Consequently, on this event,
\[
m\bar\alpha_k
\le
\left|
\left\{(i,\ell):\ \ell\in[K],\ i\in\gB_\ell,\
c_\lambda(\xi_{i\ell},\psi_k^*)\le C_\varepsilon\rho/n
\right\}
\right|.
\]
The conditional small-ball bound and Markov's inequality now imply, uniformly
over the fixed number of components,
\[
\bar\alpha_{\max}=O_P\{(\rho/n)^\gamma\}.
\]
Substitution in the appropriate theorem regime, and the preceding negligibility of
the authentic-tail term in the tail-trimming regime, yields
\[
|\widehat w_k-w_k^*|+\|\widehat\theta_k-\theta_k^*\|
=O_P\left\{N^{-1/2}+(\rho/n)^{\gamma+1/2}\right\}.
\]
Finally,
\[
(\rho/n)^{\gamma+1/2}=o(N^{-1/2})
\quad\Longleftrightarrow\quad
m^{1/2}\rho^{\gamma+1/2}=o(n^\gamma),
\]
which proves the oracle-rate assertion and its equivalent power form.
\end{proof}

\begin{proof}[Proof of Corollary~\ref{cor:wasserstein_rate}]
Let $D_\Theta=\sup_{\theta,\theta'\in\Theta}\|\theta-\theta'\|<\infty$.
Couple mass $\min(\widehat w_k,w_k^*)$ between
$\widehat\theta_k$ and $\theta_k^*$ for each $k$, and couple the remaining
mass arbitrarily.  This gives
\[
W_1(\widehat G,G^*)
\le \sum_{k=1}^K
\min(\widehat w_k,w_k^*)\|\widehat\theta_k-\theta_k^*\|
+\frac{D_\Theta}{2}\sum_{k=1}^K|\widehat w_k-w_k^*|.
\]
The result follows from the corresponding theorem because $K$ is fixed.
\end{proof}

%%%%%%%%%%%%%%%%%%%%%%%%%%%%%%%%%%%%%%%%%%%%%%%%%%%%%
\subsection{Technical lemmas}
We conclude with two standard empirical-process and Gaussian comparison tools used in the radius concentration argument.
They are stated separately to keep the main proof focused on the estimator-specific steps.

The first lemma controls empirical order statistics uniformly through the population quantile function.
It is used to convert the empirical half-sample radius into a deterministic quantile bound.
\begin{lemma}[Concentration of order statistics]
\label{lemma:concentration_of_order_statistics}
Let $X_1,\ldots,X_m$ be IID with distribution function $F$ and population quantile function $q(a)=\inf\{x:F(x)\ge a\}$. Let $\widehat F_m(x)=m^{-1}\sum_{i=1}^m\mathbbm{1}\{X_i\le x\}$ and $\widehat q_m(a)=\inf\{x:\widehat F_m(x)\ge a\}$. If $X_{(1)}\le\cdots\le X_{(m)}$ are the order statistics, then $\widehat q_m(a)=X_{(\lceil ma\rceil)}$ for $a\in(0,1]$. Moreover, for any $\varepsilon>0$, with probability at least $1-2\exp(-2m\varepsilon^2)$,
\[
q(a-\varepsilon)\le \widehat q_m(a)\le q(a+\varepsilon)
\]
uniformly over $a\in[\varepsilon,1-\varepsilon]$.
\end{lemma}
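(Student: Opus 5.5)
The identity $\widehat q_m(a)=X_{(\lceil ma\rceil)}$ is deterministic. The empirical distribution function $\widehat F_m$ is a right-continuous step function that jumps by (multiples of) $1/m$ at the order statistics, with $\widehat F_m(X_{(j)})\ge j/m$. Moreover, $\widehat F_m(x)<j/m$ for $x<X_{(j)}$. Hence the smallest $x$ with $\widehat F_m(x)\ge a$ is $X_{(j)}$, where $j$ is the least integer with $j/m\ge a$, namely $j=\lceil ma\rceil$. Ties among the $X_i$ only merge jumps and do not change this argument.

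For the concentration statement, I would use the Dvoretzky--Kiefer--Wolfowitz inequality with Massart's constant: $\sP\{\sup_x|\widehat F_m(x)-F(x)|>\varepsilon\}\le 2\exp(-2m\varepsilon^2)$. On the complementary event, I would invert the uniform bound into quantile bounds, using the Galois relation $q(a)\le x\iff F(x)\ge a$, which holds by right-continuity of $F$, and similarly for $\widehat F_m$.

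\emph{Upper bound.} Let $x=q(a+\varepsilon)$, which is finite because $a+\varepsilon\le 1$. Then $F(x)\ge a+\varepsilon$, so $\widehat F_m(x)\ge F(x)-\varepsilon\ge a$, and therefore $\widehat q_m(a)\le x$.

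\emph{Lower bound.} Take any $x<q(a-\varepsilon)$, where $a-\varepsilon\ge 0$. Then $F(x)<a-\varepsilon$, so $\widehat F_m(x)<a$, and hence $x<\widehat q_m(a)$. Letting $x\uparrow q(a-\varepsilon)$ gives $q(a-\varepsilon)\le\widehat q_m(a)$. At the endpoint $a=\varepsilon$ we have $q(0)=-\infty$ and the bound is trivial.

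Both bounds hold on the same DKW event simultaneously for all $a\in[\varepsilon,1-\varepsilon]$, which gives the claimed uniformity.

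The only delicate point I expect is strictness in the DKW event. If $\sup|\widehat F_m-F|\le\varepsilon$, then $F(x)<a-\varepsilon$ gives only $\widehat F_m(x)<a$, which is exactly what is needed, and $F(x)\ge a+\varepsilon$ gives $\widehat F_m(x)\ge a$. So the non-strict uniform bound suffices for both directions and no extra slack is required. If one prefers to avoid citing Massart's sharp constant, a fallback is to apply Hoeffding's inequality at the two points $q(a\pm\varepsilon)$ for each fixed $a$. That gives the bound pointwise in $a$, which is all that the later applications in the paper actually use, since they involve only fixed levels. Uniformity in $a$ would then follow from monotonicity and a finite grid, at the cost of constants.
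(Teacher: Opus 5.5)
Your proof is correct. It is the standard argument behind this lemma, which the paper does not reprove but imports from Appendix~B.8 of Zhang, Tan and Chen; the constant $2\exp(-2m\varepsilon^2)$ is exactly the Dvoretzky--Kiefer--Wolfowitz--Massart bound, which holds for arbitrary, possibly discontinuous $F$ via the quantile coupling $X_i=q(U_i)$, inverted through $q(a)\le x\iff F(x)\ge a$ just as you do.

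Two small touch-ups are needed. First, at $a=1-\varepsilon$ the point $q(a+\varepsilon)=q(1)$ may be $+\infty$ (e.g.\ for unbounded support), so there the upper bound holds trivially rather than from finiteness. Second, in your Hoeffding fallback the lower bound must be applied to the count $\#\{i:X_i<q(a-\varepsilon)\}$, whose mean is $\mathbb{P}\{X_1<q(a-\varepsilon)\}\le a-\varepsilon$, rather than to $\widehat F_m(q(a-\varepsilon))$, because $F$ may have an atom at $q(a-\varepsilon)$.
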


The second lemma compares shifted and unshifted Gaussian probabilities over convex symmetric sublevel sets.
It is used to lower bound off-center cost quantiles by the centered generalized chi-square quantile.
\begin{lemma}[Maximum probability of convex sublevel sets]
\label{lemma:maximum_probability_of_convex_sublevel_sets}
Let $\varphi: \sR^d \to \sR$ be any even convex function and let $t > 0$ be any threshold. Let $Z$ be a standard Gaussian random vector in $\sR^d$. Then for any vector $v \in \sR^d$, we have
\[
\sP(\varphi(Z + v) < t) \le \sP(\varphi(Z) < t).
\]
\end{lemma}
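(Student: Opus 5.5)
The plan is to prove the lemma as an instance of Anderson's inequality, using log-concavity of Gaussian measure via the Prékopa--Leindler inequality. Set
\[
S=\{x\in\sR^d:\varphi(x)<t\}.
\]
Because $\varphi$ is convex, $S$ is a convex set; it is open when $\varphi$ is finite-valued, and in any case Borel. Because $\varphi$ is even, $S=-S$. If $S=\emptyset$, both probabilities vanish and there is nothing to prove, so assume $S\neq\emptyset$. Then $\varphi(Z+v)<t$ holds exactly when $Z\in S-v$, and it suffices to study
\[
g(v)=\sP(Z\in S-v)=\int_{\sR^d}\mathbbm{1}_S(x+v)\,\phi_d(x)\,dx=\int_{\sR^d}\mathbbm{1}_S(y)\,\phi_d(y-v)\,dy,
\]
where $\phi_d$ is the standard Gaussian density. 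The claim is that $g(v)\le g(0)$ for every $v$.

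\emph{Step 1: $g$ is even.} Using $Z\overset{d}{=}-Z$ and $S=-S$,
\[
g(-v)=\sP(Z\in S+v)=\sP(-Z\in -S-v)=\sP(Z\in S-v)=g(v).
\]

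\emph{Step 2: $g$ is log-concave.} The integrand $(y,v)\mapsto \mathbbm{1}_S(y)\,\phi_d(y-v)$ is log-concave jointly in $(y,v)$ on $\sR^{2d}$. Indeed, $\mathbbm{1}_S$ is the indicator of a convex set, so it is log-concave, and $\log\phi_d(y-v)$ is a concave quadratic in $(y,v)$. By the Prékopa--Leindler inequality, equivalently Prékopa's theorem that marginals of log-concave functions are log-concave, the marginal $v\mapsto g(v)$ is log-concave. If one prefers to avoid quoting Prékopa's marginal theorem, one can apply Prékopa--Leindler directly to the three functions $f_0(y)=\mathbbm{1}_S(y)\phi_d(y-v_0)$, $f_1(y)=\mathbbm{1}_S(y)\phi_d(y-v_1)$ and $h(y)=\mathbbm{1}_S(y)\phi_d(y-\tfrac{v_0+v_1}{2})$. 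One then checks the pointwise hypothesis $h\bigl(\tfrac{y_0+y_1}{2}\bigr)\ge \sqrt{f_0(y_0)f_1(y_1)}$, which follows from convexity of $S$ and concavity of $\log\phi_d$.

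\emph{Step 3: conclude.} Combining the two steps,
\[
g(0)=g\Bigl(\tfrac12 v+\tfrac12(-v)\Bigr)\ge g(v)^{1/2}g(-v)^{1/2}=g(v),
\]
which is exactly $\sP(\varphi(Z+v)<t)\le\sP(\varphi(Z)<t)$.

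The only nontrivial ingredient is Step 2. I would cite Prékopa--Leindler as a standard result rather than reprove it. The one point to verify carefully is measurability and the handling of possibly infinite values of $\varphi$: since $S$ is convex it is Lebesgue measurable, which suffices for the integrals above. The remaining steps are routine symmetry manipulations.
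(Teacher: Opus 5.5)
Your proof is correct. Evenness of $g(v)=\sP(Z\in S-v)$, combined with midpoint log-concavity from Pr\'ekopa--Leindler, gives $g(0)\ge g(v)^{1/2}g(-v)^{1/2}=g(v)$; this is the standard Anderson-type argument. The paper offers no argument of its own here: it simply imports the lemma from Appendix~B.8 of Zhang, Tan and Chen, so I cannot compare your route against a proof in the text, and your write-up serves as a self-contained justification in its place.

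Your argument also applies verbatim to the closed sublevel set $\{\varphi\le t\}$, which is still convex and symmetric. That non-strict form is the one actually invoked in Step~3 of the radius-concentration proof. Finally, the hypothesis $t>0$ is never used.
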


\begin{remark}
Lemma~\ref{lemma:concentration_of_order_statistics} and Lemma~\ref{lemma:maximum_probability_of_convex_sublevel_sets} were established in Appendix~B.8 of \cite{zhang2026byzantine} and are included here for completeness, as they serve as essential technical tools throughout our theoretical analysis.
\end{remark}

%% The following is a directive for TeXShop to indicate the main file
%%!TEX root = ../main.tex

\section{Simulation study: more details}

\subsection{Details about the performance metrics}
\label{app:metrics}
To evaluate the empirical performance of the estimation methods, we consider two primary metrics across the simulation runs. Let $G^{(r)}$ be the true mixing distribution in the $r$th replication, and $\widehat{G}^{(r)} = \sum_{k=1}^K \widehat{w}_k^{(r)} \delta_{\widehat{\theta}_k^{(r)}}$ represent its corresponding estimate. The performance is assessed via the following criteria:
\begin{enumerate}
    \item \textbf{Transportation distance} ($W_1$).
    
    The transportation distance between two mixing distributions is defined by
    \[
    W_1(\widehat{G},G)
    =
    \min_{\pi}
    \left\{\sum_{i,j}\pi_{ij}D_1(\widehat{\theta}_i,\theta_j)
    \,:\,
    \sum_j \pi_{ij}=\widehat{w}_i^{(r)},\;
    \sum_i \pi_{ij}=w_j
    \right\}.
    \]
    The metric $D_1(\widehat{\theta}_i,\theta_j)$ takes different forms depending on the model. 
    For the Gaussian case, we use
    $D_1(\widehat{\theta}_i,\theta_j)
    =
    \|\widehat{\mu}_i-\mu_j\|
    +
    \left\|
    \widehat{\Sigma}_i^{1/2}-\Sigma_j^{1/2}
    \right\|_F,$
    where $\Sigma^{1/2}\succeq 0$ denotes the matrix square root of $\Sigma\succeq 0$, and
    $\|\Sigma\|_F=\sqrt{\mathrm{tr}(\Sigma^\top\Sigma)}$ is the Frobenius norm. 
    \item \textbf{Adjusted Rand Index} (ARI).

    Mixture models are commonly used for model-based clustering. Given a mixing distribution $G$, an observation $x$ is assigned to a cluster by
    \[
    k(x)=\arg\max_{k}\{w_k\phi(x;\mu_k,\Sigma_k)\}.
    \]
    We assess the clustering performance of the full dataset based on $G^{(r)}$ and $\widehat{G}^{(r)}$ using the Adjusted Rand Index (ARI), which measures the agreement between two clustering assignments. Specifically, suppose one method assigns the observations into $K$ clusters
    $A_1,A_2,\ldots,A_K$, while another assigns them into $K'$ clusters
    $B_1,B_2,\ldots,B_{K'}$. Let $N_i=\mathrm{Card}(A_i)$,
    $M_j=\mathrm{Card}(B_j)$, and $N_{ij}=\mathrm{Card}(A_i\cap B_j)$ for
    $i\in [K]$ and $j\in [K']$. Then the ARI is given by
    \[
    \mathrm{ARI}
    =
    \frac{
    \sum_{i,j}\binom{N_{ij}}{2}
    - \binom{N}{2}^{-1}\sum_{i,j}\binom{N_i}{2}\binom{M_j}{2}
    }{
    \frac{1}{2}\sum_i \binom{N_i}{2}
    + \frac{1}{2}\sum_j \binom{M_j}{2}
    - \binom{N}{2}^{-1}\sum_{i,j}\binom{N_i}{2}\binom{M_j}{2}
    }.
    \]
    Here, $\binom{n}{k}$ denotes the number of ways to choose $k$ elements from a set of $n$ elements. An ARI value close to 1 indicates a high degree of agreement between the two clusterings, achieving exactly 1 when the cluster assignments match perfectly.
    
\end{enumerate}

\subsection{Additional results}

\subsubsection{Comparison under higher overlap}
\label{app.:maximum_overlap_results}
In this section, we report the results under the maximum overlap setting($\texttt{MaxOmega}=0.3$). 
Although the performance gap between our method and the Filtering method becomes small compared with the minimum overlap case, our method still remains the best-performing estimator and continues to achieve performance comparable to the oracle.
Figure~\ref{fig:boxplot_m_100_vary_N_vary_alpha_overlap_0.3} corresponds to the simulation settings in Figure~\ref{fig:boxplot_m_100_vary_N_vary_alpha}, except that $\texttt{MaxOmega}=0.3$. 
Figure~\ref{fig:boxplot_vary_m_fix_n_overlap_0.3} and Figure~\ref{fig:boxplot_vary_m_fix_N_overlap_0.3} correspond to the settings in Figure~\ref{fig:boxplot_vary_m_fix_n} and Figure~\ref{fig:boxplot_vary_m_fix_N}, respectively, with $\texttt{MaxOmega}=0.3$.

\begin{figure}[!htbp]
  \centering
  \includegraphics[width=\textwidth]{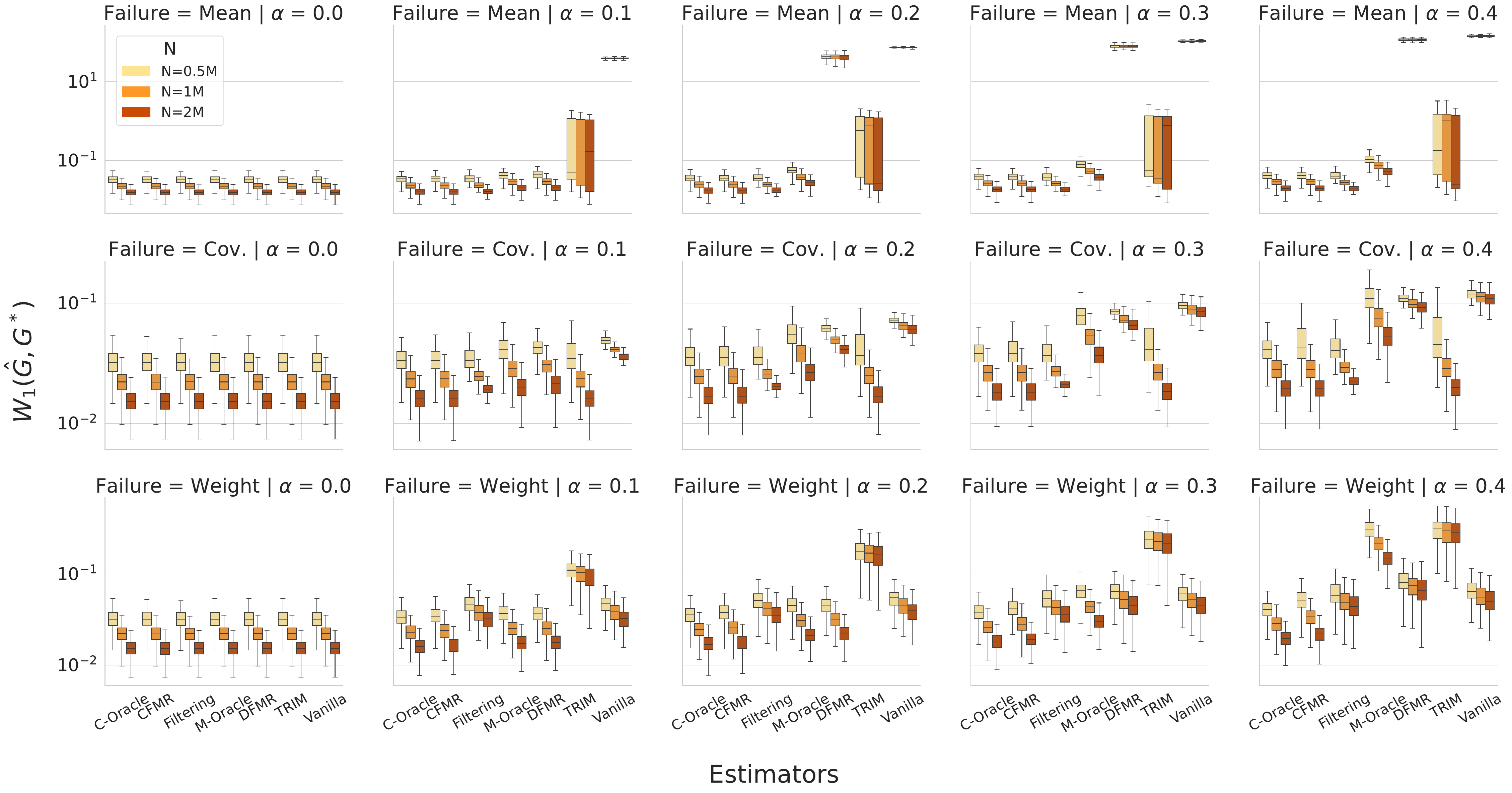}
   \caption{$W_1$ of each method as the total sample size $N$ and failure rate $\alpha$ vary, with $m=100$ and \texttt{MaxOmega}$=0.3$.}
  \label{fig:boxplot_m_100_vary_N_vary_alpha_overlap_0.3}
\end{figure}

\begin{figure}[!htbp]
  \centering
  \includegraphics[width=\textwidth]{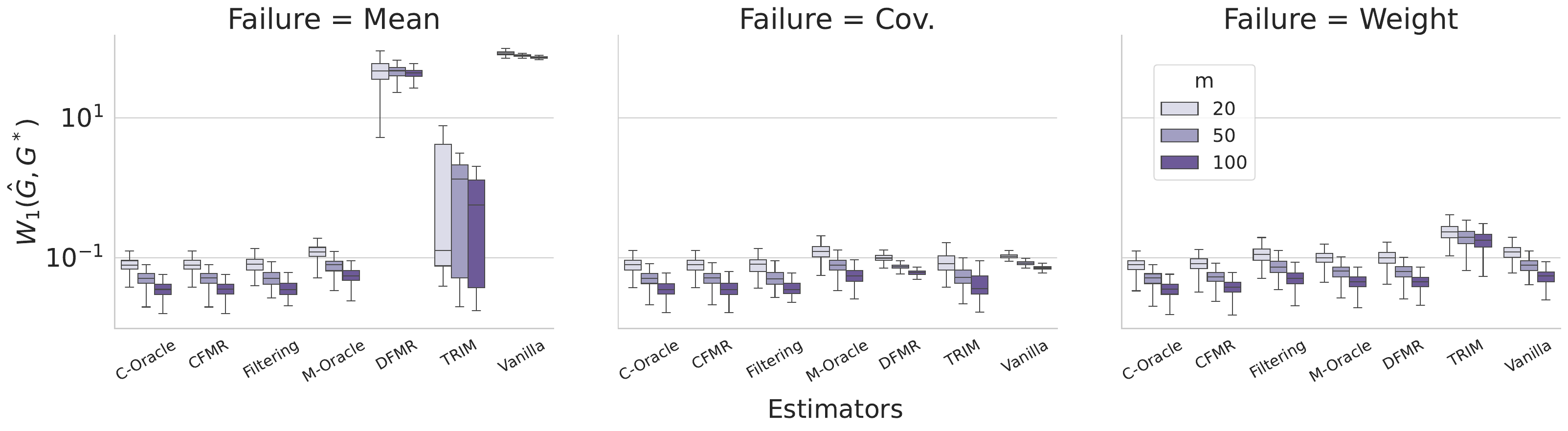}
  \caption{$W_1$ of each method as $m \in \{20,50,100\}$ varies, with $n=5000$ fixed (so $N=nm$ increases with $m$), \texttt{MaxOmega}$=0.3$, $\alpha_k=0.2$.}
  \label{fig:boxplot_vary_m_fix_n_overlap_0.3}
\end{figure}

\begin{figure}[!htbp]
  \centering
  \includegraphics[width=\textwidth]{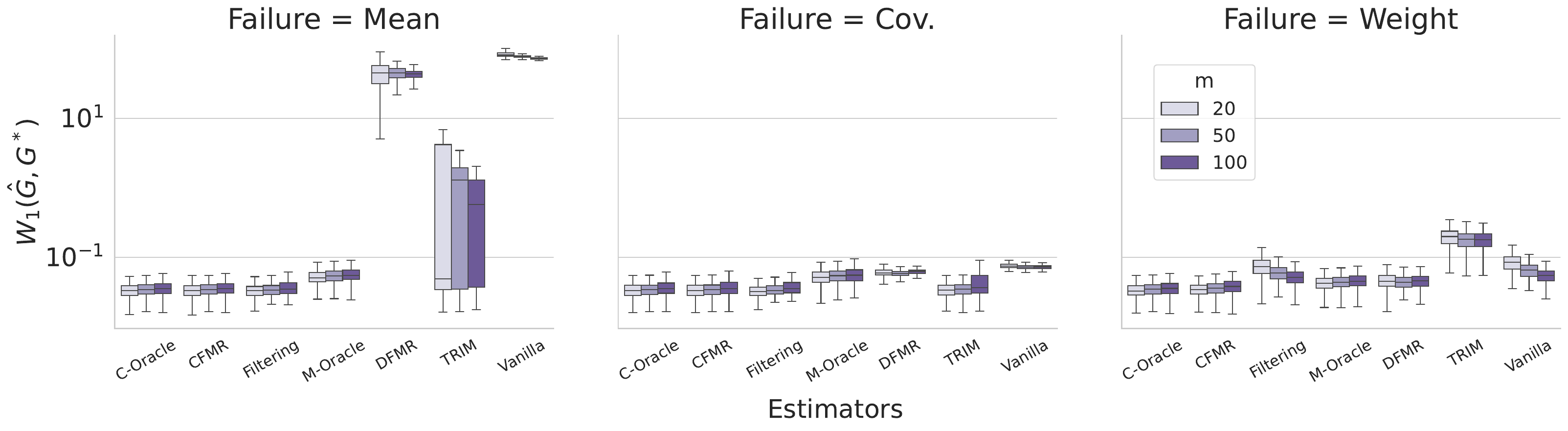}
  \caption{$W_1$ of each method as $m \in \{20,50,100\}$ varies, with $N=500{,}000$ fixed (so $n=N/m$ decreases with $m$), \texttt{MaxOmega}$=0.3$, $\alpha_k=0.2$.}
  \label{fig:boxplot_vary_m_fix_N_overlap_0.3}
\end{figure}

\subsubsection{Comparison in terms of ARI}
\label{app.:ARI_results}
In this section, we present the performance of various methods in terms of ARI. Overall, the relative performance of the estimators under ARI is consistent with that observed for $W_1$. Figure~\ref{fig:boxplot_ARI_m_100_vary_N_vary_alpha} corresponds to the settings in Figure~\ref{fig:boxplot_m_100_vary_N_vary_alpha}.
Figure~\ref{fig:boxplot_ARI_vary_m_fix_n} and Figure~\ref{fig:boxplot_ARI_vary_m_fix_N} correspond to the settings in Figure~\ref{fig:boxplot_vary_m_fix_n} and Figure~\ref{fig:boxplot_vary_m_fix_N}, respectively.
To better highlight the differences among estimators, the y-axis is restricted to $[0.85, 1.0]$.
Figure~\ref{fig:boxplot_ARI_m_100_vary_overlap_fix_alpha} corresponds to the settings in Figure~\ref{fig:boxplot_m_100_vary_overlap_fix_alpha}.
Note that under some settings, the ARI values of DFMR, TRIM, and Vanilla fall below the displayed range and are therefore omitted from the plots.

\begin{figure}[!htbp]
  \centering
  \includegraphics[width=\textwidth]{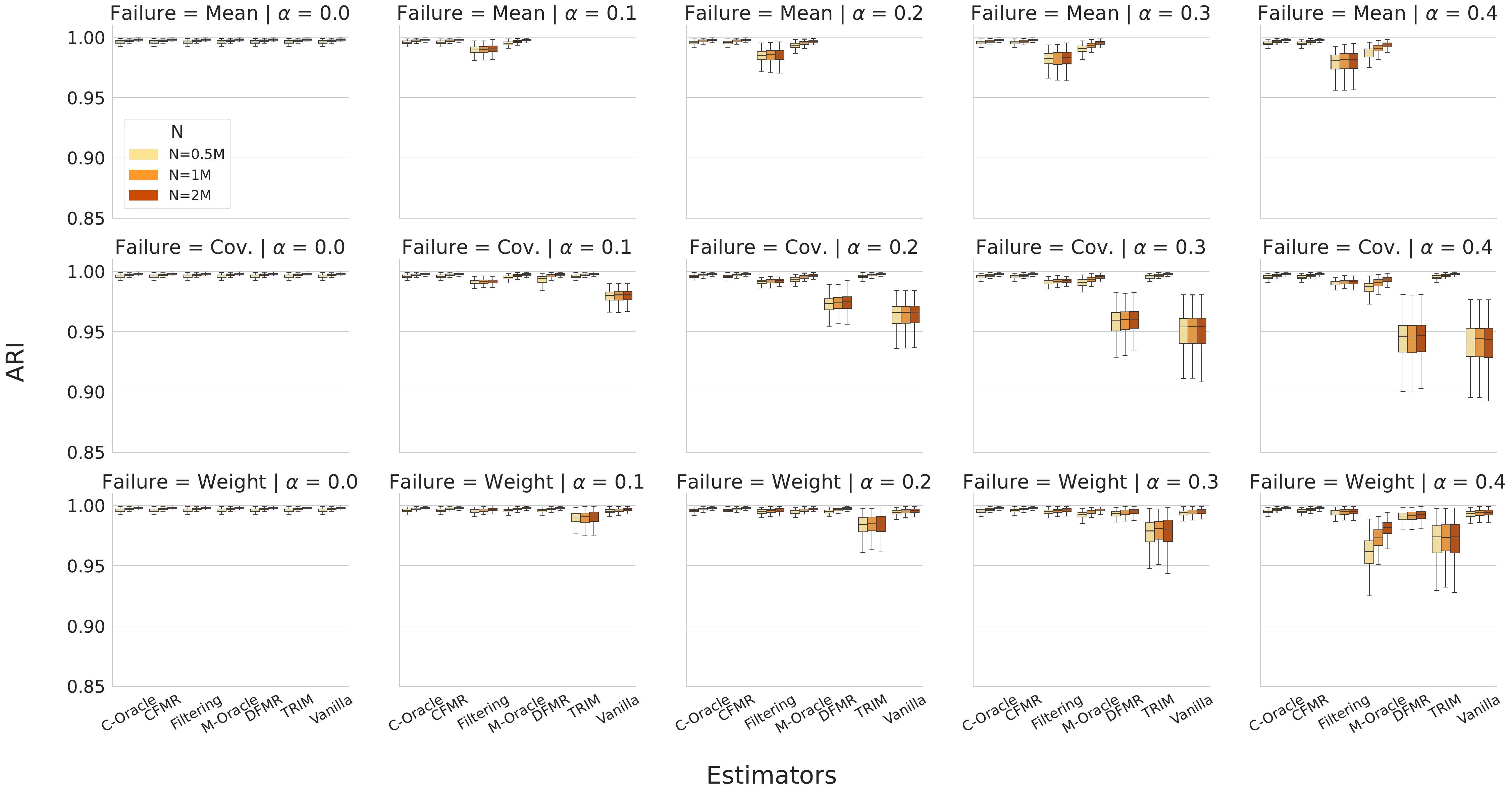}
  \caption{The ARI of clustering assignments based on different methods as the total sample size $N$ and failure rate $\alpha$ vary, with $m=100$ and \texttt{MaxOmega}$=0.1$.}
  \label{fig:boxplot_ARI_m_100_vary_N_vary_alpha}
\end{figure}

\begin{figure}[!htbp]
  \centering
  \includegraphics[width=\textwidth]{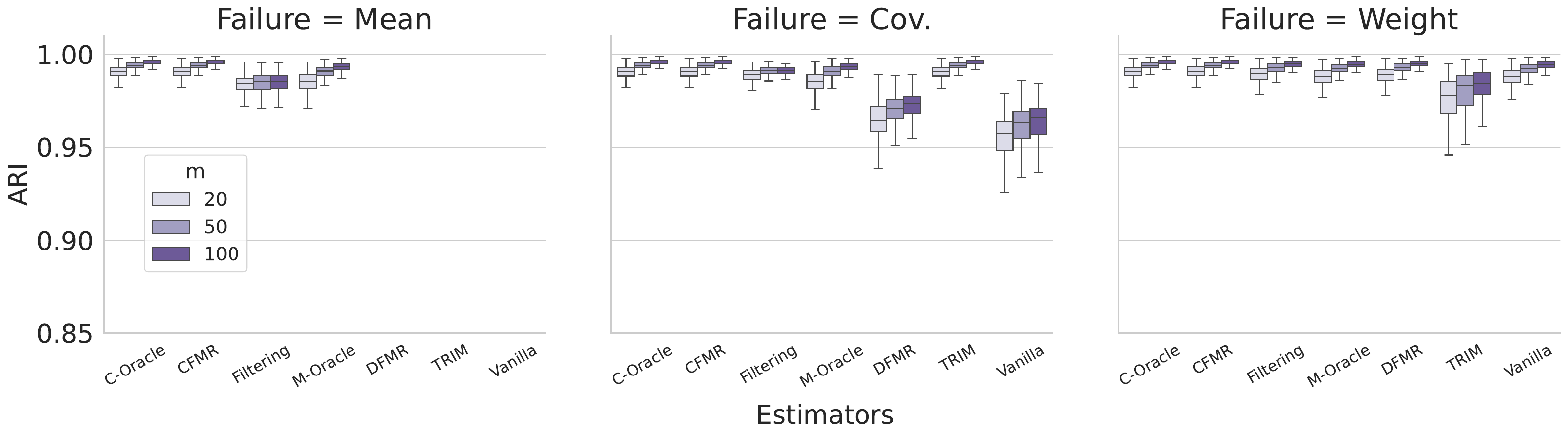}
  \caption{The ARI of clustering assignments based on different methods as $m \in \{20,50,100\}$ varies, with $n=5000$ fixed (so $N=nm$ increases with $m$), \texttt{MaxOmega}$=0.1$, $\alpha_k=0.2$.}
  \label{fig:boxplot_ARI_vary_m_fix_n}
\end{figure}

\begin{figure}[!htbp]
  \centering
  \includegraphics[width=\textwidth]{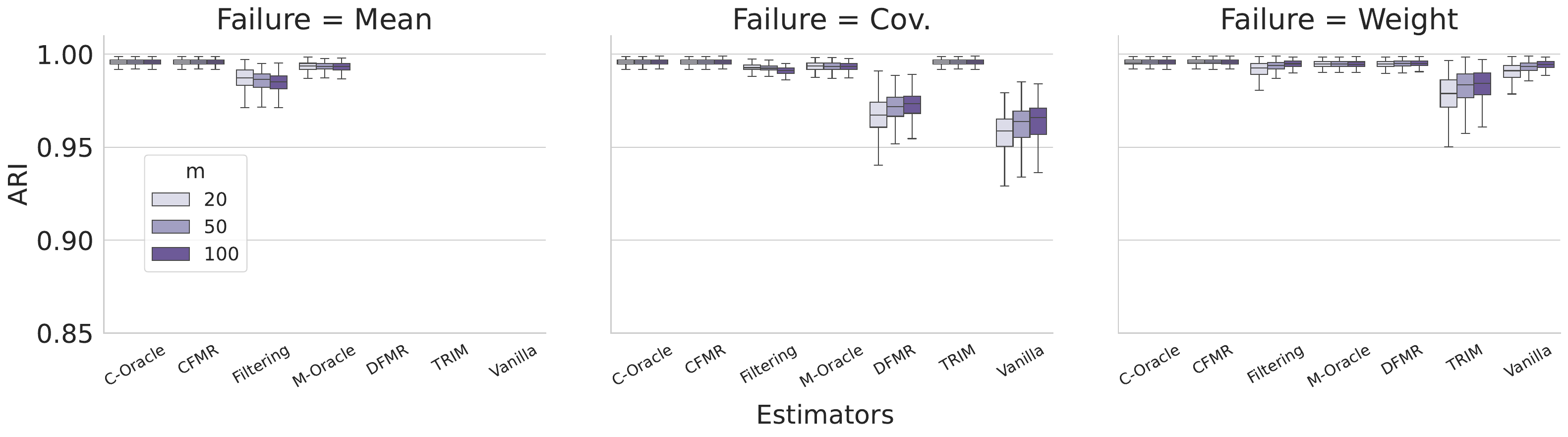}
  \caption{The ARI of clustering assignments based on different methods as $m \in \{20,50,100\}$ varies, with $N=500{,}000$ fixed (so $n=N/m$ decreases with $m$), \texttt{MaxOmega}$=0.1$, $\alpha_k=0.2$.}
  \label{fig:boxplot_ARI_vary_m_fix_N}
\end{figure}

\begin{figure}[!htbp]
  \centering
  \includegraphics[width=\textwidth]{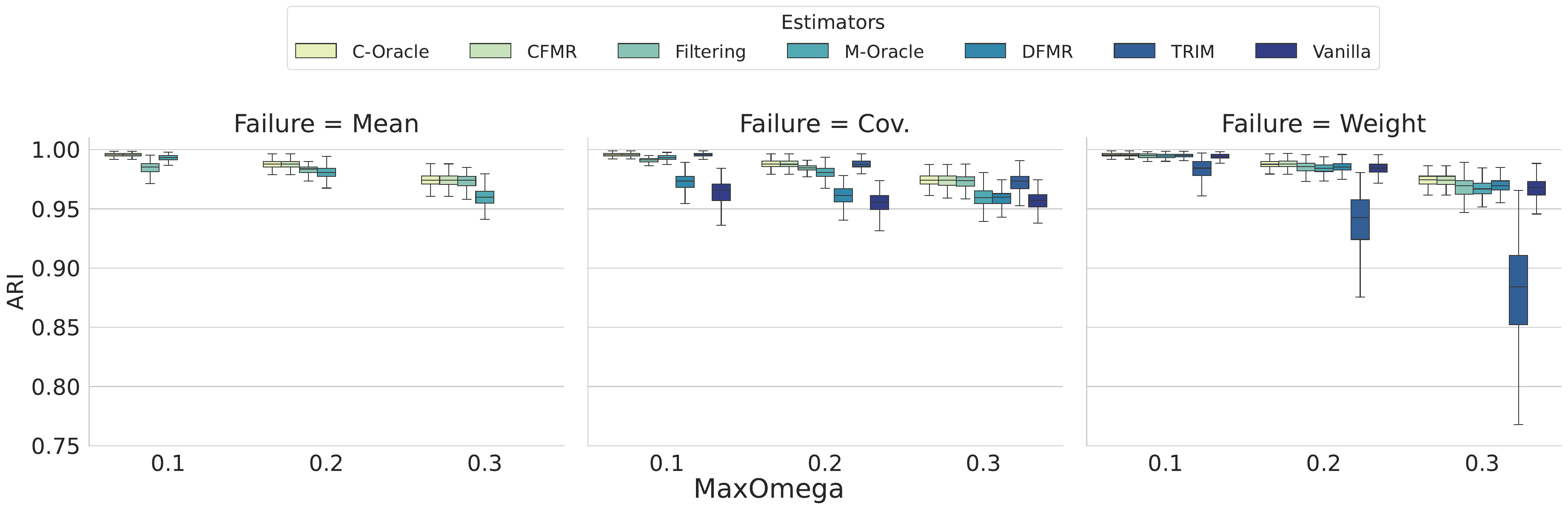}
   \caption{The ARI of clustering assignments based on different methods as \texttt{MaxOmega} $\in \{0.1, 0.2, 0.3\}$ varies, with $m=100$, $n=5000$, $\alpha_k=0.2$.}
  \label{fig:boxplot_ARI_m_100_vary_overlap_fix_alpha}
\end{figure}

\section{Real data application: more details}
\label{app:real_data}
We train a 1D-CNN to map the normalized FFT magnitudes of each vibration window to a 10-dimensional embedding.
Specifically, each input window has length \(L=512\) and yields $257$ non-negative frequency magnitudes, which are scaled by $1/512$ and standardized using the mean and standard deviation of the DE training set. 
The resulting transformation is applied identically to the DE test set.
Table~\ref{tab:cnn_architecture} summarizes the network architecture.

\begin{table}[htbp]
\centering
\caption{Architecture and layer specifications of the CNN used to extract 10-dimensional embeddings from the CWRU vibration data. $C_{\text{in}}$,  $C_{\text{out}}$, $H$, and $s$ denote the input channel size, output channel size, the convolutional kernel size, and the pooling stride, respectively.}
\label{tab:cnn_architecture}
\begin{tabular}{c|c|c}
\toprule
\textbf{Layer} & \textbf{Specification} & \textbf{Activation} \\
\midrule
Conv1d    & $C_{\text{in}} = 1$, $C_{\text{out}} = 20$, $H = 5$, $\text{padding}=2$ & ReLU    \\
MaxPool1d & $k = 2$, $s=2$ & --     \\
Conv1d    & $C_{\text{in}} = 20$, $C_{\text{out}} = 50$, $H = 5$, $\text{padding}=2$ & ReLU    \\
MaxPool1d & $k = 2$ , $s=2$   & --     \\
Flatten  &  $50\times64=3200$             &--     \\
 Linear &$H_{\mathrm{in}}=3200,\ H_{\mathrm{out}}=256$ & ReLU \\
Linear &
$H_{\mathrm{in}}=256,\ H_{\mathrm{out}}=10$
 & -- \\
Classifier & $H_{\mathrm{in}}=10,\ H_{\mathrm{out}}=9$
 & -- \\
\bottomrule
\end{tabular}
\end{table}

The final 10-dimensional output of the backbone is used as the embedding, while the classifier is used only during supervised training. The network is trained on the DE training windows using their nine fault-class labels and the cross-entropy loss. 
We use the Adam optimizer with a learning rate of \(0.001\), a batch size of \(64\), and the default Adam hyperparameters. 
Training is performed for exactly 20 epochs; the test set is evaluated after each epoch for monitoring, but no early stopping or best-epoch selection is used. 
The trained backbone is then applied to the DE training set, DE test set, and FE data to generate 10-dimensional embeddings.

\end{document}